\documentclass[prx,aps,floatfix,amsmath,superscriptaddress,twocolumn,longbibliography,nofootinbib]{revtex4-2}
\pdfoutput=1

\usepackage{amssymb,enumerate}
\usepackage{graphicx}
\usepackage{amsmath}
\usepackage{amsthm}
\usepackage{color}
\usepackage{dsfont}
\usepackage{hyperref}
\usepackage{boldline}
\usepackage{aliascnt}
\usepackage[nameinlink,capitalise]{cleveref}
\usepackage{tikz}
    \usetikzlibrary{math}
    \usetikzlibrary{decorations.pathreplacing}
    \usetikzlibrary{quantikz}
\usepackage{adjustbox}
\usepackage{xcolor}
\hypersetup{
    colorlinks,
    linkcolor={red!50!black},
    citecolor={blue!50!black},
    urlcolor={blue!80!black}
}
\usepackage{xfrac}
\usepackage{empheq}
\usepackage{enumitem}
\usepackage{tabularray}

\usepackage{array}
\usepackage{graphicx,graphics,epsfig,times,bm,bbm,amssymb,amsfonts,mathrsfs}
\usepackage[normalem]{ulem}
\usepackage{subfigure}
\usepackage{dsfont}
\usepackage{braket}
\usepackage{natbib}
\usepackage{chngcntr}
\usepackage{nameref}
\usepackage{comment}

\newcommand{\UU}{\text{U}}
\newcommand{\uu}{\mathfrak{u}}
\newcommand{\sv}{\mathfrak{sv}}
\newcommand{\U}{\mathcal{T}}
\newcommand{\V}{\mathcal{V}}
\newcommand{\SU}{\text{SU}}
\newcommand{\Sn}{\mathbb{S}_n}
\newcommand{\su}{\mathfrak{su}}
\newcommand{\g}{\mathfrak{g}}
\newcommand{\R}{\mathbb{R}}
\newcommand{\C}{\mathbb{C}}

\newcommand{\Tr}{\operatorname{Tr}}

\newcommand{\<}{\langle}
\renewcommand{\>}{\rangle}
\renewcommand{\H}{\mathcal{H}}
\newcommand{\Zhat}{J_z}
\newcommand{\Nhat}{a^{\dagger}a}

\newcommand{\HTC}{{H}_{\text{TC}}}
\newcommand{\UTC}{V_{\text{TC}}}
\newcommand{\UZ}{R_z}

\newcommand{\HHO}{\mathcal{H}_{\text{osc}}}
\newcommand{\qmax}{q_{\text{max}}}
\newcommand{\alg}{\mathfrak{alg}}
\newcommand{\Hsym}{\H_{\text{sym}}}
\newcommand{\SV}{\mathcal{SV}}

\newcommand{\jmin}{j_\text{min}}

\newcommand{\qjn}{_{q,j}^{(n)}}
\newcommand{\qj}{_{q,j}}
\newcommand{\Vswap}{V_{\text{qub.}\leftrightarrow\text{osc.}}}
\newcommand{\Uo}{\text{U(1)}}
\newcommand{\Vsym}{\V^{\Uo\times\Sn}}
\newcommand{\SVsym}{\SV^{\Uo\times\Sn}}
\newcommand{\vac}{\ket{0}_{\text{osc}}}

\newcommand{\gTC}{g_{\text{TC}}}
\newcommand{\kosc}{\ket{k}_{\text{osc}}}
\newcommand{\PiSym}{\Pi_{\text{sym}}^{\qmax}}
\newcommand{\vz}{\mathfrak{v}_z}
\newcommand{\wz}{\mathfrak{w}_z}

\newcommand{\1}{\mathbb{I}}
\newcommand{\qeq}{\quad=\quad}
\newcommand{\eq}{\,=\,}
\newcommand{\s}{\text{Span}}
\newcommand{\p}{\,+\,}

\newcommand{\ipo}[3]{\left\langle #1 \middle| #2 \middle| #3 \right\rangle}
\newcommand{\pure}[1]{|#1\rangle\langle #1|}

\newcommand{\bes} {\begin{subequations}}
\newcommand{\ees} {\end{subequations}}
\newcommand{\bea} {\begin{eqnarray}}
\newcommand{\eea} {\end{eqnarray}}
\newcommand{\be} {\begin{equation}}
\newcommand{\ee} {\end{equation}}

\def\al{\alpha}

\def\>{\rangle}
\def\<{\langle}
\def\Tr{\operatorname{Tr}}

\newcommand{\Var}{\operatorname{Var}}

\newcommand{\ignore}[1]{}

\newtheorem{theorem}{Theorem}

\newaliascnt{acknowledgement}{theorem}

\aliascntresetthe{acknowledgement}

\newaliascnt{algorithm}{theorem}

\aliascntresetthe{algorithm}

\newaliascnt{axiom}{theorem}

\aliascntresetthe{axiom}

\newaliascnt{claim}{theorem}

\aliascntresetthe{claim}

\newaliascnt{conclusion}{theorem}

\aliascntresetthe{conclusion}

\newaliascnt{condition}{theorem}

\aliascntresetthe{condition}

\newaliascnt{conjecture}{theorem}

\aliascntresetthe{conjecture}

\newaliascnt{corollary}{theorem}
\newtheorem{corollary}[corollary]{Corollary}
\aliascntresetthe{corollary}

\newaliascnt{criterion}{theorem}

\aliascntresetthe{criterion}

\newaliascnt{definition}{theorem}

\aliascntresetthe{definition}

\newaliascnt{example}{theorem}

\aliascntresetthe{example}

\newaliascnt{exercise}{theorem}

\aliascntresetthe{exercise}

\newaliascnt{lemma}{theorem}
\newtheorem{lemma}[lemma]{Lemma}
\aliascntresetthe{lemma}

\newaliascnt{notation}{theorem}

\aliascntresetthe{notation}

\newaliascnt{problem}{theorem}

\aliascntresetthe{problem}

\newaliascnt{proposition}{theorem}
\newtheorem{proposition}[proposition]{Proposition}
\aliascntresetthe{proposition}

\newaliascnt{remark}{theorem}

\aliascntresetthe{remark}

\newaliascnt{solution}{theorem}

\aliascntresetthe{solution}

\newaliascnt{summary}{theorem}

\aliascntresetthe{summary}

\newcounter{step}
\newenvironment{step}
  {\par\vspace{4pt}\noindent\refstepcounter{step} \textbf{Step \arabic{step}.}}
  {\vspace{4pt}}

\crefname{tab}{Table}{Tables}
\Crefname{tab}{Table}{Tables}

\crefname{eq}{Eq.}{Eqs.}
\Crefname{eq}{Eq.}{Eqs.}

\crefname{figure}{Figure}{Figures}
\Crefname{figure}{Figure}{Figures}

\crefname{section}{Sec.}{Secs.}
\Crefname{section}{Sec.}{Secs.}

\crefname{appendix}{Appendix}{Appendices}
\Crefname{appendix}{Appendix}{Appendices}

\crefname{theorem}{Theorem}{Theorems}
\Crefname{theorem}{Theorem}{Theorems}

\crefname{acknowledgement}{Acknowledgement}{Acknowledgements}
\Crefname{acknowledgement}{Acknowledgement}{Acknowledgements}

\crefname{algorithm}{Algorithm}{Algorithms}
\Crefname{algorithm}{Algorithm}{Algorithms}

\crefname{axiom}{Axiom}{Axioms}
\Crefname{axiom}{Axiom}{Axioms}

\crefname{claim}{Claim}{Claims}
\Crefname{claim}{Claim}{Claims}

\crefname{conclusion}{Conclusion}{Conclusions}
\Crefname{conclusion}{Conclusion}{Conclusions}

\crefname{condition}{Condition}{Conditions}
\Crefname{condition}{Condition}{Conditions}

\crefname{conjecture}{Conjecture}{Conjectures}
\Crefname{conjecture}{Conjecture}{Conjectures}

\crefname{corollary}{Corollary}{Corollaries}
\Crefname{corollary}{Corollary}{Corollaries}

\crefname{criterion}{Criterion}{Criteria}
\Crefname{criterion}{Criterion}{Criteria}

\crefname{definition}{Definition}{Definitions}
\Crefname{definition}{Definition}{Definitions}

\crefname{example}{Example}{Examples}
\Crefname{example}{Example}{Examples}

\crefname{exercise}{Exercise}{Exercises}
\Crefname{exercise}{Exercise}{Exercises}

\crefname{lemma}{Lemma}{Lemmas}
\Crefname{lemma}{Lemma}{Lemmas}

\crefname{notation}{Notation}{Notations}
\Crefname{notation}{Notation}{Notations}

\crefname{problem}{Problem}{Problems}
\Crefname{problem}{Problem}{Problems}

\crefname{proposition}{Proposition}{Propositions}
\Crefname{proposition}{Proposition}{Propositions}

\crefname{remark}{Remark}{Remarks}
\Crefname{remark}{Remark}{Remarks}

\crefname{solution}{Solution}{Solutions}
\Crefname{solution}{Solution}{Solutions}

\crefname{summary}{Summary}{Summaries}
\Crefname{summary}{Summary}{Summaries}

\crefname{step}{Step}{Steps}
\Crefname{step}{Step}{Steps}

\crefname{enumi}{Part}{Parts}
\Crefname{enumi}{Part}{Parts}

\crefformat{table}{Table #2#1#3}
\crefformat{equation}{Eq.~(#2#1#3)}
\crefformat{figure}{Figure #2#1#3}
\crefformat{section}{Sec.~#2#1#3}
\crefformat{appendix}{Appendix #2#1#3}

\crefformat{theorem}{Theorem #2#1#3}
\crefformat{lemma}{Lemma #2#1#3}
\crefformat{proposition}{Proposition #2#1#3}
\crefformat{corollary}{Corollary #2#1#3}
\crefformat{claim}{Claim #2#1#3}
\crefformat{definition}{Definition #2#1#3}
\crefformat{remark}{Remark #2#1#3}
\crefformat{example}{Example #2#1#3}
\crefformat{conjecture}{Conjecture #2#1#3}
\crefformat{step}{Step #2#1#3}

\makeatletter 
\renewcommand\onecolumngrid{%
  \do@columngrid{one}{\@ne}%
  \def\set@footnotewidth{\onecolumngrid}%
  \def\footnoterule{\kern-6pt\hrule width 1.5in\kern6pt}%
}
\makeatother

\begin{document}

\title{Global Control with the Tavis-Cummings Interaction}
\author{Plato Deliyannis}
\email{plato.deliyannis@duke.edu}
\affiliation{Duke Quantum Center and Department of Physics, Duke University, Durham, NC 27708, USA}
\author{Iman Marvian}
\email{iman.marvian@duke.edu}
\affiliation{Duke Quantum Center and Department of Physics, Duke University, Durham, NC 27708, USA}\affiliation{Department of Electrical and Computer Engineering, Duke University, Durham, NC 27708, USA}

\begin{abstract}
We study the controllability of a system of qubits under global control, where
control pulses act identically on all qubits.
Specifically, we consider a collection of qubits identically coupled to a single
bosonic mode, or harmonic oscillator, via the Jaynes-Cummings interaction.
This collective coupling, known as the Tavis-Cummings (TC) interaction, has been
realized in several quantum computing platforms, including superconducting and
atomic qubit systems.
Although the qubits do not interact directly with one another, they can become
entangled through their common coupling to the bosonic mode.
We characterize the group of unitaries that can be implemented on the joint Hilbert space of the qubits and bosonic mode using the TC interaction together with a global $z$ field $J_z$, corresponding to identical $z$ rotations on all qubits.
We show that for $n\ge 3$ qubits the set of realizable unitaries is restricted by an ``accidental'' symmetry of the TC Hamiltonian, distinct from its ``standard'' U(1) and permutational symmetries.
On the other hand, we find that the Hamiltonian $J_z^2$ breaks this accidental symmetry and, together with the TC interaction and $J_z$, achieves semi-universality: it allows the implementation of arbitrary unitaries that respect permutational and U(1) symmetry, up to certain constraints on the center of the group.
In a companion paper, we further analyze this remarkable accidental symmetry and show that it can be understood through Schwinger's bosonic model of angular momentum.
\end{abstract}

\maketitle

\section{Introduction}
Global control is a paradigm in quantum computing and quantum control in which individual qubit addressing is minimized, and control is instead implemented through collective fields that act uniformly on all qubits.
It is a promising route toward scaling quantum computers to larger system sizes \cite{Lloyd_1993,Benjamin_2001,Benjamin_2000}, as well as toward efficiently realizing multi-qubit operations \cite{Fitzsimons_etal_2007,Raussendorf_2005,Kay_Pachos_2004}, both of which are prerequisites for realizing virtually any potential advantage of quantum computing \cite{cesa2023universal,menta_etal_actuators_2026}.
A canonical example is a collection of qubits coupled identically to a single bosonic mode, or quantum harmonic oscillator.
Such systems arise naturally in several quantum computing platforms and provide a basic setting in which effective interactions between qubits are mediated by
the bosonic mode.
Examples include ion qubits coupled to a vibrational mode in trapped-ion systems \cite{Leibfried_2003_trappedion,Haffner_2008_trappedion}, superconducting qubits in circuit QED \cite{Blais_2004_cQED,Wallraff_2004_cQED}, and atoms in cavity QED \cite{Raimond_2001_cavity,Walther_2006_cavity, Varcoe_2000_cavity}.

Often, the interaction between a single qubit and the bosonic mode is well approximated by the Jaynes-Cummings (JC) interaction \cite{Jaynes_Cummings_1963,JC_history}, which has been widely used as a tool for realizing universal quantum computation \cite{Childs_Chuang_2000,Yuan_Lloyd_2007,Cirac_Zoller_1995}.
In the special case where all qubits couple identically to the bosonic mode, so that the interaction is invariant under permutations of the qubits, the resulting model is known as the Tavis-Cummings (TC) model \cite{tavis_1968_exact_solut,TC2_1969}.
The TC model has been studied extensively in contexts ranging from atomic physics to quantum control \cite{Bashir_Abdalla_1995,Bogoliubov_etal_1996,Rybin_etal_1998,Tessier_etal_2003,Vadeiko_etal_2003,Genes_etal_2003,Fink_etal_2009,Agarwal_etal_2012,Zou_etal_2013,Keyl_2014_control}.
The TC Hamiltonian can be realized, for example, through the coupling of a linear trapped-ion chain to its center-of-mass vibrational mode \cite{Retzker_2007, Molmer_Sorensen}, or in superconducting-qubit platforms \cite{Fink_etal_2009}.
\begin{figure*}[htp]
    \centering
    \includegraphics[width=0.9\textwidth]{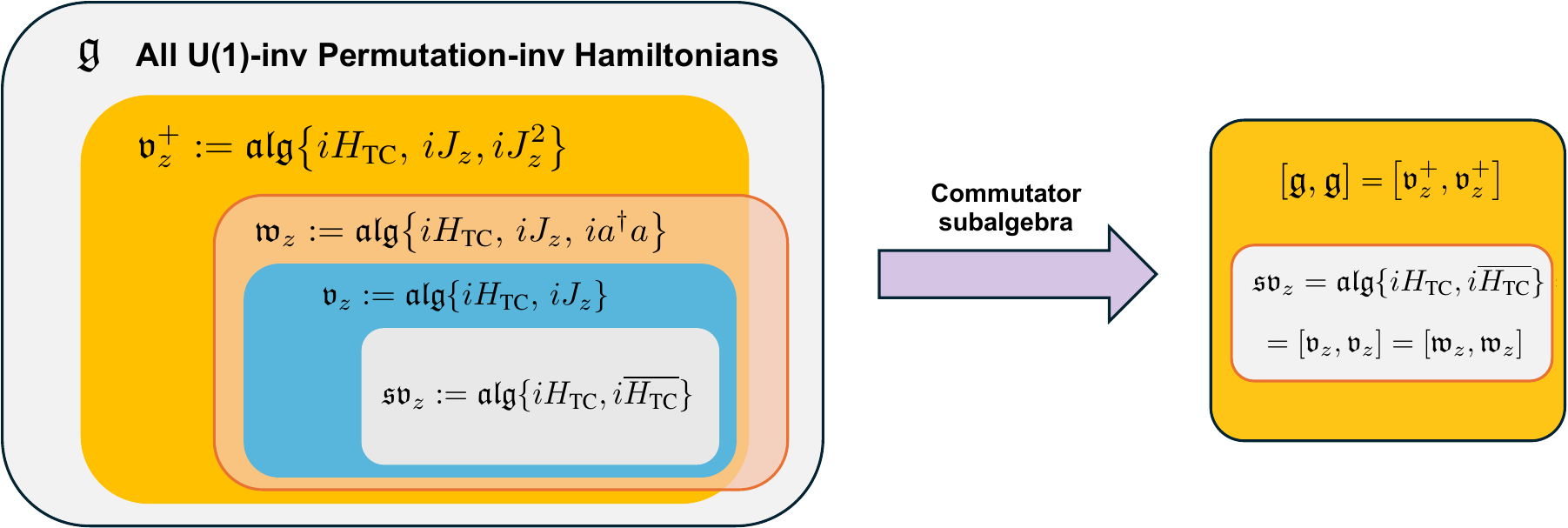}
    \caption{\textbf{First main result (\cref{thm:Thm0} and \cref{thm:full_characterization}).} Schematic relationship between different Lie algebras generated by various subsets of $\HTC\propto ({J}_+{a}+{J}_-{a}^{\dag})$, $\overline{\HTC}\propto i({J}_+{a}-{J}_-{a}^{\dag})$, $J_z$, $J_z^2$, and $a^\dag a$.
    All these algebras are subalgebras of $\mathfrak{g}$, the Lie algebra of all PI, U(1)-invariant Hamiltonians. 
    Although the algebras $\mathfrak{sv}_z$, $\vz$, $\wz$, $\vz^+$, and $\mathfrak{g}$ are all distinct, after removing their centers, i.e., after passing to their commutator subalgebras, they reduce to only two possibilities: $[\mathfrak{g},\mathfrak{g}]=[\vz^+,\vz^+]$ and $\mathfrak{sv}_z=[\vz,\vz]=[\wz,\wz]$. 
    The latter algebra is restricted by the accidental symmetry of $\HTC$, which, to the best of our knowledge, has not been identified before. The centers removed by passing to the commutator subalgebras induce relative phases between sectors with different U(1) charges or different angular momenta. 
    See \cref{thm:Thm0} for precise  relationships, and \cref{sec:accidental_sym} for further discussion of the accidental symmetry, which is investigated in more detail in a companion paper \cite{symmetry_paper}.
    For a system with $n$ qubits, the difference between the dimensions of the commutator algebras $[\mathfrak{g},\mathfrak{g}]=[\vz^+,\vz^+]$ and $\mathfrak{sv}_z=[\vz,\vz]=[\wz,\wz]$ is $n(n-2)(n+2)(n+4)/48$
    when $n$ is even, and is $n(n-2)(n+2)(n+4)/48 + (12n+3)/48$ when $n$ is odd.
    See also \cref{thm:full_characterization} for a full characterization of the associated Lie groups, in particular the Lie group of unitaries that can be realized using the Hamiltonians $\HTC$ and $J_z$.
    }
 \label{fig:Lie}
\end{figure*}

In this work, we study a system of qubits coupled to a bosonic mode through the permutation-invariant (PI) TC interaction, together with global uniform $z$ field that respect the same permutational symmetry.
While controllability and universality of quantum systems in the absence of such symmetries are well understood, characterizing the operations accessible under these constraints remains substantially more challenging.
Indeed, recent work has revealed unexpectedly strong restrictions on the set of realizable unitaries in the presence of symmetries (See, e.g., \cite{marvian_sym_loc_2022, marvian_SU(2), HLM_2024_SU(d),marvian_abelian_2024}).
\footnote{For instance, a recent no-go theorem shows that generic $n$-qubit unitaries respecting a given symmetry cannot be implemented using only $2$-qubit, or more generally $k<n$-qubit, gates that respect the same symmetry \cite{marvian_sym_loc_2022}.}

Specifically, we investigate and characterize the group $\V_z$ of all unitary time evolutions realizable on the combined qubit-bosonic system using the TC interaction together with a global $z$ field.
Our main characterization is stated in \cref{thm:full_characterization}.
To obtain this characterization, we apply a recent result of \cite{HLM_2024_SU(d)}, which provides simple criteria for determining whether a given set of symmetric Hamiltonians is universal within the symmetry-respecting subspace, i.e., whether it generates all unitaries compatible with the relevant symmetries.

In the present setting, the TC interaction and the global $z$ field respect both permutation invariance and a U(1) symmetry associated with conservation of the total excitation number, generated by $a^{\dag}a+J_z$.
Here, $a^\dag a$ is the harmonic-oscillator Hamiltonian, and $J_z$ is the total angular momentum of the qubits along the $z$ direction; see \cref{sec:setup}.
Equivalently, these Hamiltonians respect a $\Uo\times\Sn$ symmetry.
Consequently, every unitary in $\V_z$ is block diagonal with respect to the decomposition of the combined qubit-bosonic Hilbert space into sectors carrying inequivalent irreducible representations of $\Uo\times\Sn$.
These sectors are labeled by the conserved U(1) charge and the total angular momentum associated with the permutational symmetry.
This decomposition reduces the infinite-dimensional joint Hilbert space to an infinite direct sum of finite-dimensional invariant subspaces.
Accordingly, instead of characterizing $\mathcal{V}_z$ directly on the full Hilbert space, we characterize its projection to any finite, but arbitrarily large, collection of invariant sectors.

The group of all PI, U(1)-invariant unitaries would allow independent unitary transformations on each of these sectors.
The central question is therefore whether the TC interaction and a global $z$ field generate this entire symmetry-allowed group, or whether they impose additional restrictions.

In \cref{thm:full_characterization}, we fully characterize the group $\V_z$  of unitaries realizable with these two Hamiltonians, and find that there are exactly two types of restrictions on the components of unitaries in $\V_z$ across different sectors.

\vspace{0.2em}
\noindent $\bullet$ \textbf{Central constraints}: The center of $\V_z$ is strictly smaller than the center of the full group of unitaries respecting the $\Uo\times\Sn$ symmetry, namely $\Vsym$. 
The latter center consists of relative phases between sectors carrying inequivalent representations of $\Uo\times\Sn$; see \cref{sec:beyond_sym} for a precise definition.

\noindent $\bullet$\textbf{Constraints from ``accidental'' symmetry:} We identify a different symmetry of TC interaction, which to the best of our knowledge has not been identified before. For $n\geq 3$ qubits, this symmetry forces the unitaries realized in certain invariant sectors to be identical.
In our companion paper \cite{symmetry_paper}, we discuss this unexpected symmetry further and show that it can be explained using Schwinger's bosonic model of angular momentum. 
Here, we focus on its consequences for controllability.

\vspace{0.2em}
While the first type of constraint is generic and unavoidable for any finite set of Hamiltonians respecting the $\Uo\times\Sn$ symmetry, the second type arises from the specific form of  matrix elements of $\HTC$ and $J_z$.
Indeed, we show that these accidental-symmetry constraints can be removed by adding the Hamiltonian $J_z^2$, which is PI and also respects the U(1) symmetry.
Specifically, in \cref{sec:Jz2_proof}, we show that $\HTC$, $J_z$, and $J_z^2$ generate all unitary transformations compatible with the permutational and U(1) symmetries, up to the generic central constraints described above.

\cref{thm:Thm0} summarizes the relationship between the Lie algebras generated by various combinations of $\HTC$, $J_z$, $J_z^2$, and $a^\dag a$, and the Lie algebra $\g$ of all PI, U(1)-invariant Hamiltonians acting on $n$ qubits coupled to a bosonic mode.
The corresponding inclusions and separations are illustrated in \cref{fig:Lie}.
We note that related controllability questions were previously studied by Keyl, Zeier, and Schulte-Herbrüggen \cite{Keyl_2014_control} in the totally symmetric subspace of the qubits, corresponding to the highest-angular-momentum irreducible representation (irrep)  $j=n/2$.
In particular, restricted to a single U(1) charge sector within this totally symmetric subspace, their results imply that $\HTC$ and $J_z$ generate all unitaries on that sector.
However, this earlier work did not address the simultaneous action on multiple charge and angular-momentum sectors.
As a result, it did not capture the additional constraints we identify here, including the accidental-symmetry constraints that lock the unitaries realized in different invariant sectors together.

\vspace{0.5em}
Our second main result, presented in \cref{thm:general_k_ancilla} and illustrated in \cref{fig:VU_gate}, builds on the above characterization and addresses a fundamental question with broad applications in quantum computing and control: Suppose that, as depicted in \cref{fig:VU_gate}, the bosonic mode is used as an ancillary system: it is initialized in a fixed energy eigenstate $\ket{k}_{\text{osc}}$ and returned to the same state at the end of the protocol. 
What unitary transformations $U$ can then be realized on the qubits using only the Hamiltonians $\HTC$ and $J_z$?

\begin{figure}[htp]
    \centering
    \includegraphics[width=0.7\linewidth]{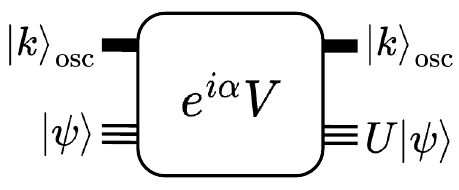}
    \caption{\textbf{Second main result (\cref{thm:general_k_ancilla}).}
    The goal is to implement a unitary $U$ on the qubits by coupling them to a bosonic mode initialized in the eigenstate $\ket{k}$ of its intrinsic Hamiltonian $a^\dag a$, with eigenvalue $k$.
    After the operation, the bosonic mode returns to the same pure state and hence remains unentangled with the qubits.
    This is achieved by implementing a unitary $V$ using the Hamiltonians $\HTC$ and $J_z$.
    The phase $e^{i\alpha}$ denotes the freedom to choose an overall global phase; see \cref{eq:VU_ancilla_k}.
    \cref{thm:general_k_ancilla} determines which unitary transformations $U$ can be realized in this way.}
    \label{fig:VU_gate}
\end{figure}

We fully address this question in \cref{thm:general_k_ancilla}.
Remarkably, we find that, in addition to the permutational and U(1) symmetries inherited from the Hamiltonians $\HTC$ and $J_z$, the resulting qubit unitaries must satisfy additional constraints, which are characterized in the theorem.

Building on these results, in the companion paper \cite{circuit_paper}, which focuses on quantum circuits, we discuss applications of \cref{thm:full_characterization} to quantum computation.
In particular, as briefly discussed in \cref{sec:qubit_unis}, we prove in \cite{circuit_paper} that \emph{any PI unitary transformation on $n$ qubits can be realized using the TC interaction together with global $z$ and $x$ fields, assuming that the qubits couple identically to an bosonic mode initially prepared in the vacuum state of its intrinsic harmonic-oscillator Hamiltonian}.

This result, which follows as a corollary of \cref{thm:general_k_ancilla}, shows that this globally controlled qubit-bosonic system is sufficient to implement arbitrary PI gates on the qubits, including unitaries such as $\exp({i\theta Z^{\otimes n}}):\theta\in[0,2\pi)$ and multi-qubit controlled-$Z$ gates.
The latter are equivalent, up to single-qubit Hadamard gates on the target qubit, to multi-qubit Toffoli gates, which are among the most useful primitive operations in quantum computation.

\vspace{1em}
The rest of this paper is organized as follows:
\begin{itemize}
\item \cref{sec:setup} introduces the TC interaction and discusses its permutational and U(1) symmetries.
Then, we discuss different variants of control systems that combine this interaction with other PI Hamiltonians.
\item \Cref{sec:commutator_subgroups} introduces commutator subgroups and central constraints, which are used to characterize the groups of unitaries realized for these control systems.
\item \Cref{sec:summary} provides a summary of the main results; \cref{thm:Thm0} describes the relationships between the groups of unitaries realized for each control system in terms of their Lie algebras.
\item \cref{sec:sym_subspace} characterizes the group $\V_z$ of unitaries realized using the TC interaction and a global $z$ field, restricted to the totally symmetric subspace of the qubits, corresponding to the highest angular-momentum sector $j=n/2$ (see \cref{lem:sym_semi_universality}).
In \cref{sec:Vswap}, we also prove the existence of a universal qubit-bosonic SWAP gate that exchanges the state of the qubits in the totally symmetric subspace with the state of the bosonic mode.
\item \cref{sec:beyond_sym} generalizes the results of the previous section beyond the highest angular-momentum sector $j=n/2$ to arbitrary $j$, in order to fully characterize $\V_z$.
\Cref{sec:full_char_proof} proves the main results of \cref{sec:sym_subspace,sec:beyond_sym}.
\item \cref{sec:accidental_sym} describes an unexpected symmetry of the TC Hamiltonian, which is unrelated to its permutational and U(1) symmetries (see \cref{prop:accidental_symmetry_matrix}).
\item Finally, \cref{sec:qubit_unis} briefly discusses some implications of \cref{thm:full_characterization} for implementing qubit unitaries by coupling qubits via the TC interaction to a bosonic mode initialized in an eigenstate of its intrinsic Hamiltonian $a^{\dag}a$: see \cref{thm:general_k_ancilla}.
These applications are further discussed in our companion paper, \cite{circuit_paper}.
\end{itemize}

\section{Setup: Tavis-Cummings Interaction}
\label{sec:setup}
Consider a system of $n$ qubits, or two-level systems, coupled identically to bosonic mode, or quantum harmonic oscillator.
In this paper, we use ``bosonic mode'' and ``oscillator'' interchangeably.
The total qubit-oscillator interaction is described by the Tavis-Cummings (TC) Hamiltonian \cite{tavis_1968_exact_solut}
\begin{align}
 \begin{split}
    \HTC
    \,:=&\,\, \frac{\gTC}{2}\sum_{i=1}^{n} \Big(\sigma_+^{(i)}a + \sigma_-^{(i)}a^{\dag}\Big) \\
    \,=&\,\, \gTC\,\big({J}_+a + {J}_-a^{\dag}\big)\,,
 \end{split}
 \label{eq:TCham}
\end{align}
where $\gTC$ is the coupling strength, $a$ is the annihilation operator on the oscillator Hilbert space satisfying $[a,a^\dag]=\mathbb{I}$, and
\begin{align}
    {J}_{\pm}:=J_x\pm iJ_y\,,
\end{align}
where
\begin{align}
    J_w \,:=\, \frac{1}{2}\sum_{i=1}^{n} {\sigma}_w^{(i)},\qquad
    w=x,y,z ,
 \label{eq:Zham}
\end{align}
are the total angular momentum operators of the $n$ qubits.
The Hamiltonian acts on the joint qubit-oscillator Hilbert space
\begin{align}
    \H_{\text{qubits}}\otimes\H_{\text{osc}}
    = (\C^2)^{\otimes n}\otimes\mathcal{L}^2(\R).
\end{align}
In addition to $\HTC$, we consider the PI qubit Hamiltonian $\omega_z(t)J_z$, which can be realized by applying uniform $z$ field to all qubits. 
In general, a collective rotation about the axis $w=x,y,z$ by an angle $\phi\in[-2\pi,2\pi)$ is described by
\begin{align}
    R_w(\phi):=\exp(-i\phi J_w).
\end{align}
Since 
\begin{align}
    R_z(\pi)^{\dag}J_wR_z(\pi)=-J_w \quad:\quad w=x,y \,,
\end{align}
it follows that
\begin{align}
    R_z(\pi)^{\dag}\HTC R_z(\pi)=-\HTC\,.
\end{align}
Thus, the ability to perform collective $z$ rotations allows us to reverse the sign of the TC interaction. 
We may therefore take $\gTC>0$ without loss of generality. More generally, by interleaving evolution under $\HTC$ with $\pi$ rotations $R_z(\pi)$ at appropriate times, one can engineer an effective time-dependent interaction of the form
\begin{align}
    f(t)\HTC\,,\quad |f(t)|\leq 1\,.
\end{align}
Hence, in the following, we assume Hamiltonians $\HTC$ and $J_z$ can be turned on and off at will. 
The harmonic oscillator has an intrinsic Hamiltonian $H_\text{osc}=\nu a^\dag a$, which is typically fixed and time independent.
We therefore work in a rotating frame, or interaction picture, in which the intrinsic evolution generated by $H_\text{osc}$ is removed.
More specifically, we use the rotating frame generated by $\nu(a^\dag a+J_z)$, corresponding to the transformation
\begin{align}
    \ket{\psi(t)}_{\rm Rot}=
    \exp\!\big(i\nu(a^\dag a+J_z)t\big) \ket{\psi(t)}_{\rm Lab}\,.
\end{align}
In this frame, the total Hamiltonian takes the form
\begin{align} \label{eq:totH}
    H(t) = f(t)\HTC + \omega_z(t)J_z\,,
\end{align}
where the real control amplitudes $f(t)$, $\omega_z(t)$ may be taken to be piecewise constant. 
(Indeed, it suffices to consider the special case where, at each moment $t$ in time, only one of the control functions $f(t)$ or $z(t)$ is $\pm1$,  while the other is zero.)
Here, $\omega_z(t)$ denotes the detuning from the rotating-frame frequency, i.e., the laboratory-frame $z$-field strength shifted by $\nu$.

We consider unitary transformations $V(T)$ that can be realized using Hamiltonians in \cref{eq:totH} under the Schr\"odinger equation 
\begin{align}
    \frac{d}{dt}V(t) \eq -i H(t) V(t)\quad:\quad0 \le t\le T \,,
\end{align}
for arbitrary time $T$, with the initial condition $V(0)=\mathbb{I}$.
We denote the group of such unitary transformations by $\mathcal{V}_z$.
Equivalently, $\V_z$ is the group of unitaries that can be realized as finite sequences of unitaries $R_z(\phi): \phi\in[-2\pi,2\pi)$ and 
\begin{align}
    \UTC(r)&:=\exp(-i r  \HTC /\gTC)\,,
\end{align}
where $r\in\mathbb{R}$ and $\phi\in [-2\pi,2\pi)$ are dimensionless parameters, i.e.,
\begin{align}
    \V_z\,=\,\big\langle \exp({-it\HTC}),\,\exp(-i t J_z) \,:\, t\in\mathbb{R} \big\rangle\,.
\end{align}
In other words, there are finite $L\in\mathbb{N}$ and $T\in\R$ such that every $V\in\V_z$ can be decomposed as
\begin{align}
    V \eq \prod_{\ell=1}^{L} \UTC(r_{\ell}) R_{z}(s_{\ell})\,\, : \,\,r_{\ell},s_{\ell}\in[-T,T]\,.
 \label{eq:Vz_circ_def}
\end{align}

A third, equivalent way to define $\V_z$ is as the connected group obtained from the real Lie algebra generated from Hamiltonians $\HTC$ and $J_z$, that is
\begin{align}
    \V_z = \exp(\vz)\,\,:\,\,\vz=\alg_{\R}\big\{i\HTC, i\Zhat\big\}\,.
\end{align}
As we further discuss in \cref{sec:commutator_subgroups}, $\vz$ is an infinite-dimensional Lie algebra. 
In general, analyzing such algebras can be substantially more difficult
than analyzing finite-dimensional ones. 
In the present case, however, the analysis is greatly simplified by a fundamental property of the TC interaction: its U(1) symmetry.

\subsection{Permutational and U(1) symmetries of the TC interaction}
\label{sec:TC_and_z}
A characteristic feature of the TC Hamiltonian, which plays a crucial role in this work, is that in addition to permutational symmetry, it also respects a U(1) symmetry associated with the conserved operator $a^\dag a + J_z$.
In particular, we define the \textit{charge} operator
\begin{align}
    Q \,:=\, a^\dag a + J_z + \frac{n}{2} \,\,=\,\, \sum_{q=0}^\infty q\,\Pi_q \,,
 \label{eq:charge}
\end{align}
where the second equation is the spectral decomposition of $Q$: the integers $q \ge 0$ are its eigenvalues, and $\Pi_q$ is the projector onto the corresponding eigen-subspace.
The shift by $n/2$ is chosen so that the spectrum of $Q$ is the non-negative integers $\{0,1,2,\dotsc\}$.
(Note that, as is common, we often suppress tensor products with identity operators.)
This Hermitian operator defines a unitary representation of the U(1) symmetry as
\begin{align}\label{eq:U(1)}
 \begin{split}
    \exp(i\theta Q)=e^{i\theta n/2}&\,\exp(i\theta J_z)\otimes\exp(i\theta\Nhat)\\[4pt]
    &\qquad:\,\,\theta\in[0,2\pi)\,.
 \end{split}
\end{align} 
Clearly, the $z$-field Hamiltonian $J_z$ also conserves this charge, whereas the $x$-field Hamiltonian $J_x$ does not; that is,
\begin{align}
    [\HTC,Q]=[J_z,Q]=0\,,
\end{align}
but $[J_x,Q]\neq 0$.
The presence of this U(1) symmetry, which has been extensively exploited since the original work of Jaynes, Tavis and Cummings \cite{Jaynes_Cummings_1963,tavis_1968_exact_solut}, provides a powerful tool for understanding the dynamics of this model.
In particular, thanks to this fundamental property of the TC interaction, the infinite-dimensional Hilbert space $\H_{\text{qubits}}\otimes\H_{\text{osc}}$ decomposes into an infinite direct sum of finite-dimensional subspaces.
With respect to this decomposition, the Lie algebra $\vz$ and the group $\mathcal{V}_z$ are block diagonal.

It is worth emphasizing that adding a global $x$ field qualitatively changes this structure: from any initial state, the reachable subspace is generally infinite dimensional.
In particular, the oscillator can leak into arbitrarily high energy levels, which substantially complicates the analysis. 

Another crucial property of $\HTC$ is its permutational symmetry, which follows from the fact that all qubits are coupled identically to the oscillator.
In particular, $\HTC$ commutes with $\textbf{P}_{i,j}$, the SWAP unitary acting on any pair of qubits $i$ and $j$, also known as a transposition.
Global fields such as $J_z$ also have this property, and hence so does the charge $Q$:
\begin{align}
    [\HTC, \textbf{P}_{i,j}]=[J_z, \textbf{P}_{i,j}]=[Q, \textbf{P}_{i,j}]=0 \,,
\end{align}
for all $i$ and $j$.

Clearly, $\V_z$ only contains unitaries that respect both the permutational and U(1) symmetries.  
The question is whether the converse holds: namely, whether every unitary with these symmetries belongs to $\V_z$, or equivalently, can be realized using $\HTC$ and global rotations about the $z$-axis.
Before addressing this central question, we discuss several variants of the problem.

\subsection{Other variants of the problem}
\subsubsection{Phase-shifted TC Interaction}
Instead of turning the TC interaction on and off and interleaving these intervals with $z$-rotations of the qubits as in \cref{eq:Vz_circ_def}, one can simply apply phase-shifted versions of the TC interaction, namely,
\begin{align}
 \begin{split}
    \HTC(\phi)\,:=&\,\,R_z^{\dag}(\phi) \HTC R_z(\phi)\\[4pt]
    =& \,\,\gTC\,\big(e^{i\phi}{J}_+{a} + e^{-i\phi}{J}_-{a}^{\dag}\big)\,,
 \end{split}
 \label{eq:HTC_phase_shifted}
\end{align}
for $\phi\in[0,2\pi)$, and move the overall rotations around $z$ to the end of the process. 
Furthermore, $\HTC(\phi)$ can be written as a linear combination of $\HTC$ and the Hamiltonian
\begin{align}
 \begin{split}
    \overline{\HTC}\,\,:=\HTC\left(\frac{\pi}{2}\right) \,\,&=\,\, i \gTC\,({J}_+{a} - {J}_-{a}^{\dag}) \\[4pt]
    \,\,&=\,\, i[J_z,\HTC]\,.
 \end{split}
 \label{eq:HTCbar}
\end{align}
Therefore, the unitaries realized by such Hamiltonians form the group
\begin{align}
 \begin{split}
    \mathcal{SV}_z:=&\,\,\big\langle \exp\big({-it\HTC(\phi)\big)} \,:\, t\in\mathbb{R},\phi\in[0,2\pi) \big\rangle\\[4pt]
    =&\,\,\big\langle \exp\big({-it\HTC}\big),\,\exp\big({-it\overline{\HTC}}\big) \,:\, t\in\mathbb{R}\big\rangle\,,
 \end{split}
\end{align}
with the corresponding Lie algebra
\begin{align}
 \begin{split}
    \sv_z :=&\,\, \alg_{\R}\big\{ i\HTC(\phi): \phi\in[0,2\pi)\big\}\\[4pt]
    =&\,\,\alg_{\R}\big\{ i\HTC, i \overline{\HTC}\big\}\,.
 \end{split}
\end{align}
Clearly, $\mathcal{SV}_z$ and $\sv_z$ are, respectively, a  subgroup and subalgebra of $\V_z$ and $\vz$.

\vspace{2mm}
It is also worth noting that $\mathcal{V}_z$ is generated by the subgroup
$\mathcal{SV}_z$ together with the one-parameter family of U(1) rotations
$R_z(\phi)$, with $\phi\in[0,4\pi)$. 
Indeed, the above argument shows that
$\mathcal{SV}_z$ is a normal subgroup of $\mathcal{V}_z$: for any $\widetilde{V}\in\mathcal{SV}_z$ and any $\phi$, $R_z(\phi) \widetilde{V} R_z(\phi)^\dag \in \mathcal{SV}_z$. 
(As we explain further in \cref{sec:commutator_subgroups}, $\mathcal{SV}_z$ is in fact a special normal subgroup, namely the commutator subgroup of $\mathcal{V}_z$).
Therefore, every unitary $V\in\mathcal{V}_z$ can be written in the form
\be\label{eq:Dec}
V=R_z(\phi) \widetilde{V}\,,
\ee
for some $\widetilde{V}\in\mathcal{SV}_z$ and $\phi\in[0,4\pi)$.

\subsubsection{Varying the Oscillator Hamiltonian }
Another variant of 
Hamiltonian \cref{eq:totH} is
\begin{align}
    H_{\text{Lab}}(t) = f(t) \HTC + \omega_z(t) J_z + \nu\Nhat\,\,:\,\,0\le t\leq T\, ,
 \label{eq:Hlab}
\end{align}
which includes the time-independent harmonic oscillator Hamiltonian $\nu a^\dag a$. 
This Hamiltonian is indeed physically relevant as it describes the evolution of the system in the lab frame. 
However, as explained at the beginning of this section, one can  eliminate this constant time-independent term by going to the rotating frame defined by the transformation $\ket{\psi(t)}_{\text{Rot}} = \exp(i \nu (\Nhat + \Zhat) t) \ket{\psi(t)}_{\text{Lab}} \,$.  
Then, it is straightforward to see that the time evolution of the system will be described by a  Hamiltonian in the form of \cref{eq:totH} with a modified $\omega_z(t)$. 
 
One can also consider another variant of this Hamiltonian in which one has control over the oscillator Hamiltonian, namely
\begin{align}
    \widetilde{H}_{\text{Lab}}(t) = f(t)\HTC + \omega_z(t) \Zhat+ n(t)  \nu \Nhat\,,
 \label{eq:H_controlling_N}
\end{align}
with a real function $n(t)$. 
It turns out that this does not change the set of realizable unitaries. 
That is, even when restricted to time $t\ge 0$, as in \cref{eq:Hlab}, which defines a direction of time, the presence of time-independent (drift) Hamiltonian $\nu a^\dag a$, does \textit{not} restrict the set of realizable unitaries.
This is a consequence of periodicity of this Hamiltonian -- if we set $f(t)=\omega_z(t)=0$, and let the oscillator freely evolve, we can realize both $\exp(-i t a^\dag a)$ and $\exp(i t a^\dag a)$ for arbitrary $t\ge 0$, which is equivalent to having control over $n(t)$ in \cref{eq:H_controlling_N}.

The Hamiltonian $\nu a^\dag a$, whether controllable and time-dependent or not, extends the set of realizable unitaries to a larger group, that contains $\V_z$, namely
\begin{align}
 \begin{split}
   \mathcal{W}_z:=\big\langle \exp({-it\HTC}),\,&\exp({-i t J_z}),\\
   &\hspace{-10pt}\exp({-ita^\dag a})\,:\, t\in\mathbb{R} \big\rangle\,,
 \end{split}
\end{align}
with the corresponding Lie algebra
\begin{align}
 \wz \,:=\,\, \alg_{\R}\big\{i\HTC,\,i\Zhat,\,ia^{\dag}a\big\}\,.
\end{align}
Note that $[a^\dag a, J_z]=0$ and
\begin{align}\label{eq:ab}
    \big[a^\dag a, \HTC\big]=\left[Q-J_z-\frac{n}{2},\,\HTC\right]=-\big[J_z, \HTC\big]\,,
\end{align}
which, as we further explain in the next section, implies that the only difference between $\wz$ and $\vz$ is their center (see \cref{eq:Lie_vw}).
Furthermore, this implies that any unitary $W\in\mathcal{W}_z$, has decompositions as
\begin{align}
    W \eq e^{-i \theta a^\dag a} V \eq R_z(\phi)e^{-i t a^\dag a} \widetilde{V}\,, 
\end{align}
for proper $\theta\in [0,2\pi)$, $V\in \mathcal{V}_z$,  $\widetilde{V}\in \mathcal{SV}_z$, and $\phi\in[0,4\pi)$, where to obtain the second decomposition we have used \cref{eq:Dec}.

\subsubsection{Anti-Tavis-Cummings Hamiltonian}
Finally, we note that all the results of this paper have an immediate counterpart for the so-called anti-Tavis-Cummings (anti-TC) Hamiltonian, which is obtained from the TC Hamiltonian by exchanging $J_+$ and $J_-$, or equivalently by exchanging $a$ and $a^\dag$.
Equivalently, it is related to $\HTC$ by a global $\pi$ rotation of the qubits about the $x$ axis:
\begin{align} 
    H_{\text{anti-TC}}=g_{\text{anti-TC}} (J_+ a^\dag+J_- a) = e^{i \pi J_x} \HTC e^{-i \pi J_x} \,. 
\end{align}
Thus, the anti-TC model is unitarily equivalent to the TC model. 
In trapped-ion terminology, the TC and anti-TC interactions correspond, respectively, to the red- and blue-sideband spin-motion couplings.

This unitary equivalence implies that the Lie algebra generated by $H_{\mathrm{anti\text{-}TC}}$ and $J_z$ is isomorphic to the Lie algebra
generated by $\HTC$ and $J_z$, after conjugation by the same global rotation.
In particular, since $e^{i \pi J_x} J_z e^{-i \pi J_x}=-J_z$, the conserved charge for the anti-TC interaction is obtained by applying this rotation to the TC charge $Q$:
\begin{align}
    Q_{\mathrm{anti\text{-}TC}}
    \,:=\, e^{i\pi J_x} Q e^{-i\pi J_x}
    = a^\dag a - J_z+\frac{n}{2}\mathbb{I}.
\end{align}
Therefore, the charge sectors, the corresponding block decompositions, and the representation-theoretic analysis are mapped directly from the TC case to the anti-TC case by this conjugation.
Consequently, all characterization results established above for $\HTC$ apply without modification to $H_{\mathrm{anti\text{-}TC}}$, after replacing the conserved charge $Q$ by $Q_{\mathrm{anti\text{-}TC}}$ and conjugating the relevant operators by $e^{i\pi J_x}$.

\section{Centers, Commutator Subalgebras, and Commutator Subgroups}
\label{sec:commutator_subgroups}
In this section, we further discuss the relations among these groups and their corresponding Lie algebras, and show how these relations can be understood in terms of the notions of centers and commutator subgroups and subalgebras.

\subsection{The Lie algebra of PI, U(1)-invariant Hamiltonians}
By definition, the above groups satisfy
\begin{align}
    \mathcal{SV}_z\subset \V_z \subset \mathcal{W}_z \subset \Vsym,
 \label{eq:Lie_Groups}
\end{align}
where
\begin{align}
 \begin{split}
    \Vsym = \Big\{V:\,
    [V, Q]&=[V, \textbf{P}_{i,j}]=0 :\\
    &i\neq j\in \{1,\,\dotsc,\, n\}\Big\}
 \end{split}
\end{align}
is the group of \textit{all} PI, U(1)-invariant unitaries, where $\textbf{P}_{i,j}$ is the SWAP unitary acting on qubits $i$ and $j$.

Similarly, their corresponding Lie algebras satisfy
\begin{align}
    \sv_z \subset\vz \subset \wz \subset \g\,,
 \label{eq:Lie_Algebras}
\end{align}
where $\g$ is the Lie algebra corresponding  to $\Vsym$, which can be easily seen to be the set of all PI, U(1)-invariant skew-Hermitian operators,
\begin{align}
    \g := \Big\{&A\in \mathcal{L}\big((\C^2)^{\otimes n}\otimes\mathcal{L}^2(\R)\big): A=-A^\dag,\\
    &[A, Q]=[A, \textbf{P}_{i,j}]=0 :i\neq j\in \{1,\,\dotsc,\,n\} \Big\}\,. \nonumber
\end{align}
Using the irreducible representation (irrep) basis of the permutation group $\mathbb{S}_n$ and the eigen-subspaces of charge operator $Q$, we find a simple characterization of group $\Vsym$ and algebra $\g$ (see \cref{eq:g-char0,eq:g-char}).
In particular, subspaces with inequivalent irreps of $\mathbb{S}_n$ can be labeled by the eigenvalues of the angular momentum squared operator $J^2$, which has the spectral decomposition 
\begin{align}
    J^2=J_x^2+J_y^2+J_z^2 \,=\,\sum_{j=j_\text{min}}^{n/2} j(j+1) P_j\,,
\end{align}
where 
\begin{align}
    j(j+1) \,\, : \quad j=j_\text{min},\dotsc, \frac{n}{2} 
\end{align}
are the eigenvalues of $J^2$, with $j_\text{min}=1/2$ for odd $n$ and $j_\text{min}=0$ for even $n$.  
Similarly, the spectral decomposition of the charge operator $Q$ takes the form
\begin{align}
    Q \eq \sum_{q=0}^\infty q \Pi_q\,,
\end{align}
where eigenvalues $q$ are non-negative integers.

A useful implication of Schur-Weyl duality (see \cref{sec:schur_weyl}), is that every PI operator commutes with $J^2$.
That is,
\begin{align}
    \big[Q, J^2\big]=\big[\HTC, J^2\big]=\big[J_z, J^2\big]=0\,.
\end{align}
Since $Q$ and $J^2$ commute, their eigen-projectors also commute, which means
\begin{align}
    \Pi\qj \,:=\,P_j\Pi_q=\Pi_q P_j
 \label{eq:Piqj}
\end{align}
is a Hermitian projector.
Each projector $\Pi\qj$ is clearly PI and U(1)-invariant, and therefore $i\Pi\qj \in \mathfrak{g}$.
Furthermore, $i\Pi\qj$ commutes with every element of $\mathfrak{g}$, and hence belongs to the center of $\mathfrak{g}$ (i.e., the set of elements of $\mathfrak{g}$ that commute with all elements of $\mathfrak{g}$). 
Indeed, as we further explain in \cref{sec:CC_PI_unitaries} (see \cref{eq:center}), the center of $\mathfrak{g}$ is exactly the span of these projectors; namely,
\begin{align}
    \mathfrak{c}=\s_{\R}\Big\{i\Pi\qj\,:\,q\ge 0\,;\,\, j=j_\text{min},\dotsc,n/2 \Big\}\,.
\end{align}
These elements of $\mathfrak{g}$ correspond to Hamiltonians that generate \textit{relative phases} between subspaces carrying different $q$ and $j$, i.e., unitaries in the form
\begin{align}\label{Eq:phases}
    V=\sum_{q,j} \exp(i \theta_{q,j})\,\Pi_{q,j}\quad:\quad\theta_{q,j}\in[0,2\pi)\,,
\end{align}
which constitute exactly the center of $\Vsym$.

A general element of $\mathfrak{g}$ has a unique decomposition as 
\begin{align}\label{eq:dec0}
    A \eq A_\perp+\sum_{q,j} \Tr(A \Pi_{q,j})\frac{\Pi_{q,j}}{\Tr(\Pi_{q,j})}\,,
\end{align}
where the second term is simply the projection of $A$ onto the center $\mathfrak{c}$, while $A_\perp$ is orthogonal to $\mathfrak{c}$. That is,
\begin{align}
    \Tr(A_\perp\Pi_{q,j})=0 \qquad \forall \,\,q\geq0\,\,\,\text{and}\,\,
    \,j=\jmin,\dotsc,\frac{n}{2}\,.
\end{align}
Furthermore, $A_\perp$ belongs to the commutator subalgebra of $\mathfrak{g}$, denoted by $[\g,\g]$, which is the subalgebra generated by elements of the form $[h_1,h_2]$ with $h_1,h_2\in \g$. 
In other words, the decomposition in \cref{eq:dec0} corresponds to the decomposition
\begin{align}
   \g=[\g, \g]\oplus \mathfrak{c}\,,
\end{align}
where $\mathfrak{c}$ is the center and is orthogonal to the commutator sub-algebra. 
This commutator subalgebra is semi-simple; namely, it is an infinite direct sum of $\mathfrak{su}(d_n(q,j))$ for all $q$ and $j$, where $d_n(q,j)$ is the dimension associated with the sector labeled by $(q,j)$, given in \cref{eq:Hqjn_dimension}.

\subsection{Central constraints}
\label{sec:central_constraints}
We are interested in determining the subalgebra of $\g$ generated by the TC interaction and global $z$ rotations, namely $\vz=\mathfrak{alg}\{i\HTC, iJ_z\}$.
While the center $\mathfrak{c}$ of $\mathfrak{g}$ is infinite-dimensional, it turns out that $\vz$ contains only a one-dimensional subspace of $\mathfrak{c}$.

This is a consequence of the following more general fact: any subalgebra
\begin{align}
    \mathfrak{h} \eq \mathfrak{alg}_{\mathbb{R}}\{iH_1,\,\dotsc,\,iH_r\}
\end{align}
generated by a finite number of Hamiltonians $iH_1,\dotsc,iH_r\in \g$ contains only a finite-dimensional subspace of $\mathfrak{c}$. 
More precisely, the projection of $\mathfrak{h}$ onto $\mathfrak{c}$ is equal to the subspace spanned by the projections of $iH_1,\dotsc,iH_r$ onto $\mathfrak{c}$, that is,
\begin{align}
    \text{span}_{\mathbb{R}} \left\{\sum_{q,j} \Tr(iH_s \Pi_{q,j}) \frac{\Pi_{q,j}}{\Tr(\Pi_{q,j})}\,\,:\,\, s=1,\dotsc, r\right\}\,,
\end{align}
whose dimension is clearly bounded by $r$ \cite{PhysRevA.92.042309, marvian_sym_loc_2022}.

It turns out that the projections of $\HTC$ and $\overline{\HTC}$ onto this subspace are zero, i.e.,
\begin{align}
    \Tr\big(\HTC \Pi_{q,j}\big) \eq \Tr\big(\overline{\HTC} \Pi_{q,j}\big)=0
\end{align}
for all $q$ and $j$, whereas $J_z$ has a nonzero projection. 
It follows that the Lie algebra $\sv_z=\mathfrak{alg}\{i\HTC,i\overline{\HTC}\}$ does not contain any element of $\mathfrak{c}$, whereas $\vz=\mathfrak{alg}\{i\HTC,iJ_z\}$ contains only a one-dimensional subspace of the infinite-dimensional center $\mathfrak{c}$. 
More precisely, its projection onto $\mathfrak{c}$ corresponds to the projection of $iJ_z$ onto $\mathfrak{c}$, i.e., the one-dimensional subspace of operators proportional to
\begin{align*}
    \sum_{q,j} \Tr(i J_z \Pi_{q,j})\frac{\Pi_{q,j}}{\Tr(\Pi_{q,j})}\,.
\end{align*}
Indeed, it turns out that $\sv_z$ is centerless, i.e., it does not contain any nonzero element that commutes with all other elements of $\sv_z$ (see \cref{thm:Thm0} below).

\subsection{Commutator subgroups and semi-universality}\label{sec:semi-univ}
The commutator subalgebra is the Lie algebraic version of the notion of commutator subgroup.
For any group $\V$, its commutator subgroup is defined as the subgroup of $\V$ generated by elements of the form $V_1^{-1} V_2^{-1} V_1 V_2$ for $V_1, V_2\in\V$, i.e.
\begin{align}
    [\V,\V]\,:=\,\big\langle V_1^{-1} V_2^{-1} V_1 V_2\,:\, V_1, V_2\in\V\big\rangle\,.
\end{align}
If $\V$ is a connected Lie group with Lie algebra $\mathfrak{v}$, its commutator subgroup is also a connected Lie group whose associated Lie algebra is the commutator subalgebra $[\mathfrak{v},\mathfrak{v}]$ of $\mathfrak{v}$.  
For instance, 
\begin{align}
    \mathcal{S}\Vsym:=\Big[\Vsym, \Vsym\Big]\,,
\end{align}
is the commutator subgroup of $\Vsym$, and has a simple characterization given in \cref{eq:sv1}. This is the connected Lie group associated to the Lie algebra $[\mathfrak{g},\mathfrak{g}]$. \\

In what follows, to set aside the central constraints on relative phases discussed above and focus on other possible restrictions, we use the notion of \emph{semi-universality}, introduced in \cite{marvian_abelian_2024}. 
We say that a subgroup $\mathcal{W}\subset\V$ is \emph{semi-universal in} $\V$ if their commutator subgroups are equal, i.e.,
\begin{align}
    [\mathcal{W}, \mathcal{W}] = [\V, \V] \,.
\end{align}
In this way, the relative phases between sectors corresponding to different irreducible representations disappear within the commutator subgroup. 
From a Lie-algebraic perspective, this corresponds to removing the center of the Lie algebra under consideration and focusing on its commutator subalgebra.

\section{Summary of Main Result 1}\label{sec:summary}
The central constraints discussed above imply that the relevant nontrivial question is not whether $\mathfrak{g}$ is equal to $\vz$ or $\sv_z$; clearly, it is not.
Rather, the important question is whether $\sv_z$ is equal to $[\mathfrak{g},\mathfrak{g}]$, the commutator subalgebra of $\mathfrak{g}$.
In physical terms, this asks whether $\HTC$ and $J_z$ allow us to realize all PI, U(1)-invariant unitaries, apart from possible central constraints, namely relative phases between sectors with different $q$ and $j$.
\vspace{1em}

One of the key observations revealed by our analysis of TC interactions is that the answer is negative for $n\ge 3$ qubits; in fact,
\begin{align*}
    \sv_z \subsetneq [\mathfrak{g},\mathfrak{g}]\,.
\end{align*}
This failure is a consequence of the aforementioned accidental symmetry of $\HTC$.
At the same time, we find that this accidental symmetry is broken by the Hamiltonian $J_z^2$.
Indeed, we prove that, when restricted to any subspace with bounded charge $Q$, the Lie algebra generated by $i\HTC$, $iJ_z$, and $iJ_z^2$ contains $[\mathfrak{g},\mathfrak{g}]$. 

We postpone the characterization of $\sv_z$ to \cref{sec:full_char}, after introducing the irrep-basis decomposition and the notion of accidental symmetry; see \cref{prop:accidental_symmetry_matrix,prop:quasi_semi_universality}.
Here, we state the above relationship and also further clarify the relations among the commutator subalgebras of the different Lie algebras discussed above. 
For technical reasons, we introduce a  cutoff integer $\qmax \ge 0$, and 
\begin{align}
    \Pi^{\qmax}\,:=\,\sum_{q=0}^{q_{\text{max}}} \Pi_q\,,
\end{align}
which is the projector onto the subspace with charge less than or equal to the arbitrary maximum charge $\qmax\geq0$.
\begin{table*}[htp]
    \centering
    \renewcommand{\arraystretch}{2}
    \setlength{\tabcolsep}{6pt}
    \begin{tabular}{l|l|l}
        \textbf{5 qubits, $\qmax=7$} & \text{Decomposition into simple Lie algebra factors}  & \text{Dimension} \\\hline
        $[\vz,\vz]\Pi^{\qmax}$ & $\Big(\su(6)^{\oplus3} \oplus\su(4)^{\oplus4}\oplus\su(2)^{\oplus5}\Big)\oplus\Big(\su(5)\oplus\su(3)^{\oplus2}\Big)$ & $220$\\
        $[\g,\g]\Pi^{\qmax}$ & $[\vz,\vz]\Pi^{\qmax}\oplus\Big(\su(4)\oplus\su(2)^{\oplus2}\Big)$ & $241$\\
        &&\vspace{-12pt}\\
        \textbf{6 qubits, $\qmax=8$} & \\\hline
        $[\vz,\vz]\Pi^{\qmax}$ & $\Big(\su(7)^{\oplus3} \oplus\su(5)^{\oplus4}\oplus\su(3)^{\oplus5}\Big)\oplus\Big(\su(6)\oplus\su(4)^{\oplus2}\oplus\su(2)^{\oplus3}\Big)$ & $354$\\
        $[\g,\g]\Pi^{\qmax}$ & $[\vz,\vz]\Pi^{\qmax} \oplus \Big(\su(5)\oplus\su(3)^{\oplus2}\Big)$ & $394$
    \end{tabular}
    \caption{Comparison of the Lie algebras $[\g,\g]\Pi^{\qmax}$ and $\sv_z\Pi^{\qmax}$ for the examples of $n=5$ and $n=6$ qubits.
    Here we choose the cutoff $\qmax=n+2$, which ensures that all restrictions imposed by the accidental symmetry are present.
    For further details, see \cref{app:simple_lie_decomp}.}
    \label{tab:commutator_algebras}
\end{table*}

We prove the following theorem:
\begin{theorem}\label{thm:Thm0}
Consider the following Lie subalgebras of $\mathfrak{g}$, the Lie algebra of PI, U(1)-invariant skew-Hermitian operators:
\begin{align}\label{Eq:All}
 \begin{split}
    \sv_z&:=\mathfrak{alg}_{\mathbb{R}}\{i\HTC,i\overline{\HTC}\}\,,\\[4pt]
    \vz&:=\alg_{\R}\!\left\{i\HTC,\, i\Zhat\right\}\,,\\[4pt]
    \wz&:=\alg_{\R}\!\left\{i\HTC,\, i\Zhat,\, ia^{\dag}a\right\}\,,\\[4pt]
    \mathfrak{v}^+_z&:=\alg_{\R}\!\left\{i\HTC,\, i\Zhat, i\Zhat^2\right\}\,.
 \end{split}
\end{align}

The following statements hold:
\begin{enumerate}
\item \label{part:1} The Lie algebra $\sv_z$ is centerless and is equal to the commutator subalgebras of $\vz$ and $\wz$, that is
\begin{align}
    \sv_z= [\vz,\vz] = [\wz,\wz] \,.
 \label{eq:comm_subgroup_relations}
\end{align}
Furthermore,
\begin{align}
    \vz &\eq \sv_z\oplus \mathfrak{u}(1)\\[4pt]
    \wz &\eq \sv_z\oplus \mathfrak{u}(1)^{\oplus 2}\,,
\end{align}
where $\mathfrak{u}(1)$ and $\mathfrak{u}(1)^{\oplus 2}$  correspond to the centers of $\vz$ and $\wz$, respectively.
See also \cref{cor:lie_algebras}.
\item \label{part:2} For $n\ge 3$ qubits, $\sv_z$ is strictly contained in the
commutator subalgebra of $\mathfrak{g}$, i.e. 
\begin{align}
    \sv_z \subsetneq\, [\mathfrak{g}, \mathfrak{g}]\,,
\end{align}
whereas for $n=2$ qubits, they are equal.
\item \label{part:3.5} For sufficiently large charge cutoff, namely $\qmax \geq n+\lceil\frac{n}{2}-1\rceil$, the difference between dimensions of $[\g,\g]\Pi^{\qmax}$ and $\sv_z\Pi^{\qmax}$ is independent of $\qmax$, and is given by
\begin{align} \label{eq:dimension_diff}
 \begin{split}
    &\dim\big([\g,\g]\Pi^{\qmax}\big)
    -\dim\big(\sv_z\Pi^{\qmax}\big)\\[4pt]
    &\hspace{20pt}\eq \sum_{j=j_{\min}}^{n/2} \sum_{j'<j} \left((2j'+1)^2-1\right)\\[4pt]
    &\hspace{20pt}\eq \frac{1}{48}\Bigg(n(n-2)(n+2)(n+4)\\[2pt]
    &\hspace{70pt}+
    \begin{cases}
        0, & n \text{ even},\\[4pt]
        12n+3, & n \text{ odd}
    \end{cases}
    \Bigg)\,.
 \end{split}
\end{align}
\item \label{part:3} For any cutoff integer $\qmax \ge 0$, the projections of the commutator subalgebras of $\vz^+$ and $\mathfrak{g}$   onto the subspace with charge $q\leq\qmax$  are equal, i.e.,
\begin{align}
    \big[\vz^+,\vz^+\big]\Pi^{\qmax}=[\mathfrak{g}, \mathfrak{g}]\Pi^{\qmax}\,.
\end{align}
\item \label{part:4} When restricted to any single eigensubspace of the charge operator $Q$, with projector $\Pi_q$, the Lie algebra $\sv_z$ is equal to the commutator subalgebra of $\mathfrak{g}$; that is,
\begin{align}
    \sv_z \Pi_q \eq [\mathfrak{g}, \mathfrak{g}] \Pi_q\,.
 \label{eq:comm_subgroup_q}
\end{align}

\item \label{part:5} Similarly, when restricted to any single eigensubspace of the squared angular momentum operator $J^2$, with projector $P_j$, and to the subspace with charge at most $\qmax$, the projections of $\sv_z$ and the commutator subalgebra of $\mathfrak{g}$ are equal; that is,
\begin{align}\label{eq:semi_universal_j}
    \sv_z \Pi^{\qmax} P_j \eq [\mathfrak{g}, \mathfrak{g}] \Pi^{\qmax} P_j\,.
\end{align}
\end{enumerate}
\end{theorem}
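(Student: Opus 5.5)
We work sector by sector in the irrep basis. By Schur--Weyl duality, the $(q,j)$ sector is $\mathcal{H}_{q,j}\cong\C^{d_n(q,j)}\otimes\C^{m_j}$, where $m_j$ is the multiplicity space of the $\Sn$ irrep. All elements of $\g$ act as $A_{q,j}\otimes\1$. Inside the $j$-block, a basis of $\mathcal{H}_{q,j}$ is $\ket{j,m}\otimes\ket{q-j-m-n/2+j}_{\text{osc}}$, with $m$ ranging over the values where the oscillator occupation is nonnegative. In this basis $J_z$ is diagonal and $\HTC$ is a real tridiagonal (Jacobi) matrix. Its off-diagonal entries are $\sqrt{(j-m)(j+m+1)}\sqrt{k}$, where $k$ is the oscillator number, and all of these are nonzero.

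\textbf{Part 1.} We use $i\overline{\HTC}=-[iJ_z,i\HTC]$, so $\sv_z\subset[\vz,\vz]$. Conversely, $\vz$ is spanned by $iJ_z$ together with $\sv_z$, because $\mathrm{ad}_{iJ_z}$ rotates $\HTC(\phi)$ and so preserves $\sv_z$, which makes $\sv_z$ an ideal. Hence $[\vz,\vz]\subset\sv_z$. For $\wz$ we use \cref{eq:ab}: $ia^\dag a=iQ-iJ_z-in/2$, and $iQ$ is central. So $\wz=\vz+\R\,iQ$ and $[\wz,\wz]=[\vz,\vz]$. Centerlessness of $\sv_z$ follows once we show below that each projection $\sv_z\Pi_{q}P_j$ is all of $\su(d_n(q,j))$, which is semisimple. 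A central element must then vanish on every block.

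\textbf{Part 4.} Within a fixed $q$, the distinct $j$ give inequivalent representations of the algebra generated by the tridiagonal $\HTC$ and the diagonal $J_z$, because the eigenvalue spectra of $J_z$, namely $m$-ranges of different lengths or offsets, differ. On each block we argue that $J_z$ has distinct eigenvalues and $\HTC$ is irreducible tridiagonal. The standard argument (cf.\ \cite{Keyl_2014_control}) then gives that $\alg\{i\HTC,iJ_z\}$ projects onto $\uu(d)$, so $[\cdot,\cdot]$ gives $\su(d)$ on each block. To get the full direct sum over $j$ at fixed $q$, we invoke the criterion of \cite{HLM_2024_SU(d)}. Equivalently, by Goursat's lemma, it suffices to rule out any isomorphism between simple factors that is implemented by a (possibly antiunitary-twisted) map intertwining $J_z$ and $\HTC$. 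Such a map must preserve the spectra of $J_z$, and blocks of equal dimension at the same $q$ have $J_z$-spectra that differ by a nonzero shift, which is not killed by tracelessness after projecting $J_z$ off the center. We check the possible outer automorphism $X\mapsto -X^T$ separately.

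\textbf{Parts 5, 2, 3.5.} At fixed $j$ and varying $q\le\qmax$, the blocks differ by the oscillator offset $k$. We compare the spectra of the commutator-invariant quantities, such as the $\Tr$ of $\HTC^2$ restricted to blocks, which are distinguished by $q$ through the factor $k$. This shows no two $(q,j)$ blocks with the same $j$ are locked, and gives Part 5.

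The obstacle is across different $j$. Here we must exhibit the accidental symmetry, i.e.\ \cref{prop:accidental_symmetry_matrix}. We expect that block $(q,j)$ with $d=2j'+1<2j+1$ (the low-$q$ truncated blocks, where $k$ hits $0$) is unitarily equivalent, after subtracting central parts, to a block $(q',j')$ with full dimension $2j'+1$. The reason is that $(j-m)(j+m+1)k$ factorizes symmetrically under swapping the roles of angular momentum and oscillator number, as in Schwinger's model. Such a pairing forces equal projections, and the count of pairs gives $\sum_j\sum_{j'<j}((2j'+1)^2-1)$. For $n\ge3$ a pair exists, which proves Part 2; for $n=2$ no pair exists and Goursat gives equality. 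Summing the series yields the closed form in Part 3.5, valid once $\qmax$ includes all paired blocks.

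\textbf{Part 3.} Adding $iJ_z^2$, which is diagonal with entries $m^2$, breaks the pairing, because the pairing map reverses $m$ against $k$ and does not preserve $m^2$ up to central terms. Rerunning the Goursat/\cite{HLM_2024_SU(d)} criterion then gives the full $[\g,\g]\Pi^{\qmax}$.
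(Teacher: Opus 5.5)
Your overall architecture matches the paper's: subsystem universality in each $\mathcal{H}_{q,j}$, then the pairwise-independence criterion of \cite{HLM_2024_SU(d)}, then the accidental pairing, then $J_z^2$ to break it. Your Part~1 argument is valid and more direct than \cref{lem:comm_subalg}: $\sv_z$ is $\mathrm{ad}_{iJ_z}$-invariant, so $\vz=\mathbb{R}\,iJ_z+\sv_z$ with $\sv_z$ an ideal.

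\textbf{Subsystem universality.} This step carries the whole proof, and you justify it with a false criterion. A diagonal operator with distinct eigenvalues plus an irreducible real tridiagonal operator need not generate $\su(d)$. For example, $J_z$ and $J_x$ on a spin-$j$ irrep satisfy both hypotheses, yet they generate only $\su(2)$. Since $\pi_{q,j}(J_z)$ has equally spaced eigenvalues, you are exactly in this harmonic situation. The missing ingredient is anharmonicity of the couplings. The squared off-diagonal entries $a_y^2=\big(j(j+1)-m(m+1)\big)k$ are cubic in the ladder index $y$, because of the extra factor $k$. Hence their second finite difference is a non-constant linear function of $y$, which is what \cref{lem:subspace_control} (from \cite{Schirmer_etal_2001_controllability}) requires. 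Citing \cite{Keyl_2014_control} covers only $j=n/2$, unless you add that the matrix elements depend only on $(j,m,k)$.

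\textbf{Separating sectors.} Your inter-sector arguments also have gaps.
\begin{enumerate}
\item In Part~4, two blocks whose $J_z$-spectra differ by a uniform shift have identical traceless parts. So modulo the center, $J_z$ can never separate blocks. This is precisely the mechanism of the accidental symmetry, $\pi_{q,j}(J_z)=\pi_{q',j'}(J_z)+(j'-j)\1$, and your reasoning applied to an accidental pair would wrongly separate it. Part~4 holds for a different reason: $d_n(q,j)=\min\{2j+1,\,q+1+j-\tfrac n2\}$ is strictly increasing in $j$ at fixed $q$, so no two blocks with the same charge have equal dimension.
\item For Parts~2 and 3.5 you must first actually prove the pairing. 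That means showing the unfilled block with $q=\tfrac n2+2j'-j$ and the filled block with $q'=\tfrac n2-j'+2j$ carry identical matrices of $\HTC$ and $\overline{\HTC}$. Your Schwinger-type swap of $2j$ and $q+j-\tfrac n2$ in $\sqrt{r(2j+1-r)(q+j-\tfrac n2+1-r)}$ is the right idea, but you leave it as an expectation.
\item You must also show that every other pair of equal-dimension blocks, with both $q$ and $j$ different, is \emph{not} locked, and that the pairs are disjoint. Without this you only get a lower bound on the dimension gap. The paper does it with the closed forms of $\Tr(\HTC^2\tau_{q,j})$ for filled and unfilled sectors and a three-case analysis (both filled, both unfilled, mixed).
\item In Part~3, the pairing map does not reverse $m$; if it did, $m^2$ would be invariant and $J_z^2$ would not help. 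It shifts $m$ by $j'-j$. So the transported $J_z^2$ blocks differ, up to a constant, by the non-central term $2(j'-j)\pi_{q',j'}(J_z)$. Equivalently, their variances differ by $\tfrac43 j'(j'+1)(j-j')^2\neq 0$.
\end{enumerate}
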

\cref{fig:Lie} gives a schematic depiction of the relations among the different Lie algebras in this theorem.
Also, \Cref{tab:commutator_algebras} illustrates \cref{part:3.5}, for the cases of $n=5$ and $n=6$ qubits.
Furthermore, in \cref{sec:full_char}, we present a full characterization of $\sv_z$ and the other Lie algebras listed above when restricted to a truncated subspace with bounded charge $\qmax$ (see \cref{thm:full_characterization}).
We prove \cref{thm:full_characterization} and \Cref{part:1,part:2,part:3,part:3.5,part:4,part:5} of \cref{thm:Thm0} in \cref{sec:full_char_proof}.

\subsection{Remarks on \cref{thm:Thm0}}
A few remarks are in order: \\

\noindent \textbf{(1)} The charge cutoff $q_{\text{max}}$ is imposed only for technical reasons: it ensures that the relevant Hilbert space is finite-dimensional and avoids the difficulties associated with infinite-dimensional spaces and Lie algebras. 
We conjecture that this cutoff is not necessary.

\vspace{0.5em}
\noindent \textbf{(2)} The accidental symmetry only relates pairs of sectors with distinct labels $(q,j)$ and $(q',j')$, with $q\neq q'$ and $j\neq j'$.
In particular, for each pair of angular momenta $j$ and $j'$, in the interval $0<j'<j\le n/2$, there exists a pair of charges $q$ and $q'$ given by \cref{eq:q_q'}, such that the unitaries realized by $\HTC$ and $\overline{\HTC}$ in the sectors labeled by $(q,j)$ and $(q',j')$ are identical.
This explains \cref{eq:dimension_diff}, which gives the difference between the dimensions of $[\g,\g]\Pi^{\qmax}$ and $\sv_z\Pi^{\qmax}$.
For sufficiently large $\qmax$, all constraints imposed by the accidental symmetry are present.
For each pair $j'<j$, the $(2j'+1)\times(2j'+1)$ special unitary realized in the sector with charge $q'$ and angular momentum $j'$ is also realized in the corresponding subspace of the sector with charge $q$ and angular momentum $j$.
Therefore, each such correlated pair removes exactly $(2j'+1)^2-1$ real parameters from $[\g,\g]$, which do not appear in $\sv_z$.

This also explains  \Cref{part:4,part:5} of the theorem: the constraint imposed by the accidental symmetry disappears once one restricts either to a single eigensubspace of $Q$ or to a single eigensubspace of $J^2$, because if either $q$ or $j$ is fixed, no such paired sectors remain within the restricted subspace.

\vspace{0.5em}
\noindent \textbf{(3)} In the case $j=0$, both $\mathfrak{sv}_z P_{j}$ and  $[\g,\g] P_{j}$ vanish.  First, the $j=0$ subspace is annihilated by $\HTC$ and $\overline{\HTC}$, as well as by all collective spin operators $J_w$ with $w=x,y,z$.
Hence, the restrictions of $\mathfrak{sv}_z$ and $\vz$ to this sector are trivial.

By contrast, a general PI, U(1)-invariant Hamiltonian, i.e., an element of $\mathfrak{g}$, need not vanish on the $j=0$ sector.
Since this sector carries the trivial representation of $\SU(2)$, permutation invariance implies that, on this sector, such a Hamiltonian acts on the qubit degrees of freedom as $P_0$.
Moreover, U(1)-invariance implies that its bosonic part commutes with $a^\dag a$.
Therefore, when projected to  the $j=0$ sector, any element  $A\in\g$  takes the form
\begin{align}
    A P_0 \eq i\left(\sum_k a_k \pure{k}_{\text{osc}}\right) P_0\,,
\end{align}
where $\ket{k}_{\text{osc}}$ is the eigenvector of $a^\dag a$ with eigenvalue $k$, and $a_k\in\mathbb{R}$.

Thus, although when projected to $j=0$ sector, elements of $\mathfrak{g}$ can be non-zero, they form an Abelian subalgebra, and therefore do not contribute to the commutator subalgebra:
\begin{align}\label{eq:dsa}
    \mathfrak{sv}_zP_0=[\mathfrak{g},\mathfrak{g}]P_0 = 0\,.
\end{align}

\vspace{0.5em}
\noindent \textbf{(4)} As we discuss further below in \cref{Sec:2qubits}, the case of $n=2$ qubits is special. In this case, the accidental symmetry imposes no nontrivial constraint, since there is only one nontrivial angular-momentum sector. Indeed, for $n=2$ the only angular-momentum sectors are $j=0$ and $j=1$, and the $j=0$ sector is dynamically trivial in the sense explained above. Consequently, for any charge cutoff $\qmax$,
\begin{align}
    \mathfrak{sv}_z \Pi^{\qmax}
    = \mathfrak{sv}_z P_1 \Pi^{\qmax}
    = [\mathfrak{g},\mathfrak{g}] P_1 \Pi^{\qmax}
    = [\mathfrak{g},\mathfrak{g}]\Pi^{\qmax}\,.
\end{align}
Here, the first equality follows from \cref{eq:dsa}, the second from \cref{eq:semi_universal_j}, and the third again from \cref{eq:dsa}.

\vspace{0.5em}
\noindent \textbf{(5)} For $n\geq 3$ qubits, if
$q_{\text{max}}\geq \big\lfloor n/2 \big\rfloor+3$, which is required for the accidental symmetry to be relevant (see \cref{eq:acc_sym_minimal}), then the restriction of $\sv_z$ to the truncated charge subspace remains strictly smaller than the corresponding restriction of the commutator subalgebra $[\mathfrak{g},\mathfrak{g}]$. Namely,
\begin{align}
    \sv_z \Pi^{\qmax}
    \subsetneq
    [\mathfrak{g}, \mathfrak{g}]\Pi^{\qmax}\,.
\end{align}

\vspace{0.5em}
\noindent \textbf{(6)}  We emphasize that $\vz$ and $\wz$ have the same commutator subalgebra, namely $\sv_z$, and differ only in their centers. 
Specifically, $\vz$ has a one-dimensional center, whereas $\wz$ has a two-dimensional center. 
These centers are spanned by the projections of $J_z$ onto $\mathfrak{c}$, and of $J_z$ and $a^\dag a$ onto $\mathfrak{c}$, respectively.
Furthermore,
\begin{align}\label{eq:Lie_vw}
 \begin{split}
    \wz 
    \,:=&\,\, \alg_{\R}\big\{i\HTC, i\Zhat, ia^{\dag}a\big\} \\[4pt]
    =&\,\,\vz \oplus \s_{\R}\left\{i \left(Q-\tfrac{n}{2}\mathbb{I}\right)\right\} \\[4pt]
    =&\,\,\vz \oplus \s_{\R}\{ia^{\dag}a\}.
 \end{split}
\end{align}
The second equality follows from the identity $a^\dag a=Q-J_z-\frac{n}{2}\mathbb{I}$, together with the fact that $\vz$ contains $iJ_z$ and that $Q-\frac{n}{2}\mathbb{I}$ commutes with $\vz$, and hence lies in the center of $\wz$. 
The third equality follows similarly from the fact that $\vz$ contains $iJ_z$.

\vspace{0.5em}
\noindent \textbf{(7)} The family of unitaries realized by Hamiltonians $\HTC$ and $\overline{\HTC}$ is $\SV_z$, which is the connected Lie group associated to $\mathfrak{sv}_z$.
\Cref{part:1} of \cref{thm:Thm0} implies that
\begin{align}
 \begin{split}
    \SV_z \,:=&\,\,\left\langle e^{-i t \HTC},\,e^{-i t \overline{\HTC}}\,:\, t\in\mathbb{R}\right\rangle\\[4pt]
    =&\,\,\big[\V_z,\V_z\big] \,=\,\big[\mathcal{W}_z,\mathcal{W}_z\big]\,,
 \end{split}
 \label{eq:Lie_Group_Identities}
\end{align}
Additionally, \cref{part:2} of the theorem implies that for $n\ge 3$ ,
\begin{align}
    \SV_z \,\subsetneq\, \mathcal{S}\Vsym \quad:\quad(n\geq3)\,.
\end{align}
Furthermore, \cref{part:3} of the theorem implies that for any cutoff integer $\qmax\ge 0$,
\begin{align}
    \mathcal{SV}_z^+\Pi^{\qmax}=\mathcal{S}\Vsym \Pi^{\qmax}\,,
\end{align}
where
\begin{align}
    \mathcal{SV}_z^+=[\mathcal{V}_z^+, \mathcal{V}_z^+]\,.
\end{align}
\begin{table}[h]
\centering
    \renewcommand{\arraystretch}{2.0}
    \setlength{\tabcolsep}{6pt}
    \begin{tabular}{l|l}
        \textbf{Lie Algebra} & \textbf{Lie Group} \\\hline
         $\g$ & $\Vsym$
         \\\hline
         $\vz = \mathrm{alg}_{\mathbb{R}}\big\{ i\HTC,\, i\Zhat \big\}$ & $\mathcal{V}_z$\\\hline
         $\begin{matrix*}[l]
            \sv_z &= \mathrm{alg}_{\mathbb{R}}\big\{ i\HTC,\, i\overline{\HTC} \big\} \\[-4pt]
            &= [\vz,\vz] = [\wz,\wz]
            \end{matrix*}$ & $\begin{matrix*}[l]
                \mathcal{SV}_z &= [\V_z,\V_z] \\[-4pt]
                &= [\mathcal{W}_z,\mathcal{W}_z]
            \end{matrix*}$\\\hline
         $\vz^+ = \mathrm{alg}_{\mathbb{R}}\big\{ i\HTC,\, i\Zhat ,\,i\Zhat^2\big\}$& $\mathcal{V}^+_z$
         \\\hline
         $\wz = \mathrm{alg}_{\mathbb{R}}\big\{ i\HTC,\, i\Zhat, ia^\dag a \big\}$ & $\mathcal{W}_z$ \\\hline
    \end{tabular}
\caption{Summary of the Lie groups describing the sets of unitaries that can be realized by Hamiltonians $\HTC$, $\overline{\HTC}$, $J_z$, and $a^\dag a$, together with the corresponding Lie algebras. 
All these groups are subgroups of $\Vsym$, the group of all PI, U(1)-invariant unitaries on the combined Hilbert space of $n$ qubits and the harmonic oscillator.}
\label{tab:Lie_Algebras}
\end{table}

\subsection{Special case of $n=2$ qubits coupled to a bosonic mode}\label{Sec:2qubits}
Our results reveal that, in the special case of $n=2$ qubits, which is particularly important for applications in the context of quantum computing gates, the accidental symmetry described in \cref{sec:accidental_sym} does not appear.
In other words, 
\begin{align}
    \sv_z\Pi^{\qmax}=[\g,\g]\Pi^{\qmax} \quad:\quad(n=2)\,,
\end{align}
for arbitrary $\qmax\ge 1$, which means, from \cref{cor:lie_algebras},
\begin{align}
    \vz\Pi^{\qmax} = \mathfrak{su}(2)\oplus  \mathfrak{su}(3)^{\oplus (\qmax-1)} \oplus \mathfrak{u}(1)\,.
\end{align}
In particular, the $\mathfrak{su}(2)$ factor corresponds to the traceless Hamiltonians in the 2D subspace spanned by
\[
    \ket{11}\otimes\ket{1}_{\text{osc}} \quad \text{and}\quad  \ket{\Psi^+}\otimes\ket{0}_{\text{osc}}\,.
\]
Here, $\ket{0}$ and $\ket{1}$ denote the computational basis states of a qubit, i.e., the eigenvectors of $\sigma_z$ with eigenvalues $+1$ and $-1$, respectively.
Furthermore,
\begin{align}
    \ket{\Psi^{\pm}}=\frac{\ket{01}\pm \ket{10}}{\sqrt{2}} \, ,
\end{align}
are the states with eigenvalue $m=0$ of $J_z$ in the sectors with $j=1$ for $\Psi^+$ and $j=0$ for $\Psi^-$.
Furthermore, each $\mathfrak{su}(3)$ factor corresponds to the Hamiltonians in the 3D subspaces spanned by
\begin{align*}
    \ket{11}\otimes\ket{q}_{\text{osc}}\,,\,\,\ket{\Psi^+}\otimes\ket{q-1}_{\text{osc}}\,,\,\, \ket{00}\otimes\ket{q-2}_{\text{osc}}\,,
\end{align*}
for $q\geq2$.

An interesting subalgebra of $\sv_z$ is obtained by restricting to Hamiltonians that conserve $J_z$, or equivalently, conserve $a^\dag a$. Such Hamiltonians can be written as
\begin{align} \label{eq:2qub_Ham}
    H \eq \sum_{k\geq0}^{k_{\max}} \Big(a_k \pure{00}&+b_k\pure{\Psi^+}\\\nonumber
    &\hspace{10pt}+c_k\pure{11} \Big)\otimes |k\rangle\langle k|_\text{osc}\,,
\end{align}
where $a_k$ , $b_k$, and $c_k$ satisfy
\begin{align}
    c_k + b_{k-1} + a_{k-2} &\eq 0\,,
\end{align}
using the convention that the coefficients are zero outside the boundary $0\leq k\leq k_{\text{max}}$.
This condition follows directly from the fact that $\sv_z$ is traceless within each sector $\H\qj$, i.e. $\Tr(\Pi_q P_j H)=0$.
Hamiltonians of the form in \cref{eq:2qub_Ham} include, for example,
\begin{align}
    \pure{\Psi^+}\otimes\pure{0}_{\text{osc}} - \pure{11}\otimes\pure{1}_{\text{osc}}\,,
\end{align}
which can generate entanglement between the two qubits, e.g. by initializing the system in state $\ket{01}\otimes\ket{0}_{\text{osc}}$.

In our companion paper \cite{circuit_paper}, we present methods for implementing several useful 2-qubit entangling gates,  with corresponding Hamiltonians of the form in \cref{eq:2qub_Ham}, via the TC interaction, using an oscillator initialized in its vacuum state $\vac$ as an ancillary system.

\section{Restriction to the Symmetric Subspace}
\label{sec:sym_subspace}
As mentioned below \cref{thm:Thm0}, the constraints imposed by the accidental symmetry do not appear when one restricts to a single angular momentum sector. 
Therefore, to simplify the problem, we begin our study of $\V_z$, the group of unitaries that can be realized using the TC interaction and global $z$ rotations on the qubits, with the particularly important case $j=n/2$.
Equivalently, we restrict attention to joint qubit-oscillator states whose qubit component is supported on the symmetric subspace $\Hsym\subset(\mathbb{C}^2)^{\otimes n}$, and later, in \cref{sec:beyond_sym}, we go beyond this and discuss the general case.

More precisely, $\Hsym$ is the $(n+1)$-dimensional subspace of vectors that remain invariant under all permutations of qubits.
The tensor product of this subspace with the Hilbert space of the oscillator decomposes as
\begin{align}\label{eq:dec1}
    \Hsym\otimes  \mathcal{H}_\text{osc}\cong \bigoplus_{q=0}^{\infty} \mathcal{H}_{q}\,,  
\end{align}
where $\mathcal{H}_{q}$ is the eigen-subspace of charge operator $Q$ with eigenvalue $q$ within $\Hsym\otimes  \mathcal{H}_\text{osc}$. 
Recall that in the angular momentum $j$ subspace, $J_z$ is non-degenerate with eigenvalues
\begin{align}
    m=-j,\,-j+1,\,\dotsc,\,j\,.
\end{align}
Furthermore, since eigenvalues of $a^\dag a$ are non-negative integers, in a sector with eigenvalue $q$ of $Q=a^\dag a+J_z+n/2$, the maximum allowed value of $m$ is $q-n/2$, which means
\begin{align}\label{eq:bound1}
    -j\leq m \leq \min\left\{j,\,q-\frac{n}{2}\right\}\,.
\end{align}
For $j=n/2$, this gives a subspace of dimension
\begin{align}
    \text{dim}(\mathcal{H}_{q})=\min\{n+1,\,q+1\}\,,
 \label{eq:dimHq_j=n/2}
\end{align}
which grows linearly with charge $q$, from one for $q=0$ until it reaches a maximum value of $n+1$.

The permutational symmetry of Hamiltonian $H(t)$ in \cref{eq:totH} implies that it leaves $\Hsym\otimes \mathcal{H}_\text{osc}$ invariant. 
Furthermore, because $H(t)$ conserves charge $Q$, it is block-diagonal with respect to the decomposition in \cref{eq:dec1}. 
This, in turn, implies that restrictions of unitaries in $\V_z$ to $\Hsym\otimes\mathcal{H}_\text{osc}$ are block-diagonal with respect to this decomposition:
\begin{align}
 \begin{split}
   \big(P_{j=n/2}\big)\V_z  \,\subset\, \big(P_{j=n/2}\big)\Vsym  &\,\cong\, \bigoplus_{q=0}^{\infty} \UU\big(\mathcal{H}_{q}\big)\,,
 \end{split}
\end{align}
where $P_{j=n/2}$ is the projector to the qubits' symmetric subspace and $\UU(\mathcal{H}_{q})$ denotes the group of unitary transformations on $\mathcal{H}_{q}$.

Does the above inclusion hold as equality?
The previous work of Keyl et al. in \cite{Keyl_2014_control} establishes this when restricted to any eigen-subspace of the charge operator $Q$, i.e., they showed
\begin{align}
 \label{eq:Vz_sym_sector}
    \Pi_q\big(P_{j=n/2}\big) \V_z\,\cong\,\UU\big(\mathcal{H}_{q}\big)\, ,
\end{align}
where $\Pi_q(P_{j=n/2})=(P_{j=n/2})\Pi_q$ is the projector to the eigen-subspace of $Q$ inside $\Hsym\otimes\mathcal{H}_\text{osc}$ with eigenvalue $q$.

However, in principle, choosing a desired unitary in a given charge sector $q$ may impose constraints on unitaries realizable in other sectors. 
Indeed, it is known that in general, the phases of unitaries implemented in different sectors cannot be chosen independently \cite{PhysRevA.92.042309, Keyl_2014_control, marvian_sym_loc_2022}. 
More precisely, these constraints can be expressed in terms of the determinants of the unitaries realized in the various charge sectors (see \cref{eq:sym_phase_constraint} below). 
From a Lie algebraic perspective, this corresponds to restrictions on the center of the dynamical Lie algebra (see also \cite{marvian_sym_loc_2022}).

As discussed in \cref{sec:semi-univ}, we can isolate such phase-related constraints and focus instead on other structural limitations by studying the commutator subgroup of $\V_z$, whose projection to any particular sector is (from \cref{eq:Vz_sym_sector}) the special unitary group on that sector:
\begin{align}
    \Pi_q(P_{j=n/2})[\V_z,\V_z] \,\cong\, \SU(\H_q)\,.
\end{align}
Still, even in this setting, selecting a desired unitary in one sector can, in principle, constrain the unitaries realizable in other sectors, and indeed \cite{Keyl_2014_control} suggests that this might be the case. 
However, in the following lemma, we show that when truncated to any finite charge sector cutoff $q_{\text{max}}\ge 0$, there are no additional constraints on the set of realizable unitaries.  
In the following lemma, with $\Pi^{\qmax}=\sum_{q=0}^{q_{\text{max}}} \Pi_q$ defining the projector to the subspace of $\H_{\text{qubits}}\otimes\mathcal{H}_\text{osc}$ of all states with charge $q\le q_{\text{max}}$,
\begin{align}
    \PiSym:=&\,\,\Pi^{\qmax} \big(P_{j=n/2}\big)\,\,=\,\,\big(P_{j=n/2}\big)\Pi^{\qmax}
\end{align}
defines the projection of $\Pi^{\qmax}$ to the symmetric subspace.
This projector truncates the infinite-dimensional Hilbert space to a finite-dimensional subspace, whose dimension can be arbitrarily large depending on $q_{\text{max}}$.

\begin{lemma}\label{lem:sym_semi_universality}
For any integer $q_{\text{max}}>0 $, the projection of $\V_z $ to the subspace of $\Hsym\otimes  \mathcal{H}_\text{osc}$ with charge $q\le q_{\text{max}}$ contains the subgroup $\bigoplus_{q=0}^{q_{\text{max}}} \SU\big(\mathcal{H}_{q}\big)$, i.e., 
\begin{align}
    \bigoplus_{q=0}^{q_{\text{max}}} \SU\big(\mathcal{H}_{q}\big) \,\subset\, \V_z \PiSym\,. 
\end{align}
Furthermore, for any choice of unitaries $v_q\in \UU\big(\mathcal{H}_{q}\big): q=0, \dotsc, q_{\text{max}}\,$, $\V_z$ contains a unitary $V$ that realizes them, such that
\begin{align}
    V\PiSym = \bigoplus_{q=0}^{\qmax} v_q
\end{align}
if, and only if, there exists $\theta_z\in[-2\pi,2\pi)$, such that $\theta_q:=\arg\det(v_q)$ satisfies
\begin{align}
 \begin{split}
    \theta_{q} &\eq \theta_z\Tr\big(\Pi_qP_{j=n/2}J_z\big) \\[8pt]
    &\eq\begin{cases}
        \dfrac{(q+1)(q-n)\theta_z}{2} & :\,\,q\leq n \\[6pt]
        0 & :\,\,q>n\,,
    \end{cases}
 \end{split}
 \label{eq:sym_phase_constraint}
\end{align}
where these equations hold modulo $2\pi$.
\end{lemma}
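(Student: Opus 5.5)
The plan is to work at the Lie-algebra level on the finite-dimensional space $\PiSym(\H_{\text{qubits}}\otimes\HHO)\cong\bigoplus_{q\le\qmax}\H_q$. There $\vz\PiSym$ is a compact Lie algebra contained in $\bigoplus_q \uu(\H_q)$. We then pass to groups using connectedness of $\V_z$ and compactness of the image.

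\textbf{Step 1 (each sector).} By the result of Keyl et al.\ quoted in \cref{eq:Vz_sym_sector}, the projection of $\vz$ to each single $\H_q$ contains $\su(\H_q)$. Since $\sv_z=[\vz,\vz]$ (\cref{thm:Thm0}, \cref{part:1}), the projection of $\sv_z\PiSym$ to each $\H_q$ equals $\su(\H_q)$. We can also verify this directly: in the basis $\ket{n/2,m}\otimes\ket{q-n/2-m}_{\text{osc}}$, $J_z$ is diagonal with nondegenerate spectrum, and $\HTC$ is a tridiagonal (path-graph) matrix with nonzero off-diagonal entries. These two facts generate $\uu(\H_q)$, and hence $\su(\H_q)$ after taking commutators.

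\textbf{Step 2 (decoupling the sectors).} $\mathfrak{k}:=\sv_z\PiSym$ is a subalgebra of the semisimple algebra $\bigoplus_q\su(\H_q)$ that projects onto every simple factor. By the standard Goursat-type lemma for semisimple algebras, $\mathfrak{k}$ is a direct sum of diagonally embedded copies of the factors. Two factors $q\neq q'$ can be glued only through an isomorphism $\su(\H_q)\cong\su(\H_{q'})$ that is induced, up to an outer automorphism (complex conjugation), by a map intertwining the generators. This is the main obstacle. Dimensions coincide only for $q,q'\ge n$, where $\dim\H_q=n+1$, or for $d=1$ sectors, which are trivial.

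To rule out gluing, we compare spectral invariants of the generators, which must be preserved by such an isomorphism (up to sign for the conjugation case). The restriction of $i\HTC$ to $\H_q$ for $q\ge n$ has entries $\sqrt{(j-m)(j+m+1)}\sqrt{q-n/2-m}$. The spectrum, for example $\Tr(H_q^2)=\sum_m (j-m)(j+m+1)(q-n/2-m)$, is strictly increasing in $q$. It is the same for $\overline{\HTC}$, and the norms are not sign-sensitive. Therefore no isomorphism $\phi$ with $\phi(\HTC\Pi_q)=\HTC\Pi_{q'}$ exists for $q\ne q'$, so $\mathfrak{k}=\bigoplus_q\su(\H_q)$. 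Exponentiating gives the first claim.

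\textbf{Step 3 (phases).} Using the decomposition \cref{eq:Dec}, write $V=R_z(\phi)\widetilde V$ with $\widetilde V\in\SV_z$. Since $\sv_z$ is traceless on each sector, $\det(\widetilde V\Pi_q)=1$. Hence $\arg\det(V\Pi_q)=-\phi\Tr(\Pi_qP_{n/2}J_z)$, which gives necessity with $\theta_z=-\phi$. The explicit value follows by summing $m$ from $-n/2$ to $\min(n/2,q-n/2)$: for $q\le n$ this gives $(q+1)(q-n)/2$, and for $q>n$ it gives $0$.

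Conversely, given $v_q$ satisfying the constraint, set $\phi=-\theta_z$. Then $R_z(\phi)^\dagger v_q$ has unit determinant in each sector, so it lies in $\bigoplus_q\SU(\H_q)\subset\SV_z\PiSym$ by Step 2, and composing with $R_z(\phi)$ gives the required $V$.
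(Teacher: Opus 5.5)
Your proof is correct and follows essentially the paper's route. You establish subsystem universality on each $\H_q$. You then rule out gluing of the equal-dimension filled sectors $q\ge n$ because $\Tr\big(\pi_q(\HTC)^2\big)$ is strictly increasing in $q$; your Goursat/Killing-form argument is the unpacked form of condition \textbf{B'} in \cref{lem:semi_universality}, which the paper invokes. Finally you fix the determinants via $V=R_z(\phi)\widetilde V$.

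The one thing to fix is the parenthetical ``direct'' check in Step 1. A nondegenerate diagonal matrix plus a tridiagonal one with nonzero off-diagonal entries need not generate $\su(d)$: for example, $J_z$ and $J_x$ on a spin-$j$ irrep generate only $\su(2)$. Since $J_z$ has equally spaced eigenvalues on $\H_q$, you genuinely need the anharmonicity condition of \cref{lem:subspace_control}, namely that $\Delta^2 a_y^2$ is non-constant in $y$ (here it is linear). Also, for $q\ge n$ both generators are traceless on $\H_q$, so you get $\su(\H_q)$, not $\uu(\H_q)$. Your primary citation of Keyl et al.\ covers this step, so the argument stands.
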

This implies that when projected to $\PiSym$, the only restrictions on $\V_z$ are on the relative phases between sectors with different charges.
Following the terminology introduced in \cite{HLM_2024_SU(d)}, and discussed \cref{sec:semi-univ}, in this means $\V_z$ is semi-universal in $\Vsym$.
Equivalently, in terms of Lie algebras this means
\begin{align}
 \begin{split}
    \sv_z \PiSym &\eq [\g,\g]\PiSym \\[4pt]
    &\cong\bigoplus_{q=0}^{q_{\text{max}}   } \mathfrak{su}\big(\min\{n+1, \,q+1\}\big)\,.
 \end{split}
\end{align}
Is the assumption that the state of the qubits is restricted to the symmetric subspace important for this property? 
We return to this question in \cref{sec:beyond_sym}, where \cref{thm:full_characterization} gives a full generalization of \cref{lem:sym_semi_universality} to arbitrary qubit states.
Indeed, \cref{lem:sym_semi_universality} follows as the special case of \cref{thm:full_characterization}, when restricted to the symmetric subspace.

\subsection{State convertibility in the symmetric subspace}
\label{sec:state_convertability}

An immediate corollary of the above result is the following:
\begin{corollary}\label{cor:state_prep}
Consider a pair of states $|\Psi\rangle, |\Phi\rangle \in \Hsym\otimes  \mathcal{H}_\text{osc}$, which   both have support on a finite number of eigen-subspaces of  $Q$.  There exists a unitary transformation $V\in \V_z$ that converts $|\Psi\rangle$ to $|\Phi\rangle$, such that $V|\Psi\rangle=|\Phi\rangle$ if, and only if,
\begin{align}\label{eq:cons}
    \forall q\geq0\,\,:\,\,\langle\Psi|\Pi_q|\Psi\rangle=\langle\Phi|\Pi_q|\Phi\rangle\,,
\end{align}
where $\Pi_q$ is the projector to the eigen-subspace of $Q$ with eigenvalue $q$.
\end{corollary}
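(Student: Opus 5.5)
The plan is to derive the corollary directly from \cref{lem:sym_semi_universality}. Necessity is immediate. Every $V\in\V_z$ commutes with $Q$, so it commutes with each projector $\Pi_q$, and hence $\langle\Phi|\Pi_q|\Phi\rangle=\langle\Psi|V^\dag\Pi_q V|\Psi\rangle=\langle\Psi|\Pi_q|\Psi\rangle$ for all $q$. This gives \cref{eq:cons}.

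For sufficiency, we take $\qmax$ to be the largest charge on which either state has support; this is finite by hypothesis. We then write $|\Psi\rangle=\sum_{q\le\qmax}\Pi_q|\Psi\rangle$ and $|\Phi\rangle=\sum_{q\le\qmax}\Pi_q|\Phi\rangle$, where $\Pi_q|\Psi\rangle,\Pi_q|\Phi\rangle\in\H_q$ because both states lie in $\Hsym\otimes\H_{\text{osc}}$. Condition \cref{eq:cons} says that, for each $q$, the components $\Pi_q|\Psi\rangle$ and $\Pi_q|\Phi\rangle$ have equal norm. It therefore suffices to find, for each $q\le\qmax$, some $v_q\in\SU(\H_q)$ with $v_q\Pi_q|\Psi\rangle=\Pi_q|\Phi\rangle$. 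Once we have these, the first part of \cref{lem:sym_semi_universality} gives $V\in\V_z$ with $V\PiSym=\bigoplus_q v_q$, and $V|\Psi\rangle=|\Phi\rangle$ follows because $|\Psi\rangle$ lies in the range of $\PiSym$.

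To construct each $v_q$, we split into cases. If the common norm is zero, we take $v_q=\mathbb{I}$. If $\dim\H_q\ge2$, the group $\SU(\H_q)$ acts transitively on each sphere of fixed radius. Concretely, we first pick any unitary $u$ mapping one normalized vector to the other, and then correct its determinant by a phase acting on a vector orthogonal to the target, which leaves the action on our vector unchanged.

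The only delicate case is $\dim\H_q=1$. By \cref{eq:dimHq_j=n/2} this occurs only for $q=0$, where $\SU(\H_0)$ is trivial, so the two components can differ by a phase $e^{i\beta}$. Here we use the second part of \cref{lem:sym_semi_universality} instead, which allows $\det v_0=e^{i\theta_z\Tr(\Pi_0P_{n/2}J_z)}=e^{-in\theta_z/2}$ for a suitable $\theta_z$. We choose $\theta_z$ so that $e^{-in\theta_z/2}=e^{i\beta}$; this is possible because $n\ge1$ and $\theta_z$ ranges over an interval of length $4\pi$. For every $q\ge1$ we have $\dim\H_q\ge2$, so we then pick $v_q\in\UU(\H_q)$ mapping the components and adjust its determinant to the required value $e^{i\theta_q}$ using the orthogonal-vector phase trick. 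The resulting family $\{v_q\}$ satisfies the determinant condition \cref{eq:sym_phase_constraint}, so the lemma provides a $V\in\V_z$ realizing it, and this $V$ maps $|\Psi\rangle$ to $|\Phi\rangle$. The main point needing care is this one-dimensional $q=0$ sector; all other sectors are routine.
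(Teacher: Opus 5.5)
Your proof is correct and follows the paper's argument. Necessity comes from $[V,Q]=0$. Sufficiency matches the one-dimensional $q=0$ component using the $J_z$ phase freedom, and every $q\ge 1$ component (where $\dim\mathcal{H}_q\ge 2$) using transitivity of $\SU(\mathcal{H}_q)$ on spheres, before invoking \cref{lem:sym_semi_universality}. The only difference is cosmetic: the paper first applies $R_z(\theta_z)$ explicitly and then uses only the $\bigoplus_q \SU(\mathcal{H}_q)$ containment, whereas you fold the same $z$ rotation into the determinant condition \cref{eq:sym_phase_constraint}. On that route, in zero-norm sectors you should take any $v_q$ with the prescribed determinant $e^{i\theta_q}$ rather than $\mathbb{I}$.
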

Note that this condition can be equivalently stated as
\begin{align}
    \langle\Psi|e^{i\theta Q}|\Psi\rangle=\langle\Phi|e^{i\theta Q}|\Phi\rangle\,\,:\,\, \theta\in[0,2\pi)\,.
\end{align}
Also, note that \cref{eq:cons} is necessary for the existence of a general PI, U(1)-invariant unitary $V\in \Vsym$ that realizes the desired state conversion. Therefore, this corollary means that the restrictions imposed by \cref{eq:sym_phase_constraint} do not forbid pure state conversions inside $\Hsym\otimes  \mathcal{H}_\text{osc}$.

\begin{proof}
Necessity is an immediate consequence of the U(1) symmetry; since every $V\in\mathcal{V}_z$ commutes with $Q$, it also commutes with each spectral projector $\Pi_q$.
We now prove sufficiency.
By assumption, the two projected vectors $\Pi_q\ket{\Psi}$ and $\Pi_q\ket{\Phi}$ have equal norms for every $q$.
Since the states have finite support over the charge sectors, only finitely many of these vectors are nonzero.
First consider the one-dimensional sector with $q=0$, which is spanned by $\ket{1}^{\otimes n}\ket{0}_{\rm osc}$.
If the common norm $\|\Pi_0\ket{\Psi}\|=\|\Pi_0\ket{\Phi}\|$ is nonzero, then the two projected vectors differ only by a phase.
We choose a global $z$ rotation $R_z(\theta_z)$ so that this phase is matched, namely so that
\begin{align}
    \Pi_0 R_z(\theta_z)\ket{\Psi}
    =
    \Pi_0\ket{\Phi}\,.
\end{align}
If the common norm is zero, we simply take $\theta_z=0$.
Define $\ket{\Psi'}:=R_z(\theta_z)\ket{\Psi}$.
Since $R_z(\theta_z)$ preserves each charge sector, we still have
\begin{align}
    \|\Pi_q\ket{\Psi'}\|
    =
    \|\Pi_q\ket{\Phi}\|
    \qquad \forall q\,.
\end{align}

For every sector $q\ge 1$ with nonzero support, the Hilbert space $\mathcal{H}_q:=\Pi_q(\Hsym\otimes\mathcal{H}_{\rm osc})$ has dimension larger than one.
Therefore, for each such $q$, there exists a special unitary $v_q\in\SU(\mathcal{H}_q)$, such that
\begin{align}
    v_q\,\Pi_q\ket{\Psi'}
    =
    \Pi_q\ket{\Phi}\,.
\end{align}
Indeed, in any dimension $d>1$, two vectors in $\mathbb{C}^d$ can be related by an element of $\SU(d)$ if and only if they have the same norm.\footnote{The fact that there exists a general unitary in $\UU(d)$ with this property is trivial.
To see why there exists also an element of $\SU(d)$ with the desired property, note that there exists special unitary $U$ such that $U |\eta\rangle=e^{i\gamma} |\eta\rangle$ for arbitrary phase $|\gamma\rangle$, namely $U=e^{-i\gamma / (d-1)}\exp(i\gamma d/(d-1)|\eta\rangle\langle \eta)$.} 
By \cref{lem:sym_semi_universality}, $\mathcal{V}_z$ contains a unitary whose action on each of the finitely many relevant sectors is given by the corresponding $v_q$, while its action on the remaining sectors is arbitrary.
Combining this unitary with the initial $z$ rotation, we obtain a unitary $V\in\mathcal{V}_z$ satisfying 
$V\ket{\Psi}=\ket{\Phi}$.
This proves sufficiency and completes the proof.
\end{proof}

\subsection{Application: qubit-oscillator SWAP gate and preparing arbitrary oscillator states}
\label{sec:Vswap}
Here, we demonstrate how the result from the previous section leads to a novel method for preparing the oscillator in desired states.
In particular, \cref{lem:sym_semi_universality} implies that there exists a \textit{fixed} unitary $\Vswap\in\V_z$ implementing the map
\begin{align}
 \begin{split}
    \Vswap&\left(\left|j=\frac{n}{2},m\right\rangle\vac\right) \\[4pt]
    &\hspace{14pt}=\,\left|j=\frac{n}{2},-\frac{n}{2}\right\rangle \left|m+\frac{n}{2}\right\rangle_{\text{osc}}\,,
 \end{split}
 \label{eq:rq1}
\end{align}
for each $m=-n/2,\dots,n/2$ simultaneously, where $|j=\frac{n}{2},m\rangle$ is the eigenstate of $J^2$ and $J_z$  with eigenvalues $\frac{n}{2}(\frac{n}{2}+1)$ and $m$, respectively.
Note that the final qubit state, $|j=\frac{n}{2},-\frac{n}{2}\rangle =|1\rangle^{\otimes n}$, is independent of $m$.
Therefore, any such unitary maps an arbitrary state $\ket{\psi}$ in the symmetric subspace of $n$ qubits to a corresponding state $\ket{\Psi}$ of the oscillator, according to
\begin{align}\label{eq:SWAP}
    \Vswap\big(\ket{\psi}\otimes\vac\big) \eq |1\rangle^{\otimes n}\otimes\ket{\Psi}_{\text{osc}}\, ,
\end{align}
as illustrated by the circuit in \cref{fig:QubOsc_swap}.
\begin{figure}[htp]
    \centering
    \includegraphics[width=0.75\linewidth]{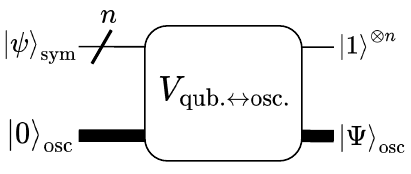}
    \caption{Qubit-Oscillator SWAP unitary. See \cref{eq:SWAP}.}
    \label{fig:QubOsc_swap}
\end{figure}

In particular, for state $\ket{\psi}\in\mathcal{H}_{\text{sym}}$, decomposed as
\begin{align}
    \ket{\psi}=\sum_{m=-n/2}^{n/2} \psi_m \left|j=\frac{n}{2},m\right\rangle \,,
\end{align}
the corresponding oscillator state $\ket{\Psi}\in\mathcal{H}_{\text{osc}}$ is
\begin{align}
    \ket{\Psi}_{\text{osc}}=\sum_{k=0}^{n} \psi_{k-\frac{n}{2}} \ket{k}_{\text{osc}}\,,
\end{align}
where $\ket{k}_{\text{osc}}$ is the eigenvector of the number operator $a^\dag a$ with eigenvalue $k$.

To see how \cref{lem:sym_semi_universality} implies the existence of such a unitary transformation $\Vswap$, note that for all $m = -n/2, \ldots, n/2$, the vectors $|j=\frac{n}{2}, m\rangle \otimes \vac$ and $|j=\frac{n}{2},-\frac{n}{2}\rangle \otimes |m + \frac{n}{2}\rangle_{\text{osc}}$ on the two sides of \cref{eq:rq1} belong to the same charge sector, $q=m+\frac{n}{2}$.
Furthermore, these vectors are orthogonal to each other, except for $m = -n/2$, where they are identical. 
It follows that for each $m$, there exists an element of $\SU(\mathcal{H}_q)$ that maps the state $|j=\frac{n}{2}, m\rangle \otimes \vac$ to $|j=\frac{n}{2},-\frac{n}{2}\rangle  \otimes |m + \frac{n}{2}\rangle_{\text{osc}}$. 
Then, by \cref{lem:sym_semi_universality}, we conclude that there exists an element of $\V_z$ that implements this transformation. 
This protocol is exact and consists solely of unitary time evolution under Hamiltonians $\HTC$ and $J_z$.
See \cite{circuit_paper} for an explicit construction of the 2-qubit version of $\Vswap$.

One application of this technique is state preparation for the bosonic mode. 
In \cite{circuit_paper}, we show that starting from the state $|0\rangle^{\otimes n}$ using $\HTC$ along with global $z$ and $x$ fields, we can generate any state in the symmetric subspace $\mathcal{H}_{\text{sym}}$. 
When combined with the universal qubit-oscillator SWAP gate described above, this means that the oscillator can be prepared in an arbitrary state with up to $n$ excitations.

\section{PI, U(1)-invariant unitaries beyond the symmetric subspace} 
\label{sec:beyond_sym}
\subsection{Irreducible representations of the permutation group via Schur-Weyl duality}
\label{sec:schur_weyl}

To go beyond the symmetric subspace, we first recall some useful facts about the representation theory of $\mathbb{S}_n$, the group of permutations of $n$ objects. 
According to the Schur-Weyl duality, there is a canonical one-to-one correspondence between the irreps of SU(2) and  $\mathbb{S}_n$ that appear in $(\mathbb{C}^2)^{\otimes n}$.
For instance, the symmetric subspace $\Hsym$ corresponds to the highest angular momentum $j=n/2$.
More generally, the subspaces with inequivalent irreps of $\mathbb{S}_n$ can be labeled by the eigenvalues of the total angular momentum (also known as the Casimir) operator $J^2=J_x^2+J_y^2+J_z^2$, which are in the form  $j(j+1)$ for $j=j_\text{min},j_\text{min}+1, \dotsc, n/2$, where $j_\text{min}=1/2$ when $n$ is odd and is zero when $n$ is even.

Then, under the action of permutations, the total Hilbert space of $n$ qubits decomposes as
\begin{align}\label{eq:dec-n}
    (\C^{2}&)^{\otimes n} \,\cong\, 
    \bigoplus_{j=\jmin}^{n/2}\big(\mathbb{C}^{M(n,j)} \otimes \mathbb{C}^{2j+1}\big)\,,
\end{align}
where the permutation group $\mathbb{S}_n$ acts irreducibly on $\mathbb{C}^{M(n,j)}$, with
\begin{align}
    M(n,j) \,:=\, \left(\begin{matrix}n\\\frac{n}{2}-j\end{matrix}\right)\frac{2j+1}{\frac{n}{2}+j+1}\,\,,
 \label{eq:spin_j_multiplicity}
\end{align}
and this irrep appears with multiplicity $2j+1$ \cite{MLH_2024,Bartlett_etal_2007}.
The group of unitaries $U^{\otimes n}: U\in \SU(2)$  acts irreducibly on the corresponding subsystem $\mathbb{C}^{2j+1}$, as the spin $j$ representation of SU(2). 
Then, Schur's lemma implies that a general PI operator $A$ is block-diagonal with respect to this decomposition, and inside each $j$-sector acts trivially on $\mathbb{C}^{M(n,j)}$, whereas it can be arbitrary on $\mathbb{C}^{2j+1}$, i.e., $A=\bigoplus_{j} \mathbb{I}_{M(n,j)}\otimes A_j$. 

This decomposition, in turn, implies that with respect to the group $\Uo\times\Sn$, the joint Hilbert space of $n$ qubits and the oscillator decomposes as 
\begin{align}
    (\C^{2})^{\otimes n}&\otimes\mathcal{L}^2(\R) \,\cong\, \bigoplus_{j=\jmin}^{n/2}\C^{M(n,j)}\otimes \Bigg(\bigoplus_{q=n/2-j}^{\infty}\H\qj\Bigg)
    \nonumber\\
    \nonumber\\
    &=\bigoplus_{q=0}^{\infty}\,\bigoplus_{j=\max(\jmin,\,\frac{n}{2}-q)}^{n/2}\left(\C^{M(n,j)} \otimes\H\qj\right)
    \,,
\label{eq:full_block_structure2}
\end{align}
where $\H\qj$ is the eigen-subspace of $Q=a^\dag a+J_z+\frac{n}{2}$ within $\mathbb{C}^{2j+1}\otimes \mathcal{L}^2(\R)$ with eigenvalue $q$, and has dimension
\begin{align}
 \begin{split}
    d_n(q,j) :=&\,\, \text{dim}(\H\qj)\\[4pt]
    =&\,\, \min\left\{2j+1\,,\,\,q+1+j-\frac{n}{2}\right\}\,.
 \end{split}
 \label{eq:Hqjn_dimension}
\end{align}
This is the generalization of \cref{eq:dimHq_j=n/2} to arbitrary $j$, beyond the symmetric subspace ($j=n/2$), and similar to \cref{eq:dimHq_j=n/2} follows immediately from \cref{eq:bound1}.

Note that when $d_n(q,j)=2j+1$, the space $\mathcal{H}_{q,j}$ carries the full irrep of SU(2) with angular momentum $j$. We refer to such sectors as \emph{filled} sectors.
On the other hand, when $d_n(q,j)<2j+1$, the space $\mathcal{H}_{q,j}$ is a truncated subspace of this irrep. This truncation occurs when $q-n/2<j$: in this case, the constraint on the total excitation number prevents $m$ from taking all values in the spin-$j$ irrep, up to $m=j$.
We refer to such sectors as \emph{unfilled} sectors.

Furthermore, note that $\H\qj$ is non-zero if, and only if
\begin{align}
    j \geq \frac{n}{2}-q\,,
\end{align}
which explains why, in \cref{eq:full_block_structure2}, the summation over $q$ starts at $q = n/2-j$. 
Roughly speaking, this is because achieving lower values of $q < n/2$ requires more qubits to be in the state $\ket{1}$, meaning they tend to be aligned, which corresponds to a higher value of $j$. 
For $q>n/2-j$, the dimension of $\mathcal{H}_{q,j}$ grows linearly with $q$ until it reaches a maximum value of $2j + 1$.

\subsection{PI, U(1)-invariant unitaries}
\label{sec:CC_PI_unitaries}
By Schur's Lemma, any PI, U(1)-invariant unitary $V\in\Vsym$ is block-diagonal with respect to the decomposition in \cref{eq:full_block_structure2}.
That is, it is in the form
\begin{align}
    V \eq \bigoplus_{q=0}^{\infty}\,\bigoplus_{j=\max(\jmin,\,\frac{n}{2}-q)}^{n/2}\Big(\1_{M(n,j)} \otimes v\qj\Big)\,,
 \label{eq:unitary_symmetric_form}
\end{align}
where $v\qj$ is a unitary operator on $\H\qj$. 
To be more precise, $\Vsym$ is exactly the group of all unitaries of the form in \cref{eq:unitary_symmetric_form}:
\begin{align}\label{eq:g-char0}
    \Vsym \,\,=\,\, \bigoplus_{q=0}^{\infty}\, \bigoplus_{j=\max(\jmin,\,\frac{n}{2}-q)}^{n/2}\Big(\1_{M(n,j)} \otimes \UU(\H\qj)\Big)\,,
\end{align}
and the corresponding Lie algebra is
\begin{align}\label{eq:g-char}
    \g \,\,=\,\,\bigoplus_{q=0}^{\infty}\, \bigoplus_{j=\max(\jmin,\,\frac{n}{2}-q)}^{n/2}\Big(\1_{M(n,j)} \otimes \uu(\H\qj)\Big)\,.
\end{align}
Here, we point out that, since the only skew-Hermitian operators that commute with \textit{all} other skew-Hermitian operators are imaginary multiples of the identity, the center of $\g$ is
\begin{align}\label{eq:center}
    \mathfrak{c}\,\,&=\Bigg\{\bigoplus_{q=0}^{\infty}\, \bigoplus_{j=\max(\jmin,\,\frac{n}{2}-q)}^{n/2}\hspace{-4mm}ic\qj\big(\1_{M(n,j)} \otimes \1_{\H\qj}\big)\,:\,c\qj\in\R\Bigg\}\,.
\end{align}
In other words, the center is spanned by the projectors $\Pi\qj=\big(\1_{M(n,j)} \otimes \1_{\H\qj}\big)$, as claimed in \cref{sec:central_constraints}.

\vspace{0.5em}
As we saw in \cref{lem:sym_semi_universality}, when projected to the symmetric subspace ($j=n/2$), the only restrictions on realizable unitaries are constraints on the relative phases between different sectors.
To set aside these relative-phase constraints and focus on other possible restrictions, we use the notion of semi-universality defined in \cref{sec:semi-univ} \cite{marvian_abelian_2024}.
Using the fact that the commutator subgroup of the unitary group is the special unitary group,
\begin{align}
    \big[\text{U}(\mathcal{H}),\text{U}(\mathcal{H})\big]=\text{SU}(\mathcal{H}),
\end{align}
together with the decomposition in \cref{eq:unitary_symmetric_form}, we find that the commutator subgroup of the group of all PI, U(1)-invariant unitaries is
\begin{align}\label{eq:sv1}
    \mathcal{S}\Vsym&:=\Big[\Vsym, \Vsym\Big]\\[4pt]\nonumber
    &\hspace{-10pt}=\bigoplus_{q=0}^{\infty}\, \bigoplus_{j=\max(\jmin,\,\frac{n}{2}-q)}^{n/2}\Big(\1_{M(n,j)} \otimes \SU(\H\qj)\Big)\,,
\end{align}
with the corresponding Lie algebra
\begin{align}
    \big[\g,\g\big]\,\,=\,\, \bigoplus_{q=0}^{\infty}\,\bigoplus_{j=\max(\jmin,\,\frac{n}{2}-q)}^{n/2} \1_{M(n,j)}\otimes \su(\H\qj)\,.
\end{align}
Therefore, the first statement in \cref{lem:sym_semi_universality} says that $\mathcal{SV}_z\PiSym$ is semi-universal in $\mathcal{S}\Vsym \PiSym$.\\

In the following, for any operator $V\in \Vsym $, we define
\begin{align}
    \pi\qj(V)=v_{q,j}\,:=\,\frac{\Tr_{\mathbb{C}^{M(n,j)}}(\Pi_{q,j} V \Pi_{q,j})}{M(n,j)}\,,
 \label{eq:vqj_component}
\end{align}
obtained from the component of $V$ in the above equation, where the partial trace is over the multiplicity factor $\mathbb{C}^{M(n,j)}$ in \cref{eq:full_block_structure2}.  
We can think of $\pi_{q,j}(V)$ as the explicit matrix obtained from the matrix representation of $V$ relative to any basis that is consistent with the irrep decomposition in \cref{eq:full_block_structure2}.
For concreteness, we pick the basis obtained from 
\begin{align}\label{eq:basis}
    \ket{j,m,\alpha} = c_{j,m} J^{j-m}_- \ket{j,j,\alpha}\,,
\end{align}
where
$c_{j,m}$ is a normalization factor, 
and $|j,j,\alpha\rangle$ is a highest-weight vector, e.g.
\begin{align}
    \ket{j,j,\alpha_0}= \ket{\Psi^-}^{\otimes (\frac{n}{2}-j)}\otimes \ket{0}^{\otimes 2j}\,,
\label{eq:jjalpha0}
\end{align}
where $\ket{\Psi^-}$ denotes a 2-qubit singlet state.
Then, $\{\ket{j,m,\alpha}: m=-j,\dotsc, j\}$ defines a $(2j+1)$ dimensional irrep of SU(2), corresponding to angular momentum $j$, which satisfies
\begin{align}
 \begin{split}
    J^2 \ket{j,m,\alpha}&=j(j+1) \ket{j,m,\alpha}\\[4pt]
    J_z \ket{j,m,\alpha} &= m \ket{j,m,\alpha}\,.
 \end{split}
\end{align}
We define matrix elements of $\pi_{q,j}(V)$ with respect to this basis, indexed by $r:=j+m$, i.e.
\begin{align}
 \begin{split}
    &\ket{j,r-j,\alpha}\otimes\left|\left(q+j-\frac{n}{2}\right)-r\right\rangle \\[6pt]
    &\hspace{30pt}:\quad r=0,\dotsc,\dim(\H\qj)-1\,.
 \end{split}
 \label{eq:matrix_basis}
\end{align}
via
\begin{align}
 \label{eq:proj}
    [&\pi_{q,j}(V)]_{r_1,r_2} = \\[4pt]\nonumber
    &=\big(\langle j,r_1-j,\alpha|\otimes \langle k_1|_{\text{osc}}\big)\,V\,\big(|j,r_2-j,\al\rangle\otimes |k_2\rangle_{\text{osc}}\big)\,,
\end{align}
where $k_{1,2}=\big(q+j-\frac{n}{2}\big)-r_{1,2}$.
Note that for any PI operator $V$, these matrix elements are independent of the index $\alpha$ that labels the multiplicity of SU(2).

\section{An accidental Symmetry of the TC interaction }\label{sec:accidental_sym}
\begin{figure}[htp]
    \centering
    \includegraphics[width=0.96\linewidth]{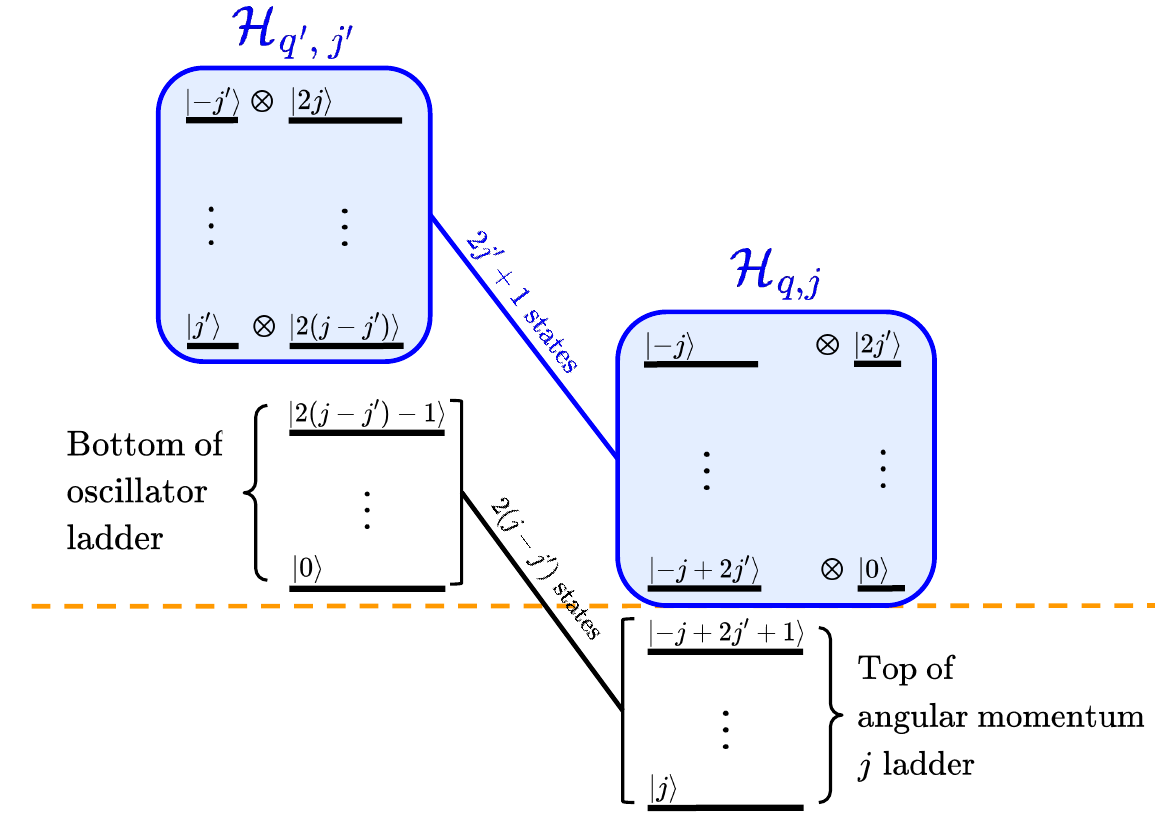}
    \caption{\textbf{General accidental symmetry pairs:}
    The matrix elements of $\HTC$ are identical in the two illustrated sectors, provided that $(q,q')$ and $(j,j')$ satisfy \cref{eq:q_q'}.
    Note that in general, these sectors appear with nontrivial multiplicities within $(\C^2)^{\otimes n}\otimes\mathcal{L}^2(\R)$. 
    Figure from \cite{symmetry_paper}.}
    \label{fig:accidental_symmetry}
\end{figure}
In \cref{sec:beyond_sym}, we saw that, after projecting to $\Hsym\otimes\mathcal{H}_\text{osc}$, the commutator subgroup of the unitaries generated by $\HTC$ and $J_z$ coincides with the commutator subgroup of all PI, U(1)-invariant unitaries. 
In particular, the special-unitary transformations acting on different charge sectors can be chosen independently.
One may expect that a similar independence property should hold between multiple angular momentum $j$ sectors.
However, perhaps surprisingly, we find that because of a peculiar ``accidental'' symmetry of the TC Hamiltonian, which is independent of the permutational and U(1) symmetries discussed in \cref{sec:beyond_sym}, this is not the case.

In the companion paper \cite{symmetry_paper}, we provide a detailed description of this accidental symmetry; in particular, we explain how the symmetry can be understood in terms of Schwinger's oscillator model of angular momentum, and discuss its simplest manifestation in the case of $n=3$ qubits.
Here, we present a formal statement of this accidental symmetry, which we then use to fully characterize $\V_z$ in \cref{thm:full_characterization}.

In \cref{app:AccSym_Prop} we prove that
\begin{proposition}
\label{prop:accidental_symmetry_matrix}
\textbf{(Accidental Symmetry of $\HTC$)}
For each pair of angular momenta $(j,j')$ satisfying
\begin{align}
    0\,<\,j'\,<\,j\,\leq\,\frac{n}{2}\,,
\end{align}
there is a unique pair of distinct sectors, $\H\qj$ and $\H_{q',j'}$, namely sectors with charges 
\begin{align}\label{eq:q_q'}
 \begin{split}
       q  \eq \frac{n}{2}+2j'-j \quad\text{and}\quad
       q' \eq \frac{n}{2}-j'+2j\,,
 \end{split}
\end{align}
for which both of the following equalities are satisfied:
\begin{align}
 \label{eq:dimension_equality}
 \begin{split}
    \text{Equality of dimension:}&\\
    \dim(\mathcal{H}_{q,j})&=\dim(\mathcal{H}_{q',j'})
 \end{split}\\[6pt]
 \begin{split}
    \text{Equality of second moment:}\hspace{-20pt}&\\
    \Tr(\pi\qj(H_{\text{TC}})^2 )&=\Tr(\pi_{q',j'}(H_{\text{TC}})^2)\,.
 \end{split}
 \label{eq:variance_equality}
\end{align}
In particular, both of these sectors have dimension $\dim(\mathcal{H}_{q,j})=\dim(\mathcal{H}_{q',j'})=2j'+1$.
Furthermore, in these two sectors, the matrices of $H_{\text{TC}}$ are identical, with respect to the basis $\ket{j,r-j,\al}\otimes\ket{k}_{\text{osc}}$ defined in \cref{eq:matrix_basis}, where $r:=j+m$.
In particular, the only nonzero matrix elements are
\begin{align}
 \begin{split}
    &\big[\pi\qj(\HTC)\big]_{r,r-1} \,\,=\,\,\big[\pi_{q',j'}(\HTC)\big]_{r,r-1} \,\,=\,\,\\[4pt]
    &\hspace{30pt}=\sqrt{k+1}\times \sqrt{(j+m)(j-m+1)}\,,
 \label{eq:HTC_elements}
 \end{split}
\end{align}
for $r=1,\dotsc,2j'$.
\end{proposition}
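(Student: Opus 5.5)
The plan is a direct computation of the matrix of $\HTC$ in a sector $\H\qj$, written in the basis \cref{eq:matrix_basis}. Once the off-diagonal entries are expressed in suitable variables, they turn out to be manifestly symmetric under an exchange of two parameters, and both the existence and the uniqueness claims follow from that symmetry. Throughout I set $\gTC=1$, since the overall factor plays no role.

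\textbf{Step 1 (matrix elements).} Since $Q$ is conserved, the basis vector with index $r=j+m$ carries oscillator number $k=q-\frac{n}{2}-m$. The term $J_-a^\dag$ lowers $r$ by one, and $J_+a$ raises it. Using $J_+\ket{j,m-1,\alpha}=\sqrt{(j+m)(j-m+1)}\ket{j,m,\alpha}$ and $a\ket{k+1}=\sqrt{k+1}\ket{k}$, the only nonzero entries are the ones in \cref{eq:HTC_elements} together with their transposes. I will introduce the shifted charge $s:=q-\frac{n}{2}+j$. In terms of $s$, \cref{eq:Hqjn_dimension} reads $d_n(q,j)=\min\{2j+1,s+1\}$, and the squared entries become
\begin{align}
b_r^2\,:=\,\big[\pi\qj(\HTC)\big]_{r,r-1}^2 \,=\, r\,(2j+1-r)\,(s+1-r)\,,\qquad r=1,\dotsc,d_n(q,j)-1\,.
\end{align}
The key observation is that this expression is invariant under $2j\leftrightarrow s$. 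The dimension $\min\{2j+1,s+1\}$ is also invariant under this exchange.

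\textbf{Step 2 (existence).} Consider an unfilled sector, i.e.\ one with $s<2j$, and write $s=2j'$. Its partner is the filled sector with angular momentum $j'$ and shifted charge $s'=2j$. The two sectors have the same dimension $2j'+1$, and they have entrywise identical tridiagonal matrices. In particular their second moments $\Tr(\pi(\HTC)^2)=2\sum_r b_r^2$ agree. Converting back via $q=s+\frac{n}{2}-j$ and $q'=s'+\frac{n}{2}-j'$ gives exactly \cref{eq:q_q'}. The condition $j'>0$ guarantees that the matrices are nontrivial, and $j'<j$ guarantees that the two sectors are distinct.

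\textbf{Step 3 (uniqueness; the main obstacle).} I must show that equal dimension and equal second moment force either $(j,s)=(j',s')$ or the swapped pair. Let $D\ge 2$ be the common dimension, and write
\begin{align}
F(2j,s)\,=\,\sum_{r=1}^{D-1} r\,(2j+1-r)\,(s+1-r)\,.
\end{align}
Every factor is positive on the range of summation. I will split into three cases.
\begin{itemize}
\item If both sectors are filled, then $D=2j+1=2j'+1$, so $j=j'$.
\item If both are unfilled, then $s=s'=D-1$. With $D$ fixed, $F$ is strictly increasing in $2j$, so $j=j'$.
\item If one is filled and one is unfilled, say $(j,s)$ unfilled and $(j',s')$ filled, then $s=D-1=2j'$. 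By the symmetry of $F$, the second-moment condition becomes $F(2j,2j')=F(s',2j')$ with the same truncation $D$. Strict monotonicity of $F$ in its first argument then forces $s'=2j$, which is the pair of Step 2.
\end{itemize}
The delicate point is keeping track of the truncation index. Monotonicity is only valid once $D$ is held fixed, so I will first fix $D$ from the dimension equality and only then compare the sums. Sectors with $D=1$, equivalently $j'=0$, are excluded because there $\HTC$ vanishes identically.
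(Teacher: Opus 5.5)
Your proof is correct, and it reaches the result by a somewhat different mechanism than the paper. The paper uses the same three-way split (both filled, both unfilled, one of each), but it compares second moments through the closed-form variance formulas from its appendix: $2j(j+1)(2q-n+1)/3$ for filled sectors and a cubic in $q,j$ for unfilled ones. It then solves the resulting polynomial equations, cancelling common factors in the both-unfilled case to get $3j-q=3j'-q'$ and solving for $q'$ in the mixed case. Only at the end does it check, entry by entry, that the two matrices coincide.

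Your observation does more work. You note that $b_r^2=r(2j+1-r)(s+1-r)$ and $d=\min\{2j+1,s+1\}$ are both invariant under $2j\leftrightarrow s$, with $s=q-\frac{n}{2}+j$. This turns the paper's final check into the source of the pairing. It gives existence, equal dimensions and entrywise equal matrices at once. It also replaces the closed-form sums with a positivity/monotonicity argument at fixed truncation $D$, which is more elementary. And it makes the pairing rule transparent: the partner of an unfilled sector is obtained by swapping its spin length $2j$ with its excitation budget $s$. The paper's route, in exchange, produces explicit variance values that it reuses for condition B$'$ in the semi-universality proofs. In your both-filled bullet, add that with $D=2j+1$ fixed, $F$ is also strictly increasing in $s$, so $q=q'$. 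This is needed for the stronger claim that each sector has at most one partner.

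One correction concerns your last sentence: excluding $D=1$ is not equivalent to $j'=0$. For every $j$, the lowest-charge sector $q=\frac{n}{2}-j$ (i.e. $s=0$) is one-dimensional, and $\pi_{q,j}(\HTC)=0$ there. So for any $0<j'<j$, the sectors $\mathcal{H}_{n/2-j,\,j}$ and $\mathcal{H}_{n/2-j',\,j'}$ also satisfy both equalities trivially. Uniqueness therefore holds only among sectors of dimension at least $2$. The condition $D\ge 2$ must be imposed as an explicit restriction, not derived from $j'>0$. The paper's both-unfilled step tacitly needs the same restriction, because it cancels the factor $q-\frac{n}{2}+j$, which vanishes on exactly these sectors. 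Your explicit assumption $D\ge 2$ is the right way to handle this.
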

Equivalently, these constraints can be stated in terms of the partial isometry
\begin{align} \label{eq:Sjj}
 \begin{split}
    S_{j,j';\,\al,\al'}:=\sum_{m'=-j'}^{j'} &\ket{j',m',\alpha'}\bra{j,m'-j+j', \alpha}\\
    &\otimes\ket{2j-j'-m'}\bra{j'-m'}_{\text{osc}} \,,
 \end{split}
\end{align}
which unitarily maps one copy of $\mathcal{H}_{q,j}$ to a copy of $\mathcal{H}_{q',j'}$, where $j>j'$.
\footnote{The corresponding partial isometry from $\mathcal{H}_{q',j'}$ to $\mathcal{H}_{q,j}$ is $S_{j',j;\al',\al} := S^\dag _{j,j';\al,\al'}$.}
In the following, we often suppress the multiplicity labels $\alpha$ and $\alpha'$, because they are fixed.
In particular, $\ket{j,m,\alpha}$ and $\ket{j',m',\alpha'}$, are constructed from fixed but otherwise arbitrary highest-weight vectors $\ket{j,j,\alpha}$ and $\ket{j',j',\alpha'}$ via \cref{eq:basis}.
For example, one may choose $\ket{j,j,\alpha_0}$ as in \cref{eq:jjalpha0}.
With this convention, we omit the multiplicity indices and simply write the operator as $S_{j,j'}$. 
See \cite{symmetry_paper} for further discussion on the properties of this operator. 

Under these conventions, the TC Hamiltonian $\HTC$ and its phase-shifted variants $\HTC(\phi)$ -- including $\overline{\HTC}=\HTC(\pi/2)$ -- satisfy
\begin{align}\label{eq:sds}
    \big[\HTC , S_{j,j'}\big]= \big[\HTC(\phi), S_{j,j'}\big]=0\,,
\end{align}
for all $\phi\in[0,2\pi)$.
Furthermore, while $J_z$ and $a^{\dag}a$ break this symmetry, they do so in a very specific form:
\begin{align}
 \begin{split}
    \big[J_z,S_{j,j'}\big] &\eq (j-j')S_{j,j'}\\[4pt]
    \big[a^\dag a,S_{j,j'}\big] &\eq 2(j-j') S_{j,j'}\,.
 \end{split}
 \label{eq:S_comms_Jz_adaga}
\end{align}
Equivalently, in terms of the components defined in \cref{eq:vqj_component},
\begin{align}
 \begin{split}
    \pi\qj(J_z) &\eq \pi_{q',j'}(J_z) +(j'-j)\1 \\[4pt]
    \pi\qj(a^{\dag}a) &\eq \pi_{q',j'}(a^{\dag}a) + 2(j'-j)\1\,.
 \end{split}
 \label{eq:acc_sym_Jz_matrix}
\end{align}

\subsection*{Remarks on the accidental symmetry}
Three important remarks on \cref{prop:accidental_symmetry_matrix} must be emphasized:

\vspace{0.5em}
\noindent \textbf{(1)} Note that in \cref{prop:accidental_symmetry_matrix}, we excluded sectors with $j=0$, each of which is one-dimensional.
Indeed, as we further discuss in \cite{symmetry_paper}, in this case the accidental symmetry associates pairs of one-dimensional sectors and implies that the eigenvalues of $\HTC$ in these two sectors are identical.
Of course, these one-dimensional sectors do not contribute to the commutator subalgebra $\sv_z=[\vz,\vz]$ (see \cref{thm:Thm0}).
For $n=2$ qubits, total angular momentum takes only the values $j=0$ and $j=1$; therefore, the accidental symmetry does not impose any non-trivial constraints on $\sv_z$.
(See \cite{symmetry_paper} for a further explanation of the 2-qubit case.)
On the other hand, for $n\geq3$ qubits, the accidental symmetry pairs sectors with dimensions two and larger, and hence imposes restrictions on $\sv_z$.

\vspace{0.5em}
\noindent \textbf{(2)} All pairs of sectors $\H\qj$ related by the accidental symmetry are supported on the subspace of all sectors with charge bounded by
\begin{align} \label{eq:acc_sym_boundaries}
     q \,\,\leq\,\, n + \left\lceil\frac{n}{2}-1\right\rceil\,.
\end{align}
This can be seen using \cref{eq:q_q'}, which implies that the maximal charge for which an accidental symmetry sector appears is
\begin{align*}
    \max\{q,\,q'\} &\eq \frac{n}{2} + \,\max_{0<j'<j\leq n/2}\big\{2j'-j,\,2j-j'\big\} \\[4pt]
    &\eq \frac{n}{2} + \max_{\jmin\leq j\leq n/2}(2j) - \min_{0<j'<j}(j')\\[4pt]
    &\eq n+\left\lceil\frac{n}{2}-1\right\rceil\,.
\end{align*}

\vspace{0.5em}
\noindent \textbf{(3)} For $\qmax<\lfloor\frac{n}{2}\rfloor+3$, the subspace with bounded maximum charge $\qmax$ contains \textit{no} complete pair of accidental symmetry sectors.
On the other hand, for 
\begin{align}
    \qmax \,\,\geq\,\,\left\lfloor\frac{n}{2}\right\rfloor+3\,,
 \label{eq:acc_sym_minimal}
\end{align}
this subspace always contains at least one complete pair of accidental symmetry sectors.
To see this, first note that $q'>q$ for each pair of sectors in \cref{eq:q_q'}.
Then, the minimal primed charge $q'$ is
\begin{align*}
    \min_{0<j'<j\leq n/2}(q') &\eq \frac{n}{2} + \min_{0<j'<j\leq n/2}(2j-j')\\[4pt]
    &\eq \frac{n}{2} + \min_{1<j\leq n/2}(2j-(j-1))\\[4pt]
    &\eq \left\lfloor\frac{n}{2}\right\rfloor+3\,.
\end{align*}

\section{Characterizing unitaries realizable using the TC interaction and global $z$ field}
\label{sec:full_char}
\subsection{Failure of semi-universality}\label{Sec:failure}
When there are $n\geq3$ qubits, \cref{prop:accidental_symmetry_matrix} imposes non-trivial restrictions on the commutator subgroup $\SV_z$ of unitaries generated by Hamiltonians $\HTC$ and $\overline{\HTC}$.
Since both Hamiltonians respect the accidental symmetry, namely
\begin{align}
    \big[\HTC, S_{j,j'}\big]=\big[\overline{\HTC}, S_{j,j'}\big]=0 \,,
\end{align}
for each pair of angular momenta $0<j'<j\leq n/2$, it follows that every
unitary $V\in\SV_z$ also commutes with $S_{j,j'}$,
\begin{align}
    [V,S_{j,j'}]=0 \,.
\end{align}
Thus, for each pair $j>j'$, $V$ acts identically on the two sectors paired by this accidental symmetry. 
In particular,
\begin{align}
    \pi_{q,j}(V)=\pi_{q',j'}(V)\,,
\end{align}
where (\cref{eq:q_q'})
\begin{align*}
    q = \frac{n}{2}+2j'-j\quad\text{and}\quad q' = \frac{n}{2}-j'+2j\,.
\end{align*}
This, in particular, means the Lie algebra associated to $\mathcal{SV}_z$ is strictly contained in the Lie algebra associated to $\mathcal{S}\Vsym$, i.e.,
\begin{align}
    \sv_z \,\neq\, \big[\g,\g\big] \quad:\quad(n\geq3)\,,
\end{align}
which completes the proof of \cref{part:3} of \cref{thm:Thm0}.

\subsection{Characterizing the commutator subgroup}
As stated in the following proposition, it turns out that, aside from constraints on the relative phases, such as those described in \cref{eq:sym_phase_constraint} for the symmetric subspace, the constraints caused by the accidental symmetry are the only ones.
In particular, in \cref{sec:accidental_sym_restricts}, we fully characterize $\SV_z$, the group of unitaries realized by Hamiltonians $\HTC$ and $\overline{\HTC}$, in the truncated subspace of the Hilbert space with charge less than or equal to an arbitrary $\qmax$, and show that
\begin{proposition}
\label{prop:quasi_semi_universality}
For $n\geq3$, each unitary $V\in\SV_z$   
satisfies
\begin{align}
    \pi\qj(V) = \pi_{q',j'}(V) \quad\begin{cases}
        q &= \frac{n}{2}+2j'-j \\[6pt]
        q' &= \frac{n}{2} - j' + 2j\,,
    \end{cases}
 \label{eq:vqj_pairs_prop}
\end{align}
for $0<j'<j\leq n/2$.
\footnote{Note that for $n=2$, there is no such constraint.}

Conversely, for any unitary $V\in\SVsym$ satisfying the above constraints for $0<j'<j\leq n/2$, there exists 
\begin{align*}
    \widetilde{V}\in  \SV_z=\Big\langle\exp({-it\HTC}),\, \exp({-it\overline{\HTC}}) \,\,:\,\, t\in\mathbb{R}\Big\rangle\,,
\end{align*}
such that $\widetilde{V}\Pi^{q_\text{max}}=V\Pi^{q_\text{max}}$.
\end{proposition}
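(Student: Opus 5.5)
The forward direction is immediate from \cref{prop:accidental_symmetry_matrix}. Both $\HTC$ and $\overline{\HTC}$ commute with every $S_{j,j'}$ by \cref{eq:sds}. Hence every element of $\sv_z$ commutes with $S_{j,j'}$, and so does every $V\in\SV_z$. Since $S_{j,j'}$ maps $\H_{q,j}$ isometrically onto $\H_{q',j'}$ and sends basis vector $r$ to basis vector $r$ in the bases of \cref{eq:matrix_basis}, commutation gives $\pi_{q,j}(V)=\pi_{q',j'}(V)$.

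For the converse I would work at the Lie-algebra level on the finite-dimensional truncation $\Pi^{\qmax}$. Define $\mathfrak{h}\subset[\g,\g]\Pi^{\qmax}$ as the set of elements whose $(q,j)$ and $(q',j')$ components agree for every accidental pair. Using $\H_{q,j}\cong\H_{q',j'}$ via $S_{j,j'}$, we have
\[
\mathfrak{h}\cong\bigoplus_{\text{classes}}\su(d_{\text{class}}),
\]
where the direct sum runs over equivalence classes of sectors under the pairing. The first thing to check is that each sector lies in at most one pair. If some sector were in two pairs, the classes would merge into chains, which does not affect the argument. The goal is $\sv_z\Pi^{\qmax}=\mathfrak{h}$, after which exponentiation gives $\SV_z\Pi^{\qmax}=\exp(\mathfrak{h})$, since both groups are connected.

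The inclusion $\sv_z\Pi^{\qmax}\subseteq\mathfrak{h}$ follows from the first paragraph together with tracelessness on each sector, which \cref{sec:central_constraints} shows by computing the centre projection of the generators. For the reverse inclusion I would use the standard semisimple criterion, presumably the one from \cite{HLM_2024_SU(d)}. Suppose a Lie algebra $\mathfrak{k}\subseteq\bigoplus_i\su(d_i)$ satisfies two conditions: (a) its projection onto each simple factor is the whole of $\su(d_i)$; (b) for every pair of distinct factors $i\neq i'$, the projection onto $\su(d_i)\oplus\su(d_{i'})$ is not the graph of an isomorphism $X\mapsto UXU^\dagger$ or $X\mapsto -\bar{U}X\bar{U}^\dagger$. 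Then, by Goursat's lemma, $\mathfrak{k}$ is the full direct sum. Condition (a) comes from \cref{thm:Thm0} Parts~\ref{part:4}--\ref{part:5}, or directly from the single-sector analysis of \cite{Keyl_2014_control} extended to each $j$. On one sector, $\HTC$ is a tridiagonal matrix with nonzero off-diagonal entries, and $\overline{\HTC}=i[J_z,\HTC]$. Combined with the irreducibility of $\{\HTC,\overline{\HTC}\}$ on $\H_{q,j}$, this gives the full $\su(d)$. When $d_n(q,j)=1$ the factor is trivial and can be ignored.

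Condition (b) is the main obstacle: showing that no two classes with equal dimension are locked together except through the accidental pairs. My plan is to use invariants. If two factors were related by $X\mapsto UXU^\dagger$, then every element of $\sv_z$ would have equal spectra on both sectors. For $X\mapsto -\bar U X\bar U^\dagger$ the spectra would be negatives of each other. In particular, $\Tr(\pi_{q,j}(A)^k)$ would agree (up to the sign $(-1)^k$) for all $A$. I would first compare the second moments $\Tr(\pi_{q,j}(\HTC)^2)=\sum_{m}(k+1)(j+m)(j-m+1)$. \cref{prop:accidental_symmetry_matrix} already states that equality of dimension and second moment holds \emph{only} for the accidental pairs, so every other pair of sectors (the antiunitary case has the same second moment) is separated. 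The remaining case is two sectors with the same $j$ and different $q$, or a pair within the same class, which is excluded by Part~\ref{part:5}. The bookkeeping here, including sectors of dimension $1$ and $2$ where $\su(2)$ has extra outer symmetry, needs care. If the second moment ever fails to distinguish, I would fall back on the spectrum of $\HTC^2$ or on $\Tr(\pi(\HTC)^2\pi(\overline{\HTC})^2)$. With (a) and (b) established, $\sv_z\Pi^{\qmax}=\mathfrak{h}$, and exponentiating yields $\widetilde V\in\SV_z$ with $\widetilde V\Pi^{\qmax}=V\Pi^{\qmax}$.
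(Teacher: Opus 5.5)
Your proposal follows the paper's proof. The forward direction comes from $[\HTC,S_{j,j'}]=[\overline{\HTC},S_{j,j'}]=0$. The converse is the Goursat-type criterion of \cite{HLM_2024_SU(d)} (\cref{lem:semi_universality}): subsystem universality serves as condition A, and the second moment of $\HTC$ (the paper's energy variance, condition B$'$) separates all equal-dimension sectors except the accidental pairs. Grouping each pair into a single simple factor of $\mathfrak{h}$ is equivalent to the paper's restriction to a set $\Lambda$ containing one representative of each pair.

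Your justification of condition (a) needs two repairs.

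\begin{enumerate}
\item Citing \cref{part:4} of \cref{thm:Thm0} is circular, because the paper derives that part as a corollary of this very proposition. Cite \cref{prop:subspace_control} instead, or \cref{part:5}, which is proved independently in Step 3.
\item Your ``direct'' argument does not work. A tridiagonal generator with nonzero off-diagonal entries, together with its $J_z$-commutator acting irreducibly, does \emph{not} in general give $\su(d)$. For example, $J_x$ and $J_y\propto[J_z,J_x]$ on a spin-$j$ irrep satisfy all of these conditions, yet they generate only $\su(2)$. The missing input is the anharmonicity condition of \cref{lem:subspace_control}. Here $a_y^2=\big(j(j+1)-(y-\tfrac n2)(y-\tfrac n2+1)\big)(q-y)$ is cubic in $y$, so its second-order finite difference $2(n+q-1)-6y$ is non-constant. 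That is exactly what distinguishes $\HTC$ from $J_x$.
\end{enumerate}

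A minor correction on automorphisms. The outer automorphism of $\su(d)$ is $X\mapsto U\bar X U^\dagger$. The map $X\mapsto-\bar U X\bar U^\dagger$ you wrote is not a Lie-algebra homomorphism. This does not affect your argument: every automorphism preserves $\Tr(X^2)$ (the Killing form), so your second-moment test rules out both kinds of locking. Also, $\su(2)$ has no outer automorphisms, so dimension-$2$ sectors need no extra care.
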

Equivalently, this means that
\begin{align}
    \mathcal{SV}^{\text{acc}}\,\Pi^{\qmax} \eq \mathcal{SV}_z\,\Pi^{\qmax}\,,
 \label{eq:quasi_semi_universality}
\end{align}
where $\mathcal{SV}^{\text{acc}}$ is the subgroup of
\begin{align*}
    \mathcal{S}\Vsym=\left[\Vsym,\Vsym\right]
\end{align*}
that respects the accidental symmetry, i.e.,
\begin{align}
 \begin{split}
     \mathcal{SV}^{\text{acc}}:=\Big\{V\in \mathcal{S}&\Vsym \,\,:\,\, [V,S_{j,j'}]=0\,;\,\\[4pt]
    &\hspace{16pt}\forall (j,j')\,:\,0< j'<j\leq\frac{n}{2}\Big\} \,.
 \end{split}
\end{align}
In summary, for any finite charge cut-off $\qmax$, $\mathcal{SV}_z$ and $\mathcal{SV}^{\text{acc}}$ coincide on the truncated charge subspace.
Equivalently, in terms of Lie algebras, this means that
\begin{proposition}\label{prop:quasi_semi_universality:algebra}
For any  cutoff $q_\text{max}\ge 0$, the truncation of $\sv_z:=\mathfrak{alg}\{i\HTC,i\overline{\HTC}\}$ to the subspace with charge $q\le q_\text{max}$ is equal to
\begin{align}
 \begin{split}
    \sv_z\Pi^{q_\text{max}} = \Big\{A\Pi^{q_\text{max}}\,\,:\,\,&A\in [\mathfrak{g},\mathfrak{g}]\,\,\text{and}\\
    &[A, S_{j,j'}]=0\,\,\forall(j>j')\Big\}\,.
 \end{split}
\end{align}
\end{proposition}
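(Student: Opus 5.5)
The inclusion ``$\subseteq$'' is immediate from what precedes the statement. Every element of $\sv_z$ is a nested commutator of $i\HTC$ and $i\overline{\HTC}$. By \cref{eq:sds} both generators commute with every $S_{j,j'}$, and so does every element of $\sv_z$. Moreover, $\Tr(\HTC\Pi_{q,j})=\Tr(\overline{\HTC}\Pi_{q,j})=0$, and commutators are traceless on each block. Hence $\sv_z\subseteq[\g,\g]$, and the truncation by $\Pi^{\qmax}$ (which commutes with everything in $\g$) preserves both properties. The content of the proposition is therefore the reverse inclusion. The target algebra $\mathfrak{a}^{\qmax}$ is the direct sum of $\mathfrak{su}(\H_{q,j})$ over all blocks with $q\le\qmax$, except that each accidentally paired couple $(q,j),(q',j')$ from \cref{eq:q_q'} is replaced by a single diagonally embedded $\mathfrak{su}(2j'+1)$. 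If only one member of a pair lies below the cutoff, that block is unconstrained.

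For the reverse inclusion I will use the criterion of \cite{HLM_2024_SU(d)}, or equivalently the standard lemma on semisimple algebras. Let $\mathfrak{h}\subseteq\bigoplus_i\mathfrak{su}(d_i)$ be a Lie algebra whose projection onto each simple factor is surjective. Then $\mathfrak{h}$ is the full direct sum modulo ``diagonal'' identifications, and two factors are identified only if they are isomorphic as representations of $\mathfrak{h}$, up to duality. So the plan has two steps.

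The first step is single-block surjectivity: $\pi_{q,j}(\sv_z)=\mathfrak{su}(\H_{q,j})$ for every $(q,j)$ with $d_n(q,j)\ge 2$. In the basis \cref{eq:matrix_basis}, $\pi_{q,j}(\HTC)$ is a real tridiagonal (Jacobi) matrix with nonzero off-diagonals $\sqrt{k+1}\sqrt{(j+m)(j-m+1)}$, and $\pi_{q,j}(\overline{\HTC})$ is the same matrix with $\pm i$ phases. Their commutator gives a diagonal operator proportional to $[J_+a,J_-a^\dagger]$, restricted to the block. I will show that this diagonal element has distinct successive differences, or use that it generates enough of the Cartan subalgebra together with the chain-connected off-diagonal generators. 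Then the standard argument for a connected graph with nondegenerate ``$J_z$-like'' elements gives all of $\mathfrak{su}$. This is the argument of Keyl et al.\ \cite{Keyl_2014_control}, now applied at arbitrary $j$; alternatively I can invoke \cref{lem:sym_semi_universality} for $j=n/2$ and adapt it verbatim.

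The second step, which I expect to be the main obstacle, is ruling out any further identifications between distinct blocks. Two blocks $(q_1,j_1)\ne(q_2,j_2)$ of equal dimension $d$ can only be linked if there is a unitary $W$ with $\pi_{q_2,j_2}(X)=W\pi_{q_1,j_1}(X)W^\dagger$ for all $X\in\sv_z$, or the analogue with complex conjugation/transpose. In particular the spectra of $\pi(\HTC)$ must coincide: all trace invariants $\Tr(\pi(\HTC)^{2k})$ must agree, as must the spectra of the diagonal commutator $[\HTC,\overline{\HTC}]$ and mixed traces such as $\Tr(\pi(\HTC)^2\pi(i[\HTC,\overline{\HTC}]))$. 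I will first test equality of the second moment, $\sum_m (k+1)(j+m)(j-m+1)$, which is an explicit polynomial in $(q,j)$ for a given dimension. \cref{prop:accidental_symmetry_matrix} asserts that dimension and second-moment equality hold only for the accidental pairs. This rules out every other pair for the linear identification.

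The duality case needs separate treatment, because the spectrum of $\pi(\HTC)$ is symmetric about zero by the bipartite tridiagonal structure, so $\HTC\mapsto -\HTC^T$ does not change that invariant. I will handle it with a second invariant: the spectrum of the diagonal element $i[\HTC,\overline{\HTC}]$, which is not sign-symmetric, or a third-order mixed trace. The remaining risk is that some non-accidental pair matches all of these invariants in low dimension (e.g.\ $d=2$). I will check this directly, since for $d=2$ the block is determined by one off-diagonal magnitude and one diagonal difference.

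Combining the two steps with the semisimple lemma gives $\sv_z\Pi^{\qmax}=\mathfrak{a}^{\qmax}$.
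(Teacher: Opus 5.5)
Your proposal is correct and is essentially the paper's own proof. Single-sector surjectivity comes from the tridiagonal structure and the linear second differences $\Delta^2 a_y^2$ (the paper's \cref{lem:subspace_control}), followed by the Goursat-type criterion of \cite{HLM_2024_SU(d)} (\cref{lem:semi_universality}), with pairwise independence supplied by the second moment of $\HTC$, which by \cref{prop:accidental_symmetry_matrix} separates all equal-dimension sectors except the accidental pairs.

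Your separate treatment of the outer-automorphism (duality) case and of $d=2$ is unnecessary. $\Tr\big(\pi_{q,j}(\HTC)^2\big)$ is also invariant under $H\mapsto -W\overline{H}W^\dagger$, so unequal second moments already exclude both kinds of identification.
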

See \cref{sec:accidental_sym_restricts} for the proof of these equivalent propositions.
An interesting corollary of the above propositions is that, as stated in \cref{part:4} of \cref{thm:Thm0}, when restricted to any eigen-subspace of $Q$, semi-universality holds.
That is, for any integer $q\ge 0$, the commutator subgroup of $\V_z$ is equal to the commutator subgroup of all unitaries that respect $\text{U}(1)\times \mathbb{S}_n$ symmetry.
This follows from the fact that the accidental symmetry only imposes restrictions between pairs of sectors labeled $(q,j)$ and $(q',j')$ for which $q\neq q'$.

\subsection{Characterizing the full group $\mathcal{V}_z$}
Having characterized $\mathcal{SV}_z$, and using the fact that $\mathcal{SV}_z$ is the commutator subgroup of $\mathcal{V}_z$, we are now ready to characterize $\mathcal{V}_z$, the group of unitaries that can be realized by Hamiltonians $\HTC$ and $J_z$.
Recall that, as we have seen in \cref{eq:Dec}, a unitary $V$ is in $\mathcal{V}_z$ if and only if there exist $\widetilde{V}\in \mathcal{SV}_z$ and a phase $\theta_z\in[0,4\pi)$ such that
\begin{align}
    V= \exp(-i\theta_z J_z) \widetilde{V} \,,
\end{align}
which in turn implies that the action of $V$ on $\mathcal{H}_{q,j}$ decomposes as
\begin{align}
    \pi_{q,j}(V)= \exp\!\big(-i\theta_z \pi_{q,j}(J_z)\big) \pi_{q,j}(\widetilde{V})\,.
\end{align}
Based on this observation, and using the characterization of $\mathcal{SV}_z$ in \cref{prop:quasi_semi_universality}, we establish the following result.
\begin{theorem}
\label{thm:full_characterization}
For any cutoff $q_{\text{max}}\ge 0$, consider an arbitrary set of unitary matrices 
\begin{align}
    v_{q,j}\in \UU(\H\qj)\,\,:\,\, q\le q_{\text{max}}\,\,,\,\text{and}\,\,\,    j_{\text{min}}\le j\le \frac{n}{2}\,.
\end{align}
Then, there exists a unitary $V \in \V_z$,   realizable using Hamiltonians $\HTC$ and $J_z$,  that implements all these unitaries within the corresponding sectors such that
\begin{align}
     \pi_{q,j}(V)=v_{q,j}\,\,:\,\,q\le q_{\text{max}}\,\,,\,\text{and}\,\,\,j_{\text{min}}\le j\le \frac{n}{2}\,,
\end{align}
if, and only if, there exists $\theta_z\in [-2\pi,2\pi)$ such that  the following conditions hold:
\begin{enumerate}
\item (Accidental Symmetry) For $n\geq3$ and for each pair $(j,j')$ with $0< j'<j \leq n/2$, and $(q,q')$ given by \cref{eq:q_q'},
\begin{align}
    v_{q,j}= e^{i(j'-j)\theta_z} v_{q',j'}\,.
 \label{eq:thm_AccSym_constraint}
\end{align}

\item (Phase Constraint) Phases $\theta\qj := \arg\det(v\qj)$ satisfy 
\begin{align}
 \begin{split}
    \theta_{q,j} &\eq  \theta_z\Tr\big(\pi\qj(J_z)\big)\\[8pt]
    &\eq\begin{cases}
    \left(q-\frac{n}{2}+j+1\right) & :\,q<\frac{n}{2}+j\\\hspace{8pt}\times
    \left(q-\frac{n}{2}-j\right)
    \frac{\theta_z}{2} \\[6pt]
    0 & :\,q\geq \frac{n}{2}+j\,,
    \end{cases}
 \end{split}
 \label{eq:thm_phase_constraint}
\end{align}
where the equations hold modulo $2\pi$.
\end{enumerate}
\end{theorem}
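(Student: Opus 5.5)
The plan is to reduce \cref{thm:full_characterization} to \cref{prop:quasi_semi_universality} via the decomposition \cref{eq:Dec}. Every $V\in\V_z$ can be written as $V=R_z(\theta_z)\widetilde V$ with $\widetilde V\in\SV_z$, so on each sector we have $\pi_{q,j}(V)=e^{-i\theta_z\pi_{q,j}(J_z)}\pi_{q,j}(\widetilde V)$. Both directions of the theorem will come from this factorization, together with two facts: the characterization of $\SV_z\Pi^{\qmax}$ as exactly the special-unitary, accidental-symmetry-respecting block-diagonal unitaries, and the shift relation \cref{eq:acc_sym_Jz_matrix}, $\pi_{q,j}(J_z)=\pi_{q',j'}(J_z)+(j'-j)\1$.

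\emph{Necessity.} Take $V\in\V_z$ and write $V=R_z(\theta_z)\widetilde V$. Since $\widetilde V$ lies in the commutator subgroup, each $\pi_{q,j}(\widetilde V)$ has determinant one. This holds because $\sv_z$ is traceless on every sector, since $\Tr(\pi_{q,j}(\HTC))=\Tr(\pi_{q,j}(\overline{\HTC}))=0$, and the property passes to brackets. Hence
\[
\det\pi_{q,j}(V)=\exp\!\big(-i\theta_z\Tr\pi_{q,j}(J_z)\big),
\]
which is the phase constraint up to the sign convention for $\theta_z$. The range of $\theta_z$ is harmless because $R_z$ has period $4\pi$. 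The explicit formula follows by summing $m=r-j$ over $r=0,\dots,d_n(q,j)-1$ using \cref{eq:Hqjn_dimension}:
\begin{itemize}
\item for unfilled sectors, $m$ runs from $-j$ to $q-n/2$, giving $(q-\tfrac n2+j+1)(q-\tfrac n2-j)/2$;
\item for filled sectors the sum vanishes.
\end{itemize}
For the accidental-symmetry constraint, \cref{prop:quasi_semi_universality} gives $\pi_{q,j}(\widetilde V)=\pi_{q',j'}(\widetilde V)$. Combining this with $e^{-i\theta_z\pi_{q,j}(J_z)}=e^{-i\theta_z(j'-j)}e^{-i\theta_z\pi_{q',j'}(J_z)}$, and noting that the scalar commutes with everything, yields \cref{eq:thm_AccSym_constraint} (again up to the sign convention on $\theta_z$).

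\emph{Sufficiency.} Given $v_{q,j}$ and $\theta_z$ satisfying both conditions, define $w_{q,j}:=e^{i\theta_z\pi_{q,j}(J_z)}v_{q,j}$ for $q\le\qmax$. Set $w_{q,j}=\1$ for $q>\qmax$, and for paired sectors straddling the cutoff choose the partner consistently.
\begin{itemize}
\item By the phase condition, $\det w_{q,j}=1$, so $w_{q,j}\in\SU(\H_{q,j})$.
\item By the accidental-symmetry condition and the same shift identity, $w_{q,j}=w_{q',j'}$ for every paired $(q,j),(q',j')$.
\end{itemize}
Thus $W:=\bigoplus\1_{M(n,j)}\otimes w_{q,j}$ lies in $\mathcal{SV}^{\text{acc}}$. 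By \cref{prop:quasi_semi_universality} there is a $\widetilde V\in\SV_z$ with $\widetilde V\Pi^{\qmax}=W\Pi^{\qmax}$, and then $V:=R_z(\theta_z)\widetilde V\in\V_z$ realizes all the prescribed $v_{q,j}$.

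The main obstacle is bookkeeping rather than any new idea. First, the sign conventions must be made consistent: $\arg\det$ versus $e^{-i\theta_zJ_z}$, and the direction of the shift in \cref{eq:acc_sym_Jz_matrix}. Any global sign can be absorbed by replacing $\theta_z\to-\theta_z$, which preserves the allowed range. Second, in the sufficiency step, pairs in which only $(q,j)$ lies below $\qmax$ need care. I would extend $W$ by setting $w_{q',j'}:=w_{q,j}$ above the cutoff, so that $W$ genuinely commutes with every $S_{j,j'}$, and then apply \cref{prop:quasi_semi_universality} with a larger cutoff covering all pairs, namely $n+\lceil n/2-1\rceil$ by \cref{eq:acc_sym_boundaries}.
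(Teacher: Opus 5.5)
Your proposal is correct and follows the paper's own argument. You factor $V$ as a $J_z$ rotation times an element of $\SV_z$, read off the determinant and paired-sector relations from the shift $\pi_{q,j}(J_z)=\pi_{q',j'}(J_z)+(j'-j)\1$, and for sufficiency strip off the rotation and invoke \cref{prop:quasi_semi_universality}; your explicit extension of $W$ above the cutoff (identity on unpaired sectors, copying $w_{q,j}$ onto partners with $q'>\qmax$) so that it genuinely lies in $\mathcal{SV}^{\text{acc}}$ supplies a detail the paper's proof leaves implicit.
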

\noindent On the other hand, $J_z^2$ breaks the accidental symmetry in manner such that the constraint in \cref{eq:thm_AccSym_constraint} is broken.
As we show in \cref{sec:Jz2_proof}, indeed using this Hamiltonian the set of realizable unitaries can be extended to achieve semi-universality.\\

\section{Application: PI Qubit Unitaries with the Oscillator as an Ancillary System\\ (Main Result 2)}
\label{sec:qubit_unis}
Finally, we discuss implications of \cref{thm:full_characterization} for specific classes of \textit{qubit} unitaries realized using the TC Hamiltonian and global $z$ field, which are relevant for quantum computing.
To implement a unitary $U$ acting on $n$ qubits, the oscillator is treated as an ancillary system and returned to its initial state.
If the oscillator is initialized in an eigenstate $\ket{k}_{\text{osc.}}$ of its intrinsic Hamiltonian $a^{\dag}a$, such that $a^{\dag}a\ket{k}_{\text{osc.}}=k\ket{k}_{\text{osc.}}$, then we hope to find a unitary $V\in\V_z$ such that for any $n$-qubit state $\ket{\psi}\in(\C^2)^{\otimes n}$,
\begin{align}\label{eq:VU_ancilla_k}
    V\big(\ket{\psi}\otimes \ket{k}_{\text{osc}}\big)\,=\,{e^{-i\alpha}}\big(U\ket{\psi}\big)\otimes \ket{k}_{\text{osc}}\,,
\end{align}
or equivalently,
\begin{align}\label{eq:VU_ancilla_k2}
    U\,=\,e^{i\al}\ipo{k}{V}{k}_{\text{osc}}\,.
\end{align}
In other words, $V$ implements $U$ up to phase $e^{-i\alpha}$, as in \cref{fig:VU_gate}.

In this case, $U$ conserves not only the charge $Q:=a^{\dag}a+J_z+n/2$, but also $a^{\dag}a$ and $J_z$ individually. 
\footnote{Note that $R_z(\phi)$ and $V_{\text{TC}}(r)$ describe the time evolution of the system in the rotating frame. 
Then, \cref{eq:VU_ancilla_k} means that in the lab frame, we realize the unitary $R^\dag_z(T \nu) U R_z(T \nu)$, where $T$ is the total duration of the circuit. 
Of course, if needed, one can cancel these additional rotations by applying $R^\dag_z(T \nu)$ and $ R_z(T \nu)$ before and after the circuit, respectively.}
Therefore, in this way we can only realize $n$-qubit unitaries $U$ that respect the U(1) symmetry $R_z(\phi)=\exp(-{i\theta J_z}): \theta \in [-2\pi, 2\pi)$, or, equivalently, commute with $J_z$, i.e.
\begin{align}
    [U, J_z]=0 \,.
\end{align}
In addition to this symmetry, $U$ also inherits the permutational symmetry of $V$.

It can be easily shown that the group of all PI, $J_z$-conserving  unitaries on $(\C^2)^{\otimes n}$ consists of all unitaries of the form
\begin{align}\label{eq:charU}
 \begin{split}
    U &\,=\,\sum_{j=\jmin}^{n/2} \sum_{m=-j}^j  e^{i\phi_{j,m}}  P_{j,m} \quad:\quad \phi_{j,m}\in[0,2\pi)\,,
 \end{split}
\end{align}
where $e^{i\phi_{j,m}} $ is an arbitrary phase, and $P_{j,m}$ is the projector to the common eigen-subspace of $J^2$ and $J_z$ within $(\C^2)^{\otimes n}$, corresponding to eigenvalues $j(j+1)$ and $m$, respectively, such that
\begin{align}
 \begin{split}
    J^2 &=\sum_{j=j_\text{min}}^{n/2} j(j+1) P_j=\sum_{j=\jmin}^{n/2} j(j+1)\sum_{m=-j}^j  P_{j,m} \\
    J_z &=\sum_{j=\jmin}^{n/2} \sum_{m=-j}^j m P_{j,m} \,.
 \end{split}
\end{align}
To see this, note that any PI operator $A$ commutes with $J^2$ and is therefore block diagonal with respect to sectors with different total angular momentum $j$, i.e.,
$A=\sum_j P_j A P_j$.
Furthermore, by Schur-Weyl duality, $P_j A P_j$ acts nontrivially only on the $\SU(2)$ irrep factor.
Within each $\SU(2)$ irrep, $J_z$ is nondegenerate. 
Therefore, if $A$ also commutes with $J_z$, then $P_j A P_j=\sum_m a_{j,m}P_{j,m}$, for some $a_{j,m}\in\mathbb{C}$.
If $A$ is unitary, then each $a_{j,m}$ is a phase, $a_{j,m}=e^{i\phi_{j,m}}$, which proves the form in \cref{eq:charU}.

Then, we are interested to determine which unitaries of the form in \cref{eq:charU} are realizable using only $\HTC$ and $J_z$ via \cref{eq:VU_ancilla_k2}?

The following theorem gives a complete characterization of the $n$-qubit unitaries realizable using $\HTC$ and $\overline{\HTC}$, without allowing an arbitrary global phase $e^{i\alpha}$.
\begin{theorem} \label{thm:general_k_ancilla} Let $U$ be an $n$-qubit unitary, and let $\ket{k}_{\text{osc}}$ be an eigenstate of the harmonic oscillator Hamiltonian $a^\dag a$ with eigenvalue $k$. 
Then there exists a unitary $V \in \SV_z$, realizable using Hamiltonians $\HTC$ and $\overline{\HTC}$, such that $U = \ipo{k}{V}{k}_{\text{osc}}$, if and only if the following conditions hold:
\begin{enumerate}
\item $U$ is PI and preserves $J_z$, i.e., it is of the form given in \cref{eq:charU}.
\item If $n$ is even, then $\phi_{j=0,m=0} = 0 \, (\mathrm{mod}\, 2\pi)$, or equivalently $U P_{0,0} = P_{0,0}$.
\item If $k = 0$, i.e., the oscillator is in vaccum, then $\phi_{j,-j} = 0 \,(\mathrm{mod}\,2\pi)$ for all $j$, or equivalently,
\begin{align}
    U P_{j,-j} = P_{j,-j}.
\end{align}
\end{enumerate}
\end{theorem}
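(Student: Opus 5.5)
The plan is to reduce the statement to \cref{prop:quasi_semi_universality} (via \cref{thm:full_characterization} with $\theta_z=0$), which characterizes $\SV_z\Pi^{\qmax}$ as exactly the set of $\bigoplus_{q,j}\1\otimes v_{q,j}$ with $v_{q,j}\in\SU(\H\qj)$ satisfying the accidental-symmetry locks $v_{q,j}=v_{q',j'}$. Choose $\qmax$ large enough, for instance $\qmax\ge k+n$, so that every vector $\ket{j,m,\alpha}\otimes\ket{k}_{\text{osc}}$ lies in the truncated space. Each such vector lives in exactly one sector $\H\qj$ with $q=k+m+n/2$, and inside that sector it is the basis vector indexed by $r=j+m$. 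Hence $\ipo{k}{V}{k}_{\text{osc}}$ restricted to $P_{j,m}$ is the diagonal entry $[v_{q,j}]_{r,r}$. The problem therefore becomes: for which phases $e^{i\phi_{j,m}}$ can we pick $v_{q,j}\in\SU(\H\qj)$, compatible with the locks, whose $(r,r)$ entry equals $e^{i\phi_{j,m}}$, with the corresponding basis vector then automatically an eigenvector, since the entry is unimodular and $v_{q,j}$ is unitary.

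\textbf{Necessity.} Condition 1 follows because $V$ is PI and commutes with $Q$. Then $\ipo{k}{V}{k}_{\text{osc}}$ is PI and commutes with $J_z=Q-a^\dag a-n/2$ on the relevant vectors, and a unitary of this form is given by \cref{eq:charU}. For condition 2, $\sv_z P_0=0$ by \cref{eq:dsa}, so $V$ acts as the identity on $j=0$. For condition 3, take $k=0$ and $m=-j$. Then $q=n/2-j$ and, by \cref{eq:Hqjn_dimension}, $\dim\H\qj=1$. So $v_{q,j}\in\SU(1)$ is trivial, which forces $\phi_{j,-j}=0$. More generally, every one-dimensional sector gets the phase $1$, and for $k\ge1$ the relevant vectors never lie in a one-dimensional sector unless $j=0$.

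\textbf{Sufficiency.} Construct the $v_{q,j}$ sector by sector. In a sector of dimension $d\ge2$ we must prescribe unimodular diagonal entries on the (at most one) basis vector of the form $\ket{j,m}\ket{k}$. By the footnote construction in the proof of \cref{cor:state_prep}, there exists $v\in\SU(d)$ with $v\ket{\eta}=e^{i\gamma}\ket{\eta}$ for any phase $\gamma$. One-dimensional sectors are either those forced above, or have $k\neq$ the required value, so they impose no constraint.

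\textbf{Main obstacle: the accidental locks.} If $\H\qj$ and $\H_{q',j'}$ are paired, the single matrix $v$ must serve both, and both sectors may contain an $\ket{\cdot}\ket{k}$ basis vector. By \cref{eq:Sjj}, index $r$ in $\H_{q,j}$ carries oscillator level $(j'-m')$ with $r=j'+m'$, while in $\H_{q',j'}$ it carries level $2j-j'-m'$. These levels differ by $2(j-j')\neq0$, so the same $k$ sits at different indices $r\neq r'$ in the two sectors. We then need one $\SU(2j'+1)$ matrix with two prescribed eigen-phases on two orthogonal basis vectors. This is possible for $2j'+1\ge3$, using a diagonal special unitary with a third entry absorbing the determinant. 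It is also possible for $2j'+1=2$ ($j'=1/2$) only if the two phases are conjugate, so this case needs care. In that case I would check, using \cref{eq:q_q'}, whether $k$ can hit both indices, and the level analysis suggests it cannot. Chains of locks do not arise, since each sector is paired at most once (pairs determine $(q,j)$ uniquely). Finally, assemble the $v_{q,j}$, set the unconstrained sectors to the identity, and apply \cref{prop:quasi_semi_universality} to obtain $V\in\SV_z$.
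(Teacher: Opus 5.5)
Your proof is correct, but it takes a different route from the paper's.

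\textbf{How the paper argues.} It works at the Lie-algebra level. It writes $U=\exp(iH_2)\exp(iH_1)$, where $H_1$ is diagonal on the $P_{j,m}$ with $m>-j$ and $H_2$ is diagonal on the $P_{j,-j}$. It lifts each $H_r$ to an operator $H_r'$ that is traceless in every sector, by pairing $P_{j,m}\otimes\pure{k}$ with $-P_{j,m-1}\otimes\pure{k+1}$ (respectively $-P_{j,-j+1}\otimes\pure{k-1}$). It then invokes \cref{lem:F}, which symmetrizes $H_r'$ under the accidental symmetry without changing it on the oscillator levels $\{k,k+1\}$. This works because $S_{j,j'}$ shifts $a^\dag a$ by $2(j-j')\ge 2$.

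\textbf{How you argue.} You use the group-level characterization of \cref{prop:quasi_semi_universality} directly. You choose $v_{q,j}\in\mathrm{SU}(\mathcal{H}_{q,j})$ sector by sector and absorb the determinant inside each sector.

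\textbf{Comparison.} The underlying fact is the same in both: in paired sectors, level $k$ sits at indices differing by $2(j-j')$. Your version avoids \cref{lem:F}. It also avoids splitting $U$ into two factors, which the paper needs because \cref{lem:F} only tolerates support on two adjacent oscillator levels. And it makes it transparent that the only obstructions come from one-dimensional sectors. The paper's version gives in exchange a reusable Hamiltonian-level lifting statement. It produces elements of $\sv_z$ whose $\ket{k}$-block is a prescribed diagonal qubit Hamiltonian, hence whole one-parameter families rather than single group elements.

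\textbf{The step you left open.} The $j'=1/2$ case follows from your own observation. In a locked pair, both $k$-vectors must sit at indices in $\{0,\dots,2j'\}$, and their index difference is exactly $2(j-j')$. Since $j-j'$ is a positive integer, this difference is at least $2$. So two prescribed entries can occur only when $2j'\ge 2$, i.e.\ when $\dim\mathcal{H}_{q,j}=2j'+1\ge 3$. In that case a third diagonal entry is always available to absorb the determinant. The two-dimensional case therefore never carries two constraints, and no conjugacy condition on the phases arises.

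\textbf{A bookkeeping point.} When you "set the unconstrained sectors to the identity", the partner of a constrained sector must receive the same matrix as that sector, not the identity. This ensures the assembled operator lies in $\SVsym$ and satisfies every lock before you apply \cref{prop:quasi_semi_universality}.
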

In words, if the oscillator is initialized in an excited state, i.e., in any eigenstate $\ket{k}_{\text{osc}}$ of $a^{\dag}a$ with eigenvalue $k\neq 0$, then the only restriction on the realizable unitaries $U$ occurs in the sector with $j=0$, which exists only for even $n$. 
Namely, one can realize only those unitaries $U$ satisfying $\phi_{j=0,m=0}=0$.
This restriction arises because $\HTC$, $\overline{\HTC}$, and $J_z$ all vanish in this sector: states in this sector are eigenstates with eigenvalue zero of all three Hamiltonians. 
In practice, however, this restriction is irrelevant, since any PI unitary $U$ that conserves $J_z$ is still realizable using $\HTC$ and $\overline{\HTC}$, up to a global phase $e^{i\alpha}$.

On the other hand, when the oscillator is initialized in the vacuum, i.e., when $k=0$, additional constraints arise in the sectors with $m=-j$. 
This is because, for $k=0$, all such states are eigenstates of both $\HTC$ and $\overline{\HTC}$ with eigenvalue zero:
\begin{align}
    \left(P_{j,m=-j}\otimes \ket{k=0}\!\bra{k=0}_{\text{osc}}\right)\HTC =0\,,
\end{align}
with an analogous equation for $\overline{\HTC}$. 
Furthermore, they are eigenvectors of $J_z$ with eigenvalue $-j$. 
Together, these observations establish the necessity of the conditions in this theorem for the unitary $U$ to be realizable using $\HTC$ and $\overline{\HTC}$. 
We prove the sufficiency of these conditions in \cref{sec:proof2}.

In the above theorem, we did not include the Hamiltonian $J_z$.
As shown in \cref{sec:commutator_subgroups} and \cref{thm:full_characterization}, adding $iJ_z$ to the generators $i\HTC$ and $i\overline{\HTC}$ of the Lie algebra $\sv_z$ only modifies its center. 
In particular, this enlarges the group of realizable unitaries characterized by the above theorem to include unitaries of the form
\begin{align}
    \exp(-i\beta J_z)\, U=U\exp(-i\beta J_z)\,\,,
\end{align}
where $U$ satisfies the conditions of the above theorem and $\beta \in [-2\pi, 2\pi)$.  
Finally, incorporating the global phase freedom $e^{i\alpha}$ allowed in \cref{eq:VU_ancilla_k2}, we arrive at the condition:

\begin{corollary}\label{lem:Jz_conserving}
A PI, U(1)-invariant $n$-qubit unitary $U$, i.e., a unitary of the form given in \cref{eq:charU}, can be realized using Hamiltonians $\HTC$ and $J_z$, up to a global phase $e^{i\alpha}$, by coupling to an oscillator initialized in and returned to the vacuum (i.e., eigenvector with eigenvalue $k=0$ of $a^\dag a$), if and only if there exists $\beta\in[-2\pi,2\pi)$ such that
\begin{align}\label{eq:const0}
    \phi_{j,-j} \eq \alpha + j \beta \,\,:\,\,\mathrm{mod}\, 2\pi,
\end{align}
for all $j = j_{\text{min}}, \dots, n/2$.
\end{corollary}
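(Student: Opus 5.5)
The plan is to derive \cref{lem:Jz_conserving} directly from \cref{thm:general_k_ancilla} with $k=0$, together with the decomposition $V=R_z(\beta)\widetilde{V}$, $\widetilde{V}\in\SV_z$, from \cref{eq:Dec}. Since $J_z$ commutes with $\ket{0}\!\bra{0}_{\text{osc}}$ and with every PI, $J_z$-conserving $U$, we have
\begin{align*}
    \ipo{0}{V}{0}_{\text{osc}} \eq R_z(\beta)\,\ipo{0}{\widetilde{V}}{0}_{\text{osc}}\,.
\end{align*}
The realizability question therefore reduces to the following: is $e^{-i\alpha}R_z(-\beta)U=\ipo{0}{\widetilde{V}}{0}_{\text{osc}}$ for some $\widetilde{V}\in\SV_z$, some $\beta$, and the chosen global phase $\alpha$? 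Here $R_z(-\beta)=\exp(i\beta J_z)$ acts as the phase $e^{i\beta m}$ on $P_{j,m}$.

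For necessity, suppose $U=e^{i\alpha}\ipo{0}{V}{0}_{\text{osc}}$ with $V\in\V_z$. Write $V=R_z(\beta)\widetilde{V}$. Then $U':=\ipo{0}{\widetilde{V}}{0}_{\text{osc}}=e^{-i\alpha}\exp(i\beta J_z)U$, and this $U'$ has the form \cref{eq:charU} with phases
\begin{align*}
    \phi'_{j,m}\eq \phi_{j,m}-\alpha+\beta m\,.
\end{align*}
Condition 3 of \cref{thm:general_k_ancilla} gives $\phi'_{j,-j}=0$, that is, $\phi_{j,-j}=\alpha+j\beta$ modulo $2\pi$ for all $j$, which is \cref{eq:const0}. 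Condition 2 for even $n$ is the case $j=0$ of the same equation, so it imposes nothing new.

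For sufficiency, assume \cref{eq:const0} holds with some $\beta$. Define $U':=e^{-i\alpha}\exp(i\beta J_z)U$. This $U'$ is PI and $J_z$-conserving, so condition 1 holds. Its phases satisfy $\phi'_{j,-j}=\phi_{j,-j}-\alpha-j\beta=0$, so condition 3 holds. Condition 2 holds because $\phi'_{0,0}$ is the $j=0$ instance of the same identity. \cref{thm:general_k_ancilla} then provides $\widetilde{V}\in\SV_z$ with $\ipo{0}{\widetilde{V}}{0}_{\text{osc}}=U'$. Set $V:=R_z(\beta)\widetilde{V}\in\V_z$, which is allowed because $\beta\in[-2\pi,2\pi)$ parametrizes $R_z$. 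Then
\begin{align*}
    e^{i\alpha}\ipo{0}{V}{0}_{\text{osc}} \eq e^{i\alpha}\exp(-i\beta J_z)\,U' \eq U\,,
\end{align*}
which is \cref{eq:VU_ancilla_k2}.

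The only delicate point is sign conventions. One must check that $R_z(\beta)=\exp(-i\beta J_z)$ contributes the phase $-\beta m$, so that at $m=-j$ the combination is $+j\beta$ as stated. One must also check that the range $\beta\in[-2\pi,2\pi)$ suffices: half-integer $j$ requires $4\pi$-periodicity, and this range covers it. The substantive content is already in \cref{thm:general_k_ancilla}, whose sufficiency is proved in \cref{sec:proof2}, so I expect no real obstacle beyond this bookkeeping.
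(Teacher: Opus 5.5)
Your proposal is correct and is essentially the paper's own argument. You decompose $V=R_z(\beta)\widetilde{V}$ with $\widetilde{V}\in\SV_z$ via \cref{eq:Dec}, absorb $R_z(\beta)$ and $e^{i\alpha}$ into $U$ (both commute with $\pure{0}_{\text{osc}}$ and with $U$), and apply \cref{thm:general_k_ancilla} at $k=0$. The paper only sketches this step. Your bookkeeping fills it in correctly: the shifted phases $\phi'_{j,m}=\phi_{j,m}-\alpha+\beta m$, condition 2 being the $j=0$ instance of \cref{eq:const0}, and $\beta\in[-2\pi,2\pi)$ covering the $4\pi$-periodicity of $R_z$.
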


\noindent\textbf{The special case of two and three qubits.}  
It is important to note that for $n=2$ and $n=3$ qubits, \cref{lem:Jz_conserving} does not place any restrictions on the allowed phases $\phi_{j,-j}$ -- since in each case  $j$ only takes two values $j=0, 1$ for $n=2$, and $j = \tfrac{1}{2}, \tfrac{3}{2}$ for $n=3$.
Indeed in these cases, all PI, U(1)-invariant unitaries are realizable using just the TC interaction and a global $z$ field.
In the companion paper \cite{circuit_paper}, we build on this observation to construct explicit quantum circuits that implement \textit{all} two-qubit PI, U(1)-invariant unitaries through \cref{eq:VU_ancilla_k}, using the TC interaction and a global $z$ field.

\begin{figure*}[htp]
    \centering
    \includegraphics[width=0.7\textwidth]{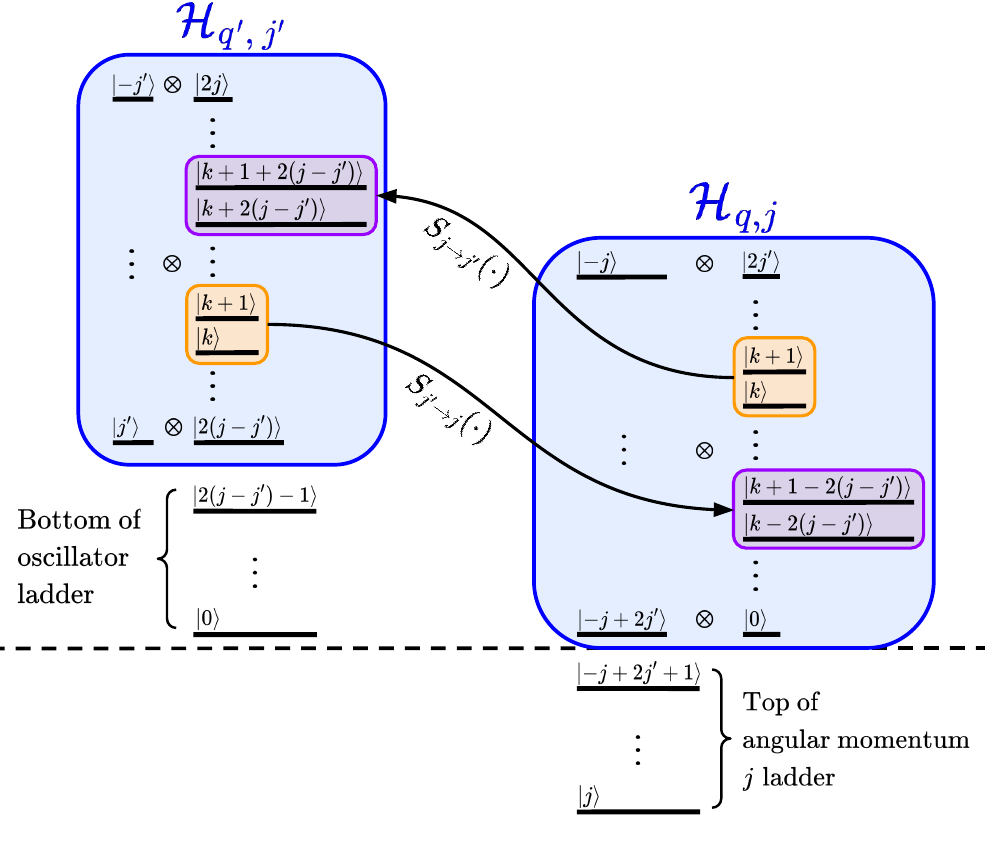}
    \caption{\textbf{Sketch of the proof of \cref{lem:F}}.
     Let $F$ be a PI, $U(1)$-invariant operator supported only on the subspace spanned by $\ket{k}$ and $\ket{k+1}$, shown in the orange box.
     Now consider a pair of sectors $\mathcal{H}_{q,j}$ and $\mathcal{H}_{q',j'}$ that are related by the accidental symmetry.
     This symmetry maps the projection of $F$ onto $\mathcal{H}_{q,j}$ to an operator acting on $\mathcal{H}_{q',j'}$, and vice versa; these symmetry-transformed operators are shown in purple.
     The key point is that these symmetry-transformed operators have no support on the subspace spanned by $\ket{k}$ and $\ket{k+1}$.
     Hence, if we form an augmented operator $\widetilde{F}$ by adding the corresponding symmetry-transformed components to $F$, then $\widetilde{F}$ agrees with $F$ when restricted to the subspace spanned by $\ket{k}$ and $\ket{k+1}$.
     See \cref{app:F_proof} for further details.
     }
    \label{fig:Sjj}
\end{figure*}

\subsection{Proof of Theorem \ref{thm:general_k_ancilla}}\label{sec:proof2}
The necessity of the three conditions in \cref{thm:general_k_ancilla} is already explained above.
We now prove their sufficiency using our characterization of $\mathcal{SV}_z$, the group of unitaries realizable by $\HTC$ and $\overline{\HTC}$, given in \cref{prop:quasi_semi_universality}, or equivalently, using the Lie-algebraic characterization in \cref{prop:quasi_semi_universality:algebra}.

Consider any $n$-qubit Hamiltonian in the form
\begin{align}
    H_1 \eq \sum_{j>0}^{n/2} \sum_{m=-j+1}^j  h_{j,m} P_{j,m}\,,
\end{align}
where $h_{j,m}$ are arbitrary real numbers. 
Note that this Hamiltonian does not have terms of the form $P_{j,-j}$.  
We also define
\begin{align}
    H_2 \eq \sum_{j>0}^{n/2} h_{j,-j} P_{j,-j}\,,
\end{align}
where $h_{j,-j}$ are arbitrary real numbers. 
Note that for $k\ge 1$, any unitary $U$ satisfying the conditions in the theorem can be written as
\begin{align}
    U=\exp(i H_2)\exp(i H_1)=\exp\!\big(i (H_1+H_2)\big)\,,
\end{align}
for properly chosen $H_1$ and $H_2$ in the above form, whereas for $k=0$, any such $U$ can be written as
$U=\exp(i H_1)$. 

\vspace{1mm}
We prove that when $k\ge 1$ there exist Hamiltonians $\widetilde{H}_1$ and $\widetilde{H}_2$ such that (i) 
they are realizable with $\HTC$ and $\overline{\HTC}$, i.e.
\begin{align}
    i\widetilde{H}_{1},\,i\widetilde{H}_{2} \,\in\, \sv_z := \alg\big\{i\HTC, i\overline{\HTC}\big\}\,,
\end{align}
and (ii) they satisfy 
\begin{align}
   r=1,2\,:\quad \widetilde{H}_{r} \big(\mathbb{I}_{\text{qubits}}\otimes |k\rangle_{\text{osc}}\big)=H_{r}\otimes |k\rangle_{\text{osc}}\,,
\end{align}
which means
\begin{align}
    r=1,2\,:\quad\langle k|\exp(i \widetilde{H}_{r})  |k\rangle_{\text{osc}}=\exp(i H_{r}) \,.
\end{align}
For $k=0$, we prove that a Hamiltonian $\widetilde{H}_1$ satisfying the above conditions exists.
This proves \cref{thm:general_k_ancilla} for all $k\ge 0$.

First, for arbitrary $k\ge 0$ define Hamiltonian
\begin{align}\label{eq:Hprime1}
    {H}_1' = \sum_{j>0}^{n/2} \sum_{m=-j+1}^j  h_{j,m}\Big(&P_{j,m} \otimes|k\rangle\langle k|_{\text{osc}} \\\nonumber
    &-P_{j,m-1} \otimes|k+1\rangle\langle k+1|_{\text{osc}}\Big)\,,
\end{align}
and for $k\ge 1$ define Hamiltonian
\begin{align}
 \begin{split}
    {H}_2' = \sum_{j>0}^{n/2}   h_{j,-j}\Big(&P_{j,-j} \otimes|k\rangle\langle k|_{\text{osc}} \\
    &-P_{j,-j+1} \otimes|k-1\rangle\langle k-1|_{\text{osc}}\Big)\,. 
 \end{split}
 \label{eq:Hprime2}
\end{align}
Both Hamiltonians ${H}_1'$ and ${H}_2'$ are PI and respect the  U(1) symmetry, which means $i{H}_{r}'\in \mathfrak{g}$, for $r=1,2$. 
Furthermore,
\begin{align}
    r=1,2\,:\quad\Tr({H}_{r}' \Pi_{q,j})=0\,,
\end{align}
for all $q$ and $j$, which means 
\begin{align}
    r=1,2\,:\quad i{H}_{r}'\in [\mathfrak{g}, \mathfrak{g}]\,.
\end{align}
Finally, we note that while $H_1'$ and $H_2'$ do not necessarily respect the accidental symmetry, one can construct Hamiltonians $\widetilde{H}_1$ and $\widetilde{H}_2$ that do respect this symmetry, and whose restrictions to the supports of $H_1'$ and $H_2'$ coincide with those of $H_1'$ and $H_2'$, respectively.
Roughly speaking, this is possible because the nonzero matrix elements of the operators $S_{j,j'}$,  with which Hamiltonians respecting the accidental symmetry commute,  connect eigenstates of the oscillator that differ by at least two excitations, e.g. $\ket{k}_{\text{osc}}$ and $\ket{k+2}_{\text{osc}}$. 
More precisely,
\begin{lemma} \label{lem:F}
Suppose $iF\in [\mathfrak{g},\mathfrak{g}]$, i.e. $F$ is a PI, U(1)-invariant Hamiltonian on $n$ qubits and the oscillator that satisfies $\Tr(F\Pi_{q,j}) = 0$ for all $q$ and $j$.  
Assume that the support of $F$ is restricted to the bosonic subspaces with $k$ and $k+1$ excitations, such that
\begin{align}
    F = F\Big[\mathbb{I}_{\text{qubits}} \otimes \big(\pure{k}_{\text{osc}} + \pure{k+1}_{\text{osc}}\big)\Big]\,,
\end{align}
for some integer $k \ge 0$.
Then, there exists a Hamiltonian $i\widetilde{F} \in \sv_z := \mathfrak{alg}\{i\HTC, i\overline{\HTC}\}$ such that
\begin{align}
    F=\widetilde{F}\Big[\mathbb{I}_{\text{qubits}} \otimes \big(\pure{k}_{\text{osc}} + \pure{k+1}_{\text{osc}}\big)\Big]\,.
\end{align}
\end{lemma}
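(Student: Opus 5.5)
The plan is to apply the Lie-algebraic characterization in \cref{prop:quasi_semi_universality:algebra}. Since $F$ is supported on oscillator levels $k,k+1$ and preserves charge, it acts nontrivially only on finitely many sectors $\H_{q,j}$. We therefore fix a cutoff $\qmax$ larger than every charge appearing in the support of $F$, and also larger than the charge of every sector paired with one of them by the accidental symmetry. By \cref{eq:acc_sym_boundaries}, $\qmax = n+k+\lceil n/2\rceil+2$ suffices. It is then enough to build $\widetilde F$ with $i\widetilde F\in[\g,\g]$, supported below $\qmax$, and commuting with every $S_{j,j'}$. \Cref{prop:quasi_semi_universality:algebra} then gives an element of $\sv_z$ whose truncation to $\Pi^{\qmax}$ equals $\widetilde F\Pi^{\qmax}$. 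Because $F$ and its relevant support lie inside $\Pi^{\qmax}$, that element agrees with $F$ on $\mathbb{I}\otimes(\pure{k}+\pure{k+1})$.

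To construct $\widetilde F$, we work sector by sector using $\pi_{q,j}$. The accidental-symmetry pairs $\{(q,j),(q',j')\}$ are disjoint for distinct pairs $j'<j$. For each $j$, the sectors $(q,j)$ and $(q',j')$ are related by the partial isometry $S_{j,j'}$ in \cref{eq:Sjj}. We set $\pi_{q,j}(\widetilde F)=\pi_{q,j}(F)+\pi_{q,j}(\text{correction})$, with the correction defined below.

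\textbf{Key step.} By \cref{eq:Sjj}, $S_{j,j'}$ maps the basis vector with oscillator occupation $j'-m'$ in sector $(q,j)$ to the vector with occupation $2j-j'-m'$ in sector $(q',j')$. These occupations differ by $2(j-j')\ge 2$, because $j-j'$ is a positive integer: both $j$ and $j'$ have the parity of $n/2$.

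This gap is what makes the correction harmless. We replace $F$ on each paired sector by the symmetrized operator
\[
\pi_{q,j}(\widetilde F)=\pi_{q',j'}(\widetilde F)=\pi_{q,j}(F)+\pi_{q',j'}(F),
\]
where the two blocks are identified through $S_{j,j'}$. Unpaired sectors are left as in $F$. On each sector the restriction of $F$ to levels $\{k,k+1\}$ is a block supported on a window of two consecutive oscillator occupations. Shifting this window by $2(j-j')\ge2$ yields a block supported on occupations disjoint from $\{k,k+1\}$.

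The symmetrized operator therefore has three properties:
\begin{itemize}
\item It commutes with $S_{j,j'}$ by construction.
\item It is still traceless on every sector, since both summands are traceless.
\item Its compression to $\mathbb{I}\otimes(\pure{k}+\pure{k+1})$ in sector $(q,j)$ equals that of $F$, because the added term vanishes there. The same holds in sector $(q',j')$.
\end{itemize}

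\textbf{Main obstacle.} The step I expect to need care is verifying that the added term has no \emph{off-diagonal} blocks linking the window $\{k,k+1\}$ to other levels, since the lemma requires $F=\widetilde F(\mathbb{I}\otimes(\pure k+\pure{k+1}))$, not merely agreement of compressions. The argument is that the transported operator $S F S^\dag$ is supported, on both sides, on the shifted window $\{k+2(j-j'),\,k+1+2(j-j')\}$ or on $\{k-2(j-j'),\,k+1-2(j-j')\}$. Either way, all of its rows and columns avoid $\{k,k+1\}$, so right multiplication by the projector onto $\{k,k+1\}$ kills it.

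Two further checks are needed:
\begin{itemize}
\item The index range $r=0,\dots,2j'$ used in \cref{prop:accidental_symmetry_matrix} must fit inside both sectors. This holds because both sectors have dimension $2j'+1$.
\item Each sector must belong to at most one accidental pair. This follows from the uniqueness of $(q,q')$ in \cref{eq:q_q'}, but it should be confirmed that a given $(q,j)$ cannot arise both as a ``primed'' and an ``unprimed'' member. If it could, the symmetrization would have to be taken over the whole equivalence class, and the same shift argument, with shifts $\ge2$, would still apply.
\end{itemize}
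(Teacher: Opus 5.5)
Your proposal is correct and is essentially the paper's argument. You symmetrize $F$ across each accidental-symmetry pair through $S_{j,j'}$, check tracelessness and commutation with $S_{j,j'}$, invoke \cref{prop:quasi_semi_universality:algebra}, and use that $S_{j,j'}$ shifts the oscillator occupation by $2(j-j')\ge 2$, so that the added blocks are annihilated by $\mathbb{I}_{\text{qubits}}\otimes(\pure{k}_{\text{osc}}+\pure{k+1}_{\text{osc}})$. The paper derives that shift from $[a^\dag a,S_{j,j'}]=2(j-j')S_{j,j'}$.

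Your explicit cutoff $\qmax\ge n+k+1$ is a welcome bit of extra care, since the proposition is only stated on truncations. The partner-uniqueness point you flag is already settled by the filled/unfilled dichotomy in the proof of \cref{prop:accidental_symmetry_matrix}.
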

The essential ideas behind the proof of this lemma are illustrated in \cref{fig:Sjj}, but we postpone the full proof, which is based on \cref{prop:quasi_semi_universality}, to \cref{app:F_proof}.
Applying this lemma to Hamiltonians $H'_1$ and $H'_2$, we conclude that there exist a Hamiltonian $\widetilde{H}_1$ for $k\ge 0$ and Hamiltonians $\widetilde{H}_1$ and $\widetilde{H}_2$ for $k\ge 1$ with the aforementioned properties, and this completes the proof of \cref{thm:general_k_ancilla}.

\section{Full Characterization of $\V_z$ and $\mathcal{W}_z$\\ (Proofs of the Main Result 1)}
\label{sec:full_char_proof}
In this section, we prove \cref{thm:Thm0} and  \cref{thm:full_characterization}, and provide a complete characterization of $\V_z$ and $\mathcal{W}_z$, 
restricted to sectors with an arbitrarily large, but finite maximum charge $\qmax$.

\subsection{Overview of the proofs}
\begin{step}\label{step:1} In \cref{sec:subsystem_control}, we prove that when restricted to a sector with fixed $q$ and $j$, the two Hamiltonians $\HTC$ and $\overline{\HTC}$ generate the special unitary group on $\mathcal{H}\qj$, i.e.,
\begin{proposition}\label{prop:subspace_control} (Subsystem universality) For any charge $q\geq0$ and angular momentum $j\in \{\jmin,\,\dotsc,\,n/2\}$,
\begin{align}
 \label{eq:subspace_control}
    \mathfrak{alg}_{\mathbb{R}}\Big\{i\pi_{q,j}\big(\HTC\big),\,i\pi_{q,j}\big(\overline{\HTC}\big)\Big\} \eq \su(\mathcal{H}_{q,j})\,.
\end{align}
\end{proposition}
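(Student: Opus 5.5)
In the basis $\ket{j,r-j,\alpha}\otimes\ket{k_r}_{\text{osc}}$ of \cref{eq:matrix_basis}, with $d=d_n(q,j)$ and $r=0,\dots,d-1$, I will write $\pi_{q,j}(\HTC)$ and $\pi_{q,j}(\overline{\HTC})$ explicitly. The plan is then to show by a standard Lie-algebraic argument that the generated algebra is all of $\su(d)$. The case $d=1$ is trivial, since both sides vanish, so I assume $d\ge 2$.

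\textbf{Step 1: the tridiagonal form.} The operator $J_+a$ maps basis vector $r-1$ to basis vector $r$. Hence $\pi_{q,j}(\HTC)=\gTC\sum_{r=1}^{d-1} c_r(E_{r,r-1}+E_{r-1,r})$, a real symmetric tridiagonal matrix. Its off-diagonal entries are $c_r=\sqrt{k_r+1}\sqrt{(j+m)(j-m+1)}$, with $m=r-j$, and $k_{r}+1=k_{r-1}\ge 1$. For $1\le r\le d-1$ we have $-j<m\le j$, and the oscillator index $k_{r-1}\ge 1$ because $k_{d-1}\ge 0$. Therefore all $c_r$ are strictly positive. Similarly, $\pi_{q,j}(\overline{\HTC})=i\gTC\sum_r c_r(E_{r,r-1}-E_{r-1,r})$. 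Using \cref{eq:HTCbar}, which gives $\overline{\HTC}=i[J_z,\HTC]$, I will work with $X:=i\pi(\HTC)$ and $Y:=i\pi(\overline{\HTC})$, both traceless.

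\textbf{Step 2: generate a Cartan subalgebra.} I compute $[X,Y]$. This commutator is diagonal, with entries proportional to $c_{r}^2-c_{r+1}^2$, that is, $D:=i\sum_r (c_{r+1}^2-c_r^2)E_{rr}$ up to sign and factors, using the conventions $c_0=c_d=0$. Next, the adjoint action of $D$ on $X$ and $Y$ rotates each edge $(r-1,r)$ pair by the eigenvalue difference $\lambda_r:=D_{rr}-D_{r-1,r-1}$. Iterated commutators $\mathrm{ad}_D^{\ell}$ of $X$ and $Y$ therefore produce $\sum_r c_r\lambda_r^{\ell}(\cdot)_r$ for each $\ell$.

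\textbf{Step 3, the main obstacle: separating the edges.} I must show that the edge weights $\lambda_r$ (up to sign) are pairwise distinct, or at least that enough distinctness holds. If they are, a Vandermonde argument isolates each elementary generator $i(E_{r,r-1}+E_{r-1,r})$ and $E_{r,r-1}-E_{r-1,r}$. These nearest-neighbour generators for $r=1,\dots,d-1$ generate $\su(d)$, since they form a connected chain of $\su(2)$'s. The quantities involved are second differences of $c_r^2=k_{r-1}(j+m)(j-m+1)$. In this expression $k$ is linear decreasing in $r$, and the angular factor is quadratic in $r$. Hence $c_r^2$ is cubic in $r$ and $\lambda_r$ is quadratic in $r$, so a given value of $\lambda$ can be repeated at most at two values of $r$. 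To handle such coincidences, I would first try using higher commutators with a second diagonal element obtained from $[X,[X,D]]$ projected to the diagonal, or simply the diagonal part of $[Y,[X,Y]]$, to get another diagonal element with different edge weights. If that fails, I fall back on the criterion of \cite{HLM_2024_SU(d)}, or on the generic-tridiagonal argument: the real symmetric Jacobi matrix $\pi(\HTC)$ with positive off-diagonals has a nondegenerate spectrum and is irreducible. In that case one shows that the algebra acts irreducibly and is simple. The reason is that only $\su(d)$, $\mathfrak{so}(d)$ or $\mathfrak{sp}(d/2)$ contain an irreducible chain of this type. $\mathfrak{so}(d)$ is excluded because $Y$ is imaginary-symmetric rather than real antisymmetric. $\mathfrak{sp}$ is excluded because the diagonal $D$ has a nonzero trace-free pattern incompatible with the symplectic form for generic $c_r$. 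Checking this last exclusion concretely, including the small cases $d=2,3$ directly, is where I expect the real work to lie.
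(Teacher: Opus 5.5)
\textbf{There is a genuine gap in Step 3.} Your Steps 1--2 are correct and match the paper's setup. The commutator of the two generators is diagonal, and its adjoint action rotates each edge at a frequency $\lambda_r$ equal to the second finite difference of $c_r^2$. This is exactly what the paper packages as \cref{lem:subspace_control}.

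\textbf{First problem: $\lambda_r$ is linear, not quadratic.} Set $t:=q+j-\frac{n}{2}$. Then $c_r^2=f(r):=r(2j+1-r)(t+1-r)$, which vanishes at $r=0$ and $r=d$, consistent with $c_0=c_d=0$. Hence $\lambda_r\propto f(r+1)-2f(r)+f(r-1)=6r-2(2j+t+2)$.

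\textbf{Second problem: isolating every edge cannot work in general.} On the real span of the two generators of edge $r$, $\mathrm{ad}_D^2$ acts as $-\lambda_r^2$. So your Vandermonde plan needs the $|\lambda_r|$ to be pairwise distinct, and this is false in general. For $n=5$, $j=5/2$, $q=5$ (so $d=6$) you get $\lambda_r\propto -18,-12,-6,0,6$. Thus $|\lambda_3|=|\lambda_5|$ and $\lambda_4=0$.

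\textbf{Your fallbacks do not close this.}
\begin{itemize}
\item An irreducible real tridiagonal chain with positive off-diagonals and simple spectrum need not generate $\su(d)$, $\mathfrak{so}(d)$ or $\mathfrak{sp}(d/2)$. The operators $J_x,J_y$ on a spin-$j$ irrep form exactly such a chain, yet generate only $\su(2)$, as the paper's footnote notes; there the second differences are constant. So any correct argument must use the non-constancy of $\lambda_r$.
\item The criterion of \cite{HLM_2024_SU(d)} takes subsystem universality as its input (condition A), so invoking it here is circular.
\item ``Projecting to the diagonal'' is not a Lie-algebra operation.
\end{itemize}

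\textbf{The missing idea: you only need to separate one boundary edge.}
\begin{itemize}
\item By linearity, $\lambda_1\neq\lambda_r$ for $r\ge 2$.
\item $\lambda_1=2-4j-2t\neq 0$, because $d\ge 2$ forces $2j+t\ge 2$.
\item $\lambda_r=-\lambda_1$ would require $r=(4j+2t+1)/3$. This exceeds $d-1=\min(2j,t)$ in both the filled case ($t\ge 2j$) and the unfilled case ($t<2j$).
\end{itemize}

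Now write $X_r:=i(E_{r,r-1}+E_{r-1,r})$ and $Y_r:=E_{r,r-1}-E_{r-1,r}$, so that $X\propto\sum_r c_rX_r$. A polynomial in $\mathrm{ad}_D^2$ extracts $X_1$ from $X$. Then $[D,X_1]\propto\lambda_1Y_1$ with $\lambda_1\neq 0$, which gives the full $\su(2)$ on the first edge.

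Next, $[X_1,X]\propto c_2(E_{0,2}-E_{2,0})$ and $[E_{0,2}-E_{2,0},X_1]\propto X_2$. Repeating this down the chain, using that all $c_r\neq 0$, yields every $X_r$. Together with the $\su(2)$ on the first edge, these generate $\su(d)$.

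This is precisely the content of \cref{lem:subspace_control}. The paper verifies its hypothesis by observing that $\Delta^2 a_y^2$ is linear in $y$, so the boundary second difference differs from all the others.
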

Equivalently, this means
\footnote{Note that for $j<n/2$, $\SV_z\Pi_{q,j}$ contains multiple identical copies of $\SU\big(\mathcal{H}_{q,j}\big)$, since $\H\qj$ appears with multiplicity $M(n,j)$.}
\begin{align}
    \SV_z\Pi\qj \,\simeq\, \1_{M(n,j)}\otimes\SU(\H\qj)\,.
\end{align}
Following the language used in \cite{HLM_2024_SU(d)}, we refer to this property as \emph{subsystem universality}.

As noted before, Keyl et al. \cite{Keyl_2014_control} previously proved this in the special case of $j=n/2$.
\footnote{More precisely, they studied the projection of 
$\wz:=\alg_{\R}\big\{i\HTC,\,iJ_z,\, i \Nhat\big\}
$ to sector $\H_{q,\,j=n/2}$, and showed that for any $q\geq0$, it is the entire unitary algebra $\mathfrak{u}(\H_{q,\,j=n/2})$. 
Since $[\uu(d),\uu(d)]=\su(d)$, this means
\begin{align}\nonumber
    \pi_{q,\,j=n/2}(\mathfrak{sv}_z) \,=\, \pi_{q,\,j=n/2}\big([\wz,\wz]\big)\,=\,\su(d)\,.
\end{align}}
To prove this result, we use a slightly different approach based on   \cref{lem:subspace_control}, which is adapted from Theorem 2 of \cite{Schirmer_etal_2001_controllability}.
(See \cref{app:subsystem_control} for further discussion of this lemma.)
The proof presented here, based on \cref{lem:subspace_control}, is slightly simpler and more general than the one used in \cite{Keyl_2014_control}, in that it applies to arbitrary angular momentum $j$, and can be used to classify other sets of Hamiltonians.
\end{step}
\begin{step}\label{step:2} In \cref{sec:Thm0pt1_proof}, we combine the subsystem universality property together with \cref{lem:comm_subalg} and prove \cref{part:1} of \cref{thm:Thm0}.
That is, we show that $\sv_z$ is centerless and equal to $[\vz,\vz] \eq [\wz,\wz]$.
\end{step}
\begin{step}\label{step:3} Then, using \cref{lem:semi_universality}, in \cref{sec:semi_universality_arbitraryJ} we prove that when restricted to each angular momentum sector $j$, considering any finite cutoff $\qmax\geq0$,
\begin{align}
    (\Pi^{\qmax}P_j)\mathcal{SV}_z =(\Pi^{\qmax}P_j)\mathcal{S}\Vsym\,,
\end{align}
or equivalently, in terms of Lie algebras,

\begin{align}
    (\Pi^{\qmax}P_j)\mathfrak{sv}_z =(\Pi^{\qmax}P_j)[\mathfrak{g},\mathfrak{g}] \,.
\end{align}
This, in particular, proves \cref{part:5} of \cref{thm:Thm0} and the semi-universality property in \cref{lem:sym_semi_universality}, the special case where $j=n/2$.
\end{step}
\begin{step}\label{step:4} Then, in \cref{sec:accidental_sym_restricts} we fully characterize the projection of $\mathcal{SV}_z$ onto the subspace of sectors with charge bounded by a finite but arbitrarily large cutoff $\qmax$.
This establishes \cref{prop:quasi_semi_universality}.
\end{step}
\begin{step}\label{step:5} In \cref{sec:sufficiency_proof_fullchar}, we take into account the center of $\vz$, which corresponds to the relative phases between sectors with different $q$ and $j$, to fully characterize $\mathcal{V}_z$.
Here, we prove the sufficiency of the two conditions in \cref{thm:full_characterization}.
\end{step}
\begin{step}\label{step:6} Finally, in \cref{sec:Jz2_proof}, we show that including Hamiltonian $J_z^2$, in addition to $\HTC$ and $\overline{\HTC}$, suffices to achieve semi-universality, i.e.,
\begin{align*}
    \Pi^{\qmax}\SV_z^+ \eq \Pi^{\qmax}\SVsym\,,
\end{align*}
which proves \Cref{part:3} of \cref{thm:Thm0}.
\end{step}

\subsection{Tools}
\label{sec:tools}
Before proceeding to the various proofs outlined above, we introduce two tools -- \cref{lem:subspace_control} and \cref{lem:semi_universality} -- which are broadly useful for studying controllability of quantum systems in the presence of symmetries.
\subsection*{Tool 1: A simple criterion for universality}
The following lemma is a special case of general control theoretic results from \cite{Schirmer_etal_2001_controllability}:
\begin{lemma}\label{lem:subspace_control}
Consider anti-Hermitian operators $i(A_++A_-)$ and $A_+-A_-$, defined by
\begin{align}
 \begin{split}
    A_+ &= \sum_{y=y_{\text{min}}}^{y_{\text{min}}+d-2} a_y\,\ket{y}\langle y+1| \,\,:\,\, a_y\in\R\,,\\[4pt]
    \,\,\quad A_-&:=(A_+)^{\dag}\,,
 \end{split}
 \label{eq:A+_form}
\end{align}
where $\ket{y}: y=y_{\text{min}},\,\dotsc,\,y_{\text{min}}+d-1$ is an orthonormal basis for $\mathbb{C}^d$. 
Suppose $a_y$ are all non-zero and for all $y>y_{\text{min}}$,
\begin{align}\label{eq:conds}
    \Delta^2 a_y^2 \,\neq\, \Delta^2 a_{\text{y}_{\text{min}}}^2\,,
\end{align}
where
\begin{align}
    \Delta^2 a_y^2 \,:=\, a_{y+1}^2-2 a_y^2+a_{y-1}^2\,,
 \label{eq:finite_diff}
\end{align}
is the second order finite difference of 
\begin{align}
    a_y^2 \,:=\, \langle y|A_+ A_-\ket{y}\,,
\end{align}
and we use the convention that $a_{y_{\text{min}}-1}=a_{y_{\text{min}}+d-1}=0$. 
Then, the Lie algebra generated by operators $i(A_++A_-)$ and $A_+-A_-$ is the set of traceless anti-Hermitian operators,
\begin{align}
    \mathfrak{alg}_{\mathbb{R}}\big\{i(A_++A_-),\,(A_+-A_-)\big\} \,=\, \mathfrak{su}(d)\,.
\end{align}
\end{lemma}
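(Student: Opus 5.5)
Write $X:=i(A_++A_-)$ and $Y:=A_+-A_-$, and let $\mathfrak{h}$ be the real Lie algebra they generate. Both are traceless and anti-Hermitian, so $\mathfrak{h}\subseteq\mathfrak{su}(d)$, and we must show the reverse inclusion. The plan is to produce a diagonal element of $\mathfrak{h}$ with non-degenerate gaps, use it to extract the individual matrix units, and then close up by commutators along the chain. The chain is connected because all $a_y\neq 0$.

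First compute $[X,Y]=i[A_++A_-,A_+-A_-]=2i[A_-,A_+]$. This is a diagonal operator, $D:=2i\,\mathrm{diag}(\lambda_y)$, with
\begin{align*}
\lambda_y \eq \langle y|A_-A_+|y\rangle-\langle y|A_+A_-|y\rangle \eq a_{y-1}^2-a_y^2\,,
\end{align*}
using the boundary convention $a_{y_{\min}-1}=a_{y_{\min}+d-1}=0$. The operator $\mathrm{ad}_D$ acts on the off-diagonal unit $\ket{y}\bra{y+1}$ by multiplication by $2i(\lambda_y-\lambda_{y+1})$. Hence $X$ and $Y$, which are supported on the nearest-neighbour units, split under $\mathrm{ad}_D$ into components with eigenvalues $\mu_y:=\lambda_y-\lambda_{y+1}$. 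A short computation gives
\begin{align*}
\mu_y \eq a_{y-1}^2-2a_y^2+a_{y+1}^2 \eq \Delta^2a_y^2\,.
\end{align*}
Here $y$ ranges over $y_{\min},\dots,y_{\min}+d-2$, indexing the edges of the chain.

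The main obstacle is to isolate a single edge, which requires the $\mu_y$ to be pairwise distinct. The hypothesis in \cref{eq:conds} only asserts that $\mu_y\neq\mu_{y_{\min}}$ for $y>y_{\min}$. This is enough for the following argument, so I would not try to prove full distinctness. Consider the real span of $\mathrm{ad}_D^{2k}(X)$ and $\mathrm{ad}_D^{2k}(Y)$. Since $\mathrm{ad}_D$ is diagonalizable with real spectral parameters $\pm\mu_y$ (up to the factor $2i$), a Vandermonde/polynomial-interpolation argument shows that $\mathfrak{h}$ contains the projection of $X$ and $Y$ onto any $\mathrm{ad}_D$-eigenspace, grouped by the value of $\mu_y^2$. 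I would choose a real polynomial $p$ with $p(\mu_{y_{\min}}^2)=1$ and $p(\mu_y^2)=0$ for every $\mu_y^2\neq\mu_{y_{\min}}^2$. I have to be careful that $\mu_y=-\mu_{y_{\min}}$ may also occur; in that case a further odd power of $\mathrm{ad}_D$, combined with the even ones, separates the sign. This yields the edge elements $i(\ket{y_{\min}}\bra{y_{\min}+1}+\text{h.c.})$ and $(\ket{y_{\min}}\bra{y_{\min}+1}-\text{h.c.})$ in $\mathfrak{h}$, after rescaling by $a_{y_{\min}}\neq0$.

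Next, subtracting the isolated edge from $X$ and $Y$ leaves the chain on the remaining $d-1$ sites. From the first-edge generators we also obtain the diagonal element $i(\ket{y_{\min}}\bra{y_{\min}}-\ket{y_{\min}+1}\bra{y_{\min}+1})$. Commutators of the edge generators with the remaining chain produce next-nearest-neighbour units such as $\ket{y_{\min}}\bra{y_{\min}+2}$. Iterating, I would peel off edges one at a time: the first edge gives the $\mathfrak{su}(2)$ on sites $y_{\min},y_{\min}+1$, and bracketing with the rest of the chain, whose couplings $a_y\neq0$ are all nonzero, extends this to all units $\ket{y_{\min}}\bra{y}$ inductively in $y$. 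Once every $E_{y_{\min},y}$ (in its real and imaginary anti-Hermitian combinations) lies in $\mathfrak{h}$, the brackets $[E_{y,y_{\min}},E_{y_{\min},y'}]$ give all units $E_{y,y'}$, and therefore all of $\mathfrak{su}(d)$. Each inductive step is a routine commutator of matrix units; the only nontrivial input is the separation of the first edge through \cref{eq:conds}.
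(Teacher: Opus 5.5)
Your proof is correct, and it takes a different, self-contained route from the paper. The paper does not prove the lemma directly. It forms $i\widetilde B=\tfrac12[A_+-A_-,\,i(A_++A_-)]$, which is $-\tfrac12 D$ in your notation, and notes that the gaps of this diagonal operator are the second differences $\Delta^2a_y^2$. It then invokes the Schirmer--Fu--Solomon anharmonic-ladder criterion (\cref{lem:subspace_control_general}, condition 1) for the pair $A_++A_-$, $\widetilde B$. You start from the same diagonal element, but replace the citation with two explicit steps: you isolate the boundary edge using polynomials in $\mathrm{ad}_D$, and then run an elementary matrix-unit induction along the chain.

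Your route buys more than self-containedness: it proves the lemma under exactly its stated hypothesis. The criterion as quoted works with a single coupling quadrature plus a diagonal drift. Read literally, it therefore needs $\Delta^2a^2_{y_{\min}}\neq0$ and $|\Delta^2a^2_y|\neq|\Delta^2a^2_{y_{\min}}|$. You keep both quadratures $i(A_++A_-)$ and $A_+-A_-$, and this is exactly what resolves the case $\mu_y=-\mu_{y_{\min}}$. In the Tavis--Cummings application, $\mu_{y_{\min}}\neq0$ and the mirror value $-\mu_{y_{\min}}$ is never attained, so the stronger hypothesis holds there anyway. The paper's route buys brevity, and a general criterion (arbitrary diagonal drift, harmonic and anharmonic cases) that can be reused for other generator sets.

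One suggestion: the sign-separation step can be stated more cleanly. Set $S_y:=i(\ket{y}\bra{y+1}+\ket{y+1}\bra{y})$ and $T_y:=\ket{y}\bra{y+1}-\ket{y+1}\bra{y}$. Then $\mathrm{ad}_D S_y=-2\mu_yT_y$ and $\mathrm{ad}_D T_y=2\mu_yS_y$. Hence the real span of $\{\mathrm{ad}_D^kX,\mathrm{ad}_D^kY\}_{k\ge0}$ is exactly the set of elements $\sum_y a_y\,(p_1(\mu_y)S_y+p_2(\mu_y)T_y)$, with $p_1,p_2$ arbitrary real polynomials. Even powers of $X$ and odd powers of $Y$ supply the $S$-components, and the reverse supplies the $T$-components. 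Interpolating directly in $\mu$ rather than in $\mu^2$ then isolates the first edge. This needs no separate sign case and no assumption $\mu_{y_{\min}}\neq0$.
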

This lemma follows from a more general result (\cref{lem:subspace_control_general}) presented in \cite{Schirmer_etal_2001_controllability}, which we discuss in \cref{app:subsystem_control}.

\subsection*{Tool 2: From subsystem universality to semi-universality}
The following lemma, quoted from \cite{HLM_2024_SU(d)},  can be used to establish whether a subgroup $\U\subseteq\V^G$ of $G$-invariant unitaries (for any finite or compact Lie group $G$) is \textit{semi-universal}, that is whether $\U$ contains the commutator subgroup $[\V^G,\V^G]$.

\begin{lemma}\label{lem:semi_universality}
Let $\V^G$ be the group of all $G$-invariant unitaries; that is, $V\in\V^G$ if and only if $[V,U(g)]=0$ for all $g\in G$, where $U(g)$ is the unitary representation of a finite or compact Lie group $G$.
Suppose this representation has the isotypic decomposition,
\begin{align}
    \H = \bigoplus_{\lambda\in\Lambda} \H_{\lambda} = \bigoplus_{\lambda\in\Lambda} \mathcal{Q}_{\lambda}\otimes\mathcal{M}_{\lambda}\,,
\end{align}
where the sum is over a set of finite $\Lambda$ of inequivalent irreps of $G$, $\mathcal{Q}_{\lambda}$ is a space carrying the irrep $\lambda$, and $\mathcal{M}_{\lambda}$ corresponds to the multiplicity of $\lambda$, i.e., $\dim(\mathcal{M}_{\lambda}) > 0$ is the multiplicity of $\lambda$ in $\H$.
Let $\pi_\lambda (A)=\Tr_{\mathcal{Q}_\lambda}(\Pi_\lambda A\Pi_{\lambda})/\dim(\mathcal{Q}_\lambda)$ be the projection of any operator $A$ to $\mathcal{M}_{\lambda}$, where the partial trace is over $Q_\lambda$.
A subgroup $\U\subseteq\V^G$ of $G$-invariant unitaries contains the commutator subgroup of all $G$-invariant unitaries $\SV^G=[\V^G,\V^G]$, if, and only if, the following two conditions hold: \\

\noindent\textbf{A (Subsystem universality in all sectors):} For any irrep $\lambda\in\Lambda$, the action of $\U$ on the corresponding multiplicty subsystem $\mathcal{M}_{\lambda}$ contains $\SU(\mathcal{M}_{\lambda})$, i.e. $\SU(\mathcal{M}_{\lambda})\subseteq\pi_{\lambda}(\U)=\{\pi_{\lambda}(V):V\in\U\}$. \\

\noindent\textbf{B (Pairwise independence):} For any pair of distinct irreps $\lambda_1,\lambda_2\in\Lambda$, if $\dim(\mathcal{M}_1)=\dim(\mathcal{M}_2)\geq2$, then there exists a unitary $V\in\U$ such that
\begin{align}
    \big|\Tr(\pi_{\lambda_1}(V)\big| \,\neq\, \big|\Tr(\pi_{\lambda_2}(V)\big|\,.
\end{align}
\end{lemma}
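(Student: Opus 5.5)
Since this lemma is quoted from \cite{HLM_2024_SU(d)}, the plan is to reconstruct its proof in the Lie-algebraic setting. We pass to Lie algebras: $\U$ is a connected Lie subgroup with algebra $\mathfrak{t}\subseteq\g^G=\bigoplus_\lambda \1_{\mathcal{Q}_\lambda}\otimes\uu(\mathcal{M}_\lambda)$. Containing $[\V^G,\V^G]$ is equivalent to $[\mathfrak{t},\mathfrak{t}]=\bigoplus_\lambda\su(\mathcal{M}_\lambda)$.

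\emph{Necessity} is quick. If $\U\supseteq\SV^G$, then each $\pi_\lambda(\U)\supseteq\SU(\mathcal{M}_\lambda)$, which is condition A. For condition B, take $\lambda_1\neq\lambda_2$ with equal dimension $d\ge2$. Choose $V\in\SV^G$ acting as $\1$ on $\mathcal{M}_{\lambda_1}$ and as $e^{2\pi i/d}\1$ on $\mathcal{M}_{\lambda_2}$; the latter is allowed since its determinant is $1$. The traces are both $d$ in modulus, so this choice fails. Instead, take $\pi_{\lambda_2}(V)$ to be a traceless special unitary, for example $\mathrm{diag}(1,-1,\dots)$ adjusted to have determinant one, or a cyclic shift matrix. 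Then $|\Tr|=d\neq 0$, which gives B.

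\emph{Sufficiency} is the main work. Let $\mathfrak{s}=[\mathfrak{t},\mathfrak{t}]$. Condition A implies that $\pi_\lambda(\mathfrak{s})=\su(\mathcal{M}_\lambda)$ for each $\lambda$, because the commutator subalgebra of an algebra projecting onto $\uu$ or $\su$ projects onto $\su$. Thus $\mathfrak{s}$ is a semisimple subalgebra of $\bigoplus_\lambda\su(d_\lambda)$ that projects surjectively onto every simple factor. By the Goursat-type lemma for semisimple Lie algebras (every ideal of $\bigoplus$ of simples is a sub-sum, applied to kernels of projections), such an $\mathfrak{s}$ is a ``diagonal'' algebra. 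Concretely, the index set is partitioned into classes, and within a class the factors are identified via isomorphisms $\phi_{\lambda_2\lambda_1}:\su(d)\to\su(d)$, with $\pi_{\lambda_2}(X)=\phi(\pi_{\lambda_1}(X))$. The case $d_\lambda=1$ is trivial. Factors with different dimensions cannot be linked, since $\su(d)\not\cong\su(d')$ for $d\neq d'$, with the exception $d,d'\ge2$ unequal never being isomorphic except through low-dimensional coincidences, which do not occur among the $\su(d)$ themselves.

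It therefore remains to exclude a link between two factors of equal dimension $d\ge2$. Every automorphism of $\su(d)$ is either inner, $X\mapsto WXW^\dagger$, or inner composed with complex conjugation, $X\mapsto -WX^TW^\dagger$ (equivalently $X\mapsto W\bar XW^\dagger$). Exponentiating, any $V$ in the connected group $[\U,\U]$ then satisfies $\pi_{\lambda_2}(V)=W\pi_{\lambda_1}(V)W^\dagger$ or $W\overline{\pi_{\lambda_1}(V)}W^\dagger$. For a general $V\in\U$, we write $V$ as a central part times an element of $[\U,\U]$. The center of $\mathfrak{t}$ acts by scalars on each block, because $\mathfrak{t}\subseteq\mathfrak{s}\oplus\mathfrak{z}$ with $\mathfrak{z}$ central, and $\mathfrak{z}$ commutes with $\mathfrak{s}$, which is irreducible on $\mathcal{M}_\lambda$. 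Hence $\pi_\lambda(V)=e^{i\theta_\lambda}\pi_\lambda(V')$ with $V'\in[\U,\U]$. In either case $|\Tr\pi_{\lambda_1}(V)|=|\Tr\pi_{\lambda_2}(V)|$ for all $V\in\U$, which contradicts B.

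The expected obstacle is making the decomposition $\mathfrak{t}=\mathfrak{s}\oplus\mathfrak{z}$ and the reduction to the group level rigorous, which uses reductivity of compact Lie algebras. The rest is the classical structure theory of diagonal subalgebras together with the classification of $\mathrm{Aut}(\su(d))$.
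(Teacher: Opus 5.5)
Your argument is correct when $\U$ is a connected Lie subgroup, but it is not the route the paper relies on. The paper does not prove this lemma. It cites \cite{HLM_2024_SU(d)} and notes that the proof there uses Goursat's and Serre's lemmas, i.e.\ a group-level argument:
\begin{itemize}
\item Goursat's lemma says a subgroup of $\SU(d_1)\times\SU(d_2)$ that surjects onto both factors is either the full product or the graph of an isomorphism between quotients by central subgroups.
\item Condition B rules out the graph case.
\item Serre's lemma then upgrades surjectivity onto every pair $\SU(d_{\lambda_1})\times\SU(d_{\lambda_2})$ to surjectivity onto the whole product.
\end{itemize}
In your version, the splitting $\mathfrak{t}=\mathfrak{z}(\mathfrak{t})\oplus[\mathfrak{t},\mathfrak{t}]$ and the decomposition of $[\mathfrak{t},\mathfrak{t}]$ into simple ideals do the work of both lemmas at once, so no separate pairwise-to-global step is needed. $\mathrm{Aut}(\su(d))$ replaces the classification of isomorphisms between central quotients of $\SU(d)$. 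Schur's lemma applied to the center is what carries the trace equality from $[\U,\U]$ to all of $\U$. The obstacle you anticipate is routine: the Hilbert--Schmidt form is $\mathrm{ad}$-invariant on any subalgebra of $\uu(\H)$. Hence $[\mathfrak{t},\mathfrak{t}]^\perp=\mathfrak{z}(\mathfrak{t})$, and $[\mathfrak{t},\mathfrak{t}]$ is a centerless compact Lie algebra, hence semisimple.

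What your route costs is generality. The lemma is stated for an arbitrary subgroup $\U\subseteq\V^G$, but your first line assumes $\U$ is the connected Lie subgroup generated by $\exp(\mathfrak{t})$. You use this in two places:
\begin{itemize}
\item to turn condition A into $\pi_\lambda(\mathfrak{t})\supseteq\su(\mathcal{M}_\lambda)$;
\item to write each $V\in\U$ as a central factor times a product of elements of $\exp([\mathfrak{t},\mathfrak{t}])$.
\end{itemize}
This is exactly the situation in every application in the paper (the groups are generated by Hamiltonians, and B$'$ is phrased that way), so nothing downstream is affected, but you should state the assumption. For a general subgroup, work with the abstract commutator subgroup. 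Condition A alone gives $\pi_\lambda([\U,\U])=[\pi_\lambda(\U),\pi_\lambda(\U)]=\SU(\mathcal{M}_\lambda)$. After that, Goursat's and Serre's lemmas apply, using two facts: proper normal subgroups of $\SU(d)$ are central, and isomorphisms between the resulting compact quotients are automatically continuous. To pass from $[\U,\U]$ to $\U$, use that conjugation by any $V\in\U$ preserves the graph.

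Two minor points. Drop the false start in your necessity argument: the witness $\pi_{\lambda_1}(V)=\1$ with $\pi_{\lambda_2}(V)$ a traceless element of $\SU(d)$ gives moduli $d$ and $0$, which is all you need. The aside about low-dimensional coincidences is unnecessary, since $\dim\su(d)=d^2-1$ already separates different $d$.
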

This result is proven in \cite{HLM_2024_SU(d)}, using  Goursat's and Serre's lemmas. 
Condition $\textbf{B}$ may be replaced by the following condition, which is more easily applied to the group $\V_z\subseteq\Vsym$:\\

\noindent   \textbf{B' (Pairwise independence, equivalent condition):} 
For any pair of distinct irreps $\lambda_1,\lambda_2\in\Lambda$, if $\dim(\mathcal{M}_1)=\dim(\mathcal{M}_2)\geq2$, then there exists a Hamiltonian $H$ such that $\exp(iHt)\in\U$ for all $t\in\R$, and
\begin{align}
    \Tr(H^2\tau_{\lambda_1}) - \Tr(H\tau_{\lambda_1})^2 \neq \Tr(H^2\tau_{\lambda_2}) - \Tr(H\tau_{\lambda_2})^2\,,
 \label{eq:cond_B'}
\end{align}
where $\tau_{\lambda}=\Pi_{\lambda}/\Tr(\Pi_{\lambda})$ is the maximally-mixed state in $\mathcal{Q}_{\lambda}\otimes\mathcal{M}_{\lambda}$.
Note that the quantity $\Tr(H^2\tau_{\lambda})-\Tr(H\tau_{\lambda})^2$ is indeed the \textit{energy variance} associated to Hamiltonian $H$ in state $\tau_{\lambda}$.

Here, we show that conditions \textbf{A}, \textbf{B} are equivalent to conditions \textbf{A}, \textbf{B'}.
First, to see that \textbf{B'} implies \textbf{B}, note that
\begin{align}
 \nonumber
    \Tr(H^2 &\tau_{\lambda}) - \Tr(H \tau_{\lambda})^2 \\[6pt]
    &\,=\, \frac{\Tr\big(\pi_{\lambda}(H^2)\big)}{\dim(\mathcal{M}_{\lambda})} - \frac{\Tr\big(\pi_{\lambda}(H)\big)^2}{\dim(\mathcal{M}_{\lambda})^2} \label{eq:energy_variance}\\[6pt]
    \nonumber&\,=\, -\frac{1}{2 \dim(\mathcal{M}_{\lambda})^2} \frac{d^2}{dt^2} \left\{\left|\Tr\left[\pi_{\lambda}(e^{-iH t})\right]\right|^2\right\}_{t=0}
    \,.
\end{align}
Hence, the inequality in \cref{eq:cond_B'} implies $\left|\Tr\left[\pi_{\lambda_1}\big(e^{-iHt}\big)\right]\right| \neq \left|\Tr\left[\pi_{\lambda_2}\big(e^{-iHt}\big)\right]\right|$ for some nonzero time $t$ near zero , as these functions -- which are well-behaved, i.e. infinitely differentiable -- having different second derivatives at $t=0$ necessarily means that they also take on different values in a neighborhood of $t=0$. 

Conversely, if \textbf{A} and \textbf{B} hold, then according to \cref{lem:semi_universality}, $\mathcal{T}$ contains $\mathcal{SV}^G$, which in turn implies for any $G$-invariant Hamiltonian $H$, satisfying the conditions $\Tr(\pi_{\lambda}(H))=0$ for all $\lambda\in\Lambda$, we have the unitary $\exp(i H t)\in \mathcal{T} $ for all $t\in\mathbb{R}$.
Now for any distinct pairs $\lambda_1,\lambda_2\in\Lambda$ we can pick $H$ such that $\pi_{\lambda_1}(H)=0$, whereas $\pi_{\lambda_2}(H)$ is an arbitrary traceless Hermitian operator.
If $\dim(\mathcal{M}_1)=\dim(\mathcal{M}_2)\geq2$, then $\pi_{\lambda_2}(H)$ can be non-zero, which in turn implies $\Tr(H^2\tau_{\lambda_2}) - \Tr(H\tau_{\lambda_2})^2=\Tr(H^2\tau_{\lambda_2})>0$.
Hence, condition \textbf{B'} is satisfied.

\subsection{\cref{step:1}: Subsystem Universality (Proof of \cref{prop:subspace_control})}\label{sec:subsystem_control}
Recall the decomposition
\begin{align}
    (\C^{2})^{\otimes n}\otimes\mathcal{L}^2(\R) =
    \bigoplus_{j=\jmin}^{n/2}\bigoplus_{q=n/2-j}^{\infty}\left(\C^{M(n,j)} \otimes\H\qj\right)
\end{align}
of the combined qubit-oscillator Hilbert space into invariant sectors labeled by angular momentum ($j$) and charge ($q$).
Using \cref{lem:subspace_control}, we can easily characterize the projection of the Lie algebra generated by the two Hamiltonians $\HTC$ and $\overline{\HTC}$ into any sector $\mathcal{H}_{q,j}$.
To do this, choose
\begin{align}
    A_+ \,:=\, \pi\qj\big(J_- \otimes a^{\dag}\big)
\end{align}
to be the projection of $J_-\otimes a^{\dag}$ onto $\mathcal{H}_{q,j}$, and $A_-=(A_+)^{\dag}$, so that
\begin{align}
 \begin{split}
    \pi\qj(\HTC) &= \gTC(A_++A_-)\\[4pt]
    \pi\qj(\overline{\HTC}) &= -i\gTC(A_+-A_-)\,.
 \end{split}
\end{align}
Explicitly, this projection can be found in terms of matrix elements of this operator with respect to the basis
\begin{align}
    \ket{y}=\left|j,m=y-\frac{n}{2},\alpha\right\rangle\otimes \ket{k=q-y}_{\text{osc}}\,,
\end{align}
where $\{\ket{j,m,\alpha}\}$ is defined in \cref{eq:basis}.
Here, $y_{\text{min}}=\frac{n}{2}-j$, and $y$ takes the values
\begin{align}
    y=\frac{n}{2}-j,\,\dotsc\,,\frac{n}{2}-j+(d-1)\,,
 \label{eq:y_range}
\end{align}
where $d=d_n(q,j)$ is the dimension of $\H\qj$, given in \cref{eq:Hqjn_dimension}.
Note that with respect to this basis, the operator $J_- \otimes a^{\dag}$ takes the form required in \cref{eq:A+_form}.

Then, $A_+A_-$ is equal to the projection of the operator
\begin{align}
    J_-J_+\otimes a^{\dag}a \eq (J^2-J^2_z-J_z)  \otimes a^{\dag}a 
\end{align}
to the sector $\mathcal{H}_{q,j}$.
This, in particular, implies  
\begin{align}
 \begin{split}
    \langle y|A_+A_-\ket{y}&= \left(j(j+1)-\left(y-\frac{n}{2}\right)\left(y-\frac{n}{2}+1\right)\right)\\[2pt]
    &\hspace{30pt}\times (q-y)\,,
 \end{split}
\end{align}
which is a polynomial of degree 3 in $y$. 
Note that this function vanishes at points $y_{\text{min}}-1=n/2-j-1$ and $y_{\text{min}}+d-1=\min(n/2+j,\,q)$, which is consistent with our convention in \cref{lem:subspace_control}.
One can confirm that \cref{eq:finite_diff} holds for all $y$ in \cref{eq:y_range}. 
In particular, the second-order finite differences are given by
\begin{align}
    \Delta^2 a_y^2 \,=\, 2(n+q-1) - 6y\,,
\end{align}
which is linear in $y$, meaning the condition in \cref{eq:conds} is satisfied for all $y>y_{\text{min}}$.
Therefore, \cref{lem:subspace_control} implies that
\begin{align}
    \alg_{\R}\Big\langle \pi\qj(i\HTC),\,\pi\qj(i\overline{\HTC})\Big\rangle \eq \su(\H\qj)\,.
 \label{eq:suHqj}
\end{align}
Equivalently, when projected to an arbitrary sector $\H\qj$, the family of unitaries realized by $\HTC$ and $\overline{\HTC}$ is the special unitary group in $\mathcal{H}_{q,j}$, which proves \cref{prop:subspace_control}.
\footnote{It is worth noting that if we apply the above lemma to the Hamiltonians $J_x=(J_++J_-)/2$ and $J_y=(J_+-J_-)/(2i)$, i.e., in the absence of interaction with the oscillator, then $a_y^2$ will be a quadratic polynomial in $y$. 
Therefore, its second-order finite difference $\Delta^2 a_y^2$ will be constant, independent of $y$, which means the assumption of the lemma is not satisfied. 
Of course, in this case we know that the group generated by the projections of the unitaries $R_x(\theta) = \exp(-i\theta J_x)$ and $R_y(\theta) = \exp(-i\theta J_y)$ is $\SU(2)$, and not $\SU(\mathcal{H}_{q,j})$, which is consistent with the lemma.}

\subsection{\cref{step:2}: Proof of \cref{part:1} of \cref{thm:Thm0}}
\label{sec:Thm0pt1_proof}
First, $[\vz, \vz]=[\wz, \wz]$ follows immediately from \cref{eq:Lie_vw}.
Furthermore, since $\sv_z$ is contained in $\vz$, its commutator subalgebra is also contained in the commutator subalgebra of $\vz$, i.e., $[\sv_z,\sv_z]\subset [\vz,\vz]$. 
The fact that these two are actually equal follows from the identities
\begin{align}
    \big[J_z,\HTC\big]\,=\,-i\overline{\HTC}\,\,,\quad \left[J_z,\overline{\HTC}\right] \,=\, i\HTC\,,
 \label{eq:Jz_HTC_commutator}
\end{align}
together with the following lemma, which is proven in \cref{app:comm_subalg_proof}.
\begin{lemma}
\label{lem:comm_subalg}
Suppose Lie algebra $\mathfrak{h}=\mathfrak{alg}\{A,B\}$ is generated by elements $A$ and $B$, which satisfy $[A, [A,B]]\propto B$ with a non-zero coefficient. 
Then, the commutator subalgebra of $\mathfrak{h}$ is generated by $B$ and $\overline{B}:=[A,B]$, i.e.,
\begin{align}
    \big[\mathfrak{h},\mathfrak{h}\big] = \mathfrak{alg}\{B,\overline{B} \}\,.
\end{align}
\end{lemma}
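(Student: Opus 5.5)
We will prove the two inclusions $\mathfrak{alg}\{B,\overline{B}\}\subseteq[\mathfrak{h},\mathfrak{h}]$ and $[\mathfrak{h},\mathfrak{h}]\subseteq\mathfrak{alg}\{B,\overline{B}\}$ separately. Write $[A,\overline{B}]=[A,[A,B]]=cB$ with $c\neq 0$, and set $\mathfrak{k}:=\mathfrak{alg}\{B,\overline{B}\}$.

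For the first inclusion, note that $\overline{B}=[A,B]\in[\mathfrak{h},\mathfrak{h}]$ by definition. Also $B=c^{-1}[A,\overline{B}]$ is a commutator of elements of $\mathfrak{h}$, so $B\in[\mathfrak{h},\mathfrak{h}]$. Since $[\mathfrak{h},\mathfrak{h}]$ is a subalgebra (indeed an ideal), it contains the algebra $\mathfrak{k}$ generated by $B$ and $\overline{B}$.

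For the reverse inclusion, the key step is to show that $\mathfrak{k}$ is invariant under $\mathrm{ad}_A$. On the generators this holds because $[A,B]=\overline{B}\in\mathfrak{k}$ and $[A,\overline{B}]=cB\in\mathfrak{k}$. Since $\mathrm{ad}_A$ is a derivation, the Jacobi identity gives $[A,[X,Y]]=[[A,X],Y]+[X,[A,Y]]$. By induction on the length of nested brackets of generators, $\mathrm{ad}_A$ therefore maps every nested bracket of $B,\overline{B}$ into $\mathfrak{k}$. By linearity, $[A,\mathfrak{k}]\subseteq\mathfrak{k}$. Trivially $[B,\mathfrak{k}]\subseteq\mathfrak{k}$ as well, so $\mathfrak{k}$ is stable under $\mathrm{ad}$ of both generators of $\mathfrak{h}$.

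Next, $\mathfrak{h}=\mathrm{span}\{A\}+\mathfrak{k}$. The right-hand side contains $A$ and $B$, and it is closed under brackets: $[A,A]=0$, $[A,\mathfrak{k}]\subseteq\mathfrak{k}$, and $[\mathfrak{k},\mathfrak{k}]\subseteq\mathfrak{k}$. So it is a subalgebra containing the generators, and it is contained in $\mathfrak{h}$ since $\overline{B}\in\mathfrak{h}$. Hence it equals $\mathfrak{h}$.

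Finally, let $X=\alpha A+K$ and $Y=\beta A+K'$ with $K,K'\in\mathfrak{k}$. Then
\[
[X,Y]=\alpha[A,K']-\beta[A,K]+[K,K']\in\mathfrak{k}.
\]
Thus $[\mathfrak{h},\mathfrak{h}]\subseteq\mathfrak{k}$, which completes the proof.

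The only step that needs care is the derivation/induction argument for $\mathrm{ad}_A$-invariance of $\mathfrak{k}$. In infinite dimensions it still works, because $\mathfrak{k}$ is by definition the span of finite nested brackets, so no closure or completion issues arise.

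For the application in the paper, take $A=iJ_z$ and $B=i\HTC$. By the identities $[J_z,\HTC]=-i\overline{\HTC}$ and $[J_z,\overline{\HTC}]=i\HTC$, the bracket $[A,B]$ is proportional to $i\overline{\HTC}$ and $[A,[A,B]]\propto B$ with nonzero coefficient. The lemma then gives $[\vz,\vz]=\sv_z$, up to rescaling $\overline{B}$, which does not change the generated algebra.
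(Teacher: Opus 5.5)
Your proof is correct. The first inclusion and the key step are the same as in the paper. That key step is the $\mathrm{ad}_A$-invariance of $\mathfrak{k}=\mathfrak{alg}\{B,\overline{B}\}$, obtained from the derivation property and induction on bracket length.

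Where you differ is the reverse inclusion $[\mathfrak{h},\mathfrak{h}]\subseteq\mathfrak{k}$. The paper runs a second induction over subspaces $\mathcal{S}_k$ of commutators of $A$ and $B$ of order at most $k$, building $\mathcal{S}_{k+1}$ from $\mathcal{S}_k$ by applying $\mathrm{ad}_A$ and $\mathrm{ad}_B$. You instead note that $\mathrm{span}\{A\}+\mathfrak{k}$ is already a subalgebra containing the generators, hence equals $\mathfrak{h}$. So $\mathfrak{k}$ is an ideal of codimension at most one, and every bracket lands in $\mathfrak{k}$.

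Your route is shorter and more transparent. It also avoids a step the paper leaves implicit: its claim that ``by definition'' $X'\in\mathrm{span}\{X,[A,X],[B,X]\}$ silently relies on the standard fact that arbitrarily bracketed commutators of the generators can be rewritten as right-nested ones.

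Your assignment $A=iJ_z$, $B=i\HTC$ in the application is the correct one. The paper's text says ``$A=\HTC$ and $B=J_z$'', but with that choice $[A,[A,B]]=i[\HTC,\overline{\HTC}]$ contains a $J_z a^\dagger a$ term and is not proportional to $J_z$. The roles there are swapped.
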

In particular, taking $A=\HTC$ and $B=J_z$, we conclude that $[\vz,\vz]=\sv_z$, which completes the proof of \cref{eq:comm_subgroup_relations}.

\vspace{2mm}
To prove that $\sv_z$ is centerless is equivalent to show that for any non-zero $A\in \sv_z$, at least one of $[A,\HTC]$ or $[A,\overline{\HTC}]$ is non-zero. 
Clearly, all elements of $\sv_z$ are PI and commute with $J^2$ and $Q$;
therefore, they are block-diagonal with respect to the projectors $\{\Pi\qj\}$.
Furthermore, it can be easily seen that
\begin{align}
    \Tr\big(\Pi\qj \HTC\big)=\Tr\big(\Pi\qj \overline{\HTC}\big)=0\,,
\end{align}
for all $q$ and $j$.
\footnote{Note that for any eigenstate $\ket{k}_{\text{osc}}$ of $a^\dag a$, $\ipo{k}{\HTC}{k}_{\text{osc}}=0$.}
This, in turn, implies that all elements of Lie algebra $\sv_z$ have a decomposition as $A=\sum\qj \widetilde{A}\qj$, where $\widetilde{A}\qj=\Pi\qj A$ is orthogonal to $\Pi\qj$, that is,
\begin{align}\label{eq:orth}
    \forall A\in \sv_z\,\,:\,\,\Tr\big(\Pi\qj A\big)=\Tr\big(\widetilde{A}\qj\big)=0\,.
\end{align}
Therefore, to prove that  $\sv_z$ is centerless, it suffices to show that any non-zero PI operator $\widetilde{A}\qj=\Pi\qj A$ that commutes with both $\Pi\qj\HTC$ and $\Pi\qj\overline{\HTC}$ should be proportional to $\Pi\qj$.
That is, if $\widetilde{A}\qj$ is PI, then for all $q$ and $j$,
\begin{align} \label{eq:Atilde_comms}
    \big[\widetilde{A}\qj,\HTC\big]=\big[\widetilde{A}\qj,\overline{\HTC}\big]=0 \,\,\, \Longrightarrow \,\,\,  \widetilde{A}\qj\propto \Pi\qj\,.
\end{align}
To establish \cref{eq:Atilde_comms}, first note that any PI operator $\widetilde{A}\qj$ decomposes as 
\begin{align}
    \widetilde{A}\qj = \1_{M(n,j)}\otimes a\qj\quad:\quad a\qj\in \mathcal{L}(\H\qj)\,,
\end{align}
with respect to the decomposition in \cref{eq:full_block_structure2}, which means
\begin{align}
 \begin{split}
    \big[\widetilde{A}\qj,\HTC\big] &=0 \,\, \implies \,\,\big[a\qj,\,\pi\qj(\HTC)\big] =0\\[4pt]
    \big[\widetilde{A}\qj,\overline{\HTC}\big] &= 0\,\, \implies \,\,\big[a\qj,\,\pi\qj(\overline{\HTC})\big] =0\,.
 \end{split}
\end{align}
Since $\pi\qj(\HTC)$ and $\pi\qj(\overline{\HTC})$ generate the Lie algebra $\su(\H\qj)$ (the subsystem universality property in \cref{prop:subspace_control}), $a\qj$ commutes with every element of $\su(\H\qj)$, which mean it can only be a scalar multiple of the identity in $\H\qj$.
Therefore, $\widetilde{A}\qj$ is also proportional to the identity in $\C^{M(n,j)}\otimes\H\qj$, that is $\widetilde{A}\qj\propto\Pi\qj$.
Finally, since by \cref{eq:orth}, $\Tr(\widetilde{A}\qj)=0$, we conclude that $\widetilde{A}\qj=0$, which in turn, implies $\mathfrak{sv}_z$ is centerless.

\subsection{\cref{step:3}: Semi-universality inside each angular momentum sector}
\label{sec:semi_universality_arbitraryJ}
Next, we use \cref{lem:semi_universality} to prove 
\begin{align}
   (\Pi^{\qmax}P_j) \mathcal{SV}_z \eq (\Pi^{\qmax}P_j)\mathcal{S}\Vsym\,,
 \label{eq:semi_universality_j}
\end{align}
or equivalently,
\begin{align}\label{eq:qkr}
 \begin{split}
    (\Pi^{\qmax}P_j) \sv_z &\eq (\Pi^{\qmax}P_j) [\g,\g]\\[4pt]
    &\eq \1_{M(n,j)}\otimes  \bigoplus_{q=0}^{q_{\text{max}}}\mathfrak{su}\big(\dim(\H\qj)\big)\,.
 \end{split}
\end{align}
As a preliminary step toward the general case, treated in \cref{thm:full_characterization}, we first prove this statement for the special case $j=n/2$, corresponding to the symmetric subspace.
In particular, we apply \cref{lem:semi_universality} to 
\begin{align*}
    \SV_z\PiSym\subset\SV^{\Uo\times\Sn}\PiSym\,,
\end{align*}
where
\begin{align}
    \PiSym=\Pi^{\qmax}P_{j=n/2}\,.
\end{align}
We now check that both conditions of \cref{lem:semi_universality} are satisfied.

First, subsystem universality of $\SV_z$ in all sectors, condition A, was already shown in \cref{sec:subsystem_control}.
Second, when restricted to the symmetric subspace, $\HTC$ satisfies condition B' of \cref{lem:semi_universality}. 
Indeed, for $q<n$, the sectors $\H_{q,\,j=n/2}$ have distinct dimensions, namely $q+1$. 
On the other hand, for $q\geq n$, the sectors $\H_{q,\,j=n/2}$ all have the same dimension, namely $n+1$. 
In this regime, we compute in \cref{app:energy_variance} that
\begin{align}
 \begin{split}
    \Tr(\HTC\tau_{q,\,j=n/2}) &= 0\\[2pt]
    \Tr\left(\HTC^2\tau_{q,\,j=n/2}\right) &= \frac{n(n+2)(2q+1-n)}{6}\,.
 \end{split}
 \label{eq:HTC_energy_variances}
\end{align}
Therefore,
\begin{align*}
    \Tr(\HTC^2\tau_{q,\,j=n/2}) - \Tr(\HTC\tau_{q,\,j=n/2})^2
\end{align*}
is strictly increasing with $q$ for $q\geq n$. 
Hence, \cref{eq:cond_B'} is satisfied.
We conclude that both conditions of \cref{lem:semi_universality} hold, which proves \cref{eq:semi_universality_j} in the special case $j=n/2$, i.e.,
\begin{align}\label{eq:CommSub_symmetric}
 \begin{split}
    \sv_z \PiSym &\eq [\g,\g]\PiSym \\[4pt]
    &\eq\bigoplus_{q=0}^{\qmax} \mathfrak{su}\big(\min\{n+1, \,q+1\}\big)\,.
 \end{split}
\end{align}

This argument extends straightforwardly to arbitrary angular momentum $j$ sectors. 
First, note that, upon restriction to a fixed angular momentum sector $j$, the Hilbert space decomposes as
\begin{align}
    (P_j\otimes \mathbb{I}_{\rm osc}) \big((\C^{2})^{\otimes n}\otimes \HHO\big)\cong\mathbb{C}^{M(n,j)} \otimes \mathbb{C}^{2j+1}\otimes \HHO\,,
\end{align}
where $M(n,j)$ is the multiplicity of the spin-$j$ irrep in $(\C^2)^{\otimes n}$.
This subspace is spanned by basis vectors of the form
\begin{align}\label{eq:basis1}
    \ket{j,m,\alpha}\otimes \ket{k}_{\mathrm{osc}}\quad:
    \quad -j\le m\le j\,\,,\quad k\ge 0,
\end{align}
where $\alpha=1,\ldots,M(n,j)$ labels the multiplicity space. 
All operators under consideration are PI, and hence act trivially on this multiplicity space, which carries the corresponding representation of $\mathbb{S}_n$.

Furthermore, using \cref{eq:Hqjn_dimension}, one sees that, within a fixed angular-momentum sector $j$, the dimensions of the charge sectors $\mathcal{H}_{q,j}$ are given by
\begin{align}
    d_n(q,j) \eq \min\left\{
                    2j+1,\,
                    q+1+j-\frac{n}{2}
                \right\}.
\end{align}
Thus, as in the fully symmetric sector $j=n/2$, these dimensions increase monotonically with $q$ until they reach their maximal value $2j+1$.
We refer to the sectors with $d_n(q,j)=2j+1$ as filled sectors.
Equivalently, $\mathcal{H}_{q,j}$ is filled precisely when $q\ge \frac{n}{2}+j$.
Therefore, two distinct charge sectors within the same angular-momentum subspace $j$ can have the same dimension only if both are filled.

In \cref{app:energy_variance} we explicitly calculate the generalization of \cref{eq:HTC_energy_variances} to arbitrary \textit{filled} $j$ sectors, and show that
\begin{align}
 \begin{split}
    \Tr(\HTC\tau_{q,j}) &\eq 0,\\[4pt]
    \Tr\left(\HTC^2\tau_{q,j}\right)
    &\eq \frac{2j(2j+2)(2q+1-n)}{6}\,.
 \end{split}
 \label{eq:HTC_energy_variances_general_j1}
\end{align}
In particular, since $\Tr(\HTC\tau_{q,j})=0$, the variance of $\HTC$ in $\tau_{q,j}$ is equal to the second expression in \cref{eq:HTC_energy_variances_general_j1}.
This variance grows monotonically with $q$.
Hence, among filled sectors, which all have the same dimension, distinct charge sectors have distinct variances.
This verifies the second condition of \cref{lem:semi_universality}.
Together with the dimension condition above, this proves \cref{eq:semi_universality_j} and \cref{eq:qkr} in the subspace with fixed angular momentum $j$.\\

Although the identities in \cref{eq:HTC_energy_variances_general_j1} can be
verified by a direct calculation, there is also a more conceptual way to understand how they can be obtained from \cref{eq:HTC_energy_variances}, and how 
the same semi-universality argument applies for every $j$.
The key observation is that the matrix elements of $\HTC$ and $\overline{\HTC}$ in the basis above are independent of both $n$, the total number of qubits, and the multiplicity label $\alpha$.
Instead, they depend only on $j$, $m$, and $k$; see \cref{eq:HTC_elements}.
This property is inherited from the collective angular momentum operators $J_x,J_y,J_z$.\footnote{More generally, any operator expressible as a function of the collective angular momentum operators, tensored with oscillator operators, has the same property.}
Consequently, the matrix elements of $\HTC$ and $\overline{\HTC}$ in the angular-momentum sector $j$ of an $n$-qubit system coincide with those in the angular-momentum sector $j$ of an auxiliary system of $n'=2j$ qubits.

In the system with $n'=2j$ qubits, the sector with angular momentum $j=n'/2$ is precisely the fully symmetric subspace of $(\mathbb{C}^2)^{\otimes n'}$.
Therefore, after projection to a fixed-$j$ subspace, the Lie algebra
\[
    \mathfrak{sv}_z \eq \alg_{\mathbb{R}}\big\{i\HTC,i\overline{\HTC}\big\}
\]
for the original $n$-qubit system is naturally isomorphic to the corresponding Lie algebra for the $n'=2j$ qubit system restricted to its fully symmetric
subspace.
The only difference is that, for general $n$, these operators are tensored with the identity on the multiplicity space $\mathbb{C}^{M(n,j)}$.
The same argument applies to the projection of $\mathfrak{g}$.

In conclusion, the validity of \cref{eq:qkr} for an arbitrary number of qubits $n'$ in the symmetric subspace $j=n'/2$, established in \cref{eq:CommSub_symmetric}, implies the validity of \cref{eq:qkr} for an arbitrary angular momentum sector $j\le n/2$ of the original $n$-qubit system, after the cutoff is matched according to the relation above.
This completes the proof of \cref{eq:qkr}.
\footnote{As a remark, any circuit synthesis method used to implement a particular unitary acting on the symmetric subspace of $n'$ qubits can be used to implement an equivalent unitary acting on the $j=n'/2$ angular momentum sector of an $n$-qubit system ($n>n'$), when projected to that sector.}

\subsection{\Cref{step:4}: Full characterization of $\mathcal{SV}_z$\\ (Proof of \cref{prop:quasi_semi_universality})}
\label{sec:accidental_sym_restricts}
As we showed in \cref{Sec:failure}, the accidental symmetry of $\HTC$ implies that, when $n\ge 3$, $\mathcal{SV}_z$ is a strict subgroup of $\SVsym$, or equivalently, $\mathfrak{sv}_z$ is a strict subalgebra of $[\g,\g]$.
In particular, in \cref{Sec:failure} we showed that each $V\in\SV_z$ satisfies
\begin{align}
    \pi_{q,j}(V)=\pi_{q',j'}\big(\widetilde{V}\big)\,,
\end{align}
where
\begin{align*}
    q = \frac{n}{2}+2j'-j \quad\text{and}\quad
    q' = \frac{n}{2}-j'+2j \,.
\end{align*}
In this section, we show that these are the only restrictions.
I.e., we prove the converse direction of \cref{prop:quasi_semi_universality}: for any unitary $V\in \SVsym$ that satisfies the above constraints, there exists $\widetilde{V}\in \mathcal{SV}_z$ such that
\begin{align}
    \pi_{q,j}(\widetilde{V})=\pi_{q,j}(V)\,\quad:\quad q\le \qmax\,.
\end{align}
First, note that this statement is equivalent to
\begin{align}\label{Eq:subset}
    \SV_z \sum_{(q,j)\in\Lambda} \Pi_{q,j} \eq \SV^{\Uo\times\Sn} \sum_{(q,j)\in\Lambda} \Pi_{q,j}\,,
\end{align}
where $\Lambda$ is an arbitrary finite subset of pairs $(q,j)$ such that, for each pair related by the accidental symmetry, at most one element of the pair appears in $\Lambda$. 
In other words, the claim is that, when sectors paired by the accidental symmetry are not considered simultaneously, the two groups $\SV_z$ and $\SV^{\Uo\times\Sn}$ are identical, in any finite set of sectors. 

To establish this claim, our key tool is again \cref{lem:semi_universality}. Indeed, in \cref{sec:subsystem_control}, we have already shown condition A in \cref{lem:semi_universality}, namely subsystem universality: \cref{prop:subspace_control} is satisfied when projected to any single sector $\H\qj$.
Furthermore, in \cref{prop:accidental_symmetry_matrix}, we showed that $\HTC$ has different energy variances in any set of sectors with the same dimension, \textit{except} for the accidental-symmetry pairs in \cref{eq:vqj_pairs_prop}.
Therefore, other than these pairs, condition B' of \cref{lem:semi_universality} is satisfied for all other sectors. 
Hence, \cref{lem:semi_universality} implies \cref{Eq:subset}, which proves \cref{prop:quasi_semi_universality}.

\subsection{\cref{step:5}: Constraints on relative phases
(Completing the proof of \cref{thm:full_characterization})}
\label{sec:sufficiency_proof_fullchar}
So far, in this section, we have focused on $\mathcal{SV}_z$, the commutator subgroup of $\mathcal{V}_z$, or equivalently, on $\mathfrak{sv}_z$, the commutator subalgebra of $\vz$.
As we already showed in \cref{sec:commutator_subgroups}, at the Lie-algebraic level this distinction is captured by
\begin{align}
    \vz
    \eq
    \mathfrak{sv}_z
    \oplus
    \left\{
    i\sum_{q,j}
    \Tr(\Pi_{q,j}J_z)
    \frac{\Pi_{q,j}}{\Tr(\Pi_{q,j})}
    \right\} \,.
\end{align}
Next, we explain how one can determine the exact form of unitaries $V\in\mathcal{V}_z$, for example relative to the basis
\begin{align}
    \ket{j,m,\alpha}\otimes \ket{k}_{\mathrm{osc}}\quad:\quad -j\le m\le j,\quad k\ge 0\,.
\end{align}
Equivalently, we determine components
\begin{align*}
    \pi_{q,j}(V)\quad :\quad V\in \mathcal{V}_z,
\end{align*}
where $\pi_{q,j}$ is defined in \cref{eq:vqj_component}.
As we have seen in \cref{eq:Dec}, $V$ is in $\mathcal{V}_z$ if and only if there exists a unitary $W\in\mathcal{SV}_z$ such that
\begin{align}
    V=\exp(i\theta_z J_z)W
\end{align}
for some $\theta_z\in[0,4\pi)$. 
This decomposition implies
\begin{align}
    \pi_{q,j}(V) \eq \pi_{q,j}\big(\exp(i\theta_z J_z)\big)\,\pi_{q,j}(W) \,.
\end{align}
Similarly, the projection of $V$ to a sector $\H_{q',j'}$ paired with $\H\qj$ by the accidental symmetry is given by
\begin{align}
    \pi_{q',j'}(V) \eq \pi_{q',j'}\big(\exp(i\theta_z J_z)\big)\,\pi_{q',j'}(W) \,.
\end{align}
Since $W\in\mathcal{SV}_z$, its actions on sectors paired by the accidental
symmetry are identified:
\begin{align}
    \pi_{q,j}(W)=\pi_{q',j'}(W) \, .
\end{align}
Furthermore, as we have seen in \cref{eq:acc_sym_Jz_matrix},
\[
    \pi\qj(J_z)=\pi_{q',j'}(J_z)+(j'-j)\1 \,,
\]
which implies
\begin{align}
    \pi\qj\big(\exp(i\theta_z J_z)\big)
    = e^{i\theta_z(j'-j)} \pi_{q',j'}\big(\exp(i\theta_z J_z)\big) \, .
\end{align}
Comparing these relations, we conclude that
\begin{align}\label{eq:cond0}
    e^{i\theta_z j}\pi_{q,j}(V) \eq e^{i\theta_z j'}\pi_{q',j'}(V) \,.
\end{align}
Furthermore,
\begin{align}
 \begin{split}
    \det(\pi_{q,j}(V))
    &= \det\big(\pi_{q,j}\big(\exp\big(i\theta_z J_z)\big)\big)\,
    \det\big(\pi_{q,j}(W)\big) \\[4pt]
    &= \det\big(\pi_{q,j}\big(\exp(i\theta_z J_z)\big)\big) \\[4pt]
    &= \exp\!\left(i\theta_z\,\Tr\big(\pi_{q,j}(J_z)\big)\right) \, ,
 \end{split}
\end{align}
where we used $\det(e^B)=e^{\Tr(B)}$ and $\det(\pi_{q,j}(W))=1$ for $W\in\mathcal{SV}_z$.

To determine $\Tr(\pi_{q,j}(J_z))$, recall that $\pi_{q,j}$ denotes the restriction to the subspace spanned by $\ket{j,m,\alpha}\otimes\ket{k}_{\mathrm{osc}}$ with fixed $q$, $j$, and $\alpha$. 
Since
\[
    Q=a^\dag a+J_z+\frac{n}{2}\,,
\]
fixing the charge $q$ implies
\[
    k=q-\frac{n}{2}-m\,.
\]
Thus, the allowed values of $m$ are
\begin{align}
    -j \,\le\, m \,\le\, \min\left\{j,\,q-\frac{n}{2}\right\}
    \,=:\,m_{\max}(q,j) \,.
\end{align}
It follows that
\begin{align}
    \Tr\big(\pi_{q,j}(J_z)\big)
    &= \sum_{m=-j}^{m_{\max}(q,j)} m  \\[6pt]\nonumber
    &= \begin{cases}
        \left(q-\frac{n}{2}+j+1\right) & :\,q<\frac{n}{2}+j\\\hspace{8pt}\times
        \frac{1}{2}\left(q-\frac{n}{2}-j\right)
        \\[6pt]
        0 & :\,q\geq \frac{n}{2}+j\,.
    \end{cases}
\end{align}
Therefore, for all $V\in\mathcal{V}_z$, there is a $\theta_z\in[0,4\pi)$, such that
\begin{align}
    \det(\pi_{q,j}(V))
    &\eq
    \exp\!\big(i\theta_z\,\Tr\big(\pi_{q,j}(J_z)\big)\big)\\[8pt]\nonumber
    &\hspace{-24pt}\eq
    \begin{cases}
        \exp\Big[\frac{i\theta_z}{2}\left(q-\frac{n}{2}+j+1\right) & :\,q<\frac{n}{2}+j\\\hspace{42pt}\times
        \left(q-\frac{n}{2}-j\right)\Big]
        \\[6pt]
        1 & :\,q\geq \frac{n}{2}+j\,,
    \end{cases}
\end{align}
Equivalently, defining
\begin{align}
    \theta_{q,j} \,:=\, \arg\det(v_{q,j}) \, ,
\end{align}
we have
\begin{align}\label{eq:theta}
    \theta_{q,j}
    =
    \theta_z\,\Tr(\pi_{q,j}(J_z))
    \quad \mod 2\pi \,,
\end{align}
or \cref{eq:thm_phase_constraint}.

\vspace{1em}
Next, we prove the converse direction.
Consider a set of unitaries $\{v\qj:q\le \qmax\}$ and an angle $\theta_z\in[0,4\pi)$ satisfying the two conditions in \cref{thm:full_characterization}, namely \cref{eq:cond0} and \cref{eq:theta}.
We show that there exists a unitary $V\in\mathcal{V}_z$ such that
\[
    \pi_{q,j}(V)=v\qj\quad:\quad
    q\le \qmax .
\]
To prove this, first note that there exists a unitary $W\in\Vsym$ such that
\[
    \pi_{q,j}(W)=v\qj \quad:\quad q\le \qmax .
\]
By assumption, the determinant conditions in \cref{eq:theta} hold for the chosen $\theta_z$. 
Now define
\begin{align}
    \widetilde{W}=e^{-i\theta_z J_z}W \,.
\end{align}
Then, for $q\le \qmax$,
\begin{align}
 \begin{split}
    \widetilde{w}\qj := \pi_{q,j}(\widetilde{W})
    &= \exp\big(-i\theta_z\pi_{q,j}(J_z)\big)\,\pi_{q,j}(W) \\[4pt]
    &= \exp\big(-i\theta_z\pi_{q,j}(J_z)\big)\,v\qj \,.
 \end{split}
\end{align}
It follows that
\begin{align}
    \det(\widetilde{w}\qj)
    &= \exp\!\left(-i\theta_z\Tr(\pi_{q,j}(J_z))\right)\det(v\qj)
    =1 \,,
\end{align}
where in the last equality we used \cref{eq:theta}.

Furthermore, the assumption that \cref{eq:cond0} holds for any pair of sectors related by the accidental symmetry implies that
\begin{align}
 \begin{split}
    \widetilde{w}\qj
    &\eq \exp\!\big(-i\theta_z\pi_{q,j}(J_z)\big)\,v\qj\\[4pt]
    &\eq \exp\!\big(-i\theta_z\pi_{q,j}(J_z)\big)\,
    e^{-i\theta_z(j-j')}\,v_{q',j'} \\[4pt]
    &\eq \exp\!\big(-i\theta_z\pi_{q',j'}(J_z)\big)\,v_{q',j'}\\[4pt]
    &\eq \widetilde{w}_{q',j'} \, .
 \end{split}
\end{align}
Here, in the second line we used the relation imposed by the accidental symmetry in \cref{eq:cond0}, and in the third line we used the corresponding relation between the representations of $J_z$ on the two paired sectors.

Therefore, the components $\widetilde{w}\qj$ satisfy the defining conditions for membership in $\SV_z$. 
Applying the second part of the characterization of $\SV_z$ in \cref{prop:quasi_semi_universality}, there exists a unitary
\begin{align*}
    \widetilde{V}\in \SV_z
    =
    \Big\langle
    \exp({-it\HTC}),\,
    \exp({-it\overline{\HTC}})
    \,:\,
    t\in\mathbb{R}
    \Big\rangle
\end{align*}
such that
\[
    \pi_{q,j}(\widetilde{V})=\widetilde{w}\qj\quad:
    \quad q\le \qmax\,.
\]
Equivalently,
\[
    \widetilde{V}\Pi^{q_\text{max}}
    =
    \widetilde{W}\Pi^{q_\text{max}}\,.
\]
Finally, define
\begin{align}
    V = e^{i\theta_z J_z}\widetilde{V} \,.
\end{align}
Since $\widetilde{V}\in\SV_z$ and $e^{i\theta_z J_z}\in\mathcal{V}_z$, we have $V\in\mathcal{V}_z$. 
Moreover, for all $q\le \qmax$,
\begin{align}
    \pi_{q,j}(V)
    &\eq \exp\big(i\theta_z\pi_{q,j}(J_z)\big)\,\pi_{q,j}(\widetilde{V}) \\[4pt]
    &\eq \exp\big(i\theta_z\pi_{q,j}(J_z)\big)\,\widetilde{w}\qj \\[4pt]
    &\eq v\qj \,.
\end{align}
This proves the desired converse statement.

\subsection{\Cref{step:6}: Semi-Universality via $J_z^2$ \\ \hspace{16pt}(Proof of part 2 of \cref{thm:Thm0})}
\label{sec:Jz2_proof}
Finally, we consider the group
\begin{align*}
    \V_z^+ := \left\langle V,\,e^{-itJ_z^2}\,:\,t\in\mathbb{R},\,V\in\mathcal{V}_z\right\rangle\,,
\end{align*}
i.e., the group of unitaries realized using $\HTC$, $J_z$, and $J_z^2$.
In particular, we prove that the commutator subgroup $\SV_z^+:=[\V_z^+,\V_z^+]$ of the group $\V_z^+$ satisfies
\begin{align}
   \SV_z^+ \Pi^{\qmax} \eq \SVsym\Pi^{\qmax} \,,
\end{align}
for any cutoff $\qmax\geq0$.

This follows from \cref{lem:semi_universality}.
First, $\V_z^+$ inherits the subsystem universality condition, namely condition A, which is already satisfied by $\V_z$; see \cref{prop:subspace_control}.
Second, recall that $\V_z$ satisfies the pairwise independence condition, condition B', for all pairs of sectors except those related by the accidental symmetry of $\HTC$.
We now show that, for such exceptional pairs, the Hamiltonian $J_z^2$ satisfies \cref{eq:cond_B'}. 
Therefore, $\V_z^+$ satisfies condition B' for all pairs of sectors.

To prove this, recall from \cref{sec:accidental_sym} that the projections of $J_z$ into two sectors $\H\qj$ and $\H_{q',j'}$ paired by the accidental symmetry differ by a scalar multiple of the identity:
\begin{align} \label{eq:rel}
    \pi\qj(J_z) \eq \pi_{q',j'}(J_z)+(j-j')\1 \,.
\end{align}
Consequently, the  variance of $J_z^2$ with respect to the maximally
mixed state
\[
    \tau\qj\,:=\,\frac{\Pi\qj}{\Tr(\Pi\qj)}
\]
is different in the two sectors.
More explicitly, defining
\[
    \Var\qj(J_z^2)
    \,:=\,
    \Tr\big((J_z^2)^2\tau\qj\big)
    -
    \Tr\big(J_z^2\tau\qj\big)^2
\]
to be the variance of $\pi_{j,q}(J_z^2)$ with respect to the maximally mixed stated $\tau\qj$, we show in \cref{app:Jz2_variance} that
\begin{align}\label{eq:var}
    \Var\qj(J_z^2)-\Var_{{q',j'}}(J_z^2)
    \eq
    \frac{4j'(j'+1)(j'-j)^2}{3} \,.
\end{align}
Since for any pair of sectors $\mathcal{H}\qj$ and $\mathcal{H}_{q',j'}$  connected by the accidental symmetry $j\neq j'$, this implies that
\begin{align}
    \Var_{\tau_{q',j'}}(J_z^2)
    \neq 
    \Var_{\tau\qj}(J_z^2)\,.
\end{align}
Therefore, $J_z^2$ satisfies \cref{eq:cond_B'} in \cref{lem:semi_universality}
for any pair of sectors related by the accidental symmetry. 
Hence, for every pair of sectors, condition B' is satisfied by either $\HTC$ or $J_z^2$.
Thus, $\SV_z^+$ satisfies both conditions in \cref{lem:semi_universality},
which implies that $\SV_z^+$ is semi-universal when projected onto any finite
set of sectors.

The phase constraint for $\V_z^+$, analogous to \cref{eq:thm_phase_constraint} in \cref{thm:full_characterization} for $\V_z$, is given in \cref{app:Jz2_phases}.

\section{Summary \& Discussion}
Recent years have seen rapidly growing interest in quantum control schemes based
on global or collective interactions. 
These include the use of global entangling gates for efficient circuit constructions \cite{Maslov_2018,van_de_Wetering_2021}, globally driven architectures for quantum processors \cite{menta_etal_ZZ_2026,Cioni_etal_2026,Katz_etal_2022,Katz_etal_2023} and recent universality results for globally controlled analog quantum simulators
\cite{hu_global_2026}.

The Tavis-Cummings model is one of the central models for describing the collective interaction of many two-level systems with a single bosonic mode, and it plays an important role in a variety of physical platforms, including cavity and circuit QED, trapped ions, and related light-matter interfaces.
Despite its apparent simplicity and its long history, the structure of the unitaries generated by this interaction is surprisingly subtle. 
In particular, while its familiar conserved quantities explain the decomposition into U(1) charge and angular-momentum sectors, they do not by themselves fully determine the relations between the induced dynamics in different sectors. 
Our results show that the model possesses additional structure, including constraints arising from the accidental symmetry discussed above, which must be taken into account in any complete controllability analysis.

In this paper, we completely characterized the group of unitary time-evolution
operators realized by the Tavis-Cummings interaction $\HTC$ (or, equivalently,  anti Tavis-Cummings $H_{\text{anti-TC}}$) and the global $z$ field $J_z$, when restricted to finite, but arbitrarily large, invariant subspaces of the combined Hilbert space
$\H_{\mathrm{qubits}}\otimes\H_{\mathrm{osc}}$ of $n$ qubits and a single bosonic mode.
Due to the symmetries of these interactions, this Hilbert space decomposes into a direct sum of invariant \textit{sectors}, each corresponding to the mutual eigenspaces of two conserved observables: the total angular momentum $J^2$ and the charge $Q=a^{\dag}a+J_z+n/2$.
Previous work by Keyl et al. \cite{Keyl_2014_control} showed that, when
projected to a single sector in the totally symmetric subspace $j=n/2$ and
with fixed charge $q$, $\HTC$ and $J_z$ realize \textit{all} unitaries within
the sector $\mathcal{H}_{q,j=n/2}$.
We extended that result by characterizing the correlations that appear between
the components of unitaries realized by $\HTC$ and $J_z$ in different sectors.

Building on this sector decomposition, our main result gives a complete finite-cutoff characterization of the group generated by $\HTC$ and $J_z$.
In particular, we showed that the dynamics is fully controllable within each sector, but that the unitaries realized in different sectors are not independent.
Instead, their determinants and, in certain cases, their relative actions are constrained by an accidental symmetry of the Tavis-Cummings interaction.
These constraints completely determine the image of the generated group on any finite collection of sectors.
We also showed that adding the Hamiltonian $J_z^2$ removes the constraints imposed by this accidental symmetry, hence restoring full independence between the sectors.

We further discuss this unexpected symmetry in greater detail in our companion paper \cite{symmetry_paper} and explain how it can be understood via the Schwinger representation of angular momentum.

These control theoretic results are useful for understanding how the Tavis-Cummings model can be used to implement useful operations in different contexts. 
For example, as we showed in \cref{sec:Vswap}, there exists a fixed unitary, realized by $\HTC$ and $J_z$, that can be used to transfer arbitrary pure states between the symmetric subspace of $n$ qubits and the lowest $n+1$ energy levels of a bosonic mode.
This can be used as a state preparation protocol for preparing desired states with finite support in an oscillator, and vice-versa, for preparing symmetric qubit states.
This method is exact and requires only collective control of the qubits, which  means it is \textit{potentially} easier to implement and/or more efficient, compared to existing methods for preparing bosonic states.
Such methods are of interest for quantum information processing using bosonic modes -- and more generally hybrid continuous-variable discrete-variable (CV-DV) systems -- a framework which  has several advantages over qubit-only (discrete) systems \cite{liu_etal_review_2024}.

Furthermore, building on our characterization of the group of realizable unitaries generated by $J_z$ and $\HTC$ in \cref{thm:full_characterization}, we addressed in \cref{lem:Jz_conserving} the following important question: \emph{assuming that the bosonic mode is initialized in the vacuum state and returned to the vacuum state at the end of the protocol, what is the set of $n$-qubit unitaries that can be realized using only $\HTC$ and $J_z$?}
Our result provides a complete characterization of this set.

In our companion paper \cite{circuit_paper}, we build on these results to investigate applications to quantum computing. 
In particular, we show that adding a global $x$ field makes it possible to realize all permutation-invariant (PI) $n$-qubit unitaries. 
Equivalently, arbitrary PI $n$-qubit unitaries can be realized by coupling the qubits to a bosonic mode initialized in its vacuum state, using the Tavis-Cummings interaction together with global $z$ and $x$ fields.

These results imply the existence of novel methods for exactly implementing $n$-qubit PI entangling gates using the Tavis-Cummings \textit{or} anti-Tavis-Cummings interaction and collective control fields. 
In \cite{circuit_paper}, we also present exact pulse sequences, or equivalently
quantum circuits, for realizing arbitrary two-qubit PI unitaries in this way,
including several useful gates such as controlled-$Z$ (CZ) and SWAP.\\

\section*{Acknowledgments}
This work was supported in part by a collaboration between the U.S. Department of Energy and other agencies.
This material is based upon work supported by the U.S. Department of Energy, Office of Science, National Quantum Information Science Research Centers, Quantum Systems Accelerator (Award No. DE-SCL0000121).
Additional support is acknowledged from DOE Office of Advanced Scientific Computing Research, under Award No. DE-SC0026321, NSF PHY-2046195, NSF QLCI grant OMA-2120757, and ARL-ARO QCISS Grant number W911NF-21-1-0005.
We thank David Jakab for helpful discussions during the early stages of this project.

\bibliography{main_bib}

\begin{thebibliography}{53}%
\makeatletter
\providecommand \@ifxundefined [1]{%
 \@ifx{#1\undefined}
}%
\providecommand \@ifnum [1]{%
 \ifnum #1\expandafter \@firstoftwo
 \else \expandafter \@secondoftwo
 \fi
}%
\providecommand \@ifx [1]{%
 \ifx #1\expandafter \@firstoftwo
 \else \expandafter \@secondoftwo
 \fi
}%
\providecommand \natexlab [1]{#1}%
\providecommand \enquote  [1]{``#1''}%
\providecommand \bibnamefont  [1]{#1}%
\providecommand \bibfnamefont [1]{#1}%
\providecommand \citenamefont [1]{#1}%
\providecommand \href@noop [0]{\@secondoftwo}%
\providecommand \href [0]{\begingroup \@sanitize@url \@href}%
\providecommand \@href[1]{\@@startlink{#1}\@@href}%
\providecommand \@@href[1]{\endgroup#1\@@endlink}%
\providecommand \@sanitize@url [0]{\catcode `\\12\catcode `\$12\catcode
  `\&12\catcode `\#12\catcode `\^12\catcode `\_12\catcode `\%12\relax}%
\providecommand \@@startlink[1]{}%
\providecommand \@@endlink[0]{}%
\providecommand \url  [0]{\begingroup\@sanitize@url \@url }%
\providecommand \@url [1]{\endgroup\@href {#1}{\urlprefix }}%
\providecommand \urlprefix  [0]{URL }%
\providecommand \Eprint [0]{\href }%
\providecommand \doibase [0]{https://doi.org/}%
\providecommand \selectlanguage [0]{\@gobble}%
\providecommand \bibinfo  [0]{\@secondoftwo}%
\providecommand \bibfield  [0]{\@secondoftwo}%
\providecommand \translation [1]{[#1]}%
\providecommand \BibitemOpen [0]{}%
\providecommand \bibitemStop [0]{}%
\providecommand \bibitemNoStop [0]{.\EOS\space}%
\providecommand \EOS [0]{\spacefactor3000\relax}%
\providecommand \BibitemShut  [1]{\csname bibitem#1\endcsname}%
\let\auto@bib@innerbib\@empty
\bibitem [{\citenamefont {Lloyd}(1993)}]{Lloyd_1993}%
  \BibitemOpen
  \bibfield  {author} {\bibinfo {author} {\bibfnamefont {S.}~\bibnamefont
  {Lloyd}},\ }\bibfield  {title} {\bibinfo {title} {A potentially realizable
  quantum computer},\ }\href {https://doi.org/10.1126/science.261.5128.1569}
  {\bibfield  {journal} {\bibinfo  {journal} {Science}\ }\textbf {\bibinfo
  {volume} {261}},\ \bibinfo {pages} {1569} (\bibinfo {year} {1993})},\ \Eprint
  {https://arxiv.org/abs/https://www.science.org/doi/pdf/10.1126/science.261.5128.1569}
  {https://www.science.org/doi/pdf/10.1126/science.261.5128.1569} \BibitemShut
  {NoStop}%
\bibitem [{\citenamefont {Benjamin}(2001)}]{Benjamin_2001}%
  \BibitemOpen
  \bibfield  {author} {\bibinfo {author} {\bibfnamefont {S.~C.}\ \bibnamefont
  {Benjamin}},\ }\bibfield  {title} {\bibinfo {title} {Quantum computing
  without local control of qubit-qubit interactions},\ }\href
  {https://doi.org/10.1103/PhysRevLett.88.017904} {\bibfield  {journal}
  {\bibinfo  {journal} {Phys. Rev. Lett.}\ }\textbf {\bibinfo {volume} {88}},\
  \bibinfo {pages} {017904} (\bibinfo {year} {2001})}\BibitemShut {NoStop}%
\bibitem [{\citenamefont {Benjamin}(2000)}]{Benjamin_2000}%
  \BibitemOpen
  \bibfield  {author} {\bibinfo {author} {\bibfnamefont {S.~C.}\ \bibnamefont
  {Benjamin}},\ }\bibfield  {title} {\bibinfo {title} {Schemes for parallel
  quantum computation without local control of qubits},\ }\href
  {https://doi.org/10.1103/PhysRevA.61.020301} {\bibfield  {journal} {\bibinfo
  {journal} {Phys. Rev. A}\ }\textbf {\bibinfo {volume} {61}},\ \bibinfo
  {pages} {020301} (\bibinfo {year} {2000})}\BibitemShut {NoStop}%
\bibitem [{\citenamefont {Fitzsimons}\ \emph {et~al.}(2007)\citenamefont
  {Fitzsimons}, \citenamefont {Xiao}, \citenamefont {Benjamin},\ and\
  \citenamefont {Jones}}]{Fitzsimons_etal_2007}%
  \BibitemOpen
  \bibfield  {author} {\bibinfo {author} {\bibfnamefont {J.}~\bibnamefont
  {Fitzsimons}}, \bibinfo {author} {\bibfnamefont {L.}~\bibnamefont {Xiao}},
  \bibinfo {author} {\bibfnamefont {S.~C.}\ \bibnamefont {Benjamin}},\ and\
  \bibinfo {author} {\bibfnamefont {J.~A.}\ \bibnamefont {Jones}},\ }\bibfield
  {title} {\bibinfo {title} {Quantum information processing with delocalized
  qubits under global control},\ }\href
  {https://doi.org/10.1103/PhysRevLett.99.030501} {\bibfield  {journal}
  {\bibinfo  {journal} {Phys. Rev. Lett.}\ }\textbf {\bibinfo {volume} {99}},\
  \bibinfo {pages} {030501} (\bibinfo {year} {2007})}\BibitemShut {NoStop}%
\bibitem [{\citenamefont {Raussendorf}(2005)}]{Raussendorf_2005}%
  \BibitemOpen
  \bibfield  {author} {\bibinfo {author} {\bibfnamefont {R.}~\bibnamefont
  {Raussendorf}},\ }\bibfield  {title} {\bibinfo {title} {Quantum computation
  via translation-invariant operations on a chain of qubits},\ }\href
  {https://doi.org/10.1103/PhysRevA.72.052301} {\bibfield  {journal} {\bibinfo
  {journal} {Phys. Rev. A}\ }\textbf {\bibinfo {volume} {72}},\ \bibinfo
  {pages} {052301} (\bibinfo {year} {2005})}\BibitemShut {NoStop}%
\bibitem [{\citenamefont {Kay}\ and\ \citenamefont
  {Pachos}(2004)}]{Kay_Pachos_2004}%
  \BibitemOpen
  \bibfield  {author} {\bibinfo {author} {\bibfnamefont {A.}~\bibnamefont
  {Kay}}\ and\ \bibinfo {author} {\bibfnamefont {J.~K.}\ \bibnamefont
  {Pachos}},\ }\bibfield  {title} {\bibinfo {title} {Quantum computation in
  optical lattices via global laser addressing},\ }\href
  {https://doi.org/10.1088/1367-2630/6/1/126} {\bibfield  {journal} {\bibinfo
  {journal} {New Journal of Physics}\ }\textbf {\bibinfo {volume} {6}},\
  \bibinfo {pages} {126} (\bibinfo {year} {2004})}\BibitemShut {NoStop}%
\bibitem [{\citenamefont {Cesa}\ and\ \citenamefont
  {Pichler}(2023)}]{cesa2023universal}%
  \BibitemOpen
  \bibfield  {author} {\bibinfo {author} {\bibfnamefont {F.}~\bibnamefont
  {Cesa}}\ and\ \bibinfo {author} {\bibfnamefont {H.}~\bibnamefont {Pichler}},\
  }\bibfield  {title} {\bibinfo {title} {Universal quantum computation in
  globally driven rydberg atom arrays},\ }\href@noop {} {\bibfield  {journal}
  {\bibinfo  {journal} {Physical Review Letters}\ }\textbf {\bibinfo {volume}
  {131}},\ \bibinfo {pages} {170601} (\bibinfo {year} {2023})}\BibitemShut
  {NoStop}%
\bibitem [{\citenamefont {Menta}\ \emph
  {et~al.}(2026{\natexlab{a}})\citenamefont {Menta}, \citenamefont {Cioni},
  \citenamefont {Aiudi}, \citenamefont {Polini},\ and\ \citenamefont
  {Giovannetti}}]{menta_etal_actuators_2026}%
  \BibitemOpen
  \bibfield  {author} {\bibinfo {author} {\bibfnamefont {R.}~\bibnamefont
  {Menta}}, \bibinfo {author} {\bibfnamefont {F.}~\bibnamefont {Cioni}},
  \bibinfo {author} {\bibfnamefont {R.}~\bibnamefont {Aiudi}}, \bibinfo
  {author} {\bibfnamefont {M.}~\bibnamefont {Polini}},\ and\ \bibinfo {author}
  {\bibfnamefont {V.}~\bibnamefont {Giovannetti}},\ }\href
  {https://arxiv.org/abs/2603.23362} {\bibinfo {title} {Global control via
  quantum actuators}} (\bibinfo {year} {2026}{\natexlab{a}}),\ \Eprint
  {https://arxiv.org/abs/2603.23362} {arXiv:2603.23362 [quant-ph]} \BibitemShut
  {NoStop}%
\bibitem [{\citenamefont {Leibfried}\ \emph {et~al.}(2003)\citenamefont
  {Leibfried}, \citenamefont {Blatt}, \citenamefont {Monroe},\ and\
  \citenamefont {Wineland}}]{Leibfried_2003_trappedion}%
  \BibitemOpen
  \bibfield  {author} {\bibinfo {author} {\bibfnamefont {D.}~\bibnamefont
  {Leibfried}}, \bibinfo {author} {\bibfnamefont {R.}~\bibnamefont {Blatt}},
  \bibinfo {author} {\bibfnamefont {C.}~\bibnamefont {Monroe}},\ and\ \bibinfo
  {author} {\bibfnamefont {D.}~\bibnamefont {Wineland}},\ }\bibfield  {title}
  {\bibinfo {title} {Quantum dynamics of single trapped ions},\ }\href
  {https://doi.org/10.1103/RevModPhys.75.281} {\bibfield  {journal} {\bibinfo
  {journal} {Rev. Mod. Phys.}\ }\textbf {\bibinfo {volume} {75}},\ \bibinfo
  {pages} {281} (\bibinfo {year} {2003})}\BibitemShut {NoStop}%
\bibitem [{\citenamefont {Häffner}\ \emph {et~al.}(2008)\citenamefont
  {Häffner}, \citenamefont {Roos},\ and\ \citenamefont
  {Blatt}}]{Haffner_2008_trappedion}%
  \BibitemOpen
  \bibfield  {author} {\bibinfo {author} {\bibfnamefont {H.}~\bibnamefont
  {Häffner}}, \bibinfo {author} {\bibfnamefont {C.}~\bibnamefont {Roos}},\
  and\ \bibinfo {author} {\bibfnamefont {R.}~\bibnamefont {Blatt}},\ }\bibfield
   {title} {\bibinfo {title} {Quantum computing with trapped ions},\ }\href
  {https://doi.org/https://doi.org/10.1016/j.physrep.2008.09.003} {\bibfield
  {journal} {\bibinfo  {journal} {Physics Reports}\ }\textbf {\bibinfo {volume}
  {469}},\ \bibinfo {pages} {155} (\bibinfo {year} {2008})}\BibitemShut
  {NoStop}%
\bibitem [{\citenamefont {Blais}\ \emph {et~al.}(2004)\citenamefont {Blais},
  \citenamefont {Huang}, \citenamefont {Wallraff}, \citenamefont {Girvin},\
  and\ \citenamefont {Schoelkopf}}]{Blais_2004_cQED}%
  \BibitemOpen
  \bibfield  {author} {\bibinfo {author} {\bibfnamefont {A.}~\bibnamefont
  {Blais}}, \bibinfo {author} {\bibfnamefont {R.-S.}\ \bibnamefont {Huang}},
  \bibinfo {author} {\bibfnamefont {A.}~\bibnamefont {Wallraff}}, \bibinfo
  {author} {\bibfnamefont {S.~M.}\ \bibnamefont {Girvin}},\ and\ \bibinfo
  {author} {\bibfnamefont {R.~J.}\ \bibnamefont {Schoelkopf}},\ }\bibfield
  {title} {\bibinfo {title} {Cavity quantum electrodynamics for superconducting
  electrical circuits: An architecture for quantum computation},\ }\href
  {https://doi.org/10.1103/PhysRevA.69.062320} {\bibfield  {journal} {\bibinfo
  {journal} {Phys. Rev. A}\ }\textbf {\bibinfo {volume} {69}},\ \bibinfo
  {pages} {062320} (\bibinfo {year} {2004})}\BibitemShut {NoStop}%
\bibitem [{\citenamefont {Wallraff}\ \emph {et~al.}(2004)\citenamefont
  {Wallraff}, \citenamefont {Schuster}, \citenamefont {Blais}, \citenamefont
  {Frunzio}, \citenamefont {Huang}, \citenamefont {Majer}, \citenamefont
  {Kumar}, \citenamefont {Girvin},\ and\ \citenamefont
  {Schoelkopf}}]{Wallraff_2004_cQED}%
  \BibitemOpen
  \bibfield  {author} {\bibinfo {author} {\bibfnamefont {A.}~\bibnamefont
  {Wallraff}}, \bibinfo {author} {\bibfnamefont {D.~I.}\ \bibnamefont
  {Schuster}}, \bibinfo {author} {\bibfnamefont {A.}~\bibnamefont {Blais}},
  \bibinfo {author} {\bibfnamefont {L.}~\bibnamefont {Frunzio}}, \bibinfo
  {author} {\bibfnamefont {R.-S.}\ \bibnamefont {Huang}}, \bibinfo {author}
  {\bibfnamefont {J.}~\bibnamefont {Majer}}, \bibinfo {author} {\bibfnamefont
  {S.}~\bibnamefont {Kumar}}, \bibinfo {author} {\bibfnamefont {S.~M.}\
  \bibnamefont {Girvin}},\ and\ \bibinfo {author} {\bibfnamefont {R.~J.}\
  \bibnamefont {Schoelkopf}},\ }\bibfield  {title} {\bibinfo {title} {Strong
  coupling of a single photon to a superconducting qubit using circuit quantum
  electrodynamics},\ }\href {https://doi.org/10.1038/nature02851} {\bibfield
  {journal} {\bibinfo  {journal} {Nature}\ }\textbf {\bibinfo {volume} {431}},\
  \bibinfo {pages} {162} (\bibinfo {year} {2004})}\BibitemShut {NoStop}%
\bibitem [{\citenamefont {Raimond}\ \emph {et~al.}(2001)\citenamefont
  {Raimond}, \citenamefont {Brune},\ and\ \citenamefont
  {Haroche}}]{Raimond_2001_cavity}%
  \BibitemOpen
  \bibfield  {author} {\bibinfo {author} {\bibfnamefont {J.~M.}\ \bibnamefont
  {Raimond}}, \bibinfo {author} {\bibfnamefont {M.}~\bibnamefont {Brune}},\
  and\ \bibinfo {author} {\bibfnamefont {S.}~\bibnamefont {Haroche}},\
  }\bibfield  {title} {\bibinfo {title} {Manipulating quantum entanglement with
  atoms and photons in a cavity},\ }\href
  {https://doi.org/10.1103/RevModPhys.73.565} {\bibfield  {journal} {\bibinfo
  {journal} {Rev. Mod. Phys.}\ }\textbf {\bibinfo {volume} {73}},\ \bibinfo
  {pages} {565} (\bibinfo {year} {2001})}\BibitemShut {NoStop}%
\bibitem [{\citenamefont {Walther}\ \emph {et~al.}(2006)\citenamefont
  {Walther}, \citenamefont {Varcoe}, \citenamefont {Englert},\ and\
  \citenamefont {Becker}}]{Walther_2006_cavity}%
  \BibitemOpen
  \bibfield  {author} {\bibinfo {author} {\bibfnamefont {H.}~\bibnamefont
  {Walther}}, \bibinfo {author} {\bibfnamefont {B.~T.~H.}\ \bibnamefont
  {Varcoe}}, \bibinfo {author} {\bibfnamefont {B.-G.}\ \bibnamefont
  {Englert}},\ and\ \bibinfo {author} {\bibfnamefont {T.}~\bibnamefont
  {Becker}},\ }\bibfield  {title} {\bibinfo {title} {Cavity quantum
  electrodynamics},\ }\href {https://doi.org/10.1088/0034-4885/69/5/R02}
  {\bibfield  {journal} {\bibinfo  {journal} {Reports on Progress in Physics}\
  }\textbf {\bibinfo {volume} {69}},\ \bibinfo {pages} {1325} (\bibinfo {year}
  {2006})}\BibitemShut {NoStop}%
\bibitem [{\citenamefont {Varcoe}\ \emph {et~al.}(2000)\citenamefont {Varcoe},
  \citenamefont {Brattke}, \citenamefont {Weidinger},\ and\ \citenamefont
  {Walther}}]{Varcoe_2000_cavity}%
  \BibitemOpen
  \bibfield  {author} {\bibinfo {author} {\bibfnamefont {B.~T.~H.}\
  \bibnamefont {Varcoe}}, \bibinfo {author} {\bibfnamefont {S.}~\bibnamefont
  {Brattke}}, \bibinfo {author} {\bibfnamefont {M.}~\bibnamefont {Weidinger}},\
  and\ \bibinfo {author} {\bibfnamefont {H.}~\bibnamefont {Walther}},\
  }\bibfield  {title} {\bibinfo {title} {Preparing pure photon number states of
  the radiation field},\ }\href {https://doi.org/10.1038/35001526} {\bibfield
  {journal} {\bibinfo  {journal} {Nature}\ }\textbf {\bibinfo {volume} {403}},\
  \bibinfo {pages} {743} (\bibinfo {year} {2000})}\BibitemShut {NoStop}%
\bibitem [{\citenamefont {Jaynes}\ and\ \citenamefont
  {Cummings}(1963)}]{Jaynes_Cummings_1963}%
  \BibitemOpen
  \bibfield  {author} {\bibinfo {author} {\bibfnamefont {E.}~\bibnamefont
  {Jaynes}}\ and\ \bibinfo {author} {\bibfnamefont {F.}~\bibnamefont
  {Cummings}},\ }\bibfield  {title} {\bibinfo {title} {Comparison of quantum
  and semiclassical radiation theories with application to the beam maser},\
  }\href {https://doi.org/10.1109/PROC.1963.1664} {\bibfield  {journal}
  {\bibinfo  {journal} {Proceedings of the IEEE}\ }\textbf {\bibinfo {volume}
  {51}},\ \bibinfo {pages} {89} (\bibinfo {year} {1963})}\BibitemShut {NoStop}%
\bibitem [{\citenamefont {Larson}\ and\ \citenamefont
  {Mavrogordatos}(2021)}]{JC_history}%
  \BibitemOpen
  \bibfield  {author} {\bibinfo {author} {\bibfnamefont {J.}~\bibnamefont
  {Larson}}\ and\ \bibinfo {author} {\bibfnamefont {T.}~\bibnamefont
  {Mavrogordatos}},\ }\href {https://doi.org/10.1088/978-0-7503-3447-1} {\emph
  {\bibinfo {title} {The Jaynes–Cummings Model and Its Descendants}}},\
  2053-2563\ (\bibinfo  {publisher} {IOP Publishing},\ \bibinfo {year}
  {2021})\BibitemShut {NoStop}%
\bibitem [{\citenamefont {Childs}\ and\ \citenamefont
  {Chuang}(2000)}]{Childs_Chuang_2000}%
  \BibitemOpen
  \bibfield  {author} {\bibinfo {author} {\bibfnamefont {A.~M.}\ \bibnamefont
  {Childs}}\ and\ \bibinfo {author} {\bibfnamefont {I.~L.}\ \bibnamefont
  {Chuang}},\ }\bibfield  {title} {\bibinfo {title} {Universal quantum
  computation with two-level trapped ions},\ }\href
  {https://doi.org/10.1103/PhysRevA.63.012306} {\bibfield  {journal} {\bibinfo
  {journal} {Phys. Rev. A}\ }\textbf {\bibinfo {volume} {63}},\ \bibinfo
  {pages} {012306} (\bibinfo {year} {2000})}\BibitemShut {NoStop}%
\bibitem [{\citenamefont {Yuan}\ and\ \citenamefont
  {Lloyd}(2007)}]{Yuan_Lloyd_2007}%
  \BibitemOpen
  \bibfield  {author} {\bibinfo {author} {\bibfnamefont {H.}~\bibnamefont
  {Yuan}}\ and\ \bibinfo {author} {\bibfnamefont {S.}~\bibnamefont {Lloyd}},\
  }\bibfield  {title} {\bibinfo {title} {Controllability of the coupled
  spin-$\frac{1}{2}$ harmonic oscillator system},\ }\href
  {https://doi.org/10.1103/PhysRevA.75.052331} {\bibfield  {journal} {\bibinfo
  {journal} {Phys. Rev. A}\ }\textbf {\bibinfo {volume} {75}},\ \bibinfo
  {pages} {052331} (\bibinfo {year} {2007})}\BibitemShut {NoStop}%
\bibitem [{\citenamefont {Cirac}\ and\ \citenamefont
  {Zoller}(1995)}]{Cirac_Zoller_1995}%
  \BibitemOpen
  \bibfield  {author} {\bibinfo {author} {\bibfnamefont {J.~I.}\ \bibnamefont
  {Cirac}}\ and\ \bibinfo {author} {\bibfnamefont {P.}~\bibnamefont {Zoller}},\
  }\bibfield  {title} {\bibinfo {title} {Quantum computations with cold trapped
  ions},\ }\href {https://doi.org/10.1103/PhysRevLett.74.4091} {\bibfield
  {journal} {\bibinfo  {journal} {Phys. Rev. Lett.}\ }\textbf {\bibinfo
  {volume} {74}},\ \bibinfo {pages} {4091} (\bibinfo {year}
  {1995})}\BibitemShut {NoStop}%
\bibitem [{\citenamefont {Tavis}\ and\ \citenamefont
  {Cummings}(1968)}]{tavis_1968_exact_solut}%
  \BibitemOpen
  \bibfield  {author} {\bibinfo {author} {\bibfnamefont {M.}~\bibnamefont
  {Tavis}}\ and\ \bibinfo {author} {\bibfnamefont {F.~W.}\ \bibnamefont
  {Cummings}},\ }\bibfield  {title} {\bibinfo {title} {Exact solution for an
  n-molecule—radiation-field hamiltonian},\ }\href
  {https://doi.org/10.1103/physrev.170.379} {\bibfield  {journal} {\bibinfo
  {journal} {Physical Review}\ }\textbf {\bibinfo {volume} {170}},\ \bibinfo
  {pages} {379–384} (\bibinfo {year} {1968})}\BibitemShut {NoStop}%
\bibitem [{\citenamefont {Tavis}\ and\ \citenamefont
  {Cummings}(1969)}]{TC2_1969}%
  \BibitemOpen
  \bibfield  {author} {\bibinfo {author} {\bibfnamefont {M.}~\bibnamefont
  {Tavis}}\ and\ \bibinfo {author} {\bibfnamefont {F.~W.}\ \bibnamefont
  {Cummings}},\ }\bibfield  {title} {\bibinfo {title} {Approximate solutions
  for an $n$-molecule-radiation-field hamiltonian},\ }\href
  {https://doi.org/10.1103/PhysRev.188.692} {\bibfield  {journal} {\bibinfo
  {journal} {Phys. Rev.}\ }\textbf {\bibinfo {volume} {188}},\ \bibinfo {pages}
  {692} (\bibinfo {year} {1969})}\BibitemShut {NoStop}%
\bibitem [{\citenamefont {Bashir}\ and\ \citenamefont {{Sebawe
  Abdalla}}(1995)}]{Bashir_Abdalla_1995}%
  \BibitemOpen
  \bibfield  {author} {\bibinfo {author} {\bibfnamefont {M.}~\bibnamefont
  {Bashir}}\ and\ \bibinfo {author} {\bibfnamefont {M.}~\bibnamefont {{Sebawe
  Abdalla}}},\ }\bibfield  {title} {\bibinfo {title} {The most general solution
  for the wave equation of the transformed tavis-cummings model},\ }\href
  {https://doi.org/https://doi.org/10.1016/0375-9601(95)00469-J} {\bibfield
  {journal} {\bibinfo  {journal} {Physics Letters A}\ }\textbf {\bibinfo
  {volume} {204}},\ \bibinfo {pages} {21} (\bibinfo {year} {1995})}\BibitemShut
  {NoStop}%
\bibitem [{\citenamefont {Bogoliubov}\ \emph {et~al.}(1996)\citenamefont
  {Bogoliubov}, \citenamefont {Bullough},\ and\ \citenamefont
  {Timonen}}]{Bogoliubov_etal_1996}%
  \BibitemOpen
  \bibfield  {author} {\bibinfo {author} {\bibfnamefont {N.~M.}\ \bibnamefont
  {Bogoliubov}}, \bibinfo {author} {\bibfnamefont {R.~K.}\ \bibnamefont
  {Bullough}},\ and\ \bibinfo {author} {\bibfnamefont {J.}~\bibnamefont
  {Timonen}},\ }\bibfield  {title} {\bibinfo {title} {Exact solution of
  generalized tavis - cummings models in quantum optics},\ }\href
  {https://doi.org/10.1088/0305-4470/29/19/015} {\bibfield  {journal} {\bibinfo
   {journal} {Journal of Physics A: Mathematical and General}\ }\textbf
  {\bibinfo {volume} {29}},\ \bibinfo {pages} {6305} (\bibinfo {year}
  {1996})}\BibitemShut {NoStop}%
\bibitem [{\citenamefont {Rybin}\ \emph {et~al.}(1998)\citenamefont {Rybin},
  \citenamefont {Kastelewicz}, \citenamefont {Timonen},\ and\ \citenamefont
  {Bogoliubov}}]{Rybin_etal_1998}%
  \BibitemOpen
  \bibfield  {author} {\bibinfo {author} {\bibfnamefont {A.}~\bibnamefont
  {Rybin}}, \bibinfo {author} {\bibfnamefont {G.}~\bibnamefont {Kastelewicz}},
  \bibinfo {author} {\bibfnamefont {J.}~\bibnamefont {Timonen}},\ and\ \bibinfo
  {author} {\bibfnamefont {N.}~\bibnamefont {Bogoliubov}},\ }\bibfield  {title}
  {\bibinfo {title} {The su(1,1) tavis-cummings model},\ }\href
  {https://doi.org/10.1088/0305-4470/31/20/009} {\bibfield  {journal} {\bibinfo
   {journal} {Journal of Physics A: Mathematical and General}\ }\textbf
  {\bibinfo {volume} {31}},\ \bibinfo {pages} {4705} (\bibinfo {year}
  {1998})}\BibitemShut {NoStop}%
\bibitem [{\citenamefont {Tessier}\ \emph {et~al.}(2003)\citenamefont
  {Tessier}, \citenamefont {Deutsch}, \citenamefont {Delgado},\ and\
  \citenamefont {Fuentes-Guridi}}]{Tessier_etal_2003}%
  \BibitemOpen
  \bibfield  {author} {\bibinfo {author} {\bibfnamefont {T.~E.}\ \bibnamefont
  {Tessier}}, \bibinfo {author} {\bibfnamefont {I.~H.}\ \bibnamefont
  {Deutsch}}, \bibinfo {author} {\bibfnamefont {A.}~\bibnamefont {Delgado}},\
  and\ \bibinfo {author} {\bibfnamefont {I.}~\bibnamefont {Fuentes-Guridi}},\
  }\bibfield  {title} {\bibinfo {title} {Entanglement sharing in the two-atom
  tavis-cummings model},\ }\href {https://doi.org/10.1103/PhysRevA.68.062316}
  {\bibfield  {journal} {\bibinfo  {journal} {Phys. Rev. A}\ }\textbf {\bibinfo
  {volume} {68}},\ \bibinfo {pages} {062316} (\bibinfo {year}
  {2003})}\BibitemShut {NoStop}%
\bibitem [{\citenamefont {Vadeiko}\ \emph {et~al.}(2003)\citenamefont
  {Vadeiko}, \citenamefont {Miroshnichenko}, \citenamefont {Rybin},\ and\
  \citenamefont {Timonen}}]{Vadeiko_etal_2003}%
  \BibitemOpen
  \bibfield  {author} {\bibinfo {author} {\bibfnamefont {I.~P.}\ \bibnamefont
  {Vadeiko}}, \bibinfo {author} {\bibfnamefont {G.~P.}\ \bibnamefont
  {Miroshnichenko}}, \bibinfo {author} {\bibfnamefont {A.~V.}\ \bibnamefont
  {Rybin}},\ and\ \bibinfo {author} {\bibfnamefont {J.}~\bibnamefont
  {Timonen}},\ }\bibfield  {title} {\bibinfo {title} {Algebraic approach to the
  tavis-cummings problem},\ }\href {https://doi.org/10.1103/PhysRevA.67.053808}
  {\bibfield  {journal} {\bibinfo  {journal} {Phys. Rev. A}\ }\textbf {\bibinfo
  {volume} {67}},\ \bibinfo {pages} {053808} (\bibinfo {year}
  {2003})}\BibitemShut {NoStop}%
\bibitem [{\citenamefont {Genes}\ \emph {et~al.}(2003)\citenamefont {Genes},
  \citenamefont {Berman},\ and\ \citenamefont {Rojo}}]{Genes_etal_2003}%
  \BibitemOpen
  \bibfield  {author} {\bibinfo {author} {\bibfnamefont {C.}~\bibnamefont
  {Genes}}, \bibinfo {author} {\bibfnamefont {P.~R.}\ \bibnamefont {Berman}},\
  and\ \bibinfo {author} {\bibfnamefont {A.~G.}\ \bibnamefont {Rojo}},\
  }\bibfield  {title} {\bibinfo {title} {Spin squeezing via atom-cavity field
  coupling},\ }\href {https://doi.org/10.1103/PhysRevA.68.043809} {\bibfield
  {journal} {\bibinfo  {journal} {Phys. Rev. A}\ }\textbf {\bibinfo {volume}
  {68}},\ \bibinfo {pages} {043809} (\bibinfo {year} {2003})}\BibitemShut
  {NoStop}%
\bibitem [{\citenamefont {Fink}\ \emph {et~al.}(2009)\citenamefont {Fink},
  \citenamefont {Bianchetti}, \citenamefont {Baur}, \citenamefont {G\"oppl},
  \citenamefont {Steffen}, \citenamefont {Filipp}, \citenamefont {Leek},
  \citenamefont {Blais},\ and\ \citenamefont {Wallraff}}]{Fink_etal_2009}%
  \BibitemOpen
  \bibfield  {author} {\bibinfo {author} {\bibfnamefont {J.~M.}\ \bibnamefont
  {Fink}}, \bibinfo {author} {\bibfnamefont {R.}~\bibnamefont {Bianchetti}},
  \bibinfo {author} {\bibfnamefont {M.}~\bibnamefont {Baur}}, \bibinfo {author}
  {\bibfnamefont {M.}~\bibnamefont {G\"oppl}}, \bibinfo {author} {\bibfnamefont
  {L.}~\bibnamefont {Steffen}}, \bibinfo {author} {\bibfnamefont
  {S.}~\bibnamefont {Filipp}}, \bibinfo {author} {\bibfnamefont {P.~J.}\
  \bibnamefont {Leek}}, \bibinfo {author} {\bibfnamefont {A.}~\bibnamefont
  {Blais}},\ and\ \bibinfo {author} {\bibfnamefont {A.}~\bibnamefont
  {Wallraff}},\ }\bibfield  {title} {\bibinfo {title} {Dressed collective qubit
  states and the tavis-cummings model in circuit qed},\ }\href
  {https://doi.org/10.1103/PhysRevLett.103.083601} {\bibfield  {journal}
  {\bibinfo  {journal} {Phys. Rev. Lett.}\ }\textbf {\bibinfo {volume} {103}},\
  \bibinfo {pages} {083601} (\bibinfo {year} {2009})}\BibitemShut {NoStop}%
\bibitem [{\citenamefont {Agarwal}\ \emph {et~al.}(2012)\citenamefont
  {Agarwal}, \citenamefont {Rafsanjani},\ and\ \citenamefont
  {Eberly}}]{Agarwal_etal_2012}%
  \BibitemOpen
  \bibfield  {author} {\bibinfo {author} {\bibfnamefont {S.}~\bibnamefont
  {Agarwal}}, \bibinfo {author} {\bibfnamefont {S.~M.~H.}\ \bibnamefont
  {Rafsanjani}},\ and\ \bibinfo {author} {\bibfnamefont {J.~H.}\ \bibnamefont
  {Eberly}},\ }\bibfield  {title} {\bibinfo {title} {Tavis-cummings model
  beyond the rotating wave approximation: Quasidegenerate qubits},\ }\href
  {https://doi.org/10.1103/PhysRevA.85.043815} {\bibfield  {journal} {\bibinfo
  {journal} {Phys. Rev. A}\ }\textbf {\bibinfo {volume} {85}},\ \bibinfo
  {pages} {043815} (\bibinfo {year} {2012})}\BibitemShut {NoStop}%
\bibitem [{\citenamefont {Zou}\ \emph {et~al.}(2013)\citenamefont {Zou},
  \citenamefont {Liu}, \citenamefont {Feng}, \citenamefont {Yang},
  \citenamefont {Chen},\ and\ \citenamefont {Twamley}}]{Zou_etal_2013}%
  \BibitemOpen
  \bibfield  {author} {\bibinfo {author} {\bibfnamefont {J.~H.}\ \bibnamefont
  {Zou}}, \bibinfo {author} {\bibfnamefont {T.}~\bibnamefont {Liu}}, \bibinfo
  {author} {\bibfnamefont {M.}~\bibnamefont {Feng}}, \bibinfo {author}
  {\bibfnamefont {W.~L.}\ \bibnamefont {Yang}}, \bibinfo {author}
  {\bibfnamefont {C.~Y.}\ \bibnamefont {Chen}},\ and\ \bibinfo {author}
  {\bibfnamefont {J.}~\bibnamefont {Twamley}},\ }\bibfield  {title} {\bibinfo
  {title} {Quantum phase transition in a driven tavis–cummings model},\
  }\href {https://doi.org/10.1088/1367-2630/15/12/123032} {\bibfield  {journal}
  {\bibinfo  {journal} {New Journal of Physics}\ }\textbf {\bibinfo {volume}
  {15}},\ \bibinfo {pages} {123032} (\bibinfo {year} {2013})}\BibitemShut
  {NoStop}%
\bibitem [{\citenamefont {Keyl}\ \emph {et~al.}(2014)\citenamefont {Keyl},
  \citenamefont {Zeier},\ and\ \citenamefont
  {Schulte-Herbrüggen}}]{Keyl_2014_control}%
  \BibitemOpen
  \bibfield  {author} {\bibinfo {author} {\bibfnamefont {M.}~\bibnamefont
  {Keyl}}, \bibinfo {author} {\bibfnamefont {R.}~\bibnamefont {Zeier}},\ and\
  \bibinfo {author} {\bibfnamefont {T.}~\bibnamefont {Schulte-Herbrüggen}},\
  }\bibfield  {title} {\bibinfo {title} {Controlling several atoms in a
  cavity},\ }\href {https://doi.org/10.1088/1367-2630/16/6/065010} {\bibfield
  {journal} {\bibinfo  {journal} {New Journal of Physics}\ }\textbf {\bibinfo
  {volume} {16}},\ \bibinfo {pages} {065010} (\bibinfo {year}
  {2014})}\BibitemShut {NoStop}%
\bibitem [{\citenamefont {Retzker}\ \emph {et~al.}(2007)\citenamefont
  {Retzker}, \citenamefont {Solano},\ and\ \citenamefont
  {Reznik}}]{Retzker_2007}%
  \BibitemOpen
  \bibfield  {author} {\bibinfo {author} {\bibfnamefont {A.}~\bibnamefont
  {Retzker}}, \bibinfo {author} {\bibfnamefont {E.}~\bibnamefont {Solano}},\
  and\ \bibinfo {author} {\bibfnamefont {B.}~\bibnamefont {Reznik}},\
  }\bibfield  {title} {\bibinfo {title} {Tavis-cummings model and collective
  multiqubit entanglement in trapped ions},\ }\bibfield  {journal} {\bibinfo
  {journal} {Physical Review A}\ }\textbf {\bibinfo {volume} {75}},\ \href
  {https://doi.org/10.1103/physreva.75.022312} {10.1103/physreva.75.022312}
  (\bibinfo {year} {2007})\BibitemShut {NoStop}%
\bibitem [{\citenamefont {S\o{}rensen}\ and\ \citenamefont
  {M\o{}lmer}(1999)}]{Molmer_Sorensen}%
  \BibitemOpen
  \bibfield  {author} {\bibinfo {author} {\bibfnamefont {A.}~\bibnamefont
  {S\o{}rensen}}\ and\ \bibinfo {author} {\bibfnamefont {K.}~\bibnamefont
  {M\o{}lmer}},\ }\bibfield  {title} {\bibinfo {title} {Quantum computation
  with ions in thermal motion},\ }\href
  {https://doi.org/10.1103/PhysRevLett.82.1971} {\bibfield  {journal} {\bibinfo
   {journal} {Phys. Rev. Lett.}\ }\textbf {\bibinfo {volume} {82}},\ \bibinfo
  {pages} {1971} (\bibinfo {year} {1999})}\BibitemShut {NoStop}%
\bibitem [{\citenamefont {Deliyannis}\ and\ \citenamefont
  {Marvian}(2026{\natexlab{a}})}]{symmetry_paper}%
  \BibitemOpen
  \bibfield  {author} {\bibinfo {author} {\bibfnamefont {P.}~\bibnamefont
  {Deliyannis}}\ and\ \bibinfo {author} {\bibfnamefont {I.}~\bibnamefont
  {Marvian}},\ }\bibfield  {title} {\bibinfo {title} {Accidental symmetry in
  the tavis-cummings model via the schwinger boson representation},\
  }\href@noop {} {\bibfield  {journal} {\bibinfo  {journal} {In preparation}\ }
  (\bibinfo {year} {2026}{\natexlab{a}})}\BibitemShut {NoStop}%
\bibitem [{\citenamefont {Marvian}(2022)}]{marvian_sym_loc_2022}%
  \BibitemOpen
  \bibfield  {author} {\bibinfo {author} {\bibfnamefont {I.}~\bibnamefont
  {Marvian}},\ }\bibfield  {title} {\bibinfo {title} {Restrictions on
  realizable unitary operations imposed by symmetry and locality},\ }\href
  {https://www.nature.com/articles/s41567-021-01464-0} {\bibfield  {journal}
  {\bibinfo  {journal} {Nature Physics}\ }\textbf {\bibinfo {volume} {18}},\
  \bibinfo {pages} {283} (\bibinfo {year} {2022})}\BibitemShut {NoStop}%
\bibitem [{\citenamefont {Marvian}\ \emph
  {et~al.}(2024{\natexlab{a}})\citenamefont {Marvian}, \citenamefont {Liu},\
  and\ \citenamefont {Hulse}}]{marvian_SU(2)}%
  \BibitemOpen
  \bibfield  {author} {\bibinfo {author} {\bibfnamefont {I.}~\bibnamefont
  {Marvian}}, \bibinfo {author} {\bibfnamefont {H.}~\bibnamefont {Liu}},\ and\
  \bibinfo {author} {\bibfnamefont {A.}~\bibnamefont {Hulse}},\ }\bibfield
  {title} {\bibinfo {title} {Rotationally invariant circuits: Universality with
  the exchange interaction and two ancilla qubits},\ }\href
  {https://doi.org/10.1103/PhysRevLett.132.130201} {\bibfield  {journal}
  {\bibinfo  {journal} {Phys. Rev. Lett.}\ }\textbf {\bibinfo {volume} {132}},\
  \bibinfo {pages} {130201} (\bibinfo {year} {2024}{\natexlab{a}})}\BibitemShut
  {NoStop}%
\bibitem [{\citenamefont {Hulse}\ \emph {et~al.}(2024)\citenamefont {Hulse},
  \citenamefont {Liu},\ and\ \citenamefont {Marvian}}]{HLM_2024_SU(d)}%
  \BibitemOpen
  \bibfield  {author} {\bibinfo {author} {\bibfnamefont {A.}~\bibnamefont
  {Hulse}}, \bibinfo {author} {\bibfnamefont {H.}~\bibnamefont {Liu}},\ and\
  \bibinfo {author} {\bibfnamefont {I.}~\bibnamefont {Marvian}},\ }\href
  {https://arxiv.org/abs/2407.21249} {\bibinfo {title} {A framework for
  semi-universality: Semi-universality of 3-qudit su(d)-invariant gates}}
  (\bibinfo {year} {2024}),\ \Eprint {https://arxiv.org/abs/2407.21249}
  {arXiv:2407.21249 [quant-ph]} \BibitemShut {NoStop}%
\bibitem [{\citenamefont {Marvian}(2024)}]{marvian_abelian_2024}%
  \BibitemOpen
  \bibfield  {author} {\bibinfo {author} {\bibfnamefont {I.}~\bibnamefont
  {Marvian}},\ }\bibfield  {title} {\bibinfo {title} {Theory of quantum
  circuits with abelian symmetries},\ }\href
  {https://doi.org/10.1103/PhysRevResearch.6.043292} {\bibfield  {journal}
  {\bibinfo  {journal} {Phys. Rev. Res.}\ }\textbf {\bibinfo {volume} {6}},\
  \bibinfo {pages} {043292} (\bibinfo {year} {2024})}\BibitemShut {NoStop}%
\bibitem [{\citenamefont {Deliyannis}\ and\ \citenamefont
  {Marvian}(2026{\natexlab{b}})}]{circuit_paper}%
  \BibitemOpen
  \bibfield  {author} {\bibinfo {author} {\bibfnamefont {P.}~\bibnamefont
  {Deliyannis}}\ and\ \bibinfo {author} {\bibfnamefont {I.}~\bibnamefont
  {Marvian}},\ }\bibfield  {title} {\bibinfo {title} {Permutation-invariant
  $n$-body gates via the tavis-cummings interaction},\ }\href@noop {}
  {\bibfield  {journal} {\bibinfo  {journal} {In preparation}\ } (\bibinfo
  {year} {2026}{\natexlab{b}})}\BibitemShut {NoStop}%
\bibitem [{\citenamefont {Zimbor\'as}\ \emph {et~al.}(2015)\citenamefont
  {Zimbor\'as}, \citenamefont {Zeier}, \citenamefont {Schulte-Herbr\"uggen},\
  and\ \citenamefont {Burgarth}}]{PhysRevA.92.042309}%
  \BibitemOpen
  \bibfield  {author} {\bibinfo {author} {\bibfnamefont {Z.}~\bibnamefont
  {Zimbor\'as}}, \bibinfo {author} {\bibfnamefont {R.}~\bibnamefont {Zeier}},
  \bibinfo {author} {\bibfnamefont {T.}~\bibnamefont {Schulte-Herbr\"uggen}},\
  and\ \bibinfo {author} {\bibfnamefont {D.}~\bibnamefont {Burgarth}},\
  }\bibfield  {title} {\bibinfo {title} {Symmetry criteria for quantum
  simulability of effective interactions},\ }\href
  {https://doi.org/10.1103/PhysRevA.92.042309} {\bibfield  {journal} {\bibinfo
  {journal} {Phys. Rev. A}\ }\textbf {\bibinfo {volume} {92}},\ \bibinfo
  {pages} {042309} (\bibinfo {year} {2015})}\BibitemShut {NoStop}%
\bibitem [{\citenamefont {Marvian}\ \emph
  {et~al.}(2024{\natexlab{b}})\citenamefont {Marvian}, \citenamefont {Liu},\
  and\ \citenamefont {Hulse}}]{MLH_2024}%
  \BibitemOpen
  \bibfield  {author} {\bibinfo {author} {\bibfnamefont {I.}~\bibnamefont
  {Marvian}}, \bibinfo {author} {\bibfnamefont {H.}~\bibnamefont {Liu}},\ and\
  \bibinfo {author} {\bibfnamefont {A.}~\bibnamefont {Hulse}},\ }\bibfield
  {title} {\bibinfo {title} {Rotationally invariant circuits: Universality with
  the exchange interaction and two ancilla qubits},\ }\href
  {https://doi.org/10.1103/PhysRevLett.132.130201} {\bibfield  {journal}
  {\bibinfo  {journal} {Phys. Rev. Lett.}\ }\textbf {\bibinfo {volume} {132}},\
  \bibinfo {pages} {130201} (\bibinfo {year} {2024}{\natexlab{b}})}\BibitemShut
  {NoStop}%
\bibitem [{\citenamefont {Bartlett}\ \emph {et~al.}(2007)\citenamefont
  {Bartlett}, \citenamefont {Rudolph},\ and\ \citenamefont
  {Spekkens}}]{Bartlett_etal_2007}%
  \BibitemOpen
  \bibfield  {author} {\bibinfo {author} {\bibfnamefont {S.~D.}\ \bibnamefont
  {Bartlett}}, \bibinfo {author} {\bibfnamefont {T.}~\bibnamefont {Rudolph}},\
  and\ \bibinfo {author} {\bibfnamefont {R.~W.}\ \bibnamefont {Spekkens}},\
  }\bibfield  {title} {\bibinfo {title} {Reference frames, superselection
  rules, and quantum information},\ }\href
  {https://doi.org/10.1103/RevModPhys.79.555} {\bibfield  {journal} {\bibinfo
  {journal} {Rev. Mod. Phys.}\ }\textbf {\bibinfo {volume} {79}},\ \bibinfo
  {pages} {555} (\bibinfo {year} {2007})}\BibitemShut {NoStop}%
\bibitem [{\citenamefont {Schirmer}\ \emph {et~al.}(2001)\citenamefont
  {Schirmer}, \citenamefont {Fu},\ and\ \citenamefont
  {Solomon}}]{Schirmer_etal_2001_controllability}%
  \BibitemOpen
  \bibfield  {author} {\bibinfo {author} {\bibfnamefont {S.~G.}\ \bibnamefont
  {Schirmer}}, \bibinfo {author} {\bibfnamefont {H.}~\bibnamefont {Fu}},\ and\
  \bibinfo {author} {\bibfnamefont {A.~I.}\ \bibnamefont {Solomon}},\
  }\bibfield  {title} {\bibinfo {title} {Complete controllability of quantum
  systems},\ }\href {https://doi.org/10.1103/PhysRevA.63.063410} {\bibfield
  {journal} {\bibinfo  {journal} {Phys. Rev. A}\ }\textbf {\bibinfo {volume}
  {63}},\ \bibinfo {pages} {063410} (\bibinfo {year} {2001})}\BibitemShut
  {NoStop}%
\bibitem [{\citenamefont {Maslov}\ and\ \citenamefont
  {Nam}(2018)}]{Maslov_2018}%
  \BibitemOpen
  \bibfield  {author} {\bibinfo {author} {\bibfnamefont {D.}~\bibnamefont
  {Maslov}}\ and\ \bibinfo {author} {\bibfnamefont {Y.}~\bibnamefont {Nam}},\
  }\bibfield  {title} {\bibinfo {title} {Use of global interactions in
  efficient quantum circuit constructions},\ }\href
  {https://doi.org/10.1088/1367-2630/aaa398} {\bibfield  {journal} {\bibinfo
  {journal} {New Journal of Physics}\ }\textbf {\bibinfo {volume} {20}},\
  \bibinfo {pages} {033018} (\bibinfo {year} {2018})}\BibitemShut {NoStop}%
\bibitem [{\citenamefont {van~de Wetering}(2021)}]{van_de_Wetering_2021}%
  \BibitemOpen
  \bibfield  {author} {\bibinfo {author} {\bibfnamefont {J.}~\bibnamefont
  {van~de Wetering}},\ }\bibfield  {title} {\bibinfo {title} {Constructing
  quantum circuits with global gates},\ }\href
  {https://doi.org/10.1088/1367-2630/abf1b3} {\bibfield  {journal} {\bibinfo
  {journal} {New Journal of Physics}\ }\textbf {\bibinfo {volume} {23}},\
  \bibinfo {pages} {043015} (\bibinfo {year} {2021})}\BibitemShut {NoStop}%
\bibitem [{\citenamefont {Menta}\ \emph
  {et~al.}(2026{\natexlab{b}})\citenamefont {Menta}, \citenamefont {Cioni},
  \citenamefont {Aiudi}, \citenamefont {Caravelli}, \citenamefont {Polini},\
  and\ \citenamefont {Giovannetti}}]{menta_etal_ZZ_2026}%
  \BibitemOpen
  \bibfield  {author} {\bibinfo {author} {\bibfnamefont {R.}~\bibnamefont
  {Menta}}, \bibinfo {author} {\bibfnamefont {F.}~\bibnamefont {Cioni}},
  \bibinfo {author} {\bibfnamefont {R.}~\bibnamefont {Aiudi}}, \bibinfo
  {author} {\bibfnamefont {F.}~\bibnamefont {Caravelli}}, \bibinfo {author}
  {\bibfnamefont {M.}~\bibnamefont {Polini}},\ and\ \bibinfo {author}
  {\bibfnamefont {V.}~\bibnamefont {Giovannetti}},\ }\bibfield  {title}
  {\bibinfo {title} {Building globally controlled quantum processors with $zz$
  interactions},\ }\href {https://doi.org/10.1103/lz5d-lnz2} {\bibfield
  {journal} {\bibinfo  {journal} {Phys. Rev. A}\ }\textbf {\bibinfo {volume}
  {113}},\ \bibinfo {pages} {012614} (\bibinfo {year}
  {2026}{\natexlab{b}})}\BibitemShut {NoStop}%
\bibitem [{\citenamefont {Cioni}\ \emph {et~al.}(2026)\citenamefont {Cioni},
  \citenamefont {Menta}, \citenamefont {Aiudi}, \citenamefont {Polini},\ and\
  \citenamefont {Giovannetti}}]{Cioni_etal_2026}%
  \BibitemOpen
  \bibfield  {author} {\bibinfo {author} {\bibfnamefont {F.}~\bibnamefont
  {Cioni}}, \bibinfo {author} {\bibfnamefont {R.}~\bibnamefont {Menta}},
  \bibinfo {author} {\bibfnamefont {R.}~\bibnamefont {Aiudi}}, \bibinfo
  {author} {\bibfnamefont {M.}~\bibnamefont {Polini}},\ and\ \bibinfo {author}
  {\bibfnamefont {V.}~\bibnamefont {Giovannetti}},\ }\bibfield  {title}
  {\bibinfo {title} {Conveyor-belt superconducting quantum computer},\ }\href
  {https://doi.org/10.1103/6zzp-ctyx} {\bibfield  {journal} {\bibinfo
  {journal} {Phys. Rev. A}\ }\textbf {\bibinfo {volume} {113}},\ \bibinfo
  {pages} {012439} (\bibinfo {year} {2026})}\BibitemShut {NoStop}%
\bibitem [{\citenamefont {Katz}\ \emph {et~al.}(2022)\citenamefont {Katz},
  \citenamefont {Cetina},\ and\ \citenamefont {Monroe}}]{Katz_etal_2022}%
  \BibitemOpen
  \bibfield  {author} {\bibinfo {author} {\bibfnamefont {O.}~\bibnamefont
  {Katz}}, \bibinfo {author} {\bibfnamefont {M.}~\bibnamefont {Cetina}},\ and\
  \bibinfo {author} {\bibfnamefont {C.}~\bibnamefont {Monroe}},\ }\bibfield
  {title} {\bibinfo {title} {$n$-body interactions between trapped ion qubits
  via spin-dependent squeezing},\ }\href
  {https://doi.org/10.1103/PhysRevLett.129.063603} {\bibfield  {journal}
  {\bibinfo  {journal} {Phys. Rev. Lett.}\ }\textbf {\bibinfo {volume} {129}},\
  \bibinfo {pages} {063603} (\bibinfo {year} {2022})}\BibitemShut {NoStop}%
\bibitem [{\citenamefont {Katz}\ \emph {et~al.}(2023)\citenamefont {Katz},
  \citenamefont {Cetina},\ and\ \citenamefont {Monroe}}]{Katz_etal_2023}%
  \BibitemOpen
  \bibfield  {author} {\bibinfo {author} {\bibfnamefont {O.}~\bibnamefont
  {Katz}}, \bibinfo {author} {\bibfnamefont {M.}~\bibnamefont {Cetina}},\ and\
  \bibinfo {author} {\bibfnamefont {C.}~\bibnamefont {Monroe}},\ }\bibfield
  {title} {\bibinfo {title} {Programmable $n$-body interactions with trapped
  ions},\ }\href {https://doi.org/10.1103/PRXQuantum.4.030311} {\bibfield
  {journal} {\bibinfo  {journal} {PRX Quantum}\ }\textbf {\bibinfo {volume}
  {4}},\ \bibinfo {pages} {030311} (\bibinfo {year} {2023})}\BibitemShut
  {NoStop}%
\bibitem [{\citenamefont {Hu}\ \emph {et~al.}(2026)\citenamefont {Hu},
  \citenamefont {Gomez}, \citenamefont {Chen}, \citenamefont {Trowbridge},
  \citenamefont {Goldschmidt}, \citenamefont {Manchester}, \citenamefont
  {Chong}, \citenamefont {Jaffe},\ and\ \citenamefont
  {Yelin}}]{hu_global_2026}%
  \BibitemOpen
  \bibfield  {author} {\bibinfo {author} {\bibfnamefont {H.-Y.}\ \bibnamefont
  {Hu}}, \bibinfo {author} {\bibfnamefont {A.~M.}\ \bibnamefont {Gomez}},
  \bibinfo {author} {\bibfnamefont {L.}~\bibnamefont {Chen}}, \bibinfo {author}
  {\bibfnamefont {A.}~\bibnamefont {Trowbridge}}, \bibinfo {author}
  {\bibfnamefont {A.~J.}\ \bibnamefont {Goldschmidt}}, \bibinfo {author}
  {\bibfnamefont {Z.}~\bibnamefont {Manchester}}, \bibinfo {author}
  {\bibfnamefont {F.~T.}\ \bibnamefont {Chong}}, \bibinfo {author}
  {\bibfnamefont {A.}~\bibnamefont {Jaffe}},\ and\ \bibinfo {author}
  {\bibfnamefont {S.~F.}\ \bibnamefont {Yelin}},\ }\href
  {https://arxiv.org/abs/2508.19075} {\bibinfo {title} {Universal dynamics with
  globally controlled analog quantum simulators}} (\bibinfo {year} {2026}),\
  \Eprint {https://arxiv.org/abs/2508.19075} {arXiv:2508.19075 [quant-ph]}
  \BibitemShut {NoStop}%
\bibitem [{\citenamefont {Liu}\ \emph {et~al.}(2024)\citenamefont {Liu},
  \citenamefont {Singh}, \citenamefont {Smith}, \citenamefont {Crane},
  \citenamefont {Martyn}, \citenamefont {Eickbusch}, \citenamefont {Schuckert},
  \citenamefont {Li}, \citenamefont {Sinanan-Singh}, \citenamefont {Soley},
  \citenamefont {Tsunoda}, \citenamefont {Chuang}, \citenamefont {Wiebe},\ and\
  \citenamefont {Girvin}}]{liu_etal_review_2024}%
  \BibitemOpen
  \bibfield  {author} {\bibinfo {author} {\bibfnamefont {Y.}~\bibnamefont
  {Liu}}, \bibinfo {author} {\bibfnamefont {S.}~\bibnamefont {Singh}}, \bibinfo
  {author} {\bibfnamefont {K.~C.}\ \bibnamefont {Smith}}, \bibinfo {author}
  {\bibfnamefont {E.}~\bibnamefont {Crane}}, \bibinfo {author} {\bibfnamefont
  {J.~M.}\ \bibnamefont {Martyn}}, \bibinfo {author} {\bibfnamefont
  {A.}~\bibnamefont {Eickbusch}}, \bibinfo {author} {\bibfnamefont
  {A.}~\bibnamefont {Schuckert}}, \bibinfo {author} {\bibfnamefont {R.~D.}\
  \bibnamefont {Li}}, \bibinfo {author} {\bibfnamefont {J.}~\bibnamefont
  {Sinanan-Singh}}, \bibinfo {author} {\bibfnamefont {M.~B.}\ \bibnamefont
  {Soley}}, \bibinfo {author} {\bibfnamefont {T.}~\bibnamefont {Tsunoda}},
  \bibinfo {author} {\bibfnamefont {I.~L.}\ \bibnamefont {Chuang}}, \bibinfo
  {author} {\bibfnamefont {N.}~\bibnamefont {Wiebe}},\ and\ \bibinfo {author}
  {\bibfnamefont {S.~M.}\ \bibnamefont {Girvin}},\ }\href
  {https://arxiv.org/abs/2407.10381} {\bibinfo {title} {Hybrid oscillator-qubit
  quantum processors: Instruction set architectures, abstract machine models,
  and applications}} (\bibinfo {year} {2024}),\ \Eprint
  {https://arxiv.org/abs/2407.10381} {arXiv:2407.10381 [quant-ph]} \BibitemShut
  {NoStop}%
\bibitem [{\citenamefont {Ballentine}(2015)}]{ballentine2015quantum}%
  \BibitemOpen
  \bibfield  {author} {\bibinfo {author} {\bibfnamefont {L.}~\bibnamefont
  {Ballentine}},\ }\href {https://books.google.com/books?id=2JShngEACAAJ}
  {\emph {\bibinfo {title} {Quantum Mechanics: A Modern Development}}},\ G -
  Reference,Information and Interdisciplinary Subjects Series\ (\bibinfo
  {publisher} {World Scientific},\ \bibinfo {year} {2015})\BibitemShut
  {NoStop}%
\end{thebibliography}%

\newpage 
\onecolumngrid
\appendix

\newpage
\newcommand\appitemtwo[2]{
{\textbf{\cref{#1} \nameref*{#1}}} \dotfill \pageref{#1}\\ \begin{minipage}[t]{0.9\textwidth} #2\end{minipage}}
\newcommand\appitem[1]{\hyperref[{#1}]
{\textbf{\cref{#1}}} \textbf{\nameref*{#1}}
\dotfill \pageref{#1}\vspace{5pt}}
\newcommand\subappitem[1]{
{\textbf{\ref{#1}.} \nameref*{#1}} \hspace*{\fill} \pageref{#1}}
\makeatletter
\newcommand{\appsec}[2]{
  \section{#1}
  \def\@currentlabelname{#1}
  \def\@currentlabel{\thesection}
  \label{#2}
  \addcontentsline{toc}{section}{#1}
}
\newcommand{\appsubsec}[2]{
  \subsection{#1}
  \def\@currentlabelname{#1}
  \def\@currentlabel{\thesubsection}
  \label{#2}
}
\newcommand{\appsubsubsec}[2]{
  \refstepcounter{subsubsection}
  \subsubsection{#1}%
  \addtocounter{subsubsection}{-1}
  \def\@currentlabelname{#1}
  \def\@currentlabel{\thesubsubsection}
  \label{#2}
}
\makeatother

\section*{Appendix: Table of Contents}
\begin{itemize}[label={}]
\item \appitem{app:notation}
\item \appitem{app:computations}
\subitem \subappitem{app:matrix_elements}
\subitem \subappitem{app:matrix_elements_sym}
\subitem \subappitem{app:energy_variance}
\subitem \subappitem{app:Jz2_variance}
\subitem \subappitem{app:charge_vectors}
\item \appitem{app:AccSym_Prop}
\item \appitem{app:simple_lie_decomp}
\item \appitem{app:comm_subalg_proof}
\item \appitem{app:F_proof}
\item \appitem{app:subsystem_control}
\item \appitem{app:Jz2_phases}
\end{itemize}

\newpage
\appsec{Notation Summary}{app:notation}
\Cref{tab:notation} summarizes commonly used notation in this paper.
\begin{table}[htp]
\centering
\renewcommand{\arraystretch}{1.8}
\setlength{\tabcolsep}{6pt}
\begin{tabular}{l|l}
     $n$ & Number of qubits\\
     $j$ & Label for total angular momentum, or equivalently irreps of $\Sn$\\
     $\jmin$ & Minimum angular momentum for an $n$-qubit system: $\jmin=0$ for even $n$ and $\jmin=1/2$ for odd $n$\\
     $m$ & Eigenvalues of $J_z$ \\
     $q$ & Eigenvalues of charge operator $Q=J_z+\Nhat+\frac{n}{2}$ \\
     $k$ & Eigenvalues of oscillator number operator $\Nhat$ \\
     $\V_z$ & Group of unitaries realized using $\HTC$ and $\Zhat$\\
     $\mathcal{W}_z$ & Group of unitaries realized using $\HTC$, $\Zhat$, and $\Nhat$\\
     $\SV$ & Commutator subgroup $[\V,\V]=\{V_1V_2V_1^{\dag}V_2^{\dag}:V_1,V_2\in\V\}$ of a group $\V$\\
     $\Vsym$ & Group of permutation-invariant and U(1)-invariant unitaries\\
     $\UTC(r) := \exp(-ir\HTC/\gTC)$ & Time evolution operator of $\HTC$\\
     $\UZ(\theta) := \exp(-i \theta\Zhat)$ & Time evolution operator of $\Zhat$\\
     $\bigoplus_{j=\jmin}^{n/2}\left(\C^{M(n,j)}\otimes\bigoplus_{q=n/2-j}^{\infty}\H\qj\right)$
 & Decomposition of $(\C^2)^{\otimes n}\otimes\mathcal{L}^2(\R)$ into irreps of $\Uo\times\Sn$ \\
     $\H\qj\simeq\C^{d_n(q,j)}$ & Irrep of $\Uo\times\Sn$, labeled by charge $q$ and angular momentum $j$\\
     $\Pi\qj$ & Projector to $\C^{M(n,j)}\otimes\H\qj$, the subspace with charge $q$ and angular momentum $j$\\
     $\pi\qj(A)$ & Projection of a linear operator $A$ to $\H\qj$ \\
     $d_n(q,j)\sim d(q,j)$ & Dimension of $\H\qj$ \\
     $M(n,j)$ & Multiplicity of angular momentum $j$ irrep. for $n$-qubit system\\
     $\theta\qj:=\arg\det(\pi\qj (V))$ & Phase of the component of any unitary $V\in\V_z$ in $\H\qj$ \\
     $\Hsym\simeq\C^{n+1}$ & Restriction to the $n$-qubit symmetric subspace \\
     $\PiSym=\Pi_{j=n/2}^{\qmax}$ & Projector to $\bigoplus_{q=0}^{\qmax}\H_{q,j=n/2}$, the subspace of $\Hsym\otimes\H_{\text{osc}}$ with charge at most $\qmax$\\
     $\Pi^{\qmax}$ & Projector to $\bigoplus_{j=\jmin}^{n/2}\left(\C^{M(n,j)}\otimes\bigoplus_{q=n/2-j}^{\qmax}\H\qj\right)$, the subspace with charge at most $\qmax$
\end{tabular}
 \caption{Notation reference.}
    \label{tab:notation}
\end{table}
\renewcommand{\arraystretch}{1.0}

\newpage
\appsec{Useful Properties of TC Hamiltonian: Matrix elements, Energy Variance}{app:computations}
Recall the basis, introduced in \cref{sec:CC_PI_unitaries}, of the combined qubit-oscillator Hilbert space:
\begin{align}
    \ket{j,m,\al}\otimes\kosc\begin{cases}j&=\jmin,\dotsc,\frac{n}{2}\\[3pt]
    m&=-j,\dotsc,j\\[3pt]
    \al&=1,\dotsc,M(n,j)\\[3pt]
    k&=0,\dotsc,\infty
    \end{cases}\,,
\end{align}
where
\bes
\begin{align}
    J^2 \ket{j,m,\alpha}&=j(j+1) \ket{j,m,\alpha}\\[4pt] 
    J_z \ket{j,m,\alpha}&=m \ket{j,m,\alpha}\\[2pt]
    a^{\dagger}a\kosc &= k\kosc\,,
\end{align}
\ees
and $\al$ is a multiplicity index over $M(n,j)$, the dimension of the subspace with total angular momentum $j$ and $z$-component $m$ appearing in $(\C^2)^{\otimes n}$.
Since the multiplicity label $\al$ is irrelevant for computations, it is omitted from the notation in this appendix.
We use the simplified notation:
\begin{align}
    \ket{j,m}\otimes\kosc \,&:=\,\ket{j,m,k} \qquad\text{for general angular momentum}\,j\\[6pt]
    \left|j=\frac{n}{2},m\right\rangle\otimes\kosc \,&:=\,\ket{m,k} \qquad\hspace{8pt}\text{for the symmetric subspace}\,j=\frac{n}{2}\,.
\end{align}
These basis states are also eigenstates of $Q$, with
\begin{align}
    Q\,\ket{j,m,k} := q\,\ket{j,m,k} &\eq \left(m+k+\frac{n}{2}\right)\ket{j,m,k}\,.
\end{align}
Therefore, $(j,\al)$ together with any two of $(q,m,k)$ are sufficient to label the basis.
It is useful to write certain quantities in terms of the charge $q$, so for this we use the relabeled basis $\ket{q,j,k}$, which is defined by
\begin{equation}
 \begin{alignedat}{2}
    J^2\ket{q,j,k} &\eq j(j+1)\,\ket{q,j,k}\\[-14pt]
    Q\ket{q,j,k} &\eq q\,\ket{q,j,k} \hspace{80pt} \begin{cases}
        j\eq \jmin,\dotsc,\frac{n}{2} \\[4pt]
        q\eq 0,1,\dotsc,\infty \\[4pt]
        k\eq \max(0,q-n),\dotsc,q
    \end{cases}\\[-14pt]
    a^{\dagger}a\ket{q,j,k} &\eq k\,\ket{q,j,k} \\[4pt]
    J_z\ket{q,j,k} &\eq \left(q-k-\frac{n}{2}\right)\ket{q,j,k}\,.
 \label{eq:jqk_basis}
 \end{alignedat}
\end{equation}

\appsubsec{Matrix Elements of $\HTC$, $\Zhat$, $a^{\dagger}a$}{app:matrix_elements}
Recall the form of the Tavis-Cummings Hamiltonian,
\begin{align}
 \begin{split}
    \HTC= \frac{\gTC}{2}\sum_{i=1}^{n} \Big(\sigma_+^{(i)}{a} \p \sigma_-^{(i)}{a}^{\dag}\Big)= \gTC\,({J}_+{a} + {J}_-{a}^{\dag})\, ,
 \end{split}
\end{align}
For the remainder of this appendix, we set $g_{\text{TC}}=1$ to avoid unnecessary notational clutter.
Write $a$ and $J_+$ matrix elements \cite{ballentine2015quantum},
\bes
\begin{align}
    a^\dag\ket{k}&=\sqrt{k+1} \ket{k+1}\\[4pt] 
    J_+\ket{j,m}&=\sqrt{(j+m+1)(j-m)}\,\ket{j,m+1} \\[2pt]
    J_-\ket{j,m}&=\sqrt{(j-m+1)(j+m)}\,\ket{j,m-1} \,.
\end{align}
\ees
Therefore, in the $\ket{j,m,k}$ basis, $\HTC$ is a tri-diagonal, symmetric matrix with zeros on the diagonal.
For general angular momenta $\jmin \leq j \leq n/2$, the only non-zero matrix elements are
\begin{align}
    \ipo{j,m,k}{\,\HTC\,}{j,m-1,k+1} &\eq \ipo{j,m-1,k+1}{\,\HTC\,}{j,m,k} \qeq \sqrt{(j+m)(j-m+1)(k+1)}\,,
 \label{eq:HTC_general_elements}
\end{align}
where $-j+1\leq m \leq j\,\,$ and $\,\,0\leq k<\infty$.\\ \\
Or, in the $\ket{q,j,k}$ basis,
\begin{align}
    \ipo{q,j,k}{\,\HTC\,}{q,j,k+1} &\eq \ipo{q,j,k+1}{\,\HTC\,}{q,j,k} \qeq \sqrt{(q+j-n/2-k)(k+1)(j-q+n/2+k+1)}\,,
 \label{eq:HTC_matrix_general}
\end{align}
where $\frac{n}{2}-j \,\leq\, q<\,\infty\,\,$ and $\,\,\max(0,q-j-n/2)\,\leq\,k\,\leq\,q+j-n/2-1$.

\appsubsec{Matrix Elements in the Symmetric Subspace}{app:matrix_elements_sym}
For the $j=n/2$ symmetric subspace, the nonzero matrix elements of $\HTC$ are
\begin{align*}
    \ipo{m,k}{\,\HTC\,}{m-1,k+1} &\eq \ipo{m-1,k+1}{\,\HTC\,}{m,k} \qeq \sqrt{(n/2+m)(n/2-m+1)(k+1)}\,,
\end{align*}
with $-n/2+1\leq m \leq n/2\,\,$ and $\,\,0\leq k\leq \infty$.\\ \\
In the $\ket{q,k}$ basis,
\begin{align}
    \ipo{q,k}{\,\HTC\,}{q,k+1} &\eq \ipo{q,k+1}{\,\HTC\,}{q,k} \qeq \sqrt{(q-k)(k+1)(n-q+k+1)}\,,
 \label{eq:HTC_matrix}
\end{align}
where $0\leq q\leq\infty\,\,$ and $\,\,\max(0,q-n)\,\leq\,k\,\leq\,q-1$.

\appsubsec{Energy variance of $\HTC$}{app:energy_variance}
Consider the maximally-mixed state within $\C^{M(n,j)}\otimes\H_{q,j}$:
\begin{align}
    \tau_{q,j}\,:=\,\frac{\Pi_{q,j}}{\Tr(\Pi_{q,j})} \,\eq\,\frac{\Pi\qj}{d_n(q,j)M(n,j)}\,. 
\end{align}
First, consider the symmetric subspace, which appears with multiplicity one, i.e. $M(n,j=n/2)=1$, so $\Tr(\pi_{q,j=n/2}(\HTC))=\Tr(\HTC\Pi_{q,j=n/2})$.
Then, using \cref{eq:HTC_matrix}, compute for any charge $q$,
\begin{align}
 \begin{split}
    \Tr\left(\HTC^2\tau_{q,\,j=n/2}\right) &\,:=\, \frac{1}{\Tr(\Pi_{q,\,j=n/2})}\Tr\left(\HTC^2\Pi_{q,\,j=n/2}\right)\\
    &\eq \frac{1}{\dim(\H_{q,\,j=n/2})}\sum_{k,k'=\max\{q-n,0\}}^{q} |\langle q,k'|\HTC|q,k\rangle|^2 \\[4pt]
    &\eq \frac{2}{\min\{n+1,q+1\}}\sum_{k=\max\{q-n,0\}}^{q-1} |\langle q,k|\HTC|q,k+1\rangle|^2 \\
    &\eq \frac{2}{\min\{n+1,q+1\}}\sum_{k=\max\{q-n,0\}}^{q-1}(q-k)(k+1)(n-q+k+1) \\[6pt]
    &\eq \begin{cases}
        {n(n+2)(2q+1-n)}/{6} & :\,\,q\geq n\qquad\text{(filled)}\\[6pt]
        {q(q+2)(2n-q+1)}/{6}& :\,\,q<n\qquad\text{(unfilled)}\,.
    \end{cases}
 \end{split}
\end{align}
Now consider arbitrary angular momentum $\jmin\leq j\leq n/2$, which in general appears with multiplicity $M(n,j)$ within $(\C^2)^{\otimes n}$.
Therefore, each sector $\H\qj$ appears with multiplicity $M(n,j)$, so $\Tr(\Pi\qj)=d_n(q,j)M(n,j)$, and
\begin{align}
    \Tr\left(\HTC^2\tau\qj\right) = \frac{\Tr\big(\HTC^2\Pi\qj\big)}{d_n(q,j)M(n,j)} = \frac{\Tr\big(M(n,j)\pi\qj(\HTC^2)\big)}{d_n(q,j)M(n,j)} = \frac{\Tr\big(\pi\qj(\HTC^2)\big)}{d_n(q,j)}\,.
 \label{eq:TrH^2_multiplicity}
\end{align}
Then, the equivalences explained in \cref{sec:semi_universality_arbitraryJ}, imply that the matrices $\pi\qj^{(n)}(\HTC^2)$ for an $n$-qubit system and $\pi_{q-n/2+j,\,j}^{(2j)}(\HTC^2)$ for a $2j$-qubit system are identical.
Using this fact, and \cref{eq:TrH^2_multiplicity}, compute
\begin{align}
 \begin{split}
    \Tr\left(\HTC^2\tau\qjn\right) &\eq \frac{\Tr\left(\pi\qj^{(n)}\big(\HTC^2\big)\right)}{d_n(q,j)} \\[10pt]
    &\eq \frac{\Tr\left(\pi_{q-n/2+j,\,j}^{(2j)}\big(\HTC^2\big)\right)}{d_{2j}(q-n/2+j,j)}
    \\[10pt]
    &\eq \Tr\left(\HTC^2\tau_{q-n/2+j,\,j}^{(2j)}\right) \\[10pt]
    &\eq\begin{cases}
        {2j(j+1)(2q-n+1)}/{3} &:\,\,q\geq n/2+j\qquad\text{(filled)} \\[6pt]
        {(q-\frac{n}{2}+j)(q-\frac{n}{2}+j+2)(3j-q+\frac{n}{2}+1)}/{6}&:\,\,q<n/2+j
    \qquad\text{(unfilled)}\,.\end{cases}
 \end{split}
 \label{eq:TC_energyVar_general}
\end{align}

\appsubsec{Variance of $J_z^2$ in Accidental Symmetry Sectors}{app:Jz2_variance}
To see \cref{eq:var}, first note that \cref{eq:rel} implies
\begin{align}
 \begin{split}
    \pi_{q,j}(J_z^2)
    = \pi_{q',j'}(J_z^2)
    + (j-j')^2\1 
    + 2(j-j')\pi_{q',j'}(J_z)\,.
 \end{split}
\end{align}
The constant term does not affect the variance. 
Hence, using
\[
    \Var(X+Y)
    \eq \Var(X)+\Var(Y)+2\operatorname{Cov}(X,Y)\,,
\]
we find
\begin{align}
    \Var_{q,j}(J_z^2)
    \eq \Var_{q',j'}(J_z^2)
    + 4(j-j')^2\Var_{q',j'}(J_z)
    + 4(j-j')\operatorname{Cov}_{q',j'}(J_z^2,J_z)\,.
\end{align}
Here the covariance is evaluated in the maximally mixed state on the sector $\mathcal{H}_{q',j'}$.
Now recall that, for the paired sectors $\mathcal{H}_{q,j}$ and $\mathcal{H}_{q',j'}$, with $j'<j$, the operator $\pi_{q',j'}(J_z)$ has eigenvalues
\[ -j',-j'+1,\,\ldots,\,j'\,. \]
Therefore, all odd moments vanish:
\[ \Tr(J_z^{2r+1}\tau_{q',j'}) \eq 0\,. \]
In particular,
\[ 
    \operatorname{Cov}_{q',j'}(J_z^2,J_z)
    =
    \left\langle J_z^3\right\rangle_{q',j'}
    -
    \left\langle J_z^2\right\rangle_{q',j'}\left\langle J_z\right\rangle_{q',j'}
    =0\,.
\]
It follows that
\begin{align}
 \begin{split}
    \Var_{q,j}(J_z^2)-\Var_{q',j'}(J_z^2)&\eq 4(j-j')^2\Var_{q',j'}(J_z)\\[4pt]
    &\eq 4(j-j')^2\frac{1}{2j'+1}\sum_{m=-j'}^{j'}m^2\\[4pt]
    &\eq \frac{4j'(j'+1)(j'-j)^2}{3}\,.
 \end{split}
\end{align}
This proves \cref{eq:var}.

\appsubsec{Projections of the Hamiltonians to the center of PI, U(1)-invariant Hamiltonians (Charge Vectors)}{app:charge_vectors}
Recall that the center of the Lie algebra $\g$ of all PI, U(1)-invariant Hamiltonians is spanned by $\{i\Pi\qj\}$. 
In the following, we determine the projection of the Lie algebra generated by $i\HTC$, $iJ_z$, and $i\Nhat$ onto this center. 
(This projection, is also known as the charge vectors associated with the Hamiltonians, label the central elements of $\g$.
See \cite{marvian_sym_loc_2022} for additional details concerning charge vectors.)

Computing charge vectors directly, using \cref{eq:jqk_basis,eq:HTC_matrix} yields for the $j=n/2$ symmetric space $\Hsym$,
\begin{align}
 \begin{split}
    \Tr(\Pi_{q,j=n/2}\,\HTC) &\eq 0\\
    \Tr(\Pi_{q,j=n/2}\,\Nhat) &\eq \Big(\min \{q,n\}+1\Big)\left(q-\frac{\min \{q,n\}}{2}\right)\\
    \Tr(\Pi_{q,j=n/2}\,\Zhat) &\eq \frac{1}{2}\Big(\min \{q,n\}+1\Big)\Big(\min \{q,n\}-n\Big)\,,
 \end{split}
\end{align}
or for $q\leq n$,
\begin{align}
 \begin{split}
    \mathrm{Tr}(\Pi_{q,j=n/2}\,\Nhat) &\eq (q+1)\left(\frac{q}{2}\right)\\
    \mathrm{Tr}(\Pi_{q,j=n/2}\,\Zhat) &\eq \frac{1}{2}(q+1)(q-n)\,,
 \end{split}
 \label{eq:charge_vec1_sym}
\end{align}
and for $q>n$,
\begin{align}
 \begin{split}
    \mathrm{Tr}(\Pi_{q,j=n/2}\,\Nhat) &\eq (n+1)\left(q-\frac{n}{2}\right),\\
    \mathrm{Tr}(\Pi_{q,j=n/2}\,\Zhat) &\eq 0\,.
 \end{split}
 \label{eq:charge_vec2_sym}
\end{align}
Using the equivalences explained in \cref{sec:semi_universality_arbitraryJ} -- matrix elements of $J_z$ and $a^{\dag}a$ in an arbitrary sector $\H\qj$ of $n$ qubits are equivalent to matrix elements in sector $\H_{q',j}$ of $2j$ qubits, with $q'=q+j-n/2$ -- the charge vectors for arbitrary angular momentum $j$ subspaces of $n$ qubits are:
\begin{align}
 \begin{split}
    \mathrm{Tr}(\Pi\qj\,\Nhat) &= M(n,j)\times\mathrm{Tr}(\pi\qj(\Nhat))\\[4pt]
    \mathrm{Tr}(\Pi\qj \,\Zhat) &= M(n,j)\times\mathrm{Tr}(\pi\qj(\Zhat))\,,
 \end{split}
\end{align}
where
\begin{align}
 \begin{split}
    \mathrm{Tr}\left(\pi\qj(\Nhat)\right) &\eq \left(q+j-\frac{n}{2}+1\right)\left(\frac{q+j-n/2}{2}\right)\\[4pt]
    \mathrm{Tr}\left(\pi\qj(\Zhat)\right) &\eq \frac{1}{2}\left(q+j-\frac{n}{2}+1\right)\left(q-j-\frac{n}{2}\right)\,,
 \end{split}
 \label{eq:charge_vec1_general}
\end{align}
for $n/2-j\leq q\leq n/2+j$, and
\begin{align}
 \begin{split}
    \mathrm{Tr}\left(\pi\qj(\Nhat)\right) &\eq (2j+1)\left(q-\frac{n}{2}\right)\,,\\[4pt]
    \mathrm{Tr}\left(\pi\qj(\Zhat)\right) &\eq 0\,.
 \end{split}
 \label{eq:charge_vec2_general}
\end{align}
for $q>n/2+j$.

\appsubsec{Charge Vectors of $J_z^2$}{app:Jz2_charge_vector}
The charge vectors corresponding to $J_z^2$ are
\begin{align}
    \Tr\left(\pi\qj\big(J_z^2\big)\right) \,=\, \sum_{m=-j}^{q-n/2}m^2 \quad&=\quad\frac{1}{6}\left(j(j+1)(2j+1) + \left(q-\frac{n}{2}\right)\left(q-\frac{n}{2}+1\right)(2q-n+1)\right)\\[6pt]\nonumber
    \quad&=\quad \frac{1}{6}\left(q+j+1-\frac{n}{2}\right)\left[j(2j+1)-2j\left(q-\frac{n}{2}\right)+\left(q-\frac{n}{2}\right)(2q-n+1)\right]
\end{align}
for $q< n/2+j$, and
\begin{align}
    \Tr\left(\pi\qj\big(J_z^2\big)\right) \,=\, \sum_{m=-j}^{j}m^2 \quad=\quad \frac{1}{3}\Big(j(j+1)(2j+1)\Big)
\end{align}
for $q\geq n/2+j$.

\newpage
\appsec{An Accidental Symmetry of $\HTC$: Proof of \cref{prop:accidental_symmetry_matrix}}{app:AccSym_Prop}
\begin{proof} (\cref{prop:accidental_symmetry_matrix})
Recall from \cref{eq:Hqjn_dimension} the expression for the dimension of $\H\qj$,
\begin{align*}
 \begin{split}
  \text{dim}(\H\qj) \eq \min\left\{2j+1\,,\,\,q+1+j-\frac{n}{2}\right\} \eq \begin{cases}
    2j+1 & :\,\,q\geq n/2+j\qquad\text{(filled)}\\[4pt]
    q+1+j-\frac{n}{2} & :\,\,q< n/2+j\qquad\text{(unfilled)}\,.
  \end{cases}
 \end{split}
\end{align*}
In \cref{app:energy_variance}, we found that the variance of $\HTC$ with respect to the maximally mixed state $\tau\qj$ in sector $\H\qj$ is
\begin{align}
 \begin{split}
     \frac{\Tr\big(\pi\qj(\HTC)^2\big)}{d_n(q,j)} &\eq \Tr\left(\HTC^2\tau\qj\right)\\[6pt] &\eq \begin{cases}{2j(j+1)(2q-n+1)}/{3} & :\,\,q\geq n/2+j\quad\text{(filled)} \\[4pt]{(q-\frac{n}{2}+j)(q-\frac{n}{2}+j+2)(3j-q+\frac{n}{2}+1)}/{6}& :\,\,q< n/2+j\quad\text{(unfilled)}\,.\end{cases}
 \end{split}
 \label{eq:recall_TC_energyVar_general}
\end{align}
Recall that when $ \text{dim}(\H\qj)=2j+1$ we call the sector \emph{filled}, and otherwise if $\text{dim}(\H\qj)<2j+1$ we call it \emph{unfilled}. 
Therefore, depending on whether $\mathcal{H}_{q,j}$ and $\mathcal{H}_{q',j'}$ are filled or unfilled, we need to consider 3 different cases.\\

\noindent \textbf{(1)} If they are both filled sectors, then the equality of dimensions requires $j=j'$. 
Then, the equality of second moment implies $q=q'$, which means there are no distinct filled sectors that satisfy both constraints.\\

\noindent \textbf{(2)} If they are both unfilled then the equality of dimensions implies $q+j=q'+j'$.
This equation, together with equality of second moment (\cref{eq:recall_TC_energyVar_general}) in turn implies
\begin{align}
    3j-q = 3j'-q' \quad\implies\quad 
    j=j'\,\,\text{and}\,\,q=q'\,.
\end{align}
Again, there are no distinct sectors that satisfy both constraints.\\

\noindent \textbf{(3)} These arguments prove that for each sector $\mathcal{H}_{q,j}$ there exists, at most, one other sector $\mathcal{H}_{q',j'}$ for which both conditions in \cref{eq:variance_equality} hold, and exactly one of them should be filled and the other should be unfilled.
(Because otherwise, if there is third sector $\mathcal{H}_{q'',j''}$, then at least two of these three sectors are filled or unfilled, which has been ruled out by the above argument.) 
Therefore, suppose $\H_{q',j'}$ is filled, i.e. has dimension $2j'+1$, while $\H\qj$ is unfilled.
Then, equality of dimensions implies
\begin{align}
    q = \frac{n}{2} + 2j' - j \quad\iff\quad2j'=\left(q-\frac{n}{2}+j\right)\,.
 \label{eq:equal_dim_clause}
\end{align}
Are there any $q'\geq j'+n/2$ such that equality of second moment also holds?
Given \cref{eq:equal_dim_clause}, one finds that $q'=\frac{n}{2}-j'+2j$, or equivalently, $2j=q'+j'-\frac{n}{2}$ is the unique $q'$ that works.
Thus, \cref{eq:dimension_equality,eq:variance_equality} are both satisfied precisely for pairs of sectors related via
\begin{align}
    q=\frac{n}{2}+2j'-j \qquad\text{and}\qquad
    q'=\frac{n}{2}-j'+2j\,.
 \label{eq:q_and_q'}
\end{align}

Furthermore, for such pairs of sectors $\H\qj$ and $\H_{q',j'}$, it straightforward to see that not only are the variances are the same, but the $m$-ordered matrices of $\HTC$ in the basis $\{\ket{j,m,\al}\ket{k}_{\text{osc}}\}$ are indeed equivalent.
To see this, note that the non-zero matrix elements of $\HTC$ in an unfilled sector $\H\qj$ are, using the definition in \cref{eq:proj} and \cref{eq:HTC_general_elements},
\begin{align}
    \nonumber[\pi\qj(\HTC)]_{r,r-1} \,&:=\, \bra{j,m,\al}\otimes\bra{k}_{\text{osc}}(\HTC)\ket{j,m-1,\al}\otimes\ket{k+1}_{\text{osc}} &&\text{for}\,\, -j+1 \,\leq\, m \,\leq\, -j+2j'\\[6pt]\nonumber 
    &\eq\sqrt{(j+m)(j-m+1)(k+1)}\\[6pt]
    &\eq\sqrt{r(2j-r+1)(q+j+1-n/2-r)} &&\text{for}\,\, 1 \,\leq\, r \,\leq\, 2j'\,,
 \label{eq:matrix_jmk}
\end{align}
and in filled sector $\H_{q',j'}$ are, using \cref{eq:q_and_q'},
\begin{align}
    \nonumber
    [\pi_{q',j'}(\HTC)]_{r,r-1} &\,:=\,\bra{j',m',\al'}\otimes\bra{k'}_{\text{osc}}(\HTC)\ket{j',m'-1,\al'}\otimes\ket{k'+1}_{\text{osc}} &&\text{for}\,\, -j'+1 \,\leq\, m' \,\leq\, j'\\[4pt]\nonumber
    &\eq \sqrt{(j'+m')(j'-m'+1)(k'+1)}\\[4pt]\nonumber
    &\eq\sqrt{r(2j'-r+1)(q'+j'+1-n/2-r)} &&\text{for}\,\, 1 \,\leq\, r \,\leq\, 2j'\,,\\[4pt]
    &\eq \sqrt{r(q+j-n/2-r+1)(2j+1-r)}
    \label{eq:matrix_j'm'k'}\\[6pt]\nonumber
    &\eq [\pi\qj(\HTC)]_{r,r-1}\,.
\end{align}
and their equal Hermitian conjugate terms.
\end{proof}

\newpage
\appsec{Decomposing $\g$ and $\sv_z$ into simple Lie algebras}{app:simple_lie_decomp}
Using the characterizations of $\g$ and $\sv_z$, given \cref{sec:beyond_sym} and \cref{sec:full_char} respectively, their Lie algebras can be decomposed as follows:
\begin{corollary}\label{cor:lie_algebras}
For $\qmax\geq n$, the restriction of $\g$ to sectors with $q\le \qmax$ is,
\begin{align}
    \g\Pi^{\qmax} &\,\cong\, \left(\bigoplus_{d=2}^{n+1}\su(d)^{\oplus \mu_{d}(\g)}\right) \oplus \uu(1)^{\oplus \nu(n)}\,,
 \label{eq:g_onQmax1}
\end{align}
where
\begin{align}
    \mu_d(\g) &\eq \begin{cases}
                      \qmax-(d-2) &\,\, n-d:\text{odd}\,\\[8pt]
                      (n-d)/2+1&\,\, n-d:\text{even}\,,
                  \end{cases}
 \label{eq:g_onQmax2}
\end{align}
and
\begin{align}
 \begin{split}
    \nu(n) \,\,:=\,\,\sum_{j=\jmin}^{n/2}\left(\qmax+1-\left(\frac{n}{2}-j\right)\right) \quad=\quad \left(\left\lfloor\frac{n}{2}\right\rfloor+1\right)\left(\qmax+1-\frac{1}{2}\left\lfloor\frac{n}{2}\right\rfloor\right)\,.
 \end{split}
 \label{eq:g_onQmax3}
\end{align}
Also, for $\qmax\geq n+\lceil\frac{n}{2}-1\rceil$, (which ensures that the all restrictions due to the accidental symmetry of $\HTC$ are realized: see \cref{eq:acc_sym_boundaries}) the restrictions of $\vz$ and $\wz$ to sectors with $q\le \qmax$ are
\begin{align}
 \label{eq:vz_onQmax1}
    \vz\Pi^{\qmax} &\,\cong\, \left(\bigoplus_{d=2}^{n+1}\su(d)^{\oplus \mu_{d}(\vz)}\right) \oplus \uu(1)^{\oplus1} \\[4pt]
    \wz\Pi^{\qmax} &\,\cong\, \left(\bigoplus_{d=2}^{n+1}\su(d)^{\oplus \mu_{d}(\vz)}\right) \oplus \uu(1)^{\oplus2}\,,
 \label{eq:wz_onQmax}
\end{align}
where
\begin{align}
    \mu_d(\vz) &\eq \begin{cases}
                      \qmax-\big(n+d-3\big)/2 &\,\, n-d: \text{odd}\\[8pt]
                      ({n-d})/{2}+1&\,\, n-d: \text{even}\,.
                  \end{cases}
 \label{eq:vz_onQmax2}
\end{align}
\end{corollary}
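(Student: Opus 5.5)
The proof reduces \cref{cor:lie_algebras} to counting sectors $\H\qj$ with $q\le\qmax$, grouped by dimension $d=d_n(q,j)$, and then subtracting the sectors identified by the accidental symmetry. For $\g\Pi^{\qmax}$ we use \cref{eq:g-char}: each sector with $0\le q\le\qmax$ and $j\ge\max(\jmin,\frac n2-q)$ contributes one $\uu(\H\qj)=\su(d_n(q,j))\oplus\uu(1)$. The multiplicity factor $\1_{M(n,j)}$ does not affect the abstract Lie algebra.

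The $\uu(1)$ count $\nu(n)$ is the number of sectors. For fixed $j$ the charge runs over $q=\frac n2-j,\dots,\qmax$, giving $\qmax+1-(\frac n2-j)$ values. Summing over $j$ and writing $\frac n2-j=0,1,\dots,\lfloor n/2\rfloor$ yields the closed form in \cref{eq:g_onQmax3}.

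For $\mu_d(\g)$ we fix $d\ge2$ and count the pairs $(q,j)$ with $d_n(q,j)=d$; the case $d=1$ is abelian and already sits in the center. There are two kinds of such pairs.
\begin{itemize}
\item \textbf{Filled sectors.} These have $2j+1=d$, so $j=(d-1)/2$, and $q$ ranges over $\frac n2+j\le q\le\qmax$. This requires $n-d$ odd, i.e.\ $2j\equiv n \pmod 2$, and gives $\qmax-\frac n2-\frac{d-1}{2}+1$ values, provided $\qmax\ge n$.
\item \textbf{Unfilled sectors.} These satisfy $q+1+j-\frac n2=d$ with $2j+1>d$, $j\le n/2$, $q\ge0$. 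For each admissible $j$ there is exactly one $q$, and $j$ runs over $\frac{d}{2},\dots,\frac n2$ in integer steps, giving $(n-d)/2+1$ values.
\end{itemize}
We would double-check the parity bookkeeping carefully here. The stated formula suggests that the filled and unfilled contributions land in complementary parity classes of $n-d$: filled $2j+1=d$ forces $d\equiv n+1\pmod 2$, while unfilled $2j\ge d$ with $2j\equiv n$ allows both parities. I would verify against \cref{tab:commutator_algebras} (for $n=5,\qmax=7$) and adjust the split so that the final totals match.

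For $\vz$ and $\wz$ we invoke \cref{thm:Thm0}, \cref{part:1}, which gives $\vz=\sv_z\oplus\uu(1)$ and $\wz=\sv_z\oplus\uu(1)^{\oplus2}$. We then invoke \cref{prop:quasi_semi_universality:algebra}, which says $\sv_z\Pi^{\qmax}$ is $[\g,\g]\Pi^{\qmax}$ with each accidental pair identified, i.e.\ the diagonal copy of $\su(2j'+1)$. Hence $\mu_d(\vz)=\mu_d(\g)-\#\{\text{pairs }(j,j'):\,2j'+1=d,\ 0<j'<j\le n/2\}$. By remark (2) after \cref{prop:accidental_symmetry_matrix}, the hypothesis $\qmax\ge n+\lceil n/2-1\rceil$ guarantees that both members of every pair lie below the cutoff. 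The number of such $j$ for fixed $j'=(d-1)/2$ is $\frac n2-j'$, restricted to $j\equiv n/2 \pmod 1$. This again requires $n-d$ odd, and subtracting gives $\qmax-(n+d-3)/2$ in that parity class while leaving the other class unchanged.

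The main obstacle is not conceptual but the exact sector counting: getting the parity cases and the boundary terms right. I would test the final formulas against \cref{tab:commutator_algebras} before trusting them.
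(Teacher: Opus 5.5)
Your route is the paper's own: count the sectors $\mathcal{H}_{q,j}$ with $q\le\qmax$ by dimension, split them into filled and unfilled, and for $\vz,\wz$ keep one diagonal $\su(d)$ per accidental pair via \cref{prop:quasi_semi_universality:algebra} and \cref{part:1} of \cref{thm:Thm0}. Your counts for $\nu(n)$, for the filled sectors, and for the pairs are correct.

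The gap is the unfilled count. You left it unresolved and then proposed to repair it in the wrong direction. You claim that $j$ runs over $\frac d2,\dots,\frac n2$, giving $\frac{n-d}{2}+1$ unfilled sectors. That holds only for $n-d$ even; for $n-d$ odd this number is not even an integer. Every admissible $j$ has $2j\equiv n\pmod 2$. So for $n-d$ odd the condition $2j+1>d$ forces $2j\ge d+1$, hence $j\in\{\frac{d+1}{2},\dots,\frac n2\}$, and there are $\frac{n-d+1}{2}$ unfilled sectors. Your suggested resolution, that filled and unfilled sectors occupy complementary parity classes of $n-d$, is false and would not reproduce the statement. It yields $\qmax-\frac{n+d-3}{2}$ in the odd class, which is $\mu_d(\vz)$ rather than $\mu_d(\g)$; the two agree only at $d=n+1$. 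For instance, with $n=5$ and $d=2$ there are five filled sectors ($j=\frac12$, $3\le q\le 7$) and also two unfilled ones ($j=\frac32,\,q=2$ and $j=\frac52,\,q=1$). Adjusting the split until totals match \cref{tab:commutator_algebras} is not a substitute for this count.

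With the correct count, the two parity classes give the following.
\begin{itemize}
\item \emph{Odd $n-d$:} $\bigl(\qmax+1-\frac{n+d-1}{2}\bigr)+\frac{n-d+1}{2}=\qmax-(d-2)$.
\item \emph{Even $n-d$:} there are no filled sectors, since $2j+1=d$ would force $n-d$ odd, so the count is $\frac{n-d}{2}+1$.
\end{itemize}
Together these give \cref{eq:g_onQmax2}. You should also note that unfilled sectors satisfy $q<\frac n2+j\le n\le\qmax$, so none of them is removed by the cutoff.

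Your pair count $\frac n2-j'=\frac{n-d+1}{2}$ in the $\vz$ step is exactly the corrected unfilled count. By \cref{eq:q_q'}, each unfilled sector of dimension $d=2j'+1\ge 2$ is matched with a unique filled sector of the same dimension. So $\mu_d(\vz)$ is simply the number of filled sectors, $\qmax-\frac{n+d-3}{2}$, which is how the paper phrases it. With that repair, your subtraction step and the $\vz$, $\wz$ decompositions go through as written.
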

\Cref{eq:g_onQmax1,eq:g_onQmax2,eq:g_onQmax3} follow directly from the decomposition of $\g$ in \cref{eq:g-char}, together with the dimension formula in \cref{eq:Hqjn_dimension}:
\[ \dim(\H\qj) \eq \min\left\{2j+1\,,\,\,q+1+j-\frac{n}{2}\right\} \,.\]
In particular, the number of $\mathfrak{u}(1)$ factors is precisely the total number of distinct sectors in $\g\Pi^{\qmax}$; for each $j\in[\jmin,\dotsc,n/2]$, there is one sector for each charge $q\geq n/2-j$, giving the sum in \cref{eq:g_onQmax3}.
The number $\mu_d(\g)$ of $\su(d)$ factors depends on the parity of $n-d$.
\begin{itemize}
    \item[$\bullet$] If $n-d$ is odd, then there are filled as well as unfilled sectors with dimension $d$.
    The number of unfilled sectors is $(n-d+1)/2$, the number of angular momenta $j$ satisfying $d<2j+1$.
    The number of filled sectors, i.e. sectors with dimension $d=2j+1$ and charge $q\geq n/2+j$, is $\qmax+1-(n+d-1)/2$.
    Adding these two quantities gives the first expression in \cref{eq:g_onQmax2}.
    \item[$\bullet$] If $n-d$ is even, then there are only unfilled sectors with dimension $d$; one for each distinct angular momenta $j$ satisfying $2j+1\geq d$.
    This gives the second expression in \cref{eq:g_onQmax2}.
\end{itemize}
Likewise, \Cref{eq:vz_onQmax1,eq:vz_onQmax2} follow from the characterization of $\V_z$ in \cref{thm:full_characterization} and  \cref{eq:Lie_vw}.
In this case, $\vz$ and $\wz$ have respectively just one and two $\uu(1)$ factor.
The number of $\su(d)$ factors in $\vz\Pi^{\qmax}$ and $\wz\Pi^{\qmax}$ differs from $\g\Pi^{\qmax}$ due to the accidental symmetry.
In particular, when $n-d$ is odd, the projection of $\vz$ (and $\wz$) on each unfilled sector with dimension $d$ is equal to its projection on a corresponding filled sector with the same dimension.
Therefore, the decompositions of $\vz\Pi^{\qmax}$ and $\wz\Pi^{\qmax}$ contain a single $\su(d)$ factor for each such pair of sectors related by the accidental symmetry.
To avoid double counting these sectors in \cref{eq:g_onQmax3}, we count only filled sectors for $n-d$ odd; there are $\qmax-(n+d-3)/2$ filled sectors with dimension $d$.

\vspace{1mm}
See \cref{tab:commutator_algebras} for examples of these decompositions, in the cases of $n=5$ and $n=6$ qubits.
Also, \cref{fig:6_qubit}, imported from \cite{symmetry_paper}, illustrates the sector decomposition of $\H_{\text{qubits}}\otimes\H_{\text{osc}}$ for the 6-qubit example, from which one can visually derive \cref{eq:vz_onQmax2} and the decomposition in \cref{tab:commutator_algebras}.
\begin{figure*}[htp]
    \centering
    \includegraphics[width=0.68\linewidth]{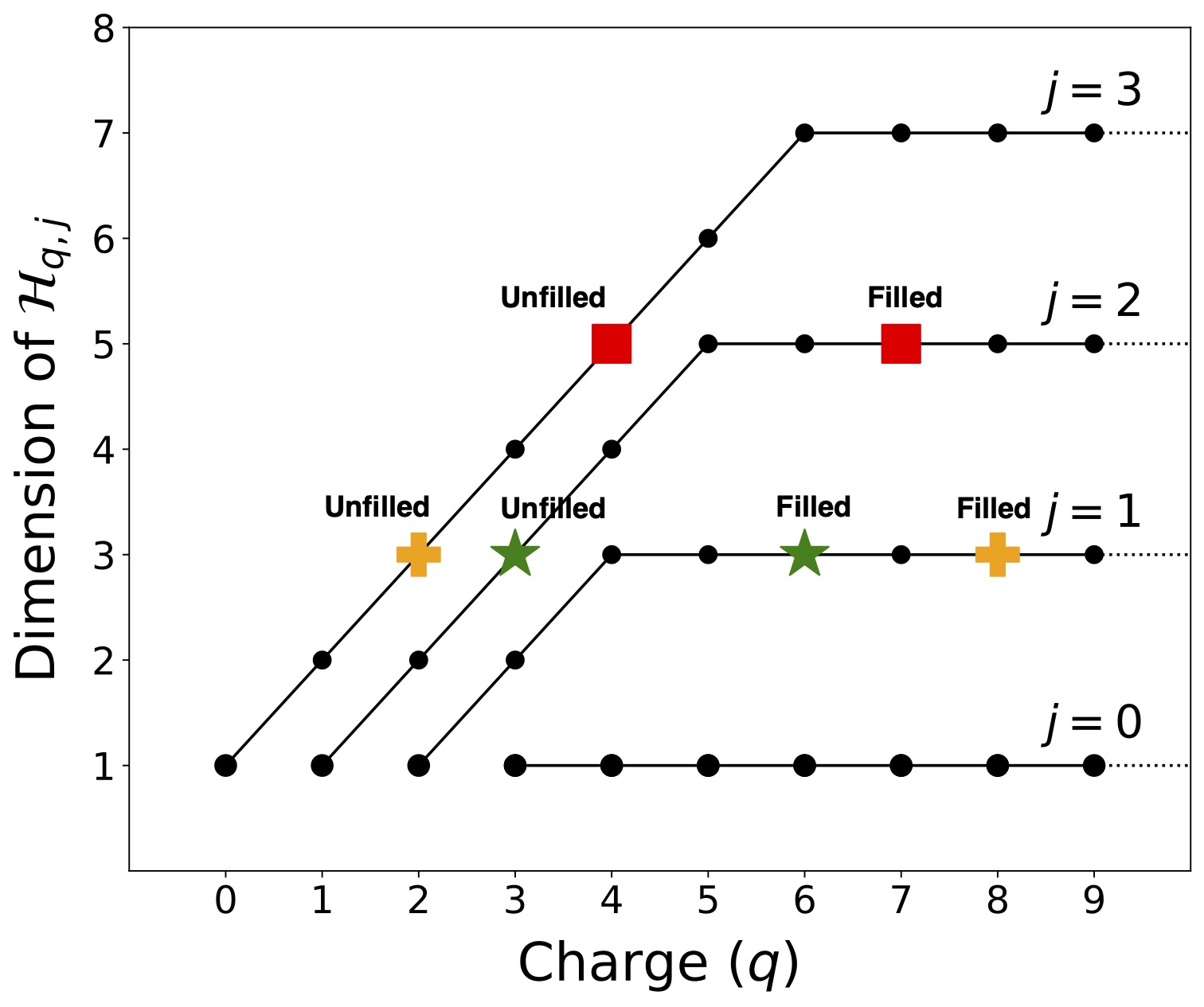}
    \caption{\textbf{Imported from \cite{symmetry_paper} -- Sector Decomposition and Accidental Symmetry of the Tavis-Cummings Hamiltonian:} (6-qubit example).}
 \label{fig:6_qubit}
\end{figure*}

\vspace{2mm}
This corollary directly gives the dimensions of the commutator subgroups $[\g,\g]\Pi^{\qmax}$ and $\sv_z\Pi^{\qmax}$.
First, assuming $\qmax\geq n$,
\begin{align}
 \begin{split}
    \dim\big([\g,\g]\Pi^{\qmax}\big) &\eq \sum_{d=2}^{n+1}\mu_d(\g)(d^2-1) \\[6pt]
    &\eq \qmax\left(\frac{1}{6}n^3+n^2+\frac{4}{3}n\right)\,-\,n\left(\frac{5}{48}n^3+\frac{1}{2}n^2+\frac{5}{24}n\right)\,+\,\begin{cases}
        \frac{\qmax+n}{2} + \frac{5}{16} & :\,n\,\,\text{odd} \\[4pt]
        \frac{3}{4}n & :\,n\,\,\text{even}\,.
    \end{cases}
 \end{split}
\end{align}
Second, assuming $\qmax\geq n + \lceil\frac{n}{2}-1\rceil$,
\begin{align}
 \begin{split}
     \dim\big(\sv_z\Pi^{\qmax}\big) &\eq \sum_{d=2}^{n+1}\mu_d(\sv_z)(d^2-1) \\[6pt]
     &\eq \qmax\left(\frac{1}{6}n^3+n^2+\frac{4}{3}n\right) \,-\,n\left(\frac{1}{8}n^3+\frac{7}{12}n^2+\frac{1}{8}n\right)\,+\,\begin{cases}
        \frac{\qmax}{2} + \frac{7n+3}{12} & :\,n\,\,\text{odd} \\[4pt]
        \frac{13}{12}n & :\,n\,\,\text{even}\,.
    \end{cases}
 \end{split}
\end{align}
Using these formulas, one can confirm that the difference between the dimensions of the commutator subalgebras of $\g$ and $\vz$ is given by \cref{eq:dimension_diff}.
See the examples of $n=5$ and $n=6$ qubits in \cref{tab:commutator_algebras}.

\newpage
\appsec{Proof of \cref{lem:comm_subalg}}{app:comm_subalg_proof}
\begin{proof}
First, note that since $\overline{B}:=[A,B]$ and $[A,[A,B]]\propto B$ with a non-zero coefficient, $\alg\{B,\overline{B}\}$ is indeed a subalgebra of $[\g,\g]$.
Next, we show that $[\g,\g]\subset\alg\{B,\overline{B}\}$ as well, and the two algebras are therefore equal.

We first use induction to prove the following property:
\begin{align}
    X\in\alg\big\{B,\overline{B}\big\} \,\,\implies\,\, [A,X]\in\alg\big\{B,\overline{B}\big\}\,.
 \label{eq:AX_property}
\end{align}
Let $\mathcal{T}_k$ be the subspace of $\alg\{B,\overline{B}\}$ that can be expressed as linear combinations of $l$-th order commutators of generators $B$ and $\overline{B}$, for all $0\leq l\leq k$. 
For the base case, we set $\mathcal{T}_0=\s\{B,\overline{B}\}$. 
This in particular, means
\begin{align}
    \mathcal{T}_{k+1}:=\s\left\{X,\,[B,X],\,[\overline{B},X]\,:\,X\in\mathcal{T}_k\right\}\,.
\end{align}
Then, using $[A,B]=\overline{B}$ and $[A,\overline{B}]=[A,[A,B]]\propto B$, we find that for $X\in \mathcal{T}_0$,
\begin{align}
    [A,X] \in \,\,&\s\left\{[A,B], [A,\overline{B}]\right\}\,=\,\s\big\{\overline{B}, B\big\}=\mathcal{T}_0\,.
\end{align}
The hypothesis assumption is that for arbitrary integer $k\ge 0$,
\begin{align}\label{eq:hyp}
    \forall X\in\mathcal{T}_{k}\,\,:\,\,[A,X] \in \mathcal{T}_{k}\,.
\end{align}
Now consider
\begin{align}
    X' \in \mathcal{T}_{k+1}=\s\left\{X,[B,X],[\overline{B},X]\,:\,X\in\mathcal{T}_k\right\}\,.
\end{align}
Therefore, 
\begin{align}
 \begin{split}
    [A,X'] \in \s\Big\{[A,X],&\big[A,[B,X]\big],\\
    &\big[A,[\overline{B},X]\big]\,:\,X\in\mathcal{T}_k\Big\}\,.
 \end{split}
\end{align}
Applying the Jacobi identity to the second order commutators, this is contained in
\begin{align}
    \s\Big\{[A,X],&\big[\overline{B},X\big],\big[B,[A,X]\big],\\\nonumber
    &\big[B,X\big],\big[\overline{B},[A,X]\big]\,:\,X\in\mathcal{T}_k\Big\}\,.
\end{align}
By definition, $[B,X],\,[\overline{B},X]\in\mathcal{T}_{k+1}$, and by the inductive hypothesis in \cref{eq:hyp}, the remaining three commutators are also contained in $\mathcal{T}_{k+1}$, so $[A,X']\in\mathcal{T}_{k+1}$.
Recall that by definition, any element of $Y\in \alg\{B,\overline{B}\}$ is a finite linear combination of finite nested commutators of $B$ and $\overline{B}$, which means it is contained in $\mathcal{T}_k$ for a sufficiently large 
$k$. 
Therefore, by induction, we conclude that for any $Y\in\alg\{B,\overline{B}\}$, $[A,Y]\in\mathcal{T}_{k+1} \subset\alg\{B,\overline{B}\}$, which proves \cref{eq:AX_property}.

Now, using this property (\cref{eq:AX_property}), we again use induction to show that $[\g,\g]\subset\alg\{B,\overline{B}\}$.
Let $\mathcal{S}_k$ be the subspace of $[\g,\g]$ that can be expressed as linear combinations of $l$-th order commutators of generators $A$ and $B$, for all $1\leq l\leq k$.
The base case is
\begin{align}
    \mathcal{S}_1 = \s\big\{[A,B]:=\overline{B}\big\} \subset\alg\big\{B,\overline{B}\big\}\,.
\end{align}
Now, as the inductive hypothesis, suppose $\mathcal{S}_k\subset\alg\{B,\overline{B}\}$ and let $X'\in\mathcal{S}_{k+1}$.
By definition,
\begin{align}
    X'\in\s\big\{X,[A,X],[B,X]\,:\,X\in\mathcal{S}_k\subset\alg\{B,\overline{B}\}\big\}\,.
\end{align}
Clearly, via the inductive hypothesis and \cref{eq:AX_property}, each of $X$, $[A,X]$, and $[B,X]$ are in $\alg\{B,\overline{B}\}$, so $X'\in\alg\{B,\overline{B}\}$.
Therefore, $\mathcal{S}_k\subset\alg\{B,\overline{B}\}$ for all $k\geq1$ and thus, $[\g,\g]\subset\alg\{B,\overline{B}\}$.
\end{proof}

\newpage
\appsec{Proof of \cref{lem:F}}{app:F_proof}
First, it is useful to recall \cref{prop:quasi_semi_universality:algebra}.
According to this result, an operator $A$ belongs to $\mathfrak{sv}_z$ if and only if (i) $A\in[\mathfrak{g},\mathfrak{g}]$, and (ii) $A$ respects the accidental symmetry, namely
\begin{align}
    \big[A,S_{j,j';\alpha,\alpha'}\big] = 0
\end{align}
for all $j>j'$, where $\alpha$ and $\alpha'$ are arbitrary multiplicity indices, and operator $S_{j,j';\alpha,\alpha'}$ is explicitly given by
\begin{align}
    S_{j,j';\alpha,\alpha'}
    := \sum_{m'=-j'}^{j'}
    \ket{j',m',\alpha'}\bra{j,m'-j+j',\alpha}
    \otimes
    \ket{2j-j'-m'}\bra{j'-m'}_{\rm osc}\,.
 \label{eq:S_def}
\end{align}
Since $A$ is PI, if this condition holds for one choice of $\alpha$ and $\alpha'$, then it holds for all choices.
Hence, we often assume $\alpha$ and $\alpha'$ are arbitrary but fixed, and abbreviate
\begin{align}
    S_{j,j'} \,:=\, S_{j,j';\,\al,\al'}\,.
\end{align}
For $j>j'$, it is useful to define
\begin{align}
    S_{j',j;\,\al',\al} \,:=\, S^\dagger_{j,j';\alpha,\alpha'}\,.
\end{align}
Now, if $A\in[\mathfrak{g},\mathfrak{g}]$, then $A$ respects both the permutational and U(1) symmetries.
Hence it is block-diagonal with respect to the projectors $\Pi_{q,j}$:
\begin{align}
    A
    \eq A\sum_{q,j}\Pi_{q,j}
    \eq \sum_{q,j}\Pi_{q,j}A\Pi_{q,j}
    \eq \sum_{q,j} A_{q,j}\,,
\end{align}
where
\begin{align}
    A_{q,j}:=\Pi_{q,j}A\Pi_{q,j}\,.
\end{align}
Therefore, the requirement that $A$ respect the accidental symmetry is equivalent to the requirement that, for every pair of sectors related by the accidental symmetry,
\begin{align}
    \big[A_{q,j}+A_{q',j'},\,S_{j,j';\alpha,\alpha'}\big]=0\,,
\end{align}
i.e., for every pair of angular momenta $j>j'$ and 
\begin{align}
    q=\frac{n}{2}+2j'-j\,\,,
    \qquad
    q'=\frac{n}{2}-j'+2j\,.
 \label{eq:q_q'_app}
\end{align}

Using this background, we are now ready to prove \cref{lem:F}.
Recall the assumption of the lemma: we are given $iF\in[\mathfrak{g},\mathfrak{g}]$.
Equivalently, $F$ is a PI, U(1)-invariant Hermitian operator satisfying
\begin{align}
    \Tr(F\Pi_{q,j}) \eq 0\,,
\end{align}
for all $q$ and $j$.
However, $F$ does not necessarily respect the accidental symmetry.
The lemma shows that, under the additional assumption
\begin{align} \label{eq:F_support_kk1}
    F \eq F\Big[\mathbb{I}_{\rm qubits}\otimes\big(\pure{k}_{\rm osc}+\pure{k+1}_{\rm osc}\big)\Big]\,,
\end{align}
there exists an operator $\widetilde F$ such that
\begin{enumerate}
    \item $i\widetilde F\in\mathfrak{sv}_z
    :=\mathfrak{alg}\{i\HTC,i\overline{\HTC}\}$;
    \item the restriction of $\widetilde F$ to the support of $F$ agrees with
    $F$, namely
    \begin{align}
        F \eq \widetilde F\Big[\mathbb{I}_{\rm qubits}\otimes\big(\pure{k}_{\rm osc}+\pure{k+1}_{\rm osc}\big)\Big].
    \end{align}
\end{enumerate}

The idea is to construct $\widetilde F$ by symmetrizing $F$ with respect to the accidental symmetry.
Since $F$ is PI and U(1)-invariant, it is block-diagonal with respect to the projectors $\Pi_{q,j}$:
\begin{align}
    F
    \eq F\sum_{q,j}\Pi_{q,j}
    \eq \sum_{q,j}\Pi_{q,j}F\Pi_{q,j}
    \eq \sum_{q,j}F_{q,j}\,,
\end{align}
where
\begin{align}
    F_{q,j} \,:=\, \Pi_{q,j}F\Pi_{q,j}\,.
\end{align}
Similarly, we define
\begin{align}
    \widetilde F
    \eq \sum_{q,j}\Pi_{q,j}\widetilde F\Pi_{q,j}
    \eq \sum_{q,j}\widetilde F_{q,j}\,.
\end{align}

For sectors $\mathcal{H}_{q,j}$ that are not paired with any other sector by the accidental symmetry, we simply define
\begin{align}
    \widetilde F_{q,j} \eq F_{q,j}\,.
\end{align}
For two sectors $\mathcal{H}_{q,j}$ and $\mathcal{H}_{q',j'}$ related by the accidental symmetry, we define
\begin{align}
    \widetilde F_{q,j}
    &\eq
    F_{q,j} +S_{j'\rightarrow j}(F_{q',j'})\,,
\end{align}
where the superoperator $S_{j\rightarrow j'}$ is defined by
\begin{align}
    S_{j\rightarrow j'}(F_{q,j})
    \,:=\,
    \sum_{\alpha'}
    S_{j,j';\alpha_0,\alpha'}\,
    F_{q,j}\,
    S_{j,j';\alpha_0,\alpha'}^\dagger\,,
    \qquad
    \alpha_0 \,\text{arbitrary}\,.
\label{eq:S_superoperator}
\end{align}
The choice of $\alpha_0$ is irrelevant because $F_{q,j}$ is PI.
The sum over $\alpha'$ ensures that $S_{j\rightarrow j'}(F_{q,j})$ is also PI.
Moreover, since the image lies entirely inside the fixed-charge sector $\mathcal{H}_{q',j'}$, the operator $S_{j\rightarrow j'}(F_{q,j})$ is U(1)-invariant.
Thus, $S_{j\rightarrow j'}(F_{q,j})$ is again a PI, U(1)-invariant operator.
Therefore, if $F_{q,j}$ is anti-Hermitian, then
\begin{align}
    S_{j\rightarrow j'}(F_{q,j})\in \mathfrak{g}\,.
\end{align}
Furthermore, because each $S_{j,j';\alpha_0,\alpha'}$ is a partial isometry from the corresponding copy of $\mathcal{H}_{q,j}$ into $\mathcal{H}_{q',j'}$, we have
\begin{align}
    S_{j,j';\alpha_0,\alpha'}^\dagger
    S_{j,j';\alpha_0,\alpha'}
    \eq
    \Pi_{q,j,\alpha_0}\,,
\end{align}
where $\Pi_{q,j,\alpha_0}$ denotes the projector onto the multiplicity copy labeled by $\alpha_0$.
Therefore,
\begin{align}
 \begin{split}
    \Tr\!\big(S_{j\rightarrow j'}(F_{q,j})\big)
    &\eq\sum_{\alpha'}\Tr\!\left(S_{j,j';\alpha_0,\alpha'}\,F_{q,j}\,S_{j,j';\alpha_0,\alpha'}^\dagger\right)\nonumber\\[4pt]
    &\eq \sum_{\alpha'}\Tr\!\left(F_{q,j}\,S_{j,j';\alpha_0,\alpha'}^\dagger S_{j,j';\alpha_0,\alpha'}\right)\\[4pt]
    &\eq \sum_{\alpha'} \Tr\!\left(F_{q,j}\,\Pi_{q,j,\alpha_0}\right)\,.
 \end{split}
\end{align}
Since $F_{q,j}$ is PI, its trace on each multiplicity copy is the same. 
Hence the trace of the image is proportional to the trace of $F_{q,j}$.
In particular, if
\begin{align}
    \Tr(F_{q,j})=0\,,
\end{align}
then
\begin{align}
    \Tr\!\big(S_{j\rightarrow j'}(F_{q,j})\big)=0\,.
\end{align} 
It follows that each block of $\widetilde F$ remains traceless:
\begin{align}
    \Tr\big(\Pi_{q,j}\widetilde F\big)=0\,,
\end{align}
for all $q$ and $j$. 
Therefore,
\begin{align}
    i\widetilde F\in[\mathfrak{g},\mathfrak{g}].
\end{align}
It remains to check that $\widetilde F$ respects the accidental symmetry.
By construction, for every paired pair of sectors $\mathcal{H}_{q,j}$ and $\mathcal{H}_{q',j'}$, the two blocks $\widetilde F_{q,j}$ and $\widetilde F_{q',j'}$ are related by the accidental-symmetry map.
Equivalently,
\begin{align}
    \big[\widetilde F_{q,j}+\widetilde F_{q',j'},\,S_{j,j';\alpha,\alpha'}\big]=0\,.
\end{align}
For unpaired sectors this condition is trivial, and hence, $\widetilde F$ respects the accidental symmetry.
Combining this with $i\widetilde F\in[\mathfrak{g},\mathfrak{g}]$, we conclude from \cref{prop:quasi_semi_universality:algebra} that
\begin{align}
    i\widetilde F\in\mathfrak{sv}_z\,.
\end{align}

Finally, we show that $\widetilde F$ agrees with $F$ after projection onto the subspace in which the oscillator has either $k$ or $k+1$ excitations.
Let
\begin{align}
    \Xi_{k,k+1}\,:=\,\pure{k}_{\rm osc}+\pure{k+1}_{\rm osc}\,,
\end{align}
where we suppress the tensor product with the identity operator on the qubits.
We first note that the accidental-symmetry operators have no matrix elements within this two-level oscillator subspace. 
Recall from \cref{eq:S_comms_Jz_adaga} that
\begin{align}
    \big[a^\dag a,\,S_{j,j'}\big] \eq 2(j-j')S_{j,j'}\,.
 \label{eq:S_number_comm}
\end{align}
Sandwiching this equation between $\pure{k}_{\rm osc}$ (and $\pure{k+1}_{\rm osc}$) on both sides gives
\begin{align}
    \pure{k}_{\rm osc}S_{j,j'}\pure{k}_{\rm osc}
    \eq \pure{k+1}_{\rm osc}S_{j,j'}\pure{k+1}_{\rm osc}
    \eq 0\,.
\end{align}
whenever $j\neq j'$.
Similarly, sandwiching \cref{eq:S_number_comm} between
$\pure{k+1}_{\rm osc}$ on the left and $\pure{k}_{\rm osc}$ on the right gives
\begin{align}
    \pure{k+1}_{\rm osc}\big[a^\dag a,S_{j,j'}\big]\pure{k}_{\rm osc}
    &\,\eq\,\pure{k+1}_{\rm osc}S_{j,j'}\pure{k}_{\rm osc}\,.
\end{align}
On the other hand, using the right-hand side of \cref{eq:S_number_comm}, the
same quantity is
\begin{align}
    \pure{k+1}_{\rm osc}\big[a^\dag a,S_{j,j'}\big]\pure{k}_{\rm osc}
    &\,\eq\, 2(j-j')\,\pure{k+1}_{\rm osc}S_{j,j'}\pure{k}_{\rm osc}\,.
\end{align}
Since $2(j-j')$ is a nonzero even number whenever $j\neq j'$, we conclude that
\begin{align}
    \pure{k+1}_{\rm osc}S_{j,j'}\pure{k}_{\rm osc} \eq 0\,.
\end{align}
The same argument, now sandwiching between $\pure{k}_{\rm osc}$ on the left and $\pure{k+1}_{\rm osc}$ on the right, gives
\begin{align}
    \pure{k}_{\rm osc}S_{j,j'}\pure{k+1}_{\rm osc} \eq 0\,.
\end{align}
Combining these relations, we obtain
\begin{align} \label{eq:proj_S_kk'}
    \Xi_{k,k+1}S_{j,j';\alpha,\alpha'}\Xi_{k,k+1} \eq 0\,.
\end{align}
\vspace{1em}

We now apply this observation to the symmetrized blocks.
For a pair of sectors $\mathcal{H}_{q,j}$ and $\mathcal{H}_{q',j'}$ related by the accidental symmetry,
\begin{align}
    \widetilde F_{q,j}
    \eq F_{q,j} + S_{j'\rightarrow j}(F_{q',j'})\,,
\end{align}
with
\begin{align}
    S_{j'\rightarrow j}(F_{q',j'})
    \eq
    \sum_{\alpha'}
    S_{j',j;\alpha_0,\alpha'}F_{q',j'}
    S_{j',j;\alpha_0,\alpha'}^\dagger\,.
\end{align}
Using \cref{eq:F_support_kk1}, we can insert $\Xi_{k,k+1}$ on both sides of
$F_{q',j'}$:
\begin{align}
    F_{q',j'} \eq \Xi_{k,k+1}F_{q',j'}\Xi_{k,k+1}\,.
\end{align}
Therefore,
\begin{align}
    S_{j'\rightarrow j}(F_{q',j'})\Xi_{k,k+1}
    &\eq
    \sum_{\alpha'}
    S_{j',j;\alpha_0,\alpha'}
    \Xi_{k,k+1}F_{q',j'}\Xi_{k,k+1}
    S_{j',j;\alpha_0,\alpha'}^\dagger
    \Xi_{k,k+1}
    \,\,\,\eq0\,,
\end{align}
where in the last line we used \cref{eq:proj_S_kk'}.
Hence,
\begin{align}
    \widetilde F_{q,j}\Xi_{k,k+1}
    \eq F_{q,j}\Xi_{k,k+1}
    \eq F_{q,j}\,.
\end{align}
The same argument applies to every paired sector, while for unpaired sectors
we have $\widetilde F_{q,j}=F_{q,j}$ by definition.
Summing over all $q$ and $j$, we conclude that
\begin{align}
    \widetilde F\Xi_{k,k+1}
    \eq \sum_{q,j}F_{q,j}
    \eq F\,.
\end{align}
Equivalently,
\begin{align}
    F
    \eq\widetilde F
    \Big[\mathbb{I}_{\rm qubits}\otimes\big(\pure{k}_{\rm osc}+\pure{k+1}_{\rm osc}\big)\Big]\,,
\end{align}
which proves that $\widetilde F$ agrees with $F$ on the required oscillator subspace.

\newpage
\appsec{Subsystem Universality in each sector: general version of \cref{lem:subspace_control}}{app:subsystem_control}
Here, we summarize and restate results originally from \cite{Schirmer_etal_2001_controllability}, which determine whether specific classes of Hamiltonians are (semi)-universal.
\Cref{lem:subspace_control}, which we use to prove \cref{thm:full_characterization}, is a special case of the following lemma:
\begin{lemma}\label{lem:subspace_control_general}
Let $\ket{y}: y=1,\dotsc, d$ be an orthonormal basis for $\C^d$. 
Consider the two Hamiltonians
\begin{align}
    B &= \sum_{y=1}^{d} b_y\,\pure{y}\,,\\[8pt]
    A &= \sum_{y=1}^{d-1} a_y\Big(\ket{y+1}\bra{y}+\ket{y}\bra{y+1}\Big)\,,
\end{align}
and assume all $a_y\in\R$ are nonzero.
Also, define the ''energy gaps'' $\Delta_y:=b_y-b_{y+1}$, and the second-order finite differences,
\begin{align}
    \Delta^2a_y := \begin{cases}
        2a_1^2 - a_2^2 & y=1\\[2pt]
        2a_y^2 - a_{y-1}^2 - a_{y+1}^2 & y=2,\dotsc,d-2\\[2pt]
        2a_{d-1}^2 - a_{d-2}^2 & y=d-1
    \end{cases}\,.
\end{align}
Then, the group of unitaries generated from  Hamiltonians $A$ and $B$ contains $\SU(d)$ -- equivalently, $\su(d)\subseteq\alg_{\R}\{iA,iB\}$, the real Lie algebra generated by $iA$ and $iB$ -- if any of the following conditions hold:
\begin{itemize}
    \item[1.] (Anharmonic system) At least one of the boundary energy gaps on the is different from all others, that is either
    \begin{itemize}
        \item[(i)]$\Delta_1\neq0$ and $|\Delta_y|\neq|\Delta_1|$ for all $y>1$, or
        \item[(ii)]$\Delta_{d-1}\neq0$ and $|\Delta_y|\neq|\Delta_{d-1}|$ for all $y<d-1$.
    \end{itemize}
    \item[2.] (Harmonic system) All energy gaps of $B$ are equal and nonzero, i.e. $\Delta_y=\Delta\neq0$ for all $y$, and at least one of boundary second-order finite differences is different from all others, that is either
    \begin{align*}
     \text{(i)}\quad\Delta^2a_y &\neq \Delta^2a_1 &&\text{for all}\,\,y>1\\
     \text{(ii)}\quad \Delta^2a_y &\neq \Delta^2a_{d-1} &&\text{for all}\,\,y<d-1\,.
    \end{align*}
\end{itemize}
\end{lemma}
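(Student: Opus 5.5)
The plan is the standard ``isolate one edge, then grow the controllable block'' argument. It proceeds in two stages: first prove Case 1 (anharmonic), then reduce Case 2 (harmonic) to Case 1 by replacing $B$ with a new diagonal element of the algebra. Throughout I write $E_y:=\ket{y}\bra{y+1}$, and I set $X_y:=i(E_y+E_y^\dagger)$ and $Y_y:=E_y-E_y^\dagger$, so that $iA=\sum_y a_y X_y$. The basic identities are $[B,E_y]=\Delta_y E_y$ and $[B,E_y^\dagger]=-\Delta_y E_y^\dagger$. From these, $\mathrm{ad}_{iB}$ maps $X_y\mapsto \Delta_y Y_y$ up to a sign, maps $Y_y\mapsto$ a multiple of $\Delta_y X_y$, and hence $\mathrm{ad}_{iB}^2 X_y=-\Delta_y^2X_y$.

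\emph{Case 1.} Assume (i), so $\Delta_1\neq 0$ and $|\Delta_y|\neq|\Delta_1|$ for $y>1$; the case (ii) follows by reversing the basis order. Since $\mathrm{ad}_{iB}^2$ acts on $X_y$ with eigenvalue $-\Delta_y^2$, applying the polynomial
\begin{align*}
p\big(\mathrm{ad}_{iB}^2\big)=\prod_{\lambda\in\{\Delta_y^2:\,y>1\}}\big(\mathrm{ad}_{iB}^2+\lambda\big)
\end{align*}
to $iA$ kills every edge $y>1$ whose $\Delta_y^2$ differs from $\Delta_1^2$. Edges with $\Delta_y=0$ are killed as well, since $\lambda=0$ is then among the factors. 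The result is a nonzero multiple $p(-\Delta_1^2)\,a_1X_1$ of $X_1$. Working with $\Delta_y^2$ rather than $\Delta_y$ is exactly what handles edges with $\Delta_y=-\Delta_1$, and it is why the hypothesis is stated with absolute values. One further application of $\mathrm{ad}_{iB}$ gives $Y_1$, and $[X_1,Y_1]$ gives the diagonal generator on $\{1,2\}$, so $\su(\mathrm{span}\{\ket1,\ket2\})$ lies in the algebra. Now induct. Suppose $\su(\mathrm{span}\{\ket1,\dots,\ket{y}\})$ is contained in the algebra. Take $D=i(\pure{y}-\pure{y-1})$ from it. Then $[D,iA]$ is a combination of edge-$(y-1)$ terms, which already lie in the known $\su(y)$ and can be subtracted, plus $a_y$ times an edge-$y$ term. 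Because $a_y\neq0$, this isolates $Y_y$ (or $X_y$). An element of $\su(y)$ together with one off-diagonal generator coupling $\ket{y}$ to $\ket{y+1}$ generates $\su(y+1)$; this follows by commuting with the elementary generators $i(\ket{y'}\bra{y}+h.c.)$ and $\ket{y'}\bra{y}-h.c.$ to reach every pair $(y',y+1)$. Iterating up to $y=d$ gives $\su(d)$.

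\emph{Case 2.} Here $\Delta_y=\Delta\neq 0$ for all $y$, so $[iB,iA]=\Delta\,\sum_y a_y Y_y$ up to sign, i.e.\ this is (a multiple of) the ``conjugate'' hopping operator $\overline{A}$. Compute $[iA,\sum_y a_yY_y]$ directly. It is diagonal, $iC$ with $C=\sum_y c_y\pure{y}$ and $c_y\propto a_{y}^2-a_{y-1}^2$, using the convention $a_0=a_d=0$. Its gaps $c_y-c_{y+1}$ are therefore proportional to the second differences $\Delta^2a_y$, and the boundary cases of the definition of $\Delta^2a_y$ come out exactly from $a_0=a_d=0$. Hypothesis 2 then says precisely that the pair $(A,C)$ satisfies the anharmonic hypothesis 1 with $C$ in place of $B$. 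Since $iC$ lies in $\alg_\R\{iA,iB\}$, Case 1 applied to $(A,C)$ gives $\su(d)\subseteq\alg_\R\{iA,iB\}$. (Specializing to $B$ a multiple of the $J_z$-type diagonal yields \cref{lem:subspace_control}.)

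The main obstacle I expect is bookkeeping rather than conceptual. In Case 2 I need the explicit commutator $[A,\overline{A}]$ with the correct constant and sign, so that its gaps match the stated $\Delta^2a_y$ including the boundary conventions. In Case 1 I must check that $p(-\Delta_1^2)\neq0$, which holds because $\Delta_1^2$ is distinct from every root of $p$, including the root $0$, since $\Delta_1\neq0$.
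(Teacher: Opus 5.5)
You prove Case~1 yourself, while the paper does not prove the anharmonic case and cites Schirmer, Fu, and Solomon instead. Your $\mathrm{ad}_{iB}^2$ filter followed by the ladder induction is a complete, elementary proof of that case. In the induction step, $[D,iA]$ with $D=i(\pure{y}-\pure{y-1})$ also contains an edge-$(y-2)$ term. That term lies in the already-obtained $\su(y)$, so it can be subtracted like the others. Your Case~2 uses the same reduction as the paper, via $[[iA,iB],iA]\propto\widetilde B$. Your claim that $[iA,\sum_y a_yY_y]$ is diagonal with entries proportional to $a_y^2-a_{y-1}^2$ is correct, since the next-nearest-neighbour cross terms cancel.

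The gap is the sentence saying that hypothesis~2 ``says precisely'' that $(A,C)$ satisfies hypothesis~1. It does not.

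\begin{itemize}
\item Hypothesis~2(i) only gives the signed inequality $\Delta^2a_y\neq\Delta^2a_1$. Hypothesis~1(i) requires $|\Delta^2a_y|\neq|\Delta^2a_1|$ and $\Delta^2a_1\neq0$.
\item Your Case~1 filter genuinely needs the stronger form, because $\mathrm{ad}_{iC}^2$ acts identically on edges with gaps $\delta$ and $-\delta$.
\item Concretely, take $d=4$ and $(a_1^2,a_2^2,a_3^2)=(1,3,2)$. Then $(\Delta^2a_1,\Delta^2a_2,\Delta^2a_3)=(-1,3,1)$, so both 2(i) and 2(ii) hold. But $|\Delta^2a_1|=|\Delta^2a_3|$, so neither 1(i) nor 1(ii) holds for $(A,C)$. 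Hypothesis 2(i) also permits $\Delta^2a_1=0$.
\end{itemize}

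The repair uses $B$ itself, which you set aside after forming $C$. For real $t$, $i(C+tB)$ lies in $\alg_{\R}\{iA,iB\}$. Its gaps are $\kappa\,\Delta^2a_y+t\Delta$, where $\kappa\neq0$ is your proportionality constant.

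\begin{itemize}
\item The differences $\kappa(\Delta^2a_y-\Delta^2a_1)$ do not depend on $t$ and are nonzero by 2(i).
\item Since $\Delta\neq0$, the condition $\kappa\Delta^2a_1+t\Delta=0$ excludes only one value of $t$.
\item Likewise, for each $y$, the condition $\kappa\Delta^2a_y+t\Delta=-(\kappa\Delta^2a_1+t\Delta)$ excludes only one value of $t$.
\end{itemize}

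So for all but finitely many $t$, the pair $(A,C+tB)$ satisfies 1(i) and your Case~1 applies. The same argument with 2(ii) gives 1(ii). The paper's one-line reduction in the appendix glosses over this same point, so the shift by $tB$ is needed there as well.
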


Here, $B$ is interpreted as the intrinsic Hamiltonian of a given system, while $A$ is called a dipole interaction Hamiltonian, which induces a particular ordering of the energy eigenstates and describes transitions between adjacent states on this energy ladder.
The first type of restriction (anharmonic) describes a system where the energy gap $\Delta_y=b_y-b_{y+1}$ at either the top or bottom of the energy ladder  differs from all others.
Regardless of the values of transition amplitudes $\{a_y\}$, this condition is enough to guarantee semi-universality on $\C^d$.
The second type of restriction (harmonic) describes a system with constant energy gaps, but where the transition amplitudes introduce sufficient ``anharmonicity'', via second order differences, to achieve semi-universality.

In fact, condition (2.) can in a sense be reduced to condition (1.) as follows.
If $B$ is harmonic, i.e. satisfies $\Delta_y:=b_y-b_{y+1}=\Delta\neq0$ constant for all $y$, then one can obtain a new Hamiltonian $\widetilde{B}:=\sum_{y=1}^{d}\widetilde{b}_y\pure{y}$, such that $i\widetilde{B}\in\alg_{\R}\{iA,iB\}$, via
\begin{align}
    \Big[\big[iA,iB\big],iA\Big] = 2i\Delta\widetilde{B}\,,
\end{align}
in which case the energy gaps of $\widetilde{B}$ satisfy
\begin{align}
    \widetilde{\Delta}_y:=\widetilde{b}_y-\widetilde{b}_{y+1} \eq \Delta^2a_y\,.
\end{align}
Therefore, if the second-order finite differences of $A$ satisfy condition (2.), then $A$ and $\widetilde{B}$ describe an ``anharmonic'' system which satisfies condition (1.).

\Cref{lem:subspace_control} is a special case of \cref{lem:subspace_control_general}: in the assumptions of \cref{lem:subspace_control}, anti-Hermitian operators $i(A_++A_-)$ and $A_+-A_-$ generate via their commutator an ``anharmonic'' Hamiltonian $\widetilde{B}$, i.e.,
\begin{align}
    i\widetilde{B} := \frac{1}{2}\Big[A_+-A_-,\,i(A_++A_-)\Big]\,.
\end{align}

\newpage
\appsec{Relative Phase Constraints for $\V_z^+$ and $\mathcal{W}_z$}{app:Jz2_phases}
For the group $\V_z^+$ generated by Hamiltonians $\HTC$, $J_z$, and $J_z^2$, the constraint on relative phases analogous to \cref{eq:thm_phase_constraint} in \cref{thm:full_characterization} is, using the calculations in \cref{app:charge_vectors} and \cref{app:Jz2_charge_vector}:
\begin{align}
 \begin{split}
    \theta_{q,j} &\eq \theta_z\Tr\big(\pi\qj(J_z)\big) + \theta_2\Tr\big(\pi\qj\left(J_z^2\right)\big)\\[8pt]
    &\eq\begin{cases}
        \left(q-\dfrac{n}{2}+j+1\right) \\[4pt]
        \times
        \left[\left(q-\dfrac{n}{2}-j\right)
        \dfrac{\theta_z}{2} + \left(j(2j+1)-2j\left(q-\dfrac{n}{2}\right)+\left(q-\dfrac{n}{2}\right)(2q-n+1)\right)\dfrac{\theta_2}{6}\right]& :\,\,q<\dfrac{n}{2}+j\\[16pt]
        0 & :\,\,q\geq \dfrac{n}{2}+j\,,
    \end{cases}
 \end{split}
 \label{eq:Jz2_phase_constraint}
\end{align}
where $\theta_z,\theta_2\in[0,4\pi)$ and the equations hold modulo $2\pi$.

\vspace{4mm}
Similarly, for the group $\mathcal{W}_z$ generated by Hamiltonians $\HTC$, $J_z$, and $a^{\dag}a$, the analogous phase constraint is:
\begin{align}\label{eq:Wz_phase_constraint}
 \begin{split}
    \theta\qj &\eq \theta_z\Tr\big(\pi\qj(\Zhat)\big) + \theta_{\text{osc}}\Tr\big(\pi\qj(\Nhat)\big)\\[8pt]
    &\eq\begin{cases}
        \left(q-\dfrac{n}{2}+j+1\right)\times
        \left[\left(q-\dfrac{n}{2}-j\right)
        \dfrac{\theta_z}{2} + \left(q-\dfrac{n}{2}+j\right)\dfrac{\theta_{\text{osc}}}{2}\right]& :\,\,q<\dfrac{n}{2}+j\\[16pt]
        (2j+1)\left(q-\dfrac{n}{2}\right)\theta_{\text{osc}} & :\,\,q\geq \dfrac{n}{2}+j\,,
    \end{cases}
 \end{split}
\end{align}
where $\theta_z,\theta_{\text{osc}}\in[0,4\pi)$ and the equations hold modulo $2\pi$.

\end{document}